\DeclareMathOperator{\Ff}{Fact}
\DeclareMathOperator{\card}{card}
\DeclareMathOperator{\ext}{ext}
\DeclareMathOperator{\Ra}{Ra}
\DeclareMathOperator{\Sb}{SB}
\DeclareMathOperator{\ind}{ind}
\newcommand{\CH}{\mathrm{CH}}
\newcommand{\Lynd}{\mathrm{Lynd}}
\newcommand{\Nn}{\mathbb N}
\newcommand{\PAL}{\mathrm{PAL}}
\newcommand{\PER}{\mathrm{PER}}
\newcommand{\Stand}{\mathrm{Stand}}
\newcommand{\Aa}{\mathcal A}
\theoremstyle{plain}
\newtheorem{thm}{Theorem}[section]
\newtheorem{prop}[thm]{Proposition}
\newtheorem{cor}[thm]{Corollary}
\newtheorem{lemma}[thm]{Lemma}
\theoremstyle{definition}
\newtheorem{remark}[thm]{Remark}
\newtheorem{example}[thm]{Example}
\begin{document}

\begin{frontmatter}

\title{On Christoffel and standard words and their derivatives}
\author[dma]{Alma D'Aniello}
\ead{alma.daniello@unina.it}
\author[dma]{Aldo de Luca\fnref{indam}}
\ead{aldo.deluca@unina.it}
\author[dieti]{Alessandro De Luca\corref{cor1}\fnref{prin}}
\ead{alessandro.deluca@unina.it}

\cortext[cor1]{Corresponding author.}
\fntext[indam]{Partially supported by GNSAGA, INdAM.}
\fntext[prin]{Partially supported by the Italian Ministry of Education (MIUR), under the PRIN project number 2010LYA9RH, titled ``Automi e Linguaggi Formali: Aspetti
Matematici e Applicativi''.}
\address[dma]{Dipartimento di Matematica e Applicazioni ``R.~Caccioppoli'',
Universit\`a degli Studi di Napoli Federico II}
\address[dieti]{DIETI, Universit\`a degli Studi di Napoli Federico II}

\begin{abstract}
We introduce and study natural derivatives for  Chri\-stoffel  and  finite standard words, as well as for characteristic Sturmian words.
These derivatives, which are realized as inverse images under suitable morphisms, preserve the aforementioned classes of words. In the case of Christoffel words, the morphisms involved
map $a$ to $a^{k+1}b$ (resp.,~$ab^{k}$) and $b$ to $a^{k}b$ (resp.,~$ab^{k+1}$) for a suitable $k>0$. 
As long as derivatives are longer than one letter, higher-order derivatives are naturally obtained.
We define the
depth of a Christoffel or standard word as the smallest order for which the derivative is a
single letter. We give several combinatorial and arithmetic descriptions of the depth, and  (tight) lower and upper bounds for it.
\end{abstract}

\begin{keyword}
Christoffel word\sep Standard word\sep Central word\sep Characteristic word\sep Derivative of a word
\MSC[2010] 68R15
\end{keyword}

\end{frontmatter}

\section{Introduction}

Since the first systematic study by M. Morse and G. A. Hedlund \cite{MH}, Sturmian words have been among the most studied infinite words in combinatorics as  they are the simplest aperiodic words in terms of \emph{factor complexity}, and enjoy many beautiful characterizations and properties (see, for instance, \cite[Chap. 2]{LO2}). 

Sturmian words are of interest in several fields of mathematics  such as combinatorics, algebra, number theory,  dynamical systems, and differential equations. They are also of great importance in theoretical physics as basic  examples of $1$-dimensional quasicrystals (cf.~\cite{dB-gen} and references therein) and in  computer science where they are used in
computer graphics as digital approximation of straight lines (cf.~\cite{DSS}).

A basic tool in the study of Sturmian words is the \emph{palindromization map} $\psi$, first introduced by the second author~\cite{deluca}. It maps any finite binary word $v$ (called \emph{directive word} in this context) to a palindrome $\psi(v)$ called \emph{central word}. The definition can be naturally extended to infinite directive words; when $v$ spans among all binary words where both letters occur infinitely often, $\psi(v)$ gives exactly all \emph{characteristic} Sturmian words (or infinite standard Sturmian words). An infinite word is Sturmian if it has the same set of factors as some characteristic Sturmian word.

Central words are thus all palindromic prefixes of characteristic Sturmian words; they can also be defined in a purely combinatorial way, as words having two coprime periods $p,q$ and length $p+q-2$. If $w$ is a central word over the alphabet $\{a,b\}$, then $awb$ is a (lower) \emph{Christoffel word} and $wab,wba$ are \emph{standard words}. These classes of words, which also include the letters $a$ and $b$, represent a finite counterpart to Sturmian words and are well studied in their own right  as they satisfy remarkable and surprising combinatorial properties (see for instance~\cite{BLRS,LO2, DM}).

In a previous paper \cite{adlADL} the second and third author  have studied an important connection between the combinatorics of these words and the famous Stern sequence. In this paper, which can be considered as a continuation of the previous one,  we consider new combinatorial properties which are mainly related to the notion of derivative of a word. Word derivation, meant as inverse image under some injective morphism (also called ``desubstitution'', or ``inflation'' in~\cite{dB-Seq}), is a known topic in combinatorics on words. A well-known  instance is the notion of \emph{derivated word} of a recurrent word, introduced by F. Durand~\cite{Du} along with the important concept of \emph{return words}.

The main objective of this paper is to study some natural derivatives for noteworthy classes of finite Sturmian words, such as Christoffel and standard words.
The paper is organized as follows. In Section~\ref{sec:duebis} we consider the palindromization map. A well-known result by  J. Justin~\cite{J}, known as \emph{Justin's formula},  links the palindromization map with \emph{pure standard Sturmian morphisms}, i.e., morphisms
of the monoid $\{\mu_a, \mu_b\}^*$ where for $x\in \{a, b\}$, and $y\neq x$, $\mu_x$ is defined as follows: $\mu_x: x\mapsto x, \ y\mapsto xy$. 
Setting for $v= v_1\cdots v_n$, $\mu_v= \mu_{v_{1}}\circ\cdots\circ\mu_{v_{n}}$, one derives from Justin's formula, that every standard word $\psi(v)xy$ with $\{x,y\}=\{a,b\}$ is obtained as the image of $xy$ under the morphism $\mu_v$, 
\begin{equation}\label{eq:Jsta}
\psi(v)xy=\mu_{v}(xy)\,.
\end{equation}

In Section~\ref{sec:duetris}, some basic relations existing between central, standard, and Chri\-stoffel words are recalled and new combinatorial properties are proved.

In Section~\ref{sec:cmor},
we discuss \emph{Christoffel morphisms}, i.e., morphisms preserving Christoffel words. We provide a simple combinatorial proof for the known fact \cite{BLRS} that the monoid of Christoffel morphisms is generated by $\lambda_{a}$ and $\lambda_{b}$, defined by
\[ \lambda_a = \mu_a \ \text{ and }\ \lambda_b :a \mapsto ab, \ b \mapsto b.\]

Setting
$\lambda_{v}=\lambda_{v_{1}}\circ\cdots\circ\lambda_{v_{n}}$ for $v=v_{1}\cdots v_{n}$, this gives  an analogue of  formula~\eqref{eq:Jsta} in the case of  Christoffel words, namely 
\[a\psi(v)b=\lambda_{v}(ab).\]
 We also prove that the \emph{inverse} image of a Christoffel word under a Christoffel morphism is a Christoffel word; again, this mirrors a well-known result for standard words and morphisms.

With such knowledge about Christoffel morphisms, in Section~\ref{sec:derc} 
we define a derivative for proper Christoffel words. In fact, for each such word $w$ there exists some nonnegative integer $k$ (the \emph{index} of $w$) such that $w$ can be uniquely factored over $X_{k}=\{a^{k}b,a^{k+1}b\}$ or $Y_{k}=\{ab^{k},ab^{k+1}\}$; hence, $w$ is the image, under the morphism $\varphi_{k}=\lambda_{a^{k}b}$ or $\hat\varphi_{k}=\lambda_{b^{k}a}$, of  a word $\partial w$ that we call
the \emph{derivative} of $w$. Since $\varphi_{k}$ and $\hat\varphi_{k}$ are Christoffel morphisms, this derivative is still a Christoffel word. 

Our choice of morphisms $\varphi_{k}$ and
$\hat\varphi_{k}$ for the definition is motivated by the following  arguments. First, the factorization over $X_{k}$ or $Y_{k}$ is quite natural and has been used in well-known algorithms for recognizing factors of Sturmian words
(or \emph{digital straight segments}, in the computer graphics terminology; cf.~\cite{DSS}).
Second, if
$w=a\psi(v)b$ and $v$ is not a power of a letter, then
\[\partial w=a\psi({}_{+}v)b,\] where $_+v$ is the longest suffix of $v$ immediately preceded by a letter different from the first letter of $v$. The operator $v\mapsto{}_{+}v$ was introduced by the last two authors in~\cite{adlADL} and appears in some interesting results on Christoffel words; for instance, if $v$ starts with the letter $x$ and $\{x,y\}=\{a,b\}$, then the length $|a\psi({}_{+}v)b|= |\partial w|$
equals the number of occurrences of $y$ in $a\psi(v)b$.
Finally, 
 a Christoffel word is determined by its derivative and the value of its index.
 
 Further  results on the derivatives of Christoffel words are proved. In particular, if a Christoffel word $w$ is factored as $w=w_1w_2$
 with $w_1$ and $w_2$ proper Christoffel words, then $\partial w= \partial w_1\partial w_2$. Moreover, the length of a Christoffel word
 $w= a\psi(v_1v_2\cdots v_n)b$ with $v_i\in {\cal A}$, $1\leq i \leq n$, is equal to $2$ plus the sum of the lengths of derivatives $\partial a\psi(v_i\cdots v_n)b$, $i=1,\ldots, n$.

In Section~\ref{sec:depth}, 
we naturally define higher order derivatives, by letting $\partial^{i+1}w=\partial(\partial^{i}w)$ whenever $\partial^{i}w$ is still a proper Christoffel word (i.e., not just a letter). The \emph{depth} of a Christoffel word $w$ is then the smallest $i\geq 0$ such that $\partial^{i}w$ is a letter. We give several descriptions of the depth of $a\psi(v)b$ as a function $\delta(v)$ of its directive word. We  prove that
$\delta(uv)$ equals either $\delta(u)+\delta(v)$ or $\delta(u)+\delta(v)-1$.
Tight lower and upper bounds of the depth are given; moreover, we characterize the directive words for which such  bounds are attained.
We give also a closed formula for the number $J_k(p)$ of the words $v$ of length $k$ such that $\delta(v)=p$.

 In Section~\ref{sec:ders}
we consider  finite and infinite standard Sturmian words; using the standard morphisms $\mu_{a^{k}b}$ and $\mu_{b^{k}a}$ we define a natural derivative in these cases. This allows us to extend the previous results to standard words; in particular, the derivative of the standard word $\psi(v)xy$ with $\{x,y\}=\{a,b\}$ is either a letter or  the proper standard word $\psi(_+v)xy$, where $_+v$ is the same directive word found in the derivative of the Christoffel word $a\psi(v)b$. 
 Hence, the depths of
$\psi(v)ab$, $\psi(v)ba$, and $a\psi(v)b$ coincide. In the infinite case, the derivative $Ds$ of a characteristic Sturmian word $s$ is word isomorphic to a derivated word in the sense of Durand.
We give a proof for the fact that a characteristic Sturmian word has only finitely many distinct higher order derivatives if and only if its directive word is ultimately periodic (see also~\cite{AB}). Finally, we prove that  there exists a simple relation between the derivative $Ds$ of a characteristic word $s$ and the   derivative $\partial s$, namely $\partial s = b Ds$. 

\section{Notation and Preliminaries}\label{sec:due}

In the following, ${ A}$  will denote a finite non-empty set, or  \emph{alphabet}   and ${ A}^*$  the \emph{free monoid} generated by ${A}$. 
The elements of ${ A}$ are usually called \emph{ letters} and those of ${ A}^*$ \emph {words}. 
 The identity element
of ${ A}^*$ is called \emph{empty word} and denoted by $\varepsilon$.  We
set ${ A}^+={A}^*\setminus\{\varepsilon\}$.

A word $w\in {A}^+$ can be written uniquely as a sequence of letters as
$w=w_1w_2\cdots w_n$, with $w_i\in {A}$, $1\leq i\leq n$, $n>0$.  The
integer $n$ is called the \emph{length} of $w$ and denoted $|w|$.  The
length of $\varepsilon$ is 0.  For any $w\in { A}^*$ and $x\in { A}$, $|w|_x$
denotes the number of occurrences of the letter $x$ in $w$.
 For any word $v\in { A}^+$, we let  $v^{(F)}$ (resp., $v^{(L)}$) denote the first (resp.,  last) letter of $v$.

Let $w\in { A}^*$.  The word $u$ is a \emph{factor} of $w$ if there exist
words $r$ and $s$ such that $w=rus$.  A factor $u$ of $w$ is called
\emph{proper} if $u\neq w$.  If $w=us$, for some word $s$ (resp.,
$w=ru$, for some word $r$), then $u$ is called a \emph{prefix} (resp.,
a \emph{suffix}) of $w$. If $u$ is a prefix of $w$, then $u^{-1}w$ denotes
the word $v$ such that $uv=w$.

Let $p$ be a positive integer.  A word $w=w_1\cdots w_n$, $w_i\in {A}$,
$1\leq i\leq n$, has \emph{period} $p$ if the following condition is
satisfied: for any integers $i$ and $j$ such that $1\leq i,j\leq n$,
\[
	\text{if }i\equiv j \pmod{p} , \text{ then } w_i = w_j.
\]
Let us observe that if a word $w$ has a period $p$, then any non-empty  factor of $w$ has also the period $p$. 

We let $\pi(w)$ denote the minimal period of $w$. Conventionally, we set $\pi(\varepsilon)=1$.
A word $w$ is said to be \emph{constant} if  $\pi(w)=1$, i.e.,  $w=z^k$ with $k\geq 0$ and $z\in {A}$.
Two words $v$ and $w$ are \emph{conjugate} if there exist words $r$ and $s$ such that $v= rs$ and $w=sr$.

Let $w=w_1\cdots w_n$, $w_i\in { A}$, $1\leq i\leq n$.  The
\emph{reversal} of $w$ is the word $w\sptilde = w_n\cdots w_1$.  One
defines also $\varepsilon\sptilde =\varepsilon$.  A word is called
\emph{palindrome} if it is equal to its reversal.  We let
$\PAL$ denote the set of all palindromes on the
alphabet ${A}$.

In the following, we let the alphabet ${ A}$ be totally ordered.  We let $<_{lex}$ denote the lexicographic
order induced on ${A}^*$. A word is called a \emph{Lyndon word}  if it is lexicographically less than any of its proper suffixes (cf. \cite[Chap. 5]{LO}). As is well-known a Lyndon word $w\not\in A$ can be factored ({\em standard factorization}) as $w=lm$ where $l$ is a Lyndon word and $m$ is the longest suffix of $w$ which is a Lyndon word.

A  right-infinite word $x$, or simply \emph{infinite word}, over the alphabet ${ A}$   is just an infinite sequence of letters:
\[x=x_1x_2\cdots x_n\cdots \text{ where }x_i\in { A},\,\text{ for all } i\geq 1\,.\]
For any integer $n\geq 0$, we let $x_{[n]}$  denote the prefix $x_1x_2\cdots x_n$ of $x$ of length $n$.
A factor of $x$ is either the empty word or any sequence  $x_i\cdots x_j$ with $i\leq j$. The set of all infinite words over ${ A}$ is denoted by ${ A}^{\omega}$. An infinite word $x$ is called \emph{ultimately periodic} if there  exist words $u\in { A}^*$ and $v\in {A}^+$
such that $x= uv^{\omega}$. The word $x$ is called (purely) \emph{periodic} if $u=\varepsilon$, i.e., $x= v\cdot v \cdot v \cdots .$ A periodic word with $v\in {A}$ will be called \emph{constant}. The word $x$ is called \emph{aperiodic} if it is not ultimately periodic.

 We say that two finite or infinite words $x=x_1x_2\cdots $ and $y=y_1y_2\cdots $ on the alphabets $A$ and $A'$ respectively  are \emph{word isomorphic}, or simply \emph{isomorphic},
if there exists a bijection $\phi: A\rightarrow A'$ such that $y=\phi(x_1)\phi(x_2)\cdots.$ 

 We set ${ A}^{\infty}= { A}^*\cup {A}^{\omega}$.
For any $w\in { A}^{\infty}$  we let  $\Ff (w)$  denote
 the set  of all distinct factors
 of the word $w$.
 
 In the following, we shall mainly concern with two-letter alphabets. We let $\Aa$ denote the alphabet whose elements
 are the letters $a$ and $b$, totally ordered by setting $a<b$.

 We let $E$ denote the automorphism of  $\Aa ^*$ defined by $E(a)=b$ and
$E(b)=a$. For each $w\in \Aa ^{\infty}$, the word $E(w)$ is called the \emph{complementary} word, or simply the \emph{complement} of $w$.
We shall often use for $E(w)$ the simpler notation ${\bar w}$.

We say that a word $v\in \Aa ^k$, $k\geq 0$,  is \emph{alternating} if
for $x,y\in \Aa $ and $x\neq y$, $v = (xy)^{\frac{k}{2}}$ if $k$ is even and
$v= (xy)^{\lfloor\frac{k}{2}\rfloor} x$ if $k$ is odd, i.e., $v$ is a single letter or if $|v|>1$ any non-terminal letter in $v$ is immediately followed by its complementary.

The \emph{slope} $\eta(w)$ of a word $w\in \Aa ^+$ is the fraction $\eta(w) = \frac{|w|_b}{|w|_a}$ if  $|w|_a>0$. We set $\eta(w)= \infty$ if
$|w|_a = 0$.

 If we identify  the letters $a$ and $b$ of $\Aa $  respectively with  the digits $0$ and $1$, for each $w\in \Aa ^*$ we let
$\langle w \rangle_2$, or simply  $\langle w \rangle$, denote the \emph{standard interpretation} of $w$ as an integer at base $2$.
For instance, $\langle a\rangle=0$, $\langle b\rangle =1$, $\langle babba \rangle = 22$.

   We represent a non-empty binary word $v\in \Aa ^+$ as
\[v= x_0^{\alpha_0}\cdots  x_n^{\alpha_n},\]  where $\alpha_i \geq 1$, $x_i\in \Aa $, $0\leq i\leq n$,
  and  $x_{i+1}= {\bar x_i}$ for $0 \leq i \leq n-1$. 
 We call the list $(\alpha_0, \alpha_1,\dots,\alpha_n)$ the
\emph{integral representation} of the word $v$. Hence, the integral representation of a word $v$  and its first letter $v^{(F)}$determine uniquely $v$.  We set $\ext (v)= |x_0\cdots x_n|= n+1$ and call it \emph{extension} of $v$. Moreover, we define $\ext(\varepsilon)= 0$.

For all definitions and notation concerning words not explicitly given in the paper, the reader is referred to the book of Lothaire \cite{LO}; for Sturmian words see \cite[Chap. 2]{LO2} and \cite[Chap.s 9-10]{AS}.

\subsection{The palindromization map}\label{sec:duebis}
 We  consider  in
${ A}^*$ the operator $^{(+)} : { A}^*\rightarrow \PAL$ which
maps any word $w\in { A}^*$ into the palindrome $w^{(+)}$
defined as the shortest palindrome having the prefix $w$ (cf. \cite{deluca}).  The word 
$w^{(+)}$ is called the \emph{right palindromic closure} of $w$.  If $Q$ is the
longest palindromic suffix of $w= vQ$, then one has
\[
	w^{(+)}=vQv\sptilde \,.
\]
Let us now define the map
\[
	\psi: {A}^*\rightarrow \PAL ,
\]
called \emph{right iterated palindromic closure}, or simply \emph{palindromization map}, over ${ A}^*$,  as follows: $\psi(\varepsilon)=\varepsilon $ and for all
$u\in { A}^*$, $x\in { A} $,
\[
	\psi(ux)=(\psi(u)x)^{(+)}\,.
\]
For instance, if $u= aaba$, one has $\psi(a)= a$, $\psi(aa)= (\psi(a)a)^{(+)}= aa$, $\psi(aab)= (aab)^{(+)}=aabaa$, and 
$\psi(u)= \psi(aaba)= aabaaabaa$.

The following proposition collects some basic  properties of  the palindromization map
 (cf., for instance, \cite{ deluca, DJP}):
\begin{prop}\label{prop:basicp} The palindromization map $\psi$  satisfies the following
properties:
\begin{enumerate}[P1.]
\item The   palindromization map  is  injective.
\item  If $u$ is  a prefix of  $v$, then $\psi(u)$ is a palindromic prefix (and suffix) of $\psi(v)$.
\item If $p$ is a prefix of $\psi(w)$, then $p^{(+)}$ is a prefix of $\psi(w)$.
\item Every palindromic prefix of $\psi(v)$ is of the form $\psi(u)$ for some prefix $u$ of $v$.
\item  $|\psi(u\sptilde )|= |\psi(u)|$, for any $u\in { A}^*$.
\item The palindromization map $\psi$ over $\{a, b\}^*$  commutes  with  the automorphism $E$, i.e.,
$ \psi\circ E = E\circ \psi.$
\end{enumerate}
\end{prop}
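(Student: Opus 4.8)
The plan is to establish all six properties by induction on word length, working directly from the recursive definition $\psi(ux)=(\psi(u)x)^{(+)}$ and the description $w^{(+)}=vQv\sptilde$ of the right palindromic closure (with $Q$ the longest palindromic suffix of $w=vQ$). Properties P2 and P6 fall out of short inductions. For P2: since $\psi(u)x$ is a prefix of $(\psi(u)x)^{(+)}=\psi(ux)$ and $^{(+)}$ only appends letters on the right, $\psi(u)$ is a palindromic prefix of $\psi(ux)$; one iterates along a chain of prefixes, and ``suffix'' follows at once since the reversal of a prefix of a palindrome is a suffix. For P6: $^{(+)}$ commutes with the automorphism $E$ (which preserves length, commutes with reversal, and maps palindromes to palindromes, so the longest palindromic suffix of $E(w)$ is the $E$-image of that of $w$), and then induction on the clauses defining $\psi$ yields $\psi\circ E=E\circ\psi$.

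The structural core is the observation that $\psi(vx)$ is \emph{by definition} the shortest palindrome admitting $\psi(v)x$ as a prefix. From this I would extract the key lemma that $\psi(v)$ is the longest \emph{proper} palindromic prefix of $\psi(vx)$: any palindromic prefix $P$ of $\psi(vx)$ with $|P|>|\psi(v)|$ must have $\psi(v)x$ as a prefix (the letter of $\psi(vx)$ right after $\psi(v)$ being $x$), so by minimality $|P|\ge|\psi(vx)|$, forcing $P=\psi(vx)$. Property P4 now follows by induction on $|v|$: the palindromic prefixes of $\psi(vx)$ are $\psi(vx)$ itself together with those of $\psi(v)$, which by the inductive hypothesis are the words $\psi(u)$ with $u$ a prefix of $v$. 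Property P1 follows too: the definition forces $|\psi(vx)|>|\psi(v)|$, so for $v=v_1\cdots v_n$ the palindromic prefixes of $\psi(v)$ form the strictly increasing (by length) chain $\psi(\varepsilon),\psi(v_1),\ldots,\psi(v_1\cdots v_n)$, and by P4 this is \emph{all} of them; hence $n$ is recovered from $\psi(v)$ (as the number of palindromic prefixes minus one) and, inductively, $v_i$ is recovered as the letter of $\psi(v)$ in position $|\psi(v_1\cdots v_{i-1})|+1$. Thus $\psi$ is injective.

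For P3 (which uses only P2) I would isolate an elementary fact about palindromes: if $\Pi$ is a palindrome and $p$ a prefix of $\Pi$ with $|p|>|\Pi|/2$, then $p^{(+)}$ is a prefix of $\Pi$. Indeed, the central factor $O$ of $\Pi$ of length $2|p|-|\Pi|$ is a palindrome (being central in a palindrome) and is a suffix of $p$; hence the longest palindromic suffix of $p$ has length at least $2|p|-|\Pi|$, so $|p^{(+)}|\le|\Pi|$, and since $p^{(+)}$ is $p$ followed by the reversal of a prefix of $p$ of length at most $|\Pi|-|p|$, it is a prefix of $\Pi$ — whose own tail after $p$ is exactly the reversal of the first $|\Pi|-|p|$ letters of $p$. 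Given $p$ a prefix of $\psi(w)$ with $p\neq\varepsilon$ (the empty case being trivial), choose the shortest nonempty prefix $u=u'x$ of $w$ with $|\psi(u)|\ge|p|$; then $|p|>|\psi(u')|$ by minimality, and combining this with $|\psi(u)|\le 2|\psi(u')|+1$ (from $w^{(+)}=vQv\sptilde$ with $|Q|\ge1$) gives $|p|>|\psi(u)|/2$. Since $p$ is a prefix of $\psi(u)$ and $\psi(u)$ is a prefix of $\psi(w)$ (P2), the elementary fact applied to $\Pi=\psi(u)$ shows $p^{(+)}$ is a prefix of $\psi(u)$, hence of $\psi(w)$.

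I expect P5 to be the main obstacle, being the one statement not visibly encoded in the definition. The plan is to prove by induction the length formula
\[ |\psi(u)|+2 = K(\alpha_0+1,\alpha_1,\ldots,\alpha_{n-1},\alpha_n+1), \]
where $(\alpha_0,\ldots,\alpha_n)$ is the integral representation of $u$ (assumed non-constant, $n\ge1$; otherwise $u\sptilde=u$) and $K$ is the continuant polynomial; the induction rests on the recursion $|\psi(ux)|=2|\psi(u)|-|\psi(u_x)|$, with $u_x$ the prefix of $u$ ending immediately before the last occurrence of $x$, the boundary case being $|\psi(ux)|=2|\psi(u)|+1$ when $x$ does not occur in $u$. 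Since $u\sptilde$ has integral representation $(\alpha_n,\ldots,\alpha_0)$, property P5 is then immediate from the reversal symmetry $K(c_0,\ldots,c_m)=K(c_m,\ldots,c_0)$ of continuants. (Alternatively one may deduce the same identification of lengths with continuants from Justin's formula~\eqref{eq:Jsta}, or simply appeal to the cited references.)
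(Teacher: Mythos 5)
The paper itself offers no proof of this proposition (it is stated with a pointer to the references \cite{deluca,DJP}), so there is no in-paper argument to compare against. Your treatment of P1, P2, P4, P5 and P6 is sound: the key lemma that $\psi(v)$ is the longest \emph{proper} palindromic prefix of $\psi(vx)$ is exactly the right structural tool, P4 and P1 follow from it as you say, and the continuant symmetry argument for P5 is the standard one (and consistent with Theorem~\ref{thm:cf} of the paper).

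There is, however, a genuine gap in P3: the ``elementary fact about palindromes'' is false. Take $\Pi=abaaaba$ and $p=abaa$; then $|p|=4>7/2$, the longest palindromic suffix of $p$ is $aa$, so $p^{(+)}=abaaba$, which is \emph{not} a prefix of $\Pi$ (they differ at position $5$). The flaw in your derivation is the last inference: writing $p^{(+)}=p\,(v\sptilde)$ with $v$ a prefix of $p$ of length $\ell\leq|\Pi|-|p|$, and the tail of $\Pi$ after $p$ as $u\sptilde$ with $u$ the prefix of $p$ of length $|\Pi|-|p|$, the inclusion ``$v$ is a prefix of $u$'' only gives that $v\sptilde$ is a \emph{suffix} of $u\sptilde$, not a prefix, so nothing forces $p\,(v\sptilde)$ to be a prefix of $p\,(u\sptilde)$. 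The repair is available inside your own setup and uses the extra information you discarded: with $u=u'x$ the shortest nonempty prefix of $w$ such that $|\psi(u)|\geq|p|$, you know not merely that $|p|>|\psi(u)|/2$ but that $\psi(u')x$ is a prefix of $p$ (since $|p|>|\psi(u')|$ and $p$ is a prefix of $\psi(u)$). Now run the same minimality argument as in your key lemma: every palindrome having $p$ as a prefix also has $\psi(u')x$ as a prefix, hence has length at least $|(\psi(u')x)^{(+)}|=|\psi(u)|$; on the other hand $\psi(u)$ is itself a palindrome with prefix $p$, so by uniqueness of the shortest palindrome with a given prefix one gets $p^{(+)}=\psi(u)$, which is a prefix of $\psi(w)$ by P2. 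With this substitution P3 is correct and the whole proof goes through.
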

 For any  $w\in \psi({ A}^*)$ the unique word  $u$ such that  $\psi(u)=w$ is called the \emph{directive word} of $w$.

One can extend
  $\psi$  to ${ A}^{\infty}$ defining $\psi$ on $A^{\omega}$ as follows: let  $x\in { A}^{\omega}$ be an infinite word
\[ x = x_1x_2\cdots x_n\cdots, \quad x_i\in { A}, \ i\geq 1.\]
Since by property P2 of Proposition~\ref{prop:basicp}   for all $n$, $\psi(x_{[n]})$ is a prefix of  $\psi(x_{[n+1]})$,  we  can define  the infinite word $\psi(x)$ as:
\[ \psi(x) = \lim_{n\rightarrow \infty} \psi(x_{[n]}).\]
The  map $\psi: { A}^{\omega}\rightarrow { A}^{\omega}$ is injective. The word $x$ is called the \emph{directive word} of $\psi(x)$. It has been proved in \cite{deluca} that  if  $x\in \{a,b\}^{\omega}$ the word $\psi(x)$ is a \emph{characteristic Sturmian word} (or infinite standard Sturmian word)  if and only if both the letters $a$ and
$b$ occur infinitely often in the directive word  $x$.

 \begin{example}
 Let $\Aa =\{a,b\}$. If $x = (ab)^{\omega}$, then the characteristic  Sturmian word $f=\psi( (ab)^{\omega})$ having the directive word $x$ is the famous \emph{Fibonacci word}
\[ f = abaababaabaab\cdots.\]
 If $A=\{a, b, c\}$ the word $t= \psi((abc)^{\omega})$ is the so-called \emph{Tribonacci  word}:
\[t= abacabaabacaba\cdots.\]
\end{example}

For any $x\in { A}$, we let $\mu_x$ denote  the injective endomorphism of $A^*$
defined by
\begin{equation}\label{eq:endo}
\mu_x(x)=x, \; \mu_x(y)= xy, \text{ for }  y\in { A}\setminus \{x\} .
\end{equation}
If $v=x_1x_2\cdots x_n$, with $x_i\in { A}$, $i=1,\ldots, n$, then we set:
\[ \mu_v=\mu_{x_1}\circ \cdots \circ \mu_{x_n}; \]
moreover, if $v=\varepsilon$,  $\mu_{\varepsilon}=\mathrm{id}$.
The following interesting theorem, due to  Justin \cite{J} and usually referred to as \emph{Justin's formula}, relates the palindromization map to
the morphisms $\mu_v$.

\begin{thm}\label{thm:J} For all  $ v,u\in {A}^*$
\[ \psi(vu) = \mu_v(\psi(u))\psi(v).\]
\end{thm}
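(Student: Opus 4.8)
The plan is to prove Justin's formula $\psi(vu)=\mu_v(\psi(u))\psi(v)$ by induction on $|v|$, which is the natural parameter since the map $\mu_v$ and the ``leftover tail'' $\psi(v)$ are both controlled by $v$. The base case $v=\varepsilon$ is immediate: $\mu_\varepsilon=\mathrm{id}$ and $\psi(\varepsilon)=\varepsilon$, so both sides equal $\psi(u)$. For the inductive step I would write $v=xv'$ with $x\in A$ and assume the formula holds for $v'$ (and all $u$). The goal is then to reduce $\psi(xv'u)$ to an expression involving $\psi(v'u)$, apply the inductive hypothesis, and recognize the result as $\mu_x(\mu_{v'}(\psi(u))\psi(v'))\mu_x(\psi(v'))^{(+)}$-type data, ultimately $\mu_{xv'}(\psi(u))\psi(xv')$.

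The key technical fact I expect to need is the ``one-letter'' case, i.e. the identity $\psi(xw)=\mu_x(\psi(w))\,x\,\widetilde{\mu_x(\psi(w))}\cdot(\text{stuff})$ made precise: concretely, for a single letter $x$ one has $\psi(xw)=(\mu_x(\psi(w)) x)^{(+)}$-like behaviour, and more usefully the statement that $x\psi(w)$ read correctly gives $\mu_x(\psi(w))x$ as a prefix whose palindromic closure is $\psi(xw)$. The cleanest route is: first establish Justin's formula in the special case $v=x$ a single letter, namely $\psi(xu)=\mu_x(\psi(u))\,x$ — wait, one must check the length bookkeeping, since $|\psi(xu)|$ exceeds $|\mu_x(\psi(u))|$ by exactly $|u|_?+1$ accounting; the correct special-case identity is $\psi(xu)=\mu_x(\psi(u))x$ when $x$ does not already occur, and in general $\psi(xu)$ equals $\mu_x(\psi(u))$ followed by a short palindromic completion. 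I would verify the single-letter case directly from the definition $\psi(ux)=(\psi(u)x)^{(+)}$ together with property P2 and P3 of Proposition~\ref{prop:basicp}, using that $\mu_x$ sends palindromes-in-directive-words appropriately and that $\mu_x(w)x$ has the same palindromic-closure structure as needed.

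Then, assuming the single-letter case, the general inductive step runs: $\psi(xv'u)=\psi(x\cdot v'u)=\mu_x(\psi(v'u))\,x'$ (the single-letter case applied to directive word $v'u$), where $x'$ denotes the appropriate short completion; apply the inductive hypothesis to rewrite $\psi(v'u)=\mu_{v'}(\psi(u))\psi(v')$, so the right-hand side becomes $\mu_x(\mu_{v'}(\psi(u))\psi(v'))\,x'=\mu_x\mu_{v'}(\psi(u))\cdot\mu_x(\psi(v'))\,x'=\mu_{xv'}(\psi(u))\cdot\mu_x(\psi(v'))\,x'$. Finally one identifies $\mu_x(\psi(v'))\,x'$ with $\psi(xv')$ — again by the single-letter case applied to directive word $v'$ — giving $\mu_{xv'}(\psi(u))\psi(xv')=\mu_v(\psi(u))\psi(v)$, as desired.

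The main obstacle is pinning down the single-letter case with the exact palindromic completion and making sure the morphism $\mu_x$ commutes with right palindromic closure in the precise sense required — i.e. that $\mu_x(w^{(+)})$ relates to $(\mu_x(w)\cdots)^{(+)}$. This is where care is needed: $\mu_x$ is not length-preserving and does not literally commute with reversal, so one cannot just push $\mu_x$ through the closure operator naively. The safe approach is to argue combinatorially that $\mu_x(\psi(w))x$ is a palindrome-closure-stable prefix of the characteristic word directed by $xw$ (using P2–P4 to characterize palindromic prefixes as $\psi$ of prefixes of the directive word), and to track lengths via $|\psi(xw)|$ versus $|\mu_x(\psi(w))|$; once the single-letter identity is secured, the induction is essentially formal bookkeeping with the homomorphism property $\mu_x\circ\mu_{v'}=\mu_{xv'}$.
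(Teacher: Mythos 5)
The paper does not actually prove this statement: it is quoted as Justin's formula with a citation to \cite{J}, so there is no internal proof to compare against. Judged on its own terms, your plan has the right architecture and is essentially the standard route: establish the single-letter case $\psi(xw)=\mu_x(\psi(w))\,x$ and then run a purely formal induction on $|v|$ using $\mu_{xv'}=\mu_x\circ\mu_{v'}$. The inductive step as you write it does close once one notes that your ``short completion'' $x'$ is literally the letter $x=\psi(x)$, so that $\mu_x(\psi(v'))x'=\psi(xv')$.

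The genuine gap is that the single-letter case, which carries all of the combinatorial content of the theorem, is never actually proven. You correctly flag it as ``where care is needed'' but only gesture at P2--P4 of Proposition~\ref{prop:basicp}. What is actually required is a lemma asserting that $\mu_x$ intertwines with right palindromic closure, namely that for every $w$ one has $\left(\mu_x(w)x\right)^{(+)}=\mu_x\!\left(w^{(+)}\right)x$ (which rests on the fact that $\mu_x(p)x$ is a palindrome exactly when $p$ is). With that lemma an induction on $|u|$ gives $\psi(xuy)=(\psi(xu)y)^{(+)}=\mu_x(\psi(uy))\,x$ after splitting into the cases $y=x$ and $y\neq x$; without it the induction does not close, because, as you yourself observe, $\mu_x$ does not commute naively with reversal or with $^{(+)}$ --- indeed the unadorned identity $(\mu_x(w))^{(+)}=\mu_x(w^{(+)})x$ is false in general (take $w=ba$, $x=a$: the left side is $aba$ while the right side is $abaaba$), so the appended letter $x$ in the lemma is essential and must be tracked. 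Finally, your hedge that $\psi(xu)=\mu_x(\psi(u))x$ holds only ``when $x$ does not already occur'' is a red herring: the identity holds unconditionally (it is the theorem with $v=x$, since $\psi(x)=x$), and the uncertainty there indicates that the base lemma was not actually worked out, which is precisely the part of the argument that cannot be left as bookkeeping.
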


An important consequence of Justin's formula is the following lemma \cite{BdD}, which will be useful in the following.

\begin{lemma}\label{Ju00} For each $w\in { A}^*$ and $v\in { A}^{\omega}$,
$\psi(wv) = \mu_w(\psi(v)).$
\end{lemma}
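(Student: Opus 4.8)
The plan is to reduce everything to Justin's formula (Theorem~\ref{thm:J}) and then pass to a limit over finite prefixes. Write $v=v_1v_2\cdots\in A^{\omega}$ and set $v_{[n]}=v_1\cdots v_n$. First I would record two limit identities: by the definition of $\psi$ on infinite words together with property P2 of Proposition~\ref{prop:basicp}, $\psi(v)=\lim_n\psi(v_{[n]})$, where the $\psi(v_{[n]})$ form a chain of prefixes of increasing length; and, since $(wv)_{[|w|+n]}=w\,v_{[n]}$ and these words are cofinal among the prefixes of $wv$, also $\psi(wv)=\lim_n\psi(w\,v_{[n]})$.

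Next I would apply Justin's formula with the finite words $w$ and $v_{[n]}$ in the roles of $v$ and $u$, which gives, for every $n\ge 0$,
\[\psi(w\,v_{[n]})=\mu_w(\psi(v_{[n]}))\,\psi(w).\]
Since each $\mu_x$ is non-erasing, the composition $\mu_w$ is non-erasing as well; applying it to the prefix relations among the $\psi(v_{[n]})$ shows that the words $\mu_w(\psi(v_{[n]}))$ again form a chain of prefixes, with $|\mu_w(\psi(v_{[n]}))|\to\infty$, whose limit is the infinite word $\mu_w(\psi(v))$, i.e.\ the image of $\psi(v)$ under the natural extension of the non-erasing morphism $\mu_w$ to $A^{\omega}$.

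Finally I would check that the bounded trailing factor $\psi(w)$ is irrelevant in the limit: for any $m$, picking $n$ with $|\mu_w(\psi(v_{[n]}))|\ge m$ forces the length-$m$ prefixes of $\psi(w\,v_{[n]})=\mu_w(\psi(v_{[n]}))\psi(w)$ and of $\mu_w(\psi(v))$ to coincide. Hence $\lim_n\psi(w\,v_{[n]})=\mu_w(\psi(v))$, and combining this with $\psi(wv)=\lim_n\psi(w\,v_{[n]})$ yields $\psi(wv)=\mu_w(\psi(v))$. The only step that demands a little care — and the one I would flag as the crux — is precisely this passage to the limit: that composing with the fixed non-erasing morphism $\mu_w$ commutes with the prefix-limit operation, and that the fixed-length word $\psi(w)$ appearing on the right contributes nothing to the limiting infinite word.
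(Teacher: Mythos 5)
Your proof is correct. The paper itself states this lemma without proof, citing it from the reference \cite{BdD} as ``an important consequence of Justin's formula,'' and your argument --- applying Theorem~\ref{thm:J} to the finite prefixes $v_{[n]}$, noting that the non-erasing morphism $\mu_w$ preserves the prefix chain $\psi(v_{[n]})$ with lengths tending to infinity, and observing that the fixed-length trailing factor $\psi(w)$ vanishes in the prefix-limit --- is exactly the standard derivation the authors have in mind.
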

\noindent For instance,  if we take $w=a$, $x = (ab)^{\omega}$, then as one easily verifies
\[\psi(a (ab)^{\omega})= \mu_a(f)= aabaaabaabaaab \cdots. \]

 The case of a binary alphabet $\Aa = \{a, b\}$ deserves a special consideration.  The following remarkable proposition holds (see, for instance~\cite[Prop. 4.10]{SC}).  
 \begin{prop}\label{prop:J0000} For any $v\in \Aa ^*$ and $x,y\in \Aa $, $x\neq y$,
\[\mu_v(xy) = \psi(v)xy.\]
 \end{prop}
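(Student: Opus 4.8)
The plan is to deduce Proposition~\ref{prop:J0000} directly from Justin's formula (Theorem~\ref{thm:J}), by induction on $|v|$. The base case $v=\varepsilon$ is immediate: $\mu_\varepsilon(xy)=xy=\psi(\varepsilon)xy$. For the inductive step, I would write $v=zw$ with $z\in\Aa$ a single letter and $w\in\Aa^*$ shorter than $v$, so that $\mu_v=\mu_z\circ\mu_w$. Applying the induction hypothesis to $w$ gives $\mu_w(xy)=\psi(w)xy$, hence $\mu_v(xy)=\mu_z(\psi(w)xy)=\mu_z(\psi(w))\,\mu_z(x)\,\mu_z(y)$.

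The heart of the matter is then to recognize $\mu_z(\psi(w))\,\mu_z(x)\,\mu_z(y)$ as $\psi(zw)xy$. By Justin's formula applied with directive word $zw$ split as the prefix $z$ followed by $w$, we have $\psi(zw)=\mu_z(\psi(w))\psi(z)=\mu_z(\psi(w))\,z$, since $\psi(z)=z$. So it remains to show that $z\cdot xy = \mu_z(x)\,\mu_z(y)\cdot(\text{correction})$; more precisely, I need the identity $\mu_z(\psi(w))\,\mu_z(x)\,\mu_z(y)=\mu_z(\psi(w))\,z\,xy$, i.e.\ $\mu_z(x)\mu_z(y)=z\,xy$. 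Here the binary hypothesis and $x\neq y$ are essential: one of $x,y$ equals $z$ and the other equals $\bar z$, so $\{\mu_z(x),\mu_z(y)\}=\{z,\ z\bar z\}$ as an ordered pair in either order ($z\cdot z\bar z$ if $x=z$, or $z\bar z\cdot z$ if $x=\bar z$), and in both cases the concatenation is $zz\bar z=z\,xy$ (since $xy=z\bar z$ or $\bar z z$ respectively — wait, these differ).

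I expect the one genuinely delicate point to be this last case check, because $xy$ is not symmetric in $x,y$: if $x=z$, $y=\bar z$, then $xy=z\bar z$ and $\mu_z(x)\mu_z(y)=z\cdot z\bar z=zz\bar z=z\,(z\bar z)=z\,xy$, good; if $x=\bar z$, $y=z$, then $xy=\bar z z$ and $\mu_z(x)\mu_z(y)=z\bar z\cdot z=z\bar z z=z\,(\bar z z)=z\,xy$, also good. So in both orderings the two-letter block $\mu_z(x)\mu_z(y)$ equals $z$ followed by $xy$, which is exactly what is needed to match $\psi(zw)\,xy=\mu_z(\psi(w))\,z\,xy$. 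This closes the induction.

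An alternative, perhaps cleaner, route avoids re-deriving the fragment of Justin's formula: one can invoke Lemma~\ref{Ju00} in spirit, or simply cite~\eqref{eq:Jsta} from the introduction, which already records $\psi(v)xy=\mu_v(xy)$ as a consequence of Justin's formula; but since the excerpt presents Proposition~\ref{prop:J0000} as the place where this is stated for the binary alphabet, the self-contained inductive argument above is the natural choice. The only subtlety worth flagging in the writeup is that the argument genuinely uses $\card(\Aa)=2$ and $x\neq y$ — for a larger alphabet the block $\mu_z(x)\mu_z(y)$ need not begin with $z$ (e.g.\ if neither $x$ nor $y$ equals $z$), and indeed the conclusion fails there.
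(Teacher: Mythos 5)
Your argument is correct. Note first that the paper does not actually prove Proposition~\ref{prop:J0000}: it is quoted with a citation to \cite[Prop.~4.10]{SC}, so any self-contained derivation is already doing more than the text does. Your induction is the natural one and it works: writing $v=zw$ with $z$ a letter, the inductive hypothesis together with Justin's formula in the form $\psi(zw)=\mu_z(\psi(w))\psi(z)=\mu_z(\psi(w))\,z$ reduces everything to the two-letter identity $\mu_z(x)\mu_z(y)=z\,xy$, and your case check ($x=z$ gives $z\cdot z\bar z$, $x=\bar z$ gives $z\bar z\cdot z$, both equal to $z$ followed by $xy$) settles it in both orders. The momentary hesitation in your second paragraph ("wait, these differ") is resolved correctly in the next one, but in a final writeup you should present only the clean case analysis. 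Your closing remark is also the right thing to flag: the step $\mu_z(x)\mu_z(y)=z\,xy$ is exactly where $\card(\Aa)=2$ and $x\neq y$ enter, which explains why the statement is special to the binary alphabet and does not extend verbatim to episturmian settings. One small stylistic point: since the paper later proves Corollary~\ref{cor:J0000C} \emph{from} this proposition, your alternative suggestion of citing \eqref{eq:Jsta} from the introduction would be circular in spirit (that formula is itself presented as a consequence of this proposition), so the inductive argument is indeed the right choice.
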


  \begin{cor}\label{cor:J0000C}  For any $w, v\in \Aa ^*$ and $x,y\in \Aa $, $x\neq y$,
\[\psi(wv)xy = \mu_w(\psi(v)xy).\]
  \end{cor}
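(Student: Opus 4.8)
The plan is to derive Corollary~\ref{cor:J0000C} directly from Proposition~\ref{prop:J0000} and Justin's formula. First I would apply Proposition~\ref{prop:J0000} to the word $wv$: since $x\neq y$, this gives $\mu_{wv}(xy)=\psi(wv)xy$. Next, I would use the fact that $\mu$ is a monoid antihomomorphism in the sense made explicit in the excerpt, namely that $\mu_{wv}=\mu_w\circ\mu_v$ (this is immediate from the definition $\mu_{x_1\cdots x_n}=\mu_{x_1}\circ\cdots\circ\mu_{x_n}$ applied to the concatenation $wv$). Hence $\mu_{wv}(xy)=\mu_w(\mu_v(xy))$.

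Now I would apply Proposition~\ref{prop:J0000} a second time, this time to $v$, obtaining $\mu_v(xy)=\psi(v)xy$. Substituting this into the previous equality yields
\[
\psi(wv)xy=\mu_{wv}(xy)=\mu_w(\mu_v(xy))=\mu_w(\psi(v)xy),
\]
which is exactly the claimed identity. So the corollary follows by two invocations of Proposition~\ref{prop:J0000} together with functoriality of $v\mapsto\mu_v$.

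There is essentially no obstacle here: the only point requiring a moment's care is the compositional behaviour $\mu_{wv}=\mu_w\circ\mu_v$, but this is built into the notation set up just before Theorem~\ref{thm:J}, and it is consistent with Justin's formula itself (one could alternatively read the statement off Theorem~\ref{thm:J} by noting $\psi(wv)=\mu_w(\psi(v))\psi(w)$ and $\mu_w(\psi(v)xy)=\mu_w(\psi(v))\mu_w(xy)=\mu_w(\psi(v))\psi(w)xy$, using $\mu_w(xy)=\psi(w)xy$ from Proposition~\ref{prop:J0000}; the two routes coincide). I would present the short three-line computation above as the proof, perhaps remarking that it is an immediate consequence of Proposition~\ref{prop:J0000}.
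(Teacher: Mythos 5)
Your proof is correct and coincides with the paper's own argument: both apply Proposition~\ref{prop:J0000} to $wv$ and to $v$, linked by the factorization $\mu_{wv}=\mu_w\circ\mu_v$. Nothing further is needed.
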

  \begin{proof} By the preceding proposition one has:
\[\psi(wv)xy = \mu_{wv}(xy)= \mu_w(\mu_v(xy)) = \mu_w(\psi(v)xy). \qedhere\]
  \end{proof}
  
  \medskip

 Let $v$ be a non-empty  word. We let  $v^-$ (resp., $^-v$) denote the word obtained from $v$ by deleting the last
(resp., first) letter. If $v$ is not constant, we  let $v_+$ (resp., $_+v$) denote the longest  prefix (resp., suffix) of $v$ which is immediately
followed (resp., preceded) by the complementary of the last (resp., first) letter of $v$. For instance, if $v= abbabab$, one has
$v^- = abbaba$, $v_+= abbab$, $^-v= bbabab$, and  $_+v = babab$. From the definition one has 
\begin{equation}\label{eq:E+}
_+(E(v)) = E(_+v), \  (E(v))_+ = E(v_+), \  (_+v)\sptilde  = (v\sptilde ) _+  \ . 
 \end{equation}
 
 As shown in \cite{adlADL}, and as we shall see in some details in the next sections,  the words $v^-$, $v_+$ and $_+v$,  $^-v$  play an essential role in the combinatorics of Christoffel words.

 \begin{prop}\label{can:duebis} Let $v\in\Aa ^*$ be non-constant. Then
\[ \mu_v(a)= \mu_{v_+}(ba)= \psi(v_+)ba,  \  \mu_v(b)= \mu_{v^-}(ab)= \psi(v^-)ab, \text{ if } v^{(L)}=a \]
and
\[ \mu_v(a)= \mu_{v^-}(ba)= \psi(v^-)ba, \  \mu_v(b)= \mu_{v_+}(ab)= \psi(v_+)ab, \text{ if } v^{(L)}=b. \]
\end{prop}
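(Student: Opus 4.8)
The plan is to reduce everything to the compositional definition $\mu_v=\mu_{x_1}\circ\cdots\circ\mu_{x_n}$ together with Proposition~\ref{prop:J0000}. Note first that the right-hand equalities $\mu_{v_+}(ba)=\psi(v_+)ba$, $\mu_{v^-}(ab)=\psi(v^-)ab$, $\mu_{v^-}(ba)=\psi(v^-)ba$ and $\mu_{v_+}(ab)=\psi(v_+)ab$ are nothing but instances of Proposition~\ref{prop:J0000}; they are well defined precisely because $v$ is non-constant, so $v_+$ exists. Hence the real content is the identities $\mu_v(a)=\mu_{v_+}(ba)$, $\mu_v(b)=\mu_{v^-}(ab)$ (when $v^{(L)}=a$) and their mirror images (when $v^{(L)}=b$).

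The $v^-$ identities are immediate: if $v^{(L)}=a$ then $v=v^-a$, so $\mu_v=\mu_{v^-}\circ\mu_a$ and $\mu_v(b)=\mu_{v^-}(\mu_a(b))=\mu_{v^-}(ab)$; symmetrically $\mu_v(a)=\mu_{v^-}(ba)$ when $v^{(L)}=b$. For the $v_+$ identities I would use the defining property of $v_+$ to produce a factorization of $v$. Suppose $v^{(L)}=a$. Since $v$ is non-constant it contains an occurrence of $b$, and by definition $v_+$ is the prefix of $v$ ending immediately before the last such occurrence; by maximality, what follows that $b$ contains no $b$, and since $v$ ends in $a$ it is a nonempty block of $a$'s. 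This gives $v=v_+\,b\,a^{m}$ with $m\geq 1$, whence $\mu_v=\mu_{v_+}\circ\mu_b\circ\mu_a^{\,m}$. Using $\mu_a(a)=a$ (so $\mu_a^{\,m}(a)=a$) we obtain $\mu_v(a)=\mu_{v_+}(\mu_b(a))=\mu_{v_+}(ba)$, as wanted. The case $v^{(L)}=b$ is handled the same way via the symmetric factorization $v=v_+\,a\,b^{m}$ with $m\geq 1$ and $\mu_b(b)=b$; alternatively it follows by applying $E$ to the previous case, using $(E(v))_+=E(v_+)$ from \eqref{eq:E+} and the readily checked relation $E\circ\mu_w=\mu_{E(w)}\circ E$.

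I do not expect a genuine obstacle here: the argument is a short manipulation with the elementary rules $\mu_a(a)=a$, $\mu_a(b)=ab$, $\mu_b(a)=ba$, $\mu_b(b)=b$ and the quoted Proposition~\ref{prop:J0000}. The only point deserving a line of care is the factorization $v=v_+\,b\,a^{m}$, which one can make fully explicit through the integral representation $v=x_0^{\alpha_0}\cdots x_n^{\alpha_n}$: when $x_n=a$ one has $n\geq 1$, $x_{n-1}=b$, and $v_+=x_0^{\alpha_0}\cdots x_{n-1}^{\alpha_{n-1}-1}$ with $m=\alpha_n$.
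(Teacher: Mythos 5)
Your argument is correct and follows essentially the same route as the paper: the paper likewise writes $v=v_{+}ba^{r}$ (and $v=v^{-}a$) when $v^{(L)}=a$, reduces $\mu_{v}(a)$ and $\mu_{v}(b)$ to $\mu_{v_{+}}(ba)$ and $\mu_{v^{-}}(ab)$ via the elementary rules for $\mu_{a},\mu_{b}$, and then invokes Proposition~\ref{prop:J0000}, treating the case $v^{(L)}=b$ symmetrically. Your extra verification of the factorization through the integral representation is fine but not needed beyond the one line the paper devotes to it.
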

\begin{proof} We shall prove the result only when $v^{(L)}=a$. The case $v^{(L)}=b$ is similarly dealt with. We can write
$v= v_+ba^r$ for a suitable $r>0$. Therefore, by Proposition~\ref{prop:J0000} one has:
\begin{equation}\label{eq:conc1}
  \mu_v(a)=   \mu_{v_+ba^r}(a) =  \mu_{v_+b}(a)=  \mu_{v_+}(ba) = \psi(v_+)ba,
  \end{equation}
and
\begin{equation}\label{eq:conc2}
 \mu_v(b)= \mu_{v^-a}(b)= \mu_{v^-}(ab)=  \psi(v^-)ab,  
\end{equation}
which proves the assertion.\end{proof}

\begin{example}
 Let  $v= abbab$. One has $v_+= abb, v^- = abba$, and $ v^{(L)}= b$. Hence,
$\psi(v_+)= ababa$, $\psi(v^-)= ababaababa$,  $\mu_{abbab}(a)= ababaabababa = \psi(v^-)ba$,
and $\mu_{abbab}(b)=ababaab= \psi(v_+)ab$.
\end{example}

An immediate consequence of Proposition~\ref{can:duebis} is the following (see also~\cite{adlADL}):
\begin{cor}\label{cor: Lynd0} Let $v\in\Aa ^*$ be non-constant,  $x$  the last letter of $v$, and $y= {\bar x}$. Then
\[\psi(v)= \psi(v_+)yx\psi(v^-) = \psi(v^-)xy \psi(v_+).\]
\end{cor}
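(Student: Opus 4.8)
The plan is to derive Corollary~\ref{cor: Lynd0} directly from Proposition~\ref{can:duebis} together with Justin's formula (Theorem~\ref{thm:J}). I would treat only the case $x = v^{(L)} = a$, $y = b$, since the case $x = b$ follows by applying the automorphism $E$ and using property~P6 of Proposition~\ref{prop:basicp} together with the relations~\eqref{eq:E+}.

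First I would apply Justin's formula with the decomposition $v = v^- a$: this gives $\psi(v) = \psi(v^- a) = \mu_{v^-}(\psi(a))\psi(v^-) = \mu_{v^-}(a)\psi(v^-)$. Now I need to rewrite $\mu_{v^-}(a)$. Here I would split according to the last letter of $v^-$. Since $v^{(L)} = a$, we can write $v = v_+ b a^r$ for some $r > 0$, so $v^- = v_+ b a^{r-1}$; regardless of whether $r = 1$ or $r > 1$, the word $v^-$ (if non-constant) has the property that $(v^-)_+ = v_+$, and a short computation via Proposition~\ref{prop:J0000} — exactly as in~\eqref{eq:conc1} — yields $\mu_{v^-}(a) = \mu_{v_+}(ba) = \psi(v_+)ba$. (If $v^- $ is constant, i.e.\ $v = b a^r$, then $v_+ = \varepsilon$ and one checks the identity directly: $\psi(v) = \psi(ba^r) = a^{r-1}ba^{r-1}$... actually $\psi(ba^r)$ — I would verify this small base case separately, since here $v^-$ being constant needs the convention $\psi(\varepsilon) = \varepsilon$.) Combining, $\psi(v) = \psi(v_+)ba\,\psi(v^-) = \psi(v_+)yx\,\psi(v^-)$, which is the first claimed equality.

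For the second equality $\psi(v) = \psi(v^-)xy\,\psi(v_+) = \psi(v^-)ab\,\psi(v_+)$, I would instead invoke the symmetry of $\psi$ under reversal. By property~P5 and the fact that $\psi(w)$ is always a palindrome, together with the third relation in~\eqref{eq:E+}, one has $\psi(\widetilde v) = \widetilde{\psi(v)}$... more precisely, the palindromization map satisfies $\psi(\widetilde v)$ related to $\psi(v)$; alternatively, and more cleanly, I would just take the reversal of the first identity applied to $\widetilde v$. Since $\psi(v)$ is a palindrome, reversing $\psi(v) = \psi(v_+)\,ba\,\psi(v^-)$ gives $\psi(v) = \widetilde{\psi(v^-)}\,ab\,\widetilde{\psi(v_+)} = \psi(v^-)\,ab\,\psi(v_+)$, using that $\psi(v^-)$ and $\psi(v_+)$ are themselves palindromes. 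This yields the second equality immediately, with $xy = ab$.

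The main obstacle I anticipate is not conceptual but bookkeeping: making sure the decomposition $v = v_+ b a^r$ and the resulting identification $\mu_{v^-}(a) = \psi(v_+)ba$ is valid in all boundary cases — in particular when $r = 1$ (so that $v^- = v_+ b$ ends in $b$, not $a$) versus $r > 1$, and when $v_+ = \varepsilon$. Proposition~\ref{can:duebis} as stated already handles the non-constant case uniformly, so the cleanest route is simply to apply it to $\mu_{v^-}$ after checking $v^-$ is non-constant, and to dispatch the degenerate case $v = ba^r$ (equivalently $E$ of $v = ab^r$) by a direct one-line computation. Once that is settled, everything else is a two-step substitution plus one reversal argument.
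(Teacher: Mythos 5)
Your proof is correct and uses essentially the same ingredients as the paper's: Proposition~\ref{prop:J0000} (equivalently Justin's formula) together with the computation $\mu_{v}(a)=\mu_{v_{+}b}(a)=\mu_{v_{+}}(ba)=\psi(v_{+})ba$ underlying Proposition~\ref{can:duebis}. The paper's packaging is slightly slicker: it writes $\psi(v)xy=\mu_{v}(xy)=\mu_{v}(x)\mu_{v}(y)$ and reads off both factors from Proposition~\ref{can:duebis} applied to $v$ itself (note $\mu_{v}(a)=\mu_{v^{-}}(a)$ when $v$ ends in $a$), so the trailing $xy$ cancels to give the first identity and none of the boundary-case bookkeeping about $v^{-}$ (constant or not, last letter $a$ or $b$) that you flag at the end is ever needed.
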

\begin{proof}
Since $v$ is not constant, $\psi(v)xy = \mu_v(xy)= \mu_v(x)\mu_v(y)$, and by Proposition~\ref{can:duebis}, $\mu_v(x)= \psi(v_+)yx$
and $\mu_v(y)= \psi(v^-)xy$. The result follows.
\end{proof}

 \subsection{Central, standard, and Christoffel words}\label{sec:duetris}
 
 In the study of combinatorial properties of Sturmian words a crucial
role is played by the set $\PER$ of all finite words $w$ having two
periods $p$ and $q$ such that $\gcd(p,q)=1$ and $|w|= p+q-2$.

The set $\PER$ was introduced in~\cite{DM} where its main properties
were studied.  It has been proved that $\PER$ is equal
to the set of the palindromic prefixes of all standard Sturmian words, i.e.,
\[ \PER = \psi(\Aa ^*).\] 
The words of $\PER$ have been called \emph{central} in~\cite[Chap.2]{LO2}.  

The following  structural characterization of central words was
proved in~\cite{deluca} (see, also \cite{CdL}).

\begin{prop}
	\label{Prop:uno}
	A word $w$ is central if and only if $w$ is a constant
	 or it satisfies the equation:
	\[
		w=w_1abw_2=w_2baw_1
	\]
	with $w_{1},w_{2}\in \Aa ^*$.  Moreover, in this latter case, $w_1$
	and $w_2$ are central words, $p=|w_1|+2$ and $q=|w_2|+2$ are
	coprime periods of $w$, and $\min\{p,q\}$ is the minimal period of
	$w$.
\end{prop}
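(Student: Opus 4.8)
The statement to prove is Proposition~\ref{Prop:uno}, the structural characterization of central words. Let me sketch a proof plan.

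\medskip

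The plan is to prove both implications, using the characterization $\PER = \psi(\Aa^*)$ together with Justin's formula (Theorem~\ref{thm:J}) and Corollary~\ref{cor: Lynd0}.

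For the ``only if'' direction, suppose $w$ is central and not constant; then $w = \psi(v)$ for some non-constant $v \in \Aa^*$. Let $x = v^{(L)}$ and $y = \bar x$. By Corollary~\ref{cor: Lynd0}, $w = \psi(v_+)\,yx\,\psi(v^-) = \psi(v^-)\,xy\,\psi(v_+)$. Setting $\{w_1, w_2\} = \{\psi(v_+), \psi(v^-)\}$ matched to the letters appropriately (if $x = a$, take $w_1 = \psi(v_+)$, $w_2 = \psi(v^-)$; if $x = b$, swap), we obtain exactly $w = w_1 a b w_2 = w_2 b a w_1$. Since $v_+$ and $v^-$ are prefixes of some word whose image under $\psi$ is defined, $\psi(v_+)$ and $\psi(v^-)$ are central, i.e.\ $w_1, w_2 \in \PER$. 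Then I would verify the period claims directly: from $w = w_1 a b w_2$ we read $|w| = |w_1| + |w_2| + 2$, and I must check that $p := |w_1| + 2$ and $q := |w_2| + 2$ are periods of $w$ with $p + q - 2 = |w|$. The equality $w_1 a b w_2 = w_2 b a w_1$ forces $w_1$ to be both a prefix and (up to the central overlap structure) compatible with a shift by $|w_2| + 2$; I would argue that $w_1$ being a prefix of $w$ of length $p - 2$, together with $w_2 a$... — more carefully, the standard argument is that $w_2 b a w_1 = w_1 a b w_2$ exhibits $w$ as having period $|w_2 b a| = q$ on its prefix of length $|w_2| + 2 + |w_1| = |w|$, hence period $q$; symmetrically period $p$. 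Coprimality of $p$ and $q$ then follows from $|w| = p + q - 2$ together with the Fine–Wilf theorem: if $\gcd(p,q) = d > 1$ then $w$ would have period $d$ and be constant, contradiction. Finally $\min\{p,q\}$ is the minimal period: any smaller period would again combine with $p$ or $q$ via Fine–Wilf on a word of length $p+q-2 \ge p + (\text{that period}) - \gcd$ to force constancy.

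\medskip

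For the ``if'' direction, suppose $w$ is constant (then trivially central, as $\psi(x^k) = x^k$) or $w = w_1 a b w_2 = w_2 b a w_1$ with $w_1, w_2 \in \Aa^*$; I want to conclude $w \in \PER$. The cleanest route is to induct on $|w|$. From $w = w_1 a b w_2 = w_2 b a w_1$ one deduces (comparing prefixes) that one of $w_1, w_2$ is a prefix of the other, so WLOG $w_2 = w_1 z$ or $w_1 = w_2 z$; chasing the equation shows $w_1$ and $w_2$ themselves satisfy an equation of the same form (or are constant), so by induction both are central, i.e.\ $w_1 = \psi(u_1)$, $w_2 = \psi(u_2)$. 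Then I would recognize $w$ as $\psi(v)$ for an appropriate $v$: the shorter of $u_1, u_2$ extended by a letter. Alternatively — and this is probably smoother — I invoke the known fact $\PER = \psi(\Aa^*)$ only for the reverse direction and instead prove directly that a word with coprime periods $p, q$ and length $p+q-2$ is central by the Fine–Wilf-flavored combinatorial argument: such a $w$ has a unique structure, and its palindromic prefixes are controlled. Since the paper cites \cite{DM, deluca, CdL} for this, I would present the short induction and defer the detailed periodicity bookkeeping to those references where convenient.

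\medskip

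The main obstacle is the periodicity/coprimality bookkeeping in both directions — specifically, extracting the exact periods $p = |w_1|+2$, $q = |w_2|+2$ from the palindrome identity and proving minimality of $\min\{p,q\}$. This is where Fine–Wilf (``two periods $p,q$ with length $\ge p+q-\gcd(p,q)$ force period $\gcd$'') must be applied carefully, using that $w$ non-constant means no period of $1$. The translation between the $\psi$-picture (directive words $v_+, v^-$) and the periods is itself the content of Corollary~\ref{cor: Lynd0} plus the length identities $|\psi(v_+)| + |\psi(v^-)| + 2 = |\psi(v)|$, so the structural half is essentially immediate; the arithmetic half is the part requiring genuine care.
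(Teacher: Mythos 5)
The paper does not actually prove this proposition: it is quoted from \cite{deluca} (see also \cite{CdL}), so there is no in-paper argument to compare yours against, and I am judging the proposal on its own terms. Your architecture for the direct implication is reasonable: Corollary~\ref{cor: Lynd0} legitimately precedes Proposition~\ref{Prop:uno} in the paper's logical order and immediately yields $w=\psi(v^{-})\,xy\,\psi(v_{+})=\psi(v_{+})\,yx\,\psi(v^{-})$ with both factors central, and your shift argument correctly shows that $p=|w_{1}|+2$ and $q=|w_{2}|+2$ are periods. But your coprimality step fails as stated: if $d=\gcd(p,q)>1$, Fine--Wilf gives $w$ period $d$, and a word of period $d\geq 2$ is not thereby constant (consider $abab$). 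The correct contradiction is local: position $p$ of $w$, read from $w=w_{1}abw_{2}$, carries the letter $b$, while position $q$, read from $w=w_{2}baw_{1}$, carries the letter $a$; since $p\equiv q\equiv 0\pmod{d}$, a period $d$ would force $a=b$. With that repair, your double application of Fine--Wilf for the minimality of $\min\{p,q\}$ does go through.

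For the converse you make the problem harder than it is and then leave its crux unproved. In this paper ``central'' means, by definition, membership in $\PER$: two coprime periods $p,q$ with $|w|=p+q-2$. So to show that $w=w_{1}abw_{2}=w_{2}baw_{1}$ implies $w$ is central you need neither a directive word nor an induction --- the shift argument plus the corrected coprimality argument already place $w$ in $\PER$. The induction you sketch (``chasing the equation shows $w_{1}$ and $w_{2}$ themselves satisfy an equation of the same form'') is precisely the nontrivial combinatorial content, it is never carried out, and it is needed only for the ``moreover'' clause that $w_{1}$ and $w_{2}$ are themselves central. Note finally that your Corollary~\ref{cor: Lynd0} argument establishes that clause only for the particular decomposition $\{w_{1},w_{2}\}=\{\psi(v^{-}),\psi(v_{+})\}$, whereas the proposition asserts it for any $w_{1},w_{2}$ satisfying the equation; you still owe either a uniqueness argument for the decomposition (which does follow from coprimality of the two period pairs via Fine--Wilf) or a direct proof that arbitrary such $w_{1},w_{2}$ are central.
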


 The following lemma, which will be  useful in the following, is 
in~\cite{deluca}.

\begin{lemma}\label{lem:uno}
	\label{lem:lc}
	For any $w\in \PER $, one has $(wa)^{(+)},(wb)^{(+)}\in \PER $.  More
	precisely, if $w=w_1abw_2=w_2baw_1$, then
	\[
		(wa)^{(+)}=w_{2}baw_{1}abw_{2}\,,\quad
		(wb)^{(+)}=w_{1}abw_{2}baw_{1}\,.
	\]
	If $w=x^n$ with $\{x,y\}=\Aa $, then $(wx)^{(+)}=x^{n+1}$ and
	$(wy)^{(+)}=x^nyx^n$.
\end{lemma}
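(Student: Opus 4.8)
The plan is to prove Lemma~\ref{lem:uno} directly from the structural characterization of central words (Proposition~\ref{Prop:uno}) together with the description of the right palindromic closure $w^{(+)}=vQv\sptilde$, where $Q$ is the longest palindromic suffix of $w=vQ$. Since $w^{(+)}$ is by definition a palindrome having $w$ as a prefix, and we already know from Lemma~\ref{lem:uno}'s preceding claim — or rather, from the fact stated just above that $(wa)^{(+)},(wb)^{(+)}\in\PER$ is what we are proving — I should instead think of it the other way: I will \emph{guess} the explicit palindromes on the right-hand sides, verify that each is indeed a palindrome having the correct prefix, and then argue minimality.

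First I would treat the constant case $w=x^n$, which is immediate: the shortest palindrome with prefix $x^{n}x=x^{n+1}$ is $x^{n+1}$ itself (it already is a palindrome), and the shortest palindrome with prefix $x^{n}y$ has longest palindromic suffix $Q=\varepsilon$ in the factorization $x^n y = (x^n y)\varepsilon$, so its closure is $x^n y x^n$. (One must check that $Q=\varepsilon$ here, i.e.\ that no nonempty suffix of $x^n y$ is a palindrome; but a nonempty palindromic suffix would have to start and end with the same letter, and every nonempty suffix of $x^n y$ ends in $y$ while only the suffix $y$ itself starts with $y$, and $y$ is trivially a palindrome — so actually $Q=y$ and the closure is $(x^n y)\,y^{-1}$-adjusted... let me be careful.) The correct reasoning: write $x^n y = vQ$ with $Q$ the longest palindromic suffix; then $Q=y$ (the suffix $y$ is a palindrome, and no longer suffix $x^k y$ with $k\ge 1$ is, since it begins with $x$ and ends with $y\ne x$), so $v=x^n$ and $(x^n y)^{(+)}=x^n\, y\, x^n$, as claimed. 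Similarly $x^n x = x^{n+1}$ is already a palindrome so equals its own closure.

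Next, the main case: $w=w_1abw_2=w_2baw_1$ with $w_1,w_2\in\Aa^*$. I claim the longest palindromic suffix of $wa = w_1 a b w_2 a$ is precisely $a w_1 a b w_2 a = a w$ — wait, that has the wrong parity; let me reconsider using the second factorization. We have $wa = w_2 b a w_1 a$. I want to show $w^{(+)}$-style: the candidate answer is $w_2 b a w_1 a b w_2$. To verify it is a palindrome: its reversal is $w_2\sptilde\, b\, a\, w_1\sptilde\, a\, b\, w_2\sptilde$, and since $w_1,w_2$ are central (hence palindromes) by Proposition~\ref{Prop:uno}, this equals $w_2 b a w_1 a b w_2$ — good. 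It clearly has $wa = w_2 b a w_1 a$ as a prefix. For minimality I must show $Q = a w_1 a b w_2$ is the \emph{longest} palindromic suffix of $wa$; equivalently, writing $wa = w_2 b \cdot (a w_1 a b w_2)$, that $Q$ is palindromic (immediate, as above) and that no longer suffix is a palindrome. A longer palindromic suffix $Q'$ would satisfy $|Q'|\ge |Q|+1$, and then $Q'$ would have $w$ (of length $|Q|-1 = |w|$... hmm, $|Q| = 1 + |w_1| + 1 + 1 + |w_2| = |w_1|+|w_2|+3 = |w|+1$, since $|w| = |w_1|+|w_2|+2$) — so $|Q| = |w|+1$ and $|wa| = |w|+1$, meaning $Q$ already has length $|wa|$?! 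That cannot be right for a \emph{proper} closure. The resolution is that the claimed closure $w_2baw_1abw_2$ has length $|w_2|+3+|w_1|+|w_2| = |w| + |w_2| + 1$, i.e.\ it is $wa$ followed by $b w_2$, so $w^{(+)} = (wa) (bw_2)$ and the longest palindromic suffix $Q$ of $wa$ has length $|wa| - |bw_2|\cdot$... no: $w^{(+)} = v Q v\sptilde$ with $wa = vQ$, so $|w^{(+)}| = 2|v| + |Q| = |wa| + |v|$, giving $|v| = |w_2|+1$, so $v = w_2 b$ and $Q = (w_2 b)^{-1} wa = a w_1 a$ — a palindrome since $w_1$ is. So the true claim is $Q = a w_1 a$, and $w^{(+)} = (wa)^{(+)} = w_2 b (a w_1 a) b w_2 = w_2 b a w_1 a b w_2$, matching the lemma.

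So the heart of the proof, and the step I expect to be the main obstacle, is showing that $Q = a w_1 a$ is the \textbf{longest} palindromic suffix of $wa = w_2 b a w_1 a$. I would argue by contradiction: suppose some suffix $Q'$ of $wa$ with $|Q'| > |Q| = |w_1|+2$ is a palindrome. Then $Q'$ contains the occurrence of $b$ just before $a w_1 a$, so $Q' = \alpha b a w_1 a$ for some nonempty suffix $\alpha$ of $w_2$; being a palindrome forces $Q'$ to end in $\alpha\sptilde$ read backwards at the front... more precisely $Q' = Q'\sptilde$ gives $a w_1 a b \alpha\sptilde = \alpha b a w_1 a$, so $\alpha\sptilde$ is a prefix of $\alpha b a w_1 a$ and in particular (comparing lengths, $|\alpha| \le |w_2| < |a w_1 a b|$ is not guaranteed, so one may need to iterate) one extracts periodicity information contradicting $|w| = \pi$-related bounds, or more cleanly: such a $Q'$ would make $\psi$ of some directive word have a palindromic prefix of length strictly between $|w|$ and $|w^{(+)}|$ not of the form $\psi(u)$, contradicting property P4 of Proposition~\ref{prop:basicp} combined with the known fact that consecutive central prefixes in a directive sequence are exactly the $\psi(x_{[n]})$. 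Actually the cleanest route avoiding circularity: use that $w\in\PER$ has coprime periods $p=|w_1|+2$, $q=|w_2|+2$ with $|w|=p+q-2$; a palindromic suffix $Q'$ of $wa$ longer than $a w_1 a$ but shorter than all of $wa$ would give $wa$ a period $|wa|-|Q'| < |wa| - (|w_1|+2) = |w_2|+1 = q-1 < q$, hence $\le p$ or strictly less, and combined with period... — I would push the Fine–Wilf / periodicity argument here, using coprimality of $p,q$ and $|w|=p+q-2$, to force a contradiction, exactly as the periodicity structure of central words is exploited in \cite{deluca,DM}. The symmetric computation handles $(wb)^{(+)}$ via the first factorization $wb = w_1 a b w_2 b$, giving longest palindromic suffix $b w_2 b$ and closure $w_1 a b w_2 b a w_1$; alternatively one obtains it from the $(wa)^{(+)}$ case by applying the automorphism $E$ (using P6 of Proposition~\ref{prop:basicp} and that $E(w_1 ab w_2) = \bar w_1 ba \bar w_2$, noting $E$ of a central word is central with the roles of the two periods unchanged). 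I would write out one case in full and obtain the other by this symmetry.
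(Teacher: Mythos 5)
Your overall plan --- exhibit the candidate palindrome, check that it is a palindrome admitting $wa$ (resp.\ $wb$) as a prefix, and then prove that $aw_1a$ (resp.\ $bw_2b$) is the \emph{longest} palindromic suffix of $wa$ (resp.\ $wb$) --- is the right one; note that the paper itself gives no proof of this lemma but imports it from \cite{deluca}, so there is no internal proof to compare against. Your constant case is correct, your length bookkeeping (after the self-corrections) lands on the correct suffixes $Q=aw_1a$ and $Q=bw_2b$, and obtaining the $(wb)^{(+)}$ formula from the $(wa)^{(+)}$ one via the automorphism $E$ is legitimate.

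The genuine gap is that the maximality of $Q$, which you yourself call ``the heart of the proof,'' is never established: you list three possible routes and commit to none, and the route you lean on most rests on a false step. You assert that a palindromic suffix $Q'$ of $wa$ with $|Q'|>|Q|$ ``would give $wa$ a period $|wa|-|Q'|$.'' That implication holds only when the ambient word is itself a palindrome (a palindrome with a proper palindromic suffix of length $\ell$ has a border of length $\ell$, hence period equal to the length difference); since $wa=w_2baw_1a$ is in general not a palindrome, no period of $wa$ follows. The periodicity must instead be extracted from $Q'$ itself: any palindromic suffix of $wa$ longer than $aw_1a$ has the form $Q'=\alpha baw_1a$ with $\alpha$ a nonempty suffix of $w_2$ (the suffix $baw_1a$ begins with $b$ and ends with $a$, so it is excluded), and then $Q'$, being a palindrome with proper palindromic suffix $aw_1a$, has period $|\alpha|+1\leq q-1$, while its prefix $\alpha baw_1$ is a suffix of $w$ and so inherits the coprime periods $p=|w_1|+2$ and $q=|w_2|+2$; one must then play these periods against one another (Fine and Wilf together with $\gcd(p,q)=1$ and $|w|=p+q-2$) to reach a contradiction. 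That is exactly the step your proposal leaves as ``I would push the Fine--Wilf argument here,'' and it is where all the work lies. Your alternative via property P4 of Proposition~\ref{prop:basicp} also needs care to avoid circularity, since knowing which palindromic prefixes $\psi(va)$ has is essentially equivalent to the formula being proved. As it stands, the proposal is a correct plan with accurate computations surrounding a hole at the decisive point.
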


 Characteristic  Sturmian words can be equivalently defined in the following
way.  Let $c_0,c_1,\ldots,c_n,\ldots$ be any sequence of
integers such that $c_0\geq 0$ and $c_i>0$ for $i>0$.  We define,
inductively, the sequence of words $(s_n)_{n\geq 0}$, where
\[
	s_0=b,\ s_1=a,\text{ and }
	s_{n+1}=s_n^{c_{n-1}}s_{n-1}\text{ for }n\geq 1\,.
\]
The sequence $(s_n)_{n\geq 0}$ converges to a limit $s$ which is a
characteristic Sturmian word (cf.~\cite{LO2}).  Every characteristic
Sturmian word is obtained in this way. The Fibonacci word is obtained
when  $c_i=1$ for  $i\geq 0$.  

We let  $\Stand$ denote
the set of all the words $s_n$, $n\geq 0$ of any sequence
$(s_n)_{n\geq 0}$.  Any element  of $\Stand$ is called \emph{standard Sturmian word},
or simply \emph{standard word}.
A standard word different from a single letter is called \emph{proper}.

The following remarkable relation existing between standard and central
words has been proved in \cite{DM}:
\[ \Stand = \Aa  \cup \PER \{ab, ba\}.\]
More precisely, the following holds (see, for instance~\cite[Prop.~4.9]{SC}):
\begin{prop} \label{prop:standStu}Any proper standard word can be uniquely expressed
as $\mu_v(xy)$ with $\{x,y\}=\{a,b\}$ and $v\in \Aa ^*$. 
\end{prop}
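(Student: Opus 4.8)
The plan is to reduce the statement entirely to two facts already available in the excerpt: the set-theoretic description $\Stand = \Aa \cup \PER\{ab,ba\}$ together with $\PER = \psi(\Aa^*)$, and Proposition~\ref{prop:J0000}, which states that $\mu_v(xy) = \psi(v)xy$ whenever $\{x,y\} = \{a,b\}$. The proposition to be proved is then just the combination of these, once one observes that in the word $\psi(v)xy$ the ordered pair $(x,y)$ is literally the pair of last two letters, hence recoverable.

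For existence, let $s$ be a proper standard word. By definition $s$ is not a single letter, so by $\Stand = \Aa \cup \PER\{ab,ba\}$ it lies in $\PER\{ab,ba\}$; thus $s = wab$ or $s = wba$ for some central word $w$. Writing $w = \psi(v)$ with $v\in\Aa^*$ (possible since $\PER = \psi(\Aa^*)$) and applying Proposition~\ref{prop:J0000}, we get $s = \psi(v)xy = \mu_v(xy)$ with $(x,y)$ equal to $(a,b)$ or $(b,a)$, which is the required form.

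For uniqueness, suppose $\mu_v(xy) = \mu_{v'}(x'y')$ with $\{x,y\} = \{x',y'\} = \{a,b\}$. By Proposition~\ref{prop:J0000} this equality reads $\psi(v)xy = \psi(v')x'y'$. Since this word has length at least $2$, its penultimate and last letters are on one side $x,y$ and on the other $x',y'$, so comparing them forces $xy = x'y'$ (in particular this rules out the mixed case $\psi(v)ab = \psi(v')ba$). Cancelling the common suffix $xy$ gives $\psi(v) = \psi(v')$, and then injectivity of the palindromization map (property P1 of Proposition~\ref{prop:basicp}) yields $v = v'$. The whole argument is bookkeeping resting on Proposition~\ref{prop:J0000} and the injectivity of $\psi$; the only point needing a moment's care — and it is immediate — is that appending the two-letter word $xy$ to the palindrome $\psi(v)$ really does place $x$ and $y$ in the last two positions, so no genuine obstacle arises.
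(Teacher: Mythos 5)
Your argument is correct. There is, however, nothing in the paper to compare it with: the proposition is imported from the literature (it is quoted as Prop.~4.9 of the cited reference) and no proof is given in the text. What your write-up establishes is that, relative to the facts the paper has already recalled by that point --- the identities $\Stand=\Aa\cup\PER\{ab,ba\}$ and $\PER=\psi(\Aa^*)$, Proposition~\ref{prop:J0000}, and the injectivity of $\psi$ (property P1 of Proposition~\ref{prop:basicp}) --- the statement is indeed pure bookkeeping. Existence is the chain $s=wxy=\psi(v)xy=\mu_v(xy)$; for uniqueness, since $\psi(v)xy=\psi(v')x'y'$ as literal words, the common last letter gives $y=y'$, hence $x=\bar y=\bar{y'}=x'$, the two central prefixes have the same length and so coincide, and injectivity of $\psi$ yields $v=v'$. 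All steps check out. The only caveat worth recording is that your proof is exactly as unconditional as Proposition~\ref{prop:J0000}, which the paper likewise states without proof from the same source; a derivation from first principles would have to establish one of the two statements directly from the recursion $s_{n+1}=s_n^{c_{n-1}}s_{n-1}$ defining standard words. Within the logical economy of this paper, though, your derivation is complete and has no gap.
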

\noindent Hence, by Proposition~\ref{prop:J0000}  one has
\[\mu_v(xy)= \psi(v)xy.\]
Let us set for any  $v\in \Aa ^*$ and $x\in\Aa $, 
\begin{equation}\label{eq:muvi}
 p_x(v)= |\mu_v(x)|.
 \end{equation} 
 From Justin's formula one derives (cf.~\cite[Prop.~3.6]{adlZ1})  that $p_x(v)$ is the minimal period of $\psi(vx)$ and then a period of $\psi(v)$. Moreover, one has (cf.~\cite[Lemma 5.1]{adlZ2})
\begin{equation}\label{eq:minimalperiod}
 p_x(v)=  \pi(\psi(vx))= \pi(\psi(v)x) 
\end{equation}
 and $\gcd(p_x(v), p_y(v))=1$, so that
\[\pi(\psi(v))= \min \{p_x(v), p_y(v)\}.\]
Moreover, if $v$ is not constant, as $v_+$ is a proper prefix of $v^-$, by Proposition~\ref{can:duebis} one derives:
\begin{equation}\label{eq:perpsi}
\pi(\psi(v))= p_{v^{(L)}}(v)= |a\psi(v_+)b|.
\end{equation}
 Since $|\mu_v(xy)| = |\mu_v(x)|+  |\mu_v(y)|$, from Proposition~\ref{prop:standStu}  and~\eqref{eq:muvi} one has
 \begin{equation}\label{eq:cent1}
 |\psi(v)|= p_x(v)+p_y(v)-2.
 \end{equation}

Let us now introduce the important notion of \emph{Christoffel word} \cite{CFF} (see also \cite{BDR}). Let $p$ and $q$ be non-negative coprime integers, and $n= p+q>0$. The (lower) Christoffel word $w$ of slope $\frac{p}{q}$ is defined as $w= x_1\cdots x_n$ with 
\[ x_i = \begin{cases}
a & \text{if $ip \bmod n > (i-1)p \bmod n$} \\
b & \text{otherwise}
\end{cases}
\]
for $i=1,\ldots, n$,  where $k \bmod n$ denotes the remainder of the Euclidean division of $k$ by $n$.
Observe that the words $a$ and $b$ are the Christoffel words with slope
$\frac{0}{1}$ and
$\infty=\frac{1}{0}$, respectively.

The Christoffel words of slope $\frac{p}{q}$ with $p$ and $q$ coprime positive integers are called \emph{proper Christoffel words}. The term slope given to the fraction $\frac{p}{q}$  is due to the circumstance that one easily derives from the definition  that $p = |w|_b$ and $q= |w|_a$.

We observe that lower Christoffel words have also an interesting geometric interpretation  in terms of suitable paths in the integer lattice $\Nn\times\Nn$ (cf.~\cite{BLRS}). It is then natural to introduce the so-called upper Christoffel words, which can also be defined similarly to lower Christoffel words, by interchanging $a$ and $b$, as well as $p$ and $q$, in the previous definition. We shall not consider these latter words in the paper, since they are simply the reversal of lower Christoffel words.

\begin{example}\label{ex:unoA} Let $p=3$ and $q=8$. The Christoffel construction is represented by  the following diagram 
\[ 0\stackrel{a}{\longrightarrow} 3\stackrel{a}{\longrightarrow} 6\stackrel{a}{\longrightarrow} 9\stackrel{b}{\longrightarrow} 1\stackrel{a}{\longrightarrow} 4\stackrel{a}{\longrightarrow} 7\stackrel{a}{\longrightarrow} 10\stackrel{b}{\longrightarrow} 2\stackrel{a}{\longrightarrow} 5\stackrel{a}{\longrightarrow} 8\stackrel{b}{\longrightarrow} 0\]
\end{example}

Let $\CH$ denote the class of Christoffel words. The following important result, proved in \cite{BDL},
shows a basic relation existing between central and Christoffel words:
\[\CH  = a\PER b \cup \Aa . \]
Moreover, one has  \cite{BL, BDL} 
\[ \CH  = \mathrm{St} \cap \Lynd , \]
where $\Lynd $ denotes the set of Lyndon words and St  the set of  (finite) factors of all Sturmian words.
Thus  CH  equals the set of all  factors of Sturmian words which are Lyndon words. The following theorem summarizes some results on Christoffel words proved in \cite{BDL,BL,BDR}.
\begin{thm}\label{thm:Lynd} Let $w$ be a proper Christoffel word. Then the following hold:
\begin{enumerate}[1.]
\item There exist and are unique two Christoffel words
$w_1$ and $w_2$ such that $w=w_1w_2$. Moreover,  $w_1<_{lex}w_2$, and $(w_1,w_2)$ is the standard factorization of $w$ in Lyndon words. 
\item If $w$ has the slope $\frac{p}{q}$, then $|w_1|=p'$,
$|w_2|=q'$, where $p'$ and $q'$ are the respective multiplicative inverse of $p$ and $q$, modulo $|w|$. 
\item Let $w= a\psi(v)b$ have the slope $ \frac{p}{q}$. Then   $p= p_a(v\sptilde )$, $q=p_b(v\sptilde )$ and 
$p'= p_a(v)$, $q'=p_b(v)$.
\end{enumerate} 
\end{thm}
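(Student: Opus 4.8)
The plan is to prove the three assertions in the order 1, 3, 2: the length formulas of Part~2 drop out once Parts~1 and~3 are in hand, together with a one-line arithmetic identity. For Part~1, write $w=a\psi(v)b$ for the unique $v\in\Aa^*$ with $\psi(v)$ central (using $\CH=a\PER b\cup\Aa$ and the injectivity of $\psi$); recall that $w$ is a Lyndon word since $\CH=\mathrm{St}\cap\Lynd$. If $\psi(v)$ is not constant, Proposition~\ref{Prop:uno} furnishes central words $c_1,c_2$ with $\psi(v)=c_1abc_2=c_2bac_1$, whence
\[w=a\,c_2\,ba\,c_1\,b=(ac_2b)(ac_1b),\]
with $ac_1b,ac_2b\in a\PER b\subseteq\CH$ (the case $\psi(v)$ constant being disposed of directly via Lemma~\ref{lem:uno}). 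Setting $w_1=ac_2b$ and $w_2=ac_1b$, the inequality $w_1<_{lex}w_2$ is immediate, since $w_1$ is a proper prefix and $w_2$ a proper suffix of the Lyndon word $w$, so $w_1<_{lex}w<_{lex}w_2$. For uniqueness of the factorization into Christoffel words one checks that the central-word factorization $\psi(v)=c_1abc_2=c_2bac_1$ is itself unique, $c_1$ and $c_2$ being determined by the two (forced) coprime periods of $\psi(v)$. Finally, that $(w_1,w_2)$ is the \emph{standard} Lyndon factorization amounts to showing $w_2$ is the longest proper suffix of $w$ that is a Lyndon word; this essentially lexicographic point is the delicate part of the argument, and is where I would either invoke the classical results of~\cite{BDL,BL,BDR} or unwind the recursion for $\PER$ afforded by Lemma~\ref{lem:uno}.

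For Part~3 I would argue via incidence matrices. For $v\in\Aa^*$ let $M_v$ be the $2\times2$ matrix over $\Nn$ whose column indexed by $z\in\Aa$ is the Parikh vector $(|\mu_v(z)|_a,\,|\mu_v(z)|_b)^{T}$; then $M_{uv}=M_uM_v$, $M_a=\bigl(\begin{smallmatrix}1&1\\0&1\end{smallmatrix}\bigr)$, $M_b=\bigl(\begin{smallmatrix}1&0\\1&1\end{smallmatrix}\bigr)$, so $\det M_v=1$, and since $M_a^{T}=M_b$ a short induction (using $(uv)\sptilde=v\sptilde\,u\sptilde$) gives $M_{v\sptilde}=JM_v^{T}J$ with $J=\bigl(\begin{smallmatrix}0&1\\1&0\end{smallmatrix}\bigr)$. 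Writing $M_v=\bigl(\begin{smallmatrix}\alpha&\beta\\\gamma&\delta\end{smallmatrix}\bigr)$ and $n=|w|$, and using $\mu_v(ab)=\psi(v)ab$ (Proposition~\ref{prop:J0000}), one reads off $|w|_a=1+|\psi(v)|_a=\alpha+\beta$, $|w|_b=1+|\psi(v)|_b=\gamma+\delta$ (so $n=\alpha+\beta+\gamma+\delta$), $p_a(v)=|\mu_v(a)|=\alpha+\gamma$, $p_b(v)=\beta+\delta$, while $M_{v\sptilde}=\bigl(\begin{smallmatrix}\delta&\beta\\\gamma&\alpha\end{smallmatrix}\bigr)$ gives $p_a(v\sptilde)=\gamma+\delta$ and $p_b(v\sptilde)=\alpha+\beta$. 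Since the slope of $w$ is $|w|_b/|w|_a$, this already proves $p=p_a(v\sptilde)$ and $q=p_b(v\sptilde)$. For the inverses, expanding and using $\det M_v=1$,
\[p\cdot p_a(v)-1=(\gamma+\delta)(\alpha+\gamma)-(\alpha\delta-\beta\gamma)=\gamma(\alpha+\beta+\gamma+\delta)=\gamma n,\]
so $p\cdot p_a(v)\equiv1\pmod n$, and symmetrically $q\cdot p_b(v)-1=\beta n$; since $0<p_a(v),p_b(v)<n$, this forces $p_a(v)=p'$ and $p_b(v)=q'$.

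For Part~2 it then suffices to identify $|w_1|$ with $p_a(v)$, the equality $|w_2|=n-|w_1|=p_b(v)=q'$ being then automatic. When $\psi(v)$ is constant this is a direct check; otherwise, comparing the factorization of Part~1 with Corollary~\ref{cor: Lynd0} shows $\{w_1,w_2\}=\{a\psi(v_+)b,\,a\psi(v^-)b\}$, and~\eqref{eq:perpsi} gives $|a\psi(v_+)b|=\pi(\psi(v))=p_{v^{(L)}}(v)$, hence $|a\psi(v^-)b|=n-p_{v^{(L)}}(v)=p_{\overline{v^{(L)}}}(v)$; a short case distinction on whether $v^{(L)}=a$ or $b$ (which controls whether $w_1=a\psi(v_+)b$ or $w_1=a\psi(v^-)b$) then yields $|w_1|=p_a(v)=p'$ in both cases. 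Granting Part~1, Parts~3 and~2 are thus short and essentially arithmetic; the one genuinely nontrivial step of the whole proof is the lexicographic identification in Part~1 — that the central-word factorization of $w$ is indeed the standard Lyndon factorization.
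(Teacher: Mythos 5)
The paper does not actually prove Theorem~\ref{thm:Lynd}: it is stated as a summary of results from \cite{BDL,BL,BDR}, so there is no internal proof to compare against. Your proposal therefore does strictly more than the paper, and most of it checks out. The incidence-matrix argument for Part~3 is complete and correct: $M_{uv}=M_uM_v$, $\det M_v=1$, and $M_{v\sptilde}=JM_v^TJ$ all hold, and the identity $p\cdot p_a(v)-1=(\gamma+\delta)(\alpha+\gamma)-(\alpha\delta-\beta\gamma)=\gamma n$ together with $0<p_a(v)<n$ (from $p_a(v)+p_b(v)=n$) does pin down $p_a(v)=p'$; the identification $p=|w|_b$, $q=|w|_a$ is legitimate because $(\alpha+\beta)\delta-(\gamma+\delta)\beta=1$ shows $|w|_a$ and $|w|_b$ are coprime. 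Part~2 then follows as you say from Corollary~\ref{cor: Lynd0} and~\eqref{eq:perpsi}, with the two cases on $v^{(L)}$ handled correctly.

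The only real weakness is in Part~1, and you are upfront about it. Two points deserve attention. First, the uniqueness claim: showing that the central decomposition $\psi(v)=c_1abc_2=c_2bac_1$ is unique does not by itself show that $w$ has a unique factorization into two Christoffel words — you must also rule out splittings in which a factor is a single letter (e.g.\ $w=a\cdot u$ or $w=u\cdot b$) and argue that any splitting into two \emph{proper} Christoffel words $(arb)(asb)$ forces $\psi(v)=r\,ba\,s$ to be the decomposition of Proposition~\ref{Prop:uno}; this is true but needs a sentence invoking the period structure. Second, the claim that $(w_1,w_2)$ is the \emph{standard} Lyndon factorization (i.e.\ that $w_2$ is the longest proper Lyndon suffix) is the genuinely nontrivial content of item~1, and you defer it to \cite{BDL,BL,BDR}. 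Since the paper itself cites those references for the entire theorem, this is a defensible move, but it means your write-up is not a self-contained proof of Part~1 — and note that you cannot instead lean on Proposition~\ref{cor:Lyn+-} of the paper, since that proposition is itself derived from Theorem~\ref{thm:Lynd}, which would be circular.
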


\begin{example}
The Christoffel word $w$ of the Example~\ref{ex:unoA} having slope $\frac{3}{8}$ is 
\[w=aaabaaabaab =aub,\]
where $u=aabaaabaa = \psi(a^2ba)$ is the central word of length $9$ having  the two
coprime periods $p_a(v)=4$ and $p_b(v)=7$ with $v=a^2ba$. The word $w$ can be uniquely factored as $w=w_1w_2$, where $w_1$ and $w_2$ are the Lyndon words $w_1= aaab$ and $w_2= aaabaab$. One has $w_1<_{lex}  w_2$ with $|w_1|=4=p_a(v)$ and $|w_2|=7=p_b(v)$. Moreover, $w_2$ is the proper suffix of $w$ of maximal length which is a Lyndon word. Finally, 
$\psi(v\sptilde )= \psi(aba^2)= abaabaaba$, $p_a(v\sptilde )= 3= |w|_b$, $p_b(v\sptilde )=8=|w|_a$, and $|w|_bp_a(v)= 3\cdot 4=12 \equiv |w|_ap_b(v)= 8\cdot 7=56 \equiv 1 \pmod {11}$.
\end{example}

The following proposition is an immediate consequence of  item 1 of Theorem~\ref{thm:Lynd} and of Corollary~\ref{cor: Lynd0} (see also
\cite{adlADL}).

\begin{prop}\label{cor:Lyn+-} For any non-constant word $v \in\Aa^*$, the standard factorization of $a\psi(v)b$ in Lyndon words is
\[(a\psi(v_+)b, a\psi(v^-)b)\; \text{ if }\; v^{(L)}=a \text{ and } (a\psi(v^-)b, a\psi(v_+)b) \; \text{ if }\; v^{(L)}=b. \]
\end{prop}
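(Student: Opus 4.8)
The plan is to combine item~1 of Theorem~\ref{thm:Lynd} with Corollary~\ref{cor: Lynd0}. Since $v$ is non-constant, the word $a\psi(v)b$ is a proper Christoffel word (as $\psi(v)$ is a non-constant central word), so by item~1 of Theorem~\ref{thm:Lynd} it admits a unique factorization $a\psi(v)b = w_1w_2$ into two Christoffel words, and this coincides with the standard factorization into Lyndon words, with $w_1 <_{lex} w_2$. The task is therefore to identify $w_1$ and $w_2$ explicitly.

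First I would apply Corollary~\ref{cor: Lynd0}. Let $x = v^{(L)}$ and $y = \bar x$. The corollary gives $\psi(v) = \psi(v^-)xy\,\psi(v_+)$, hence
\[
a\psi(v)b = a\psi(v^-)xy\,\psi(v_+)b .
\]
Now I distinguish the two cases. If $x = a$, then $a\psi(v)b = \bigl(a\psi(v^-)a\bigr)\bigl(b\psi(v_+)b\bigr)$; but this is not yet in the desired form, so instead I use the \emph{other} expression from Corollary~\ref{cor: Lynd0}, namely $\psi(v) = \psi(v_+)\,yx\,\psi(v^-)$, which for $x=a$ reads $\psi(v) = \psi(v_+)\,ba\,\psi(v^-)$, giving
\[
a\psi(v)b = \bigl(a\psi(v_+)b\bigr)\bigl(a\psi(v^-)b\bigr).
\]
Symmetrically, if $x=b$ then $\psi(v) = \psi(v_+)\,ab\,\psi(v^-)$, so $a\psi(v)b = \bigl(a\psi(v_+)b\bigr)\bigl(a\psi(v^-)b\bigr)$ as well — wait, this is the same grouping; the point is that the \emph{order} of the two Christoffel factors in the standard factorization depends on which one is lexicographically smaller, and that is where the last letter of $v$ enters. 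So for $x=a$ I would take $(w_1,w_2) = (a\psi(v_+)b,\ a\psi(v^-)b)$, and for $x=b$ I would take $(w_1,w_2)=(a\psi(v^-)b,\ a\psi(v_+)b)$.

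It remains to check two things: that each of $a\psi(v_+)b$ and $a\psi(v^-)b$ is indeed a Christoffel word, and that they appear in the correct lexicographic order so as to match the \emph{standard} factorization guaranteed by Theorem~\ref{thm:Lynd}. The first point follows because $v_+$ is a prefix of $v^-$ (both being non-constant or a single letter gives $a\psi(\cdot)b \in a\PER b \cup \Aa \subseteq \CH$ via the relation $\CH = a\PER b \cup \Aa$). The second point — the lexicographic ordering — is the main obstacle and the only part requiring a small argument: one must verify that the factor containing the $xy$ "turning point" with an $a$ before a $b$ comes first. Concretely, when $v^{(L)}=a$ one shows $a\psi(v_+)b <_{lex} a\psi(v^-)b$; this can be deduced from the fact that $\psi(v_+)$ is a proper prefix of $\psi(v^-)$ (by property~P2 of Proposition~\ref{prop:basicp}, since $v_+$ is a proper prefix of $v^-$) together with the knowledge of which letter follows $\psi(v_+)$ inside $\psi(v^-)$, using Corollary~\ref{cor: Lynd0} once more. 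By uniqueness of the standard factorization in item~1 of Theorem~\ref{thm:Lynd}, this identification is forced, completing the proof.
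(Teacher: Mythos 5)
Your overall strategy --- exhibit $a\psi(v)b$ as a concatenation of two Christoffel words via Corollary~\ref{cor: Lynd0} and then invoke item~1 of Theorem~\ref{thm:Lynd} --- is exactly what the paper intends (it states the proposition as an immediate consequence of those two results). However, there is a concrete misstep in your case analysis. When $v^{(L)}=b$ (so $x=b$, $y=a$), the identity $\psi(v)=\psi(v_+)\,yx\,\psi(v^-)=\psi(v_+)\,ab\,\psi(v^-)$ does \emph{not} give the grouping $\bigl(a\psi(v_+)b\bigr)\bigl(a\psi(v^-)b\bigr)$: the word $a\psi(v_+)\,ab\,\psi(v^-)b$ splits as $\bigl(a\psi(v_+)a\bigr)\bigl(b\psi(v^-)b\bigr)$, which is useless. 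In this case you must use the \emph{other} identity, $\psi(v)=\psi(v^-)\,xy\,\psi(v_+)=\psi(v^-)\,ba\,\psi(v_+)$, which yields $a\psi(v)b=\bigl(a\psi(v^-)b\bigr)\bigl(a\psi(v_+)b\bigr)$ directly, with the two factors already in the order asserted by the proposition. The order of the factors is thus read off from the concatenation itself; it is not something you decide afterwards by comparing the two words lexicographically, and your claim that "this is the same grouping" in both cases is false.

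This also disposes of what you call "the main obstacle". No lexicographic verification is needed: item~1 of Theorem~\ref{thm:Lynd} states that the factorization of a proper Christoffel word into two Christoffel words is \emph{unique} and coincides with the standard factorization in Lyndon words. Since $\psi(v_+),\psi(v^-)\in\PER$, both $a\psi(v_+)b$ and $a\psi(v^-)b$ lie in $a\PER b\subseteq\CH$, so once $a\psi(v)b$ is written as their concatenation (in the appropriate order for each case), uniqueness forces this to be the standard factorization, and the inequality $w_1<_{lex}w_2$ comes for free as part of the theorem's conclusion rather than as a hypothesis to check. With the $v^{(L)}=b$ case repaired as above and the superfluous lexicographic discussion removed, your proof is complete and coincides with the paper's.
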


By Proposition~\ref{cor:Lyn+-} we have that if $v$ is not constant, then for any $x\in \Aa $
\begin{equation}\label{newone}
|a\psi(v)b|_x= |a\psi(v^-)b|_x+ |a\psi(v_+)b|_x.
\end{equation}
The following proposition is a direct consequence of~\eqref{newone}. It gives a remarkable interpretation of the pair  of words $v_+$ and $v^-$ in the combinatorics of Christoffel words.
Recall that the \emph{mediant} of the two fractions  $a/b$ and $c/d$  is the fraction $(a+c)/(b+d)$.

\begin{prop}\label{prop:+---} If  $v\in \Aa ^*$ is not constant, then
the slope of the Christoffel word $a\psi(v)b$ is the mediant of the slopes of $a\psi(v_+)b$ and $a\psi(v^-)b$.
\end{prop}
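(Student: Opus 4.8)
The plan is to reduce the statement to the additivity relation~\eqref{newone} together with the definition of the slope. First I would record the preliminary observations needed to make the statement meaningful. Since $v$ is not constant, the words $v_+$ and $v^-$ are well-defined elements of $\Aa^*$, so $\psi(v_+)$ and $\psi(v^-)$ are central words, and by the identity $\CH = a\PER b\cup\Aa$ both $a\psi(v_+)b$ and $a\psi(v^-)b$ are Christoffel words. In particular each of them contains the letter $a$, so that their slopes $\eta(a\psi(v_+)b)=|a\psi(v_+)b|_b/|a\psi(v_+)b|_a$ and $\eta(a\psi(v^-)b)=|a\psi(v^-)b|_b/|a\psi(v^-)b|_a$ are genuine (finite) fractions; being slopes of Christoffel words they are moreover already in lowest terms, which is exactly what is needed so that the \emph{mediant} appearing in the statement denotes a single fraction.

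The core of the argument is then a one-line computation. Applying~\eqref{newone} with $x=a$ and with $x=b$ gives
\[|a\psi(v)b|_a = |a\psi(v_+)b|_a + |a\psi(v^-)b|_a,\qquad |a\psi(v)b|_b = |a\psi(v_+)b|_b + |a\psi(v^-)b|_b.\]
Hence, by the definition of the mediant recalled just before the statement,
\[\frac{|a\psi(v_+)b|_b + |a\psi(v^-)b|_b}{|a\psi(v_+)b|_a + |a\psi(v^-)b|_a} = \frac{|a\psi(v)b|_b}{|a\psi(v)b|_a} = \eta(a\psi(v)b),\]
which is precisely the assertion that $\eta(a\psi(v)b)$ is the mediant of $\eta(a\psi(v_+)b)$ and $\eta(a\psi(v^-)b)$.

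I do not expect any real obstacle: the proposition is an immediate corollary of~\eqref{newone}, which itself follows from Proposition~\ref{cor:Lyn+-} (the standard factorization of $a\psi(v)b$ into Lyndon words). The only point deserving a sentence of care is the well-definedness mentioned above — that the fractions in play are taken in reduced form, which is automatic for Christoffel words — and, if one wishes to be fully explicit, the remark that all the denominators $|a\psi(\cdot)b|_a$ are positive because each of these words begins with $a$, so no slope is $\infty$. With these remarks in place the statement is purely formal.
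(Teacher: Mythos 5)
Your proof is correct and follows exactly the route the paper intends: the proposition is stated there as a direct consequence of~\eqref{newone}, and your computation applying that identity with $x=a$ and $x=b$ is precisely that argument, with some extra (harmless and welcome) care about reduced fractions and nonzero denominators.
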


\begin{remark}
 Recall \cite{adlADL} that the slope of the Christoffel word $a\psi(v)b$ is equal to the reduced fraction $\Sb(v)$ labeling the node
(word) $v$ in the Stern-Brocot tree. From the construction of this tree $\Sb(v) = \Sb(v_1)\oplus  \Sb(v_2)$, where $\oplus$ denotes the mediant operation, and $v_1$ and $v_2$ are  the nearest ancestors of $v$ above and to the right, and above and to the left respectively. It is readily verified that $\{v_1, v_2\}= \{v_+, v^-\}$ so that in any case
 $\Sb(v) = \Sb(v_+) \oplus \Sb(v^-)$. 
\end{remark}

 The following  Propositions~\ref{lem:bar}, ~\ref{prop:+-}, and~\ref{prop:periods}  have been proved in \cite{adlADL}.

\begin{prop}\label{lem:bar} For any $v\in \Aa ^+$,
$ \pi(\psi(v\sptilde )) = |a\psi(v)b|_{{\bar v}^{(F)}}$.
\end{prop}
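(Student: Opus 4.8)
The plan is to express both sides of the identity in terms of the quantities $p_x(v)=|\mu_v(x)|$ and then apply two facts already available in the excerpt: formula~\eqref{eq:perpsi}, which computes the minimal period of an iterated palindromic closure, and item~3 of Theorem~\ref{thm:Lynd}, which reads off the letter counts of $a\psi(v)b$ from $p_a(v\sptilde)$ and $p_b(v\sptilde)$. I would organize the argument according to whether $v$ is constant.

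First I would dispose of the case $v=x^k$ with $x\in\Aa$ and $k\geq 1$. Here $v\sptilde=x^k$ as well, so $\psi(v\sptilde)=x^k$ has minimal period $1$, while $a\psi(v)b=ax^kb$ contains exactly one occurrence of $\bar x={\bar v}^{(F)}$; hence both sides equal $1$. For non-constant $v$, the word $v\sptilde$ is non-constant too, and its last letter is the first letter of $v$, i.e.\ $(v\sptilde)^{(L)}=v^{(F)}$. Applying~\eqref{eq:perpsi} to $v\sptilde$ I would get
\[\pi(\psi(v\sptilde))=p_{(v\sptilde)^{(L)}}(v\sptilde)=p_{v^{(F)}}(v\sptilde).\]
On the other side, writing $w=a\psi(v)b$ and letting $\frac{p}{q}$ be its slope, one has $|w|_b=p$ and $|w|_a=q$ by the definition of Christoffel word, while item~3 of Theorem~\ref{thm:Lynd} gives $p=p_a(v\sptilde)$ and $q=p_b(v\sptilde)$. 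So when $v^{(F)}=a$ one has ${\bar v}^{(F)}=b$ and $|w|_{{\bar v}^{(F)}}=|w|_b=p_a(v\sptilde)$, whereas when $v^{(F)}=b$ one has ${\bar v}^{(F)}=a$ and $|w|_{{\bar v}^{(F)}}=|w|_a=p_b(v\sptilde)$; in both cases $|w|_{{\bar v}^{(F)}}=p_{v^{(F)}}(v\sptilde)$, which coincides with $\pi(\psi(v\sptilde))$, completing the proof.

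No genuinely new computation is needed beyond the cited results, so I expect the only delicate point — a mild bookkeeping obstacle rather than a real one — to be keeping straight which letter is paired with which period: that taking the reversal carries the first letter of $v$ to the last letter of $v\sptilde$, and that the slope $\frac{p}{q}=\frac{|w|_b}{|w|_a}$ pairs $p$ with the letter $b$ and $q$ with the letter $a$, so that complementing $v^{(F)}$ is exactly the swap exchanging $p_a(v\sptilde)$ and $p_b(v\sptilde)$. Once these correspondences are aligned, the equality falls out immediately.
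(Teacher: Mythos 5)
Your proof is correct. The two ingredients you invoke do exactly what you need: for non-constant $v$, equation~\eqref{eq:perpsi} applied to $v\sptilde$ (which is non-constant, with $(v\sptilde)^{(L)}=v^{(F)}$) gives $\pi(\psi(v\sptilde))=p_{v^{(F)}}(v\sptilde)$, and item~3 of Theorem~\ref{thm:Lynd} together with the identities $|w|_b=p$, $|w|_a=q$ for a Christoffel word of slope $\frac{p}{q}$ gives $|a\psi(v)b|_{\bar v^{(F)}}=p_{v^{(F)}}(v\sptilde)$ in both cases of the value of $v^{(F)}$; the constant case is handled correctly and separately, as it must be since~\eqref{eq:perpsi} requires $v$ non-constant. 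Note that the paper itself does not prove this proposition but only cites it from an earlier article, so there is no in-text argument to compare against; your derivation is a clean, self-contained proof using only facts stated in this paper, and the one genuinely delicate point you flag (the reversal exchanging first and last letters, and the complementation of $v^{(F)}$ matching the swap between $p_a(v\sptilde)$ and $p_b(v\sptilde)$) is indeed the only place where a sign-of-letter error could creep in — and you have it right, as a check against the worked example $v=a^2ba$ (where $\pi(\psi(aba^2))=3=|a\psi(v)b|_b$) confirms.
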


\begin{prop}\label{prop:+-} If  $v\in \Aa ^*$ is not constant, then
\[|a\psi(v)b|= |a\psi(v^-)b|+ |a\psi(v_+)b|=  |a\psi(^-v)b|+ |a\psi(_+v)b|.\]
Moreover, $ |a\psi(_+v)b| = |a\psi(v)b|_{{\bar v}^{(F)}}.$
\end{prop}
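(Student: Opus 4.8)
The plan is to reduce both equalities, and the final identity, to material already available in the excerpt, and then to transfer the ``right'' versions involving $v^-,v_+$ to the ``left'' versions involving $^-v,{}_+v$ by a reversal argument.

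First I would establish the equality $|a\psi(v)b|= |a\psi(v^-)b|+ |a\psi(v_+)b|$. By Proposition~\ref{cor:Lyn+-}, $a\psi(v)b$ is the concatenation $a\psi(v_+)b\cdot a\psi(v^-)b$ if $v^{(L)}=a$ and $a\psi(v^-)b\cdot a\psi(v_+)b$ if $v^{(L)}=b$; in either case, taking lengths gives the claim. (Equivalently, one may simply sum~\eqref{newone} over $x\in\Aa$.)

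Next, for the second equality I would use the reversal. Property~P5 of Proposition~\ref{prop:basicp} gives $|\psi(u\sptilde)|=|\psi(u)|$, hence $|a\psi(v)b|=|a\psi(v\sptilde)b|$. Since $v$ is non-constant, so is $v\sptilde$, and the first equality applies to $v\sptilde$:
\[|a\psi(v\sptilde)b| = |a\psi((v\sptilde)^-)b| + |a\psi((v\sptilde)_+)b|.\]
Now $(v\sptilde)^- = (^-v)\sptilde$ by inspecting the sequence of letters, while $(v\sptilde)_+ = (_+v)\sptilde$ is exactly the third identity in~\eqref{eq:E+}. Applying P5 once more to each of the two terms on the right removes the reversals and yields $|a\psi(v)b| = |a\psi(^-v)b| + |a\psi(_+v)b|$.

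Finally, for $|a\psi(_+v)b| = |a\psi(v)b|_{\bar v^{(F)}}$ I would combine~\eqref{eq:perpsi} with Proposition~\ref{lem:bar}. Applying~\eqref{eq:perpsi} to the non-constant word $v\sptilde$ gives $\pi(\psi(v\sptilde)) = |a\psi((v\sptilde)_+)b| = |a\psi((_+v)\sptilde)b| = |a\psi(_+v)b|$, the last step again by P5 and~\eqref{eq:E+}. On the other hand, Proposition~\ref{lem:bar} applied with $u=v$ states $\pi(\psi(v\sptilde)) = |a\psi(v)b|_{\bar v^{(F)}}$. Comparing the two expressions for $\pi(\psi(v\sptilde))$ gives the assertion. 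There is no substantial obstacle here; the only point requiring care is the correct interaction of the operators $v\mapsto v^-$ and $v\mapsto{}_+v$ with reversal, which is precisely~\eqref{eq:E+} together with the elementary identity $(v\sptilde)^- = (^-v)\sptilde$.
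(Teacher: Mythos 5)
Your proof is correct. Note that the paper itself gives no argument for this statement: it is cited as having been proved in \cite{adlADL}, so there is no in-paper proof to compare against. Your derivation is a legitimate self-contained reconstruction from material stated earlier in the paper: the first equality follows from the Lyndon factorization in Proposition~\ref{cor:Lyn+-} (or from summing~\eqref{newone} over the two letters); the second follows by transporting the first to $v\sptilde$ via property P5 together with $(v\sptilde)^-=({}^-v)\sptilde$ and the identity $({}_+v)\sptilde=(v\sptilde)_+$ from~\eqref{eq:E+}; and the final identity falls out of comparing~\eqref{eq:perpsi} applied to $v\sptilde$ with Proposition~\ref{lem:bar}. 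All the ingredients you invoke are stated before Proposition~\ref{prop:+-} and are established independently of it, so there is no circularity, and the handling of the reversal operators --- the only delicate point --- is done correctly.
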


\begin{prop}\label{prop:periods} For any word $v=v_1\cdots v_n$, with $n>0$, $v_i\in \Aa $, $i=1,\ldots,n$, one has
\[|\psi(v)|= \sum_{i=1}^n \pi(\psi(v_1\cdots v_i))= \sum_{i=1}^n|a\psi(v_i\cdots v_n)b|_{{\bar v}_i}.\]
\end{prop}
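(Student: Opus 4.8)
The plan is to prove the two equalities separately; each reduces to a telescoping sum, the first driven by Justin's formula and the second by Proposition~\ref{prop:+-}. A preliminary observation worth recording is that the two displayed equalities do \emph{not} hold term by term (a two-letter example already shows this), so the natural strategy is to show that \emph{both} sums are equal to $|\psi(v)|$.

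For the leftmost equality, write $w_i=\psi(v_1\cdots v_i)$ for $0\le i\le n$, so that $w_0=\varepsilon$. Applying Justin's formula (Theorem~\ref{thm:J}) with directive word $v_1\cdots v_{i-1}$ and second argument $v_i$, and using $\psi(v_i)=v_i$, one gets $w_i=\mu_{v_1\cdots v_{i-1}}(v_i)\,w_{i-1}$. Hence $|w_i|-|w_{i-1}|=|\mu_{v_1\cdots v_{i-1}}(v_i)|=p_{v_i}(v_1\cdots v_{i-1})$, which by~\eqref{eq:minimalperiod} equals $\pi(\psi(v_1\cdots v_{i-1}v_i))=\pi(\psi(v_1\cdots v_i))$. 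Summing over $i=1,\dots,n$ telescopes to $|\psi(v)|=\sum_{i=1}^{n}\pi(\psi(v_1\cdots v_i))$.

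For the rightmost equality I would first establish the auxiliary identity
\[
|a\psi(w)b|_{{\bar w}^{(F)}}=|a\psi(w)b|-|a\psi({}^-w)b|
\]
for every non-empty word $w$. When $w$ is not constant this is exactly the combination of the two assertions of Proposition~\ref{prop:+-}, namely $|a\psi(w)b|=|a\psi({}^-w)b|+|a\psi({}_+w)b|$ together with $|a\psi({}_+w)b|=|a\psi(w)b|_{{\bar w}^{(F)}}$. When $w=x^m$ is constant one checks directly that both sides equal $1$, using $\psi(x^m)=x^m$ and $|ax^mb|_{{\bar x}}=1$, the degenerate case $m=1$ being covered by $\psi(\varepsilon)=\varepsilon$. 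Applying this identity with $w=v_i\cdots v_n$, so that ${}^-w=v_{i+1}\cdots v_n$, the sum $\sum_{i=1}^{n}|a\psi(v_i\cdots v_n)b|_{{\bar v}_i}$ telescopes to $|a\psi(v_1\cdots v_n)b|-|a\psi(\varepsilon)b|=|a\psi(v)b|-2=|\psi(v)|$, which completes the argument.

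I expect the only delicate point to be the boundary bookkeeping in the second telescope: verifying the auxiliary identity at the degenerate instances — the index $i=n$, and more generally those suffixes $v_i\cdots v_n$ that happen to be constant (in particular single letters) — where Proposition~\ref{prop:+-} does not literally apply and one must instead fall back on $\psi(x^m)=x^m$ and $\psi(\varepsilon)=\varepsilon$. Once that identity is secured for all non-empty $w$, both halves of the statement follow by routine telescoping.
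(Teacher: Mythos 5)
Your proof is correct. Note that the paper itself gives no proof of this proposition --- it is quoted from \cite{adlADL} --- so there is nothing internal to compare against; but your two telescoping arguments are exactly the natural ones given the paper's toolkit. The first equality follows, as you say, from Justin's formula in the form $\psi(v_1\cdots v_i)=\mu_{v_1\cdots v_{i-1}}(v_i)\,\psi(v_1\cdots v_{i-1})$ together with \eqref{eq:minimalperiod}, and the second from Proposition~\ref{prop:+-} plus the easy direct check in the constant case; your care with the degenerate suffixes (where Proposition~\ref{prop:+-} does not apply) is precisely the point that needed attention, and you handle it correctly, with the boundary term $|a\psi(\varepsilon)b|=2$ accounting for the difference between $|a\psi(v)b|$ and $|\psi(v)|$.
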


For any $v\in \Aa ^*$ let  $\Ra(v)$ denote the ratio  $\Ra(v)= \frac{p_a(v)}{p_b(v)}$. We recall  \cite{adlADL} that the reduced fraction $\Ra(v)$ labels the node (word) $v$ in the Raney tree. The following remarkable proposition, which is readily derived from  Propositions~\ref{can:duebis} and~\ref{prop:+-},  holds:
\begin{prop} Let $v$ be a non-constant word over $\Aa $. 
 If  $v^{(L)}=a$ (resp., $v^{(L)}=b$), then
\[\Ra(v)= \frac{|a\psi(v_+)b|}{|a\psi(v^-)b|}, \quad \left (\text{resp., } \Ra(v)= \frac{|a\psi(v^-)b|}{|a\psi(v_+)b|}\right).\]
 If  $v^{(F)}=a$ (resp., $v^{(F)}=b$), then
\[\Sb(v)= \frac{|a\psi(_+v)b|}{|a\psi(^-v)b|}, \quad \left(\text{resp., } \Sb(v)= \frac{|a\psi(^-v)b|}{|a\psi(_+v)b|}\right).\]
\end{prop}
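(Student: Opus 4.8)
The plan is to prove the two assertions independently: the one concerning $\Ra$ directly from Proposition~\ref{can:duebis}, and the one concerning $\Sb$ from Proposition~\ref{prop:+-}, using that (by the Remark following Proposition~\ref{prop:+---}) $\Sb(v)$ is the slope of the Christoffel word $a\psi(v)b$, i.e.\ the fraction $|a\psi(v)b|_b/|a\psi(v)b|_a$.

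For the first assertion, I would start from $\Ra(v)=p_a(v)/p_b(v)$ with $p_x(v)=|\mu_v(x)|$. If $v^{(L)}=a$, Proposition~\ref{can:duebis} gives $\mu_v(a)=\psi(v_+)ba$ and $\mu_v(b)=\psi(v^-)ab$, so $p_a(v)=|\psi(v_+)|+2=|a\psi(v_+)b|$ and $p_b(v)=|\psi(v^-)|+2=|a\psi(v^-)b|$; dividing gives $\Ra(v)=|a\psi(v_+)b|/|a\psi(v^-)b|$. The case $v^{(L)}=b$ is entirely analogous, using the second line of Proposition~\ref{can:duebis}, which interchanges the roles of $v_+$ and $v^-$.

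For the second assertion, I would set $w=a\psi(v)b$, so $\Sb(v)=|w|_b/|w|_a$. Proposition~\ref{prop:+-} supplies both $|w|=|a\psi({}^-v)b|+|a\psi({}_+v)b|$ and $|a\psi({}_+v)b|=|w|_{{\bar v}^{(F)}}$; since $|w|=|w|_a+|w|_b$, subtracting yields $|a\psi({}^-v)b|=|w|_{v^{(F)}}$. A case split on $v^{(F)}$ then closes the argument: if $v^{(F)}=a$ then $|a\psi({}_+v)b|=|w|_b$ and $|a\psi({}^-v)b|=|w|_a$, hence $\Sb(v)=|a\psi({}_+v)b|/|a\psi({}^-v)b|$; if $v^{(F)}=b$ the two quantities trade places and $\Sb(v)=|a\psi({}^-v)b|/|a\psi({}_+v)b|$. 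As an alternative, one could deduce this part from the first assertion via the identity $\Sb(v)=\Ra(v\sptilde)$, which is immediate from item~3 of Theorem~\ref{thm:Lynd}, together with $({}_+v)\sptilde=(v\sptilde)_+$ from~\eqref{eq:E+}, the fact that $(v\sptilde)^-$ is the reversal of ${}^-v$, and the length invariance $|\psi(u\sptilde)|=|\psi(u)|$ (property P5 of Proposition~\ref{prop:basicp}).

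I do not anticipate a genuine obstacle: the whole proof is bookkeeping with word lengths. The only points requiring attention are to keep $v^{(F)}$ distinct from ${\bar v}^{(F)}$ (and $v^{(L)}$ from $v^{(F)}$) when invoking Propositions~\ref{can:duebis} and~\ref{prop:+-}, and to note that the degenerate cases in which $v_+$ or ${}_+v$ is empty are harmless, since the corresponding word is then $a\psi(\varepsilon)b=ab$, still a proper Christoffel word (of slope $1$).
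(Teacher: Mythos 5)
Your proof is correct and follows exactly the route the paper intends: the paper gives no explicit proof, stating only that the proposition "is readily derived from Propositions~\ref{can:duebis} and~\ref{prop:+-}," and you fill in the bookkeeping from precisely those two results (using Proposition~\ref{can:duebis} to identify $p_a(v),p_b(v)$ with $|a\psi(v_+)b|,|a\psi(v^-)b|$, and Proposition~\ref{prop:+-} together with the identification of $\Sb(v)$ with the slope of $a\psi(v)b$ for the second half). The case analysis and the handling of degenerate cases are both sound.
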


 An interesting interpretation of the extension $\ext(v)$ of a directive word $v$
 of the central word $\psi(v)$ is given by the following:
  
  \begin{prop} Let $v= v_1v_2\cdots v_m$, $v_i\in \Aa $, $i=1,\ldots,m$, be a word of $\Aa ^+$. Let
  $w= \psi(v)= w_1\cdots w_k$ with $k= |\psi(v)|$ and $w_i\in \Aa $, $i=1,\ldots,k$.  Then one has
\[ \ext(v)= \card \{\pi(\psi(v_1\cdots v_i)) \mid  1\leq i \leq m\} = \card\{ \pi(w_1\cdots w_i) \mid 1\leq i \leq k\} .\]
  \end{prop}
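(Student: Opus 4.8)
The plan is to prove the two claimed equalities separately, but using the same underlying idea: the quantity $\ext(v)$ counts the number of distinct ``new minimal periods'' that appear as one reads $v$ from left to right, and this equals the number of distinct minimal periods appearing among the prefixes of $\psi(v)$. First I would establish the left equality, $\ext(v)=\card\{\pi(\psi(v_1\cdots v_i))\mid 1\le i\le m\}$. Writing $v = x_0^{\alpha_0}\cdots x_n^{\alpha_n}$ in integral representation, so that $\ext(v)=n+1$, the key observation is that $\pi(\psi(v_1\cdots v_i))$ depends only on $v_1\cdots v_i$ and, by~\eqref{eq:minimalperiod}, equals $p_{v_i}(v_1\cdots v_{i-1})$ when $v_1\cdots v_{i-1}v_i$ is considered, i.e.\ it is governed by Proposition~\ref{can:duebis} together with the recursion from Justin's formula. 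I would show that within a single block $x_j^{\alpha_j}$ the minimal period $\pi(\psi(v_1\cdots v_i))$ is \emph{constant} (all prefixes $\psi(u x^r)$ for $r=1,\dots,\alpha_j$ have the same minimal period, since appending a repeated letter $x$ to the directive word does not change $p_x$, only $p_y$; this follows from $\mu_{ux}(x)=\mu_u(x)$), while passing from the last letter of block $j-1$ to the first letter of block $j$ \emph{strictly increases} the minimal period. The latter is the crucial monotonicity point: by~\eqref{eq:perpsi} and the fact that $(v_1\cdots v_i)_+$ is a strictly longer prefix than $(v_1\cdots v_{i-1})_+$ when $v_i\ne v_{i-1}$, one gets $\pi(\psi(v_1\cdots v_i)) = |a\psi((v_1\cdots v_i)_+)b| > |a\psi((v_1\cdots v_{i-1})_+)b|$, at least once we are past the first block; a small separate check handles the first block (where the minimal period stays $1$ until the second letter appears). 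Counting: there are exactly $n+1$ blocks, the minimal period is constant on each and strictly larger on each successive one, so the set of values has exactly $n+1 = \ext(v)$ elements.

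For the right equality, $\card\{\pi(\psi(v_1\cdots v_i))\mid 1\le i\le m\} = \card\{\pi(w_1\cdots w_i)\mid 1\le i\le k\}$ with $w=\psi(v)$, I would use property~P4 of Proposition~\ref{prop:basicp} together with~\eqref{eq:minimalperiod}. The point is that the minimal period of a prefix $w_1\cdots w_i$ of the central word $\psi(v)$ is determined by the longest palindromic prefix of $w_1\cdots w_i$: more precisely, if $\psi(u)$ is the longest palindromic prefix of $w_1\cdots w_i$ (which is of the form $\psi(u)$ for a prefix $u$ of $v$ by P4), and $x$ is the letter following that palindromic prefix, then $w_1\cdots w_i$ has minimal period $\pi(\psi(ux)) = p_x(u)$, and this stays constant as $i$ ranges over the interval between two consecutive palindromic prefixes. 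Hence $\{\pi(w_1\cdots w_i)\mid 1\le i\le k\}$ coincides with $\{\pi(\psi(u'))\mid u' \text{ a prefix of } v\} = \{\pi(\psi(v_1\cdots v_i))\mid 1\le i\le m\}$ (using that $\psi(v_1\cdots v_i)$ itself is a palindromic prefix of $w$). I would phrase this cleanly via Proposition~\ref{Prop:uno} and Lemma~\ref{lem:uno}, which describe exactly how the minimal period evolves under right palindromic closure.

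The main obstacle I anticipate is making the monotonicity argument in the first part fully rigorous, in particular the boundary behaviour around the first block of $v$ and the interaction between ``$\pi$ is constant inside a block'' and ``$\pi$ strictly increases between blocks'': one must be careful that the relevant identity is $\pi(\psi(u)) = \min\{p_a(u), p_b(u)\}$ and that appending a letter equal to the last letter of $u$ leaves this minimum unchanged, whereas appending the complementary letter replaces the \emph{larger} of the two by something strictly larger than both (Proposition~\ref{can:duebis} and~\eqref{eq:perpsi} give $\pi(\psi(v)) = |a\psi(v_+)b|$, and $v_+$ genuinely grows). Once this structural picture is pinned down, both equalities reduce to a bookkeeping argument on blocks versus prefixes, and no heavy computation is needed.
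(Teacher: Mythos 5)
Your plan reproduces the paper's own proof in both halves: the left equality via the fact (from~\eqref{eq:minimalperiod} and~\eqref{eq:perpsi}) that $\pi(\psi(v_1\cdots v_i))$ is constant within each block of the integral representation and strictly increases at each block boundary, and the right equality via the observation that every prefix of $\psi(v)$ lying between two consecutive palindromic prefixes has the same minimal period as the longer one. The only imprecision is in your second half: a prefix equal to $\psi(u)$ itself has minimal period $\pi(\psi(u))$, not $p_x(u)$ for the following letter $x$ (these differ when $x\neq u^{(L)}$, e.g.\ $\pi(\psi(a))=1$ while $p_b(a)=2$), so each palindromic prefix must be counted with the \emph{preceding} interval — a bookkeeping fix that does not affect the resulting set equality.
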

  \begin{proof} Let  $i=1,\ldots,m$ and set $u= v_1\cdots v_i$. For $x\in \Aa $ one has by~\eqref{eq:minimalperiod}
  $\pi(\psi(ux))=  p_x(u)$. If $x=u^{(L)}=v_i$, then by~\eqref{eq:perpsi}, $p_{v_i}(u)= \pi(\psi(u))$ and the minimal period is unchanged.
  If $x={\bar u}^{(L)}= {\bar v}_i$, then $p_{{\bar v} _i}(u) > \pi(\psi(u))$. Hence, if $v= x_0^{\alpha_0}\cdots  x_n^{\alpha_n}$, the set
  of  distinct minimal periods of $\psi(v_1\cdots v_i)$,  $i=1,\ldots,m$, is formed by the minimal periods of the words
\[ \psi(x_0), \psi(x_0^{\alpha_0}x_1), \ldots, \psi(x_0^{\alpha_0}\cdots  x_{n-1}^{\alpha_{n-1}}x_n)\]
  whose number is $n+1= \ext(v)$.
  
  Now let  $w_1w_2 \cdots w_r$ with $r \leq k$ be a non-empty prefix of $w$. There exists  $1\leq i < m$ such that
\[\psi(v_1\cdots v_i)v_{i+1} \leq_p   w_1w_2 \cdots w_r \leq_p \psi(v_1\cdots v_{i+1}),\]
  where we let $\leq_p$ denote the prefixal ordering. 
  Hence, $\pi(\psi(v_1\cdots v_i)v_{i+1}) \leq \pi(w_1w_2 \cdots w_r) \leq \pi( \psi(v_1\cdots v_{i+1})) $.  By 
~\eqref{eq:minimalperiod},
\[\pi(\psi(v_1\cdots v_i)v_{i+1})=\pi( \psi(v_1\cdots v_{i+1}))= \pi(w_1w_2 \cdots w_r).\]
 Thus between  $\pi(\psi(v_1\cdots v_i))$ and $\pi( \psi(v_1\cdots v_{i+1})) $ there are no new minimal periods. From this
  the result follows.
  \end{proof}
  \begin{cor} For each $k>0$ and $v\in \Aa ^k$ the word $w= \psi(v)$  has the maximum number of distinct minimal periods of its prefixes if and only if $v$ is alternating, i.e., $w$ is a palindromic prefix of $f$ or of $E(f)$.
  \end{cor}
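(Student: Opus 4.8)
The plan is to deduce the corollary directly from the proposition just proved, which identifies the number of distinct minimal periods occurring among the prefixes of $w=\psi(v)$ with the extension $\ext(v)$. Thus the corollary splits into two claims: (i) over all $v\in\Aa^k$, the quantity $\ext(v)$ is maximal precisely when $v$ is alternating; and (ii) for an alternating $v$, the word $w=\psi(v)$ is a palindromic prefix of $f$ or of $E(f)$, and conversely.

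For (i) I would simply bound $\ext$. Writing $v=x_0^{\alpha_0}\cdots x_n^{\alpha_n}$ in its integral representation, one has $\ext(v)=n+1$, while $k=|v|=\alpha_0+\cdots+\alpha_n\geq n+1$ because every $\alpha_i\geq 1$. Hence $\ext(v)\leq k$ for every $v\in\Aa^k$, with equality if and only if $\alpha_0=\cdots=\alpha_n=1$, that is, if and only if consecutive letters of $v$ differ, which is exactly the definition of $v$ being alternating. Therefore the maximal value of $\ext$ on $\Aa^k$ equals $k$, and $w=\psi(v)$ attains the maximal number of distinct minimal periods among its prefixes exactly when $v$ is alternating.

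For (ii), observe that for each $k>0$ there are precisely two alternating words of length $k$: the length-$k$ prefix $v_a$ of $(ab)^{\omega}$ and the length-$k$ prefix $v_b=E(v_a)$ of $(ba)^{\omega}$ (namely $v_a=(ab)^{k/2}$, $v_b=(ba)^{k/2}$ when $k$ is even, and $v_a=(ab)^{\lfloor k/2\rfloor}a$, $v_b=(ba)^{\lfloor k/2\rfloor}b$ when $k$ is odd). Since $v_a$ is a prefix of $(ab)^{\omega}$, property P2 of Proposition~\ref{prop:basicp} yields that $\psi(v_a)$ is a palindromic prefix of $\psi((ab)^{\omega})=f$. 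By property P6, $\psi(v_b)=\psi(E(v_a))=E(\psi(v_a))$, and being the image under the automorphism $E$ of a palindromic prefix of $f$, it is a palindromic prefix of $E(f)$. This shows that an alternating $v$ gives $w=\psi(v)$ equal to a palindromic prefix of $f$ or of $E(f)$.

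Conversely, if $w$ is a palindromic prefix of $f$, then, having finite length, $w$ is a palindromic prefix of $\psi(z)$ for a sufficiently long prefix $z$ of $(ab)^{\omega}$; by property P4 of Proposition~\ref{prop:basicp}, $w=\psi(u)$ for some prefix $u$ of $z$, so $u$ is alternating, and by injectivity (property P1) $v=u$. The case where $w$ is a palindromic prefix of $E(f)$ is handled symmetrically via P6. Chaining these equivalences (\emph{maximal number of distinct minimal periods among the prefixes of $w$} $\iff$ $\ext(v)=k$ $\iff$ $v$ alternating $\iff$ $w$ a palindromic prefix of $f$ or of $E(f)$) gives the statement. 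The argument is essentially bookkeeping; the only point requiring a little care is matching the integral-representation description of ``alternating'' with the purely combinatorial one and keeping the even- and odd-length cases consistent, but no genuine difficulty arises.
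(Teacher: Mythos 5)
Your proof is correct and follows essentially the same route as the paper: both reduce the statement to the previous proposition (the count equals $\ext(v)$) and observe that $\ext(v)\leq|v|=k$ with equality exactly when all exponents in the integral representation are $1$, i.e., when $v$ is alternating. You additionally spell out, via properties P1, P2, P4, P6, why alternating directive words correspond precisely to palindromic prefixes of $f$ or $E(f)$ — a point the paper leaves implicit in the ``i.e.'' — but this is elaboration, not a different argument.
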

  \begin{proof} By the previous proposition the number of distinct minimal periods of $w=\psi(v)$ is given by $\ext(v)$. 
  A word  $v\in \Aa ^k$ attains the maximum value $k$ of $\ext(v)$ if and only if $v$ is alternating.
  \end{proof}
  If $v= x_0^{\alpha_0}\cdots  x_n^{\alpha_n}$  we set
\[ \pi_i(v) = \pi( \psi(x_0^{\alpha_0}\cdots  x_{i-1}^{\alpha_{i-1}}x_i)),  \ 0\leq i \leq n.\]
  Moreover, we let ${\bar \pi}$ denote the  arithmetic mean of the distinct minimal periods $\pi_i$, $0\leq i \leq n$.
  
  \begin{cor} For $v\in \Aa ^+$ one has:
\[ \frac{|\psi(v)|}{\ext(v)} \geq {\bar \pi},\]
  where the equality holds if and only if $v$ is alternating.
  \end{cor}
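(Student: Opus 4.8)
The plan is to reduce the inequality to an elementary observation about how the minimal period behaves along the prefixes of $v$. Write $v=x_0^{\alpha_0}\cdots x_n^{\alpha_n}$ in its integral representation, so that $\ext(v)=n+1$ and, by definition, $(n+1)\bar\pi=\sum_{i=0}^n\pi_i(v)$. The first step is to record that, as $j$ ranges over the positions occupied by the $i$-th maximal block $x_i^{\alpha_i}$ of $v$, the quantity $\pi(\psi(v_1\cdots v_j))$ stays constant and equal to $\pi_i(v)$: appending to a word $u$ a further copy of its last letter does not change the minimal period of $\psi(u)$ (trivially so when $u$ is constant, and by~\eqref{eq:perpsi} and~\eqref{eq:minimalperiod} otherwise), while appending the complementary letter strictly increases it — this is precisely the analysis carried out in the proof of the proposition expressing $\ext(v)$ as the number of distinct minimal periods of the prefixes of $\psi(v)$. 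Consequently, grouping the sum in Proposition~\ref{prop:periods} block by block gives
\[|\psi(v)|=\sum_{j=1}^{|v|}\pi(\psi(v_1\cdots v_j))=\sum_{i=0}^n \alpha_i\,\pi_i(v).\]

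The second step is then a one-line comparison. Since $(n+1)\bar\pi=\sum_{i=0}^n\pi_i(v)$ and $\ext(v)=n+1$, we obtain
\[\frac{|\psi(v)|}{\ext(v)}-\bar\pi=\frac{1}{n+1}\sum_{i=0}^n(\alpha_i-1)\,\pi_i(v)\ge 0,\]
because every $\alpha_i\ge 1$ and every $\pi_i(v)\ge 1>0$. The equality case is immediate from this: a sum of nonnegative terms vanishes iff all of them vanish, hence iff $\alpha_i=1$ for every $i$, i.e.\ iff the integral representation of $v$ consists only of $1$'s; by definition this means exactly that $v$ is \emph{alternating} (equivalently $\ext(v)=|v|$, as observed in the preceding corollary).

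There is no genuine obstacle here. The only part that is not a pure rearrangement is the first step, namely verifying that the contribution of each maximal block of equal letters to the sum of Proposition~\ref{prop:periods} is $\alpha_i\,\pi_i(v)$ and not something more subtle; once the ``minimal period is constant on each block'' fact is invoked, the remainder is routine. One could avoid citing Proposition~\ref{prop:periods} altogether and instead sum $\pi(\psi(v_1\cdots v_j))$ directly over $j$ and regroup by blocks, but re-using the already-established formula keeps the argument shortest.
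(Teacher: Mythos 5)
Your proof is correct and follows essentially the same route as the paper: both regroup the sum of Proposition~\ref{prop:periods} by maximal blocks to get $|\psi(v)|=\sum_{i=0}^{n}\alpha_i\pi_i$, compare with $(n+1)\bar\pi=\sum_{i=0}^{n}\pi_i$ using $\alpha_i\geq 1$, and read off the equality case as $\alpha_i=1$ for all $i$. The only difference is that you justify the block-regrouping step explicitly, whereas the paper takes it as already established from the proof of the preceding proposition.
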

  \begin{proof} Let $n+1=\ext(v)$. By Proposition~\ref{prop:periods} one has
\[|\psi(v)|= \sum_{i=1}^{|v|} \pi(\psi(v_1\cdots v_i))= \sum_{i=0}^{n}\alpha_i\pi_i \geq \sum_{i=0}^{n}\pi_i,\]
  so that dividing for $n+1$ we have
\[  \frac{|\psi(v)|}{n+1} \geq  \frac{\sum_{i=0}^{n}\pi_i}{n+1}= {\bar \pi}.\]
   The equality holds if and only if $\alpha_i=1$, $i=0, \ldots, n$. From this the result follows.
  \end{proof}

\section {Christoffel morphisms}\label{sec:cmor}
Let $x\in \Aa $ and $y={\bar x}$, we consider the injective endomorphism $\mu_x\sptilde $ of $\Aa ^*$ defined by  $\mu_x\sptilde (x) = x$ and $\mu_x\sptilde (y) = yx$.
In the following, we shall set
\[\lambda_a= \mu_a  \ \text {and} \  \lambda_b=  \mu_b\sptilde ,\]
and  for any  $v=v_1v_2\cdots v_n$, $v_i\in \Aa $, $1\leq i \leq n$, we define:
\[\lambda_v= \lambda_{v_1}\circ \lambda_{v_2} \circ \cdots \circ \lambda_{v_n} .\]
If $v=\varepsilon$, we set $\lambda_{\varepsilon}= \mathrm{id}$. Thus $\{\lambda_a, \lambda_b\}^*= \{\lambda_v \mid v\in \Aa ^*\}$.

The following lemma shows that the morphism $\lambda_b$ is \emph{right conjugate} \cite[Sect.~2.3.4]{LO2}  to $\mu_b$.

\begin{lemma}\label{lem:aux1} For any $v\in \Aa ^*$, \ 
$ b\lambda_b(v)= \mu_b(v)b.$
\end{lemma}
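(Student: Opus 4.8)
The plan is to prove the identity $b\lambda_b(v) = \mu_b(v)b$ by induction on the length of $v$, exploiting the fact that $\lambda_b$ and $\mu_b$ differ only in where the letter $b$ gets attached: $\lambda_b$ sends $a\mapsto ab$ (so $b$ on the right), while $\mu_b$ sends $a\mapsto ba$ (so $b$ on the left), and both fix $b$. Intuitively, prepending a $b$ and appending a $b$ should ``commute past'' this difference letter by letter.

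For the base case $v=\varepsilon$, both sides equal $b$, since $\lambda_\varepsilon=\mu_\varepsilon=\mathrm{id}$. For the inductive step, write $v = xv'$ with $x\in\Aa$. The key observation I would isolate first is the single-letter version: $b\lambda_b(x) = \mu_b(x)b$ for $x\in\{a,b\}$. Indeed, if $x=a$ then $b\lambda_b(a) = bab = \mu_b(a)b$, and if $x=b$ then $b\lambda_b(b) = bb = \mu_b(b)b$. Then, using that $\lambda_b$ is a morphism, $b\lambda_b(xv') = b\lambda_b(x)\lambda_b(v')$; applying the single-letter identity this equals $\mu_b(x)\,b\lambda_b(v')$; and applying the inductive hypothesis to $v'$ this equals $\mu_b(x)\mu_b(v')b = \mu_b(xv')b$, as desired. (Alternatively one can induct on a suffix, writing $v = v'x$ and peeling the last letter, using $b\lambda_b(v'x) = b\lambda_b(v')\lambda_b(x) = \mu_b(v')b\lambda_b(x) = \mu_b(v')\mu_b(x)b$; either order works equally cleanly.)

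I do not expect any genuine obstacle here: the statement is essentially the definition of right conjugacy unwound once, and the induction is entirely routine. The only point demanding a moment's care is making sure the case $x=b$ is handled correctly in the one-letter check, since $\lambda_b$ and $\mu_b$ both fix $b$ but the identity still must absorb the flanking occurrences of $b$ correctly — and it does, because $bb\cdot b = b\cdot bb$ trivially. One could alternatively phrase the whole proof as a direct computation: for $v = v_1\cdots v_n$, expand $\lambda_b(v) = \lambda_b(v_1)\cdots\lambda_b(v_n)$ and repeatedly use $b\lambda_b(v_i) = \mu_b(v_i)b$ to ``sweep'' the leading $b$ rightward through the word until it emerges on the far right, but the inductive formulation is the cleanest to write.
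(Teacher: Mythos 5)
Your proof is correct and follows essentially the same route as the paper: a routine induction on $|v|$, with the paper peeling the last letter ($v=ux$) where your main version peels the first ($v=xv'$) — a distinction without a difference, as you yourself note. The single-letter check $b\lambda_b(x)=\mu_b(x)b$ for $x\in\{a,b\}$ is exactly the computation the paper performs inside its inductive step.
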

\begin{proof} By induction on the length of $v$. The result is trivially verified if $|v|\leq 1$. Let us then suppose $|v|>1$ and write $v= ux$ with $x\in \Aa $. If $x=a$ then, by using the inductive hypothesis,
\[ b\lambda_b(ua)= b\lambda_b(u)\lambda_b(a)= \mu_b(u)bab = \mu_b(ua)b.\]
If $x=b$, one has:
\[ b\lambda_b(ub)= b\lambda_b(u)b= \mu_b(u)bb= \mu_b(ub)b. \qedhere\]
\end{proof}
\begin{prop}\label{can:uno} For all $v\in \Aa ^*$,
\[\lambda_v(ab) = a\psi(v)b. \]
\end{prop}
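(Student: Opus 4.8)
The plan is to prove $\lambda_v(ab) = a\psi(v)b$ by induction on $|v|$, exploiting the close kinship between $\lambda_b$ and $\mu_b$ encoded in Lemma~\ref{lem:aux1}, together with Justin's formula (Theorem~\ref{thm:J}) or its binary-alphabet incarnation in Proposition~\ref{prop:J0000}. The base case $v = \varepsilon$ is immediate since $\lambda_\varepsilon = \mathrm{id}$ and $\psi(\varepsilon) = \varepsilon$, so $\lambda_\varepsilon(ab) = ab = a\psi(\varepsilon)b$. For the inductive step I would write $v = xu$ with $x \in \Aa$ and $|u| = |v|-1$, so that $\lambda_v = \lambda_x \circ \lambda_u$, and apply the inductive hypothesis $\lambda_u(ab) = a\psi(u)b$. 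Thus $\lambda_v(ab) = \lambda_x(a\psi(u)b)$, and everything reduces to understanding how $\lambda_a$ and $\lambda_b$ act on a word of the form $a\psi(u)b$.

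The two cases $x=a$ and $x=b$ should be handled separately. When $x = a$: $\lambda_a = \mu_a$, which fixes $a$ and sends $b \mapsto ab$. Here I would use Justin's formula in the form $\psi(au) = \mu_a(\psi(u))\psi(a) = \mu_a(\psi(u))a$, so that $a\psi(au)b = a\mu_a(\psi(u))ab = \mu_a(a)\mu_a(\psi(u))\mu_a(b) = \mu_a(a\psi(u)b) = \lambda_a(a\psi(u)b)$, which is exactly what we need. When $x = b$: $\lambda_b = \widetilde{\mu_b}$, sending $b \mapsto b$ and $a \mapsto ab$. The cleaner route is via Lemma~\ref{lem:aux1}, which gives $b\lambda_b(w) = \mu_b(w)b$ for every word $w$; applying this with $w = \psi(u)b$ and noting $\lambda_b(b) = b$, one gets $b\lambda_b(\psi(u)b) = \mu_b(\psi(u)b)b = \mu_b(\psi(u))\mu_b(b)b = \mu_b(\psi(u))bbb$, and prepending $\lambda_b(a) = ab$ requires a small bookkeeping step. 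Alternatively, and perhaps more transparently, I would compute $b\lambda_b(ab) \cdot (\text{something})$ or simply observe $a\lambda_b(\psi(u)b) $ directly: since $b\lambda_b(a\psi(u)b) = \mu_b(a\psi(u)b)b$ by the lemma, and $\mu_b(a\psi(u)b) = \psi(bu)\,b$ would follow from Justin's formula $\psi(bu) = \mu_b(\psi(u))\psi(b) = \mu_b(\psi(u))b$ combined with $\mu_b(a\psi(u)b) = \mu_b(a)\mu_b(\psi(u))\mu_b(b) = ba\cdot\mu_b(\psi(u))\cdot b$ — so a left-cancellation of the leading $b$ isolates $\lambda_b(a\psi(u)b) = a\psi(bu)b$, as desired.

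The main obstacle I anticipate is the $x=b$ case: unlike $\mu_b$, the morphism $\lambda_b$ does not commute nicely with prefixing by $a$, so one cannot simply mimic the $x=a$ computation, and the conjugacy relation of Lemma~\ref{lem:aux1} must be invoked with exactly the right argument and then followed by a careful left-cancellation of a single letter $b$. I would make sure the cancellation is legitimate by checking that both sides of the relevant identity begin with $b$ before cancelling. A reasonable sanity check throughout is to track lengths and letter counts: $|\lambda_a(w)| = |w| + |w|_b$ and $|\lambda_b(w)| = |w| + |w|_a$, which should match the known recursion $|\psi(xu)| = p_x(u) + |\psi(u)|$ type behaviour, though for the formal proof the algebraic identities above suffice. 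An alternative, entirely self-contained approach avoiding Justin's formula would be to prove $\lambda_v(ab) = a\psi(v)b$ simultaneously with an auxiliary claim describing $\lambda_v(a)$ and $\lambda_v(b)$ individually (analogous to Proposition~\ref{can:duebis}); but using Theorem~\ref{thm:J} and Lemma~\ref{lem:aux1} as black boxes keeps the argument short, so that is the route I would take.
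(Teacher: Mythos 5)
Your proposal is correct and follows essentially the same route as the paper: induction on $|v|$ peeling off the first letter, with Justin's formula handling the case $x=a$ and the conjugacy relation $b\lambda_b(w)=\mu_b(w)b$ of Lemma~\ref{lem:aux1} handling the case $x=b$ (the paper applies that lemma to $\psi(u)$ sitting inside $ab\lambda_b(\psi(u))b$, you apply it to all of $a\psi(u)b$ and cancel a leading $b$ --- an immaterial difference). Two small slips in your exploratory computations should be cleaned up in a final write-up: $\mu_b(\psi(u)b)b=\mu_b(\psi(u))bb$, not $\mu_b(\psi(u))bbb$, and the intermediate assertion $\mu_b(a\psi(u)b)=\psi(bu)b$ is false (the lengths already differ by one); the chain you actually commit to, namely $b\lambda_b(a\psi(u)b)=ba\,\mu_b(\psi(u))bb=b\cdot a\psi(bu)b$ followed by left-cancellation, is sound and does not need it.
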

\begin{proof} By induction on the length of $v$. If $|v|\leq 1$, the result  is trivially verified. Suppose $|v|>1$ and write $v=xw$ with $x\in \Aa $ and $w\in \Aa ^*$. By induction one has:
\[ \lambda_{xw}(ab)= \lambda_x(\lambda_w(ab))=  \lambda_x(a\psi(w)b).\]
Let us first suppose that  $x=a$. In such a case $\lambda_a= \mu_a$. By Justin's formula
\[ \lambda_{aw}(ab)= \mu_a(a\psi(w)b) = a\mu_a(\psi(w))ab= a\psi(aw)b.\]
Let now $x=b$, so that $\lambda_b= \mu_b\sptilde $. By Lemma~\ref{lem:aux1} and Justin's formula one has:
\[ \lambda_{bw}(ab)= ab\lambda_b(\psi(w))b = a\mu_b(\psi(w))bb = a\psi(bw)b. \qedhere\]
\end{proof}
\begin{cor}\label{prop:DER} For any  $w, v\in \Aa ^*$,  
\[a\psi(wv)b = \lambda_w(a\psi(v)b) .\]
\end{cor}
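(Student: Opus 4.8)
The plan is to derive Corollary~\ref{prop:DER} directly from Proposition~\ref{can:uno} together with the multiplicativity of the assignment $v\mapsto\lambda_v$. The key observation is that $\lambda_{wv}=\lambda_w\circ\lambda_v$ by the very definition of $\lambda_v=\lambda_{v_1}\circ\cdots\circ\lambda_{v_n}$, since concatenating the letter-sequences of $w$ and $v$ composes the corresponding morphisms; this needs only the associativity of $\circ$ and the convention $\lambda_\varepsilon=\mathrm{id}$ (so the edge cases $w=\varepsilon$ or $v=\varepsilon$ are immediate).

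Concretely, I would argue as follows. By Proposition~\ref{can:uno} applied to the word $wv$ we have $\lambda_{wv}(ab)=a\psi(wv)b$. On the other hand, $\lambda_{wv}=\lambda_w\circ\lambda_v$, so
\[
a\psi(wv)b=\lambda_{wv}(ab)=\lambda_w(\lambda_v(ab))=\lambda_w(a\psi(v)b),
\]
where the last equality is again Proposition~\ref{can:uno}, now applied to $v$. This is exactly the claimed identity.

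There is essentially no obstacle here: the statement is a formal consequence of a result proved just above (Proposition~\ref{can:uno}) and the monoid-homomorphism property of $v\mapsto\lambda_v$, which is built into the definition. The only point deserving a word of care is the identity $\lambda_{wv}=\lambda_w\circ\lambda_v$ itself; if one wants to be fully explicit, it follows by a trivial induction on $|w|$ (or simply by noting that writing $w=w_1\cdots w_m$, $v=v_1\cdots v_n$ gives $\lambda_{wv}=\lambda_{w_1}\circ\cdots\circ\lambda_{w_m}\circ\lambda_{v_1}\circ\cdots\circ\lambda_{v_n}=\lambda_w\circ\lambda_v$). Hence the whole proof is the two-line display above, optionally preceded by this remark. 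I expect the paper's own proof to be of precisely this form.
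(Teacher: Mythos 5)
Your proof is correct and is exactly the paper's argument: apply Proposition~\ref{can:uno} to $wv$, use $\lambda_{wv}=\lambda_w\circ\lambda_v$ (which holds by definition of $\lambda_v$), and apply Proposition~\ref{can:uno} again to $v$. Nothing further is needed.
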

\begin{proof} By the preceding proposition one has:
\[a\psi(wv)b= \lambda_{wv}(ab)= \lambda_w(\lambda_v(ab))= \lambda_w(a\psi(v)b). \qedhere\]
\end{proof}
\begin{prop}\label{can:due} Let $v\in\Aa ^*$ be non-constant. The following holds:
\[ \lambda_v(a)= \lambda_{v_+}(ab)= a\psi(v_+)b,  \  \lambda_v(b)= \lambda_{v^-}(ab)= a\psi(v^-)b, \text{ if } v^{(L)}=a \]
and
\[ \lambda_v(a)= \lambda_{v^-}(ab)= a\psi(v^-)b, \  \lambda_v(b)= \lambda_{v_+}(ab)= a\psi(v_+)b, \text{ if } v^{(L)}=b. \]
\end{prop}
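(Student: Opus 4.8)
The plan is to mimic the proof of Proposition~\ref{can:duebis}, replacing the morphism $\mu_v$ by $\lambda_v$ and invoking Proposition~\ref{can:uno} (which gives $\lambda_w(ab)=a\psi(w)b$) in place of Proposition~\ref{prop:J0000}. The only properties of the generators I need are their action on single letters: $\lambda_a(a)=a$, $\lambda_a(b)=ab$, $\lambda_b(a)=ab$, $\lambda_b(b)=b$. The key point is that $\lambda_a$ fixes $a$ and $\lambda_b$ fixes $b$, which will make a trailing run of equal letters disappear inside the composition.

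First I treat the case $v^{(L)}=a$. As already observed in the proof of Proposition~\ref{can:duebis}, since $v$ is non-constant we may write $v=v_+ba^r$ for some $r\ge 1$; hence $v^-=v_+ba^{r-1}$ and $v=v^-a$. For the image of $b$, using $v=v^-a$ and $\lambda_a(b)=ab$,
\[\lambda_v(b)=\lambda_{v^-}\bigl(\lambda_a(b)\bigr)=\lambda_{v^-}(ab)=a\psi(v^-)b,\]
the last equality by Proposition~\ref{can:uno}. For the image of $a$, using $v=v_+ba^r$ together with $\lambda_a(a)=a$ (so that $\lambda_{a^r}(a)=a$),
\[\lambda_v(a)=\lambda_{v_+b}\bigl(\lambda_{a^r}(a)\bigr)=\lambda_{v_+b}(a)=\lambda_{v_+}\bigl(\lambda_b(a)\bigr)=\lambda_{v_+}(ab)=a\psi(v_+)b,\]
again by Proposition~\ref{can:uno}. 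This is exactly the claimed identity for $v^{(L)}=a$.

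The case $v^{(L)}=b$ is handled the same way: write $v=v_+ab^r$ with $r\ge 1$, so that $v^-=v_+ab^{r-1}$ and $v=v^-b$. Then $\lambda_v(a)=\lambda_{v^-}(\lambda_b(a))=\lambda_{v^-}(ab)=a\psi(v^-)b$, while, since $\lambda_b(b)=b$ gives $\lambda_{b^r}(b)=b$, one gets $\lambda_v(b)=\lambda_{v_+a}(\lambda_{b^r}(b))=\lambda_{v_+a}(b)=\lambda_{v_+}(\lambda_a(b))=\lambda_{v_+}(ab)=a\psi(v_+)b$.

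I do not expect any real obstacle: once the decomposition $v=v_+\,y\,x^r$ (with $x=v^{(L)}$, $y=\bar x$, $r\ge1$) is available, both identities collapse because the trailing block $x^r$ is fixed by $\lambda_x$. The one point that requires a little attention is that, unlike for $\mu$, the morphisms $\lambda_a$ and $\lambda_b$ are not interchanged by the automorphism $E$, so the two cases genuinely have to be written out separately rather than one being deduced from the other by symmetry.
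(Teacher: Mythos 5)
Your proof is correct and takes essentially the same route as the paper's: both write $v=v_+ba^r$ (resp.\ $v=v_+ab^r$), let the trailing constant block be absorbed since $\lambda_x$ fixes $x$, and then conclude by Proposition~\ref{can:uno}. The paper merely states that the case $v^{(L)}=b$ is ``similarly dealt with,'' whereas you write it out explicitly.
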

\begin{proof}
We shall prove the result only when $v^{(L)}=a$. The case $v^{(L)}=b$ is similarly dealt with. We can write
$v= v_+ba^r$ for a suitable $r>0$. Therefore, by Proposition~\ref{can:uno} one has:
\begin{equation}\label{eq:conc1}
  \lambda_v(a)=   \lambda_{v_+ba^r}(a) =  \lambda_{v_+b}(a)=  \lambda_{v_+}(ab) = a\psi(v_+)b,
  \end{equation}
and
\begin{equation}\label{eq:conc2}
 \lambda_v(b)= \lambda_{v^-a}(b)= \lambda_{v^-}(ab)=  a\psi(v^-)b,  
\end{equation}
which proves the assertion.
\end{proof}
It is worth noting the similarity existing between Proposition~\ref{prop:J0000}, Corollary~\ref{cor:J0000C}, and Proposition~\ref{can:duebis}    concerning standard words and 
the morphisms $\mu_v$, $v\in \Aa ^*$, which preserve standard words \cite{adl97}, and Proposition~\ref{can:uno}, Corollary~\ref{prop:DER}, and Proposition~\ref{can:due} concerning Christoffel words and the morphisms $\lambda_v$, $v\in \Aa ^*$, which, as we shall see soon (cf. Proposition~\ref{CH:morph}), preserve Christoffel words.

Let $\mathcal{M}_{CH}$ denote the monoid of all endomorphisms $f$ of  $\Aa ^*$  which preserve Christoffel words, i.e., if  $w\in \CH $,
then $f(w)\in \CH $. Such a morphism $f$  will be called \emph{Christoffel morphism}. The following proposition was proved in \cite{BLRS} by a different (geometrical) technique\footnote{In \cite{BLRS} any Sturmian morphism, i.e., any endomorphism of ${\cal A}^*$ which preserves Sturmian words, is called Christoffel morphism.}.
\begin{prop}\label{CH:morph}
$\mathcal{M}_{CH}= \{\lambda_a, \lambda_b\}^*$.
\end{prop}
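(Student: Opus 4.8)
The plan is to establish the two inclusions $\{\lambda_a,\lambda_b\}^* \subseteq \mathcal{M}_{CH}$ and $\mathcal{M}_{CH} \subseteq \{\lambda_a,\lambda_b\}^*$ separately. For the first inclusion it suffices, since $\mathcal{M}_{CH}$ is a monoid, to show that $\lambda_a$ and $\lambda_b$ individually preserve Christoffel words. Let $w$ be a Christoffel word. If $w\in\{a,b\}$, then $\lambda_a(a)=a$, $\lambda_a(b)=ab=a\psi(\varepsilon)b$, $\lambda_b(a)=ab$, $\lambda_b(b)=b$, all of which are Christoffel words. If $w=a\psi(v)b$ is a proper Christoffel word, then by Corollary~\ref{prop:DER} we have $\lambda_a(w)=\lambda_a(a\psi(v)b)=\lambda_{av}(ab)=a\psi(av)b$ and likewise $\lambda_b(w)=a\psi(bv)b$, both proper Christoffel words by the identity $\CH = a\PER b \cup \Aa$ together with $\PER=\psi(\Aa^*)$. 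Hence $\lambda_a,\lambda_b\in\mathcal{M}_{CH}$, and therefore $\lambda_v\in\mathcal{M}_{CH}$ for all $v$.

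For the reverse inclusion, let $f\in\mathcal{M}_{CH}$. Since $ab$ and $ba$ are (the only length-two) Christoffel words — wait, $ba$ is not a Christoffel word, only $ab$ is, being Lyndon — I would instead look at where $f$ sends the letters $a$ and $b$. Both $f(a)$ and $f(b)$ are Christoffel words, hence Lyndon words, hence each is either a single letter or begins with $a$ and ends with $b$. Also $f(ab)=f(a)f(b)$ must be a Christoffel word; in particular it is a Lyndon word, so $f(a)<_{lex}f(b)$ and by Theorem~\ref{thm:Lynd}(1) the pair $(f(a),f(b))$ must be the standard factorization of the Christoffel word $f(a)f(b)$, which forces $f(a)$ and $f(b)$ to be the two Christoffel words $w_1,w_2$ with $w=w_1w_2$. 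The key structural fact I will use is that the pairs $(w_1,w_2)$ arising as the standard factorization of a Christoffel word $w$ are exactly those obtained by iterating the two operations $(u,v)\mapsto(u,uv)$ and $(u,v)\mapsto(uv,v)$ starting from $(a,b)$ — this is the Stern–Brocot/Raney description of Christoffel pairs, which underlies Theorem~\ref{thm:Lynd} and Proposition~\ref{prop:+---}. I would prove by induction on $|f(ab)|$ that any such pair $(f(a),f(b))$ equals $(\lambda_v(a),\lambda_v(b))$ for some $v$; the base case $|f(ab)|=2$ gives $(a,b)=(\lambda_\varepsilon(a),\lambda_\varepsilon(b))$, and the inductive step peels off the last operation in the Stern–Brocot construction, which corresponds to composing on the right with either $\lambda_a$ (when $f(a)$ is a prefix of $f(b)$, i.e. $f(b)=f(a)f(b')$) or $\lambda_b$ (when $f(b)$ is a suffix of $f(a)$), applying the induction hypothesis to the strictly shorter pair, and using Proposition~\ref{can:due} to recognize $\lambda_v(a)$ and $\lambda_v(b)$.

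The main obstacle is the reverse inclusion, and specifically the claim that the only way to write a Christoffel word $w$ as a product $w=w_1w_2$ of two Christoffel words is the standard Lyndon factorization, together with the ``peeling'' step: given the standard factorization $(w_1,w_2)$ of a proper Christoffel word $w$ with $|w|>2$, one of $w_1,w_2$ is a proper prefix of the other in the appropriate sense, so that $(w_1,w_2)=(\lambda_a(u_1),\lambda_a(u_2))$ or $(\lambda_b(u_1),\lambda_b(u_2))$ for the standard factorization $(u_1,u_2)$ of a shorter Christoffel word. Concretely: by Proposition~\ref{cor:Lyn+-} the standard factorization of $a\psi(v)b$ is $(a\psi(v_+)b, a\psi(v^-)b)$ or its reverse according to $v^{(L)}$, and by Proposition~\ref{can:due} these are exactly $(\lambda_v(a),\lambda_v(b))$ up to order; so the content is that every Christoffel pair has this shape, which one gets by induction using that $w_1$ or $w_2$ is itself determined by its being a Christoffel word whose $a$'s and $b$'s counts are the smaller coordinates. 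Once the pair is identified as $(\lambda_v(a),\lambda_v(b))$, we conclude $f$ agrees with $\lambda_v$ on $\{a,b\}$, hence $f=\lambda_v$ since endomorphisms of $\Aa^*$ are determined by their values on letters. This completes $\mathcal{M}_{CH}\subseteq\{\lambda_a,\lambda_b\}^*$ and hence the equality.
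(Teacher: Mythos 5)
Your proof is correct and follows essentially the same route as the paper: the forward inclusion via Proposition~\ref{can:uno}/Corollary~\ref{prop:DER} plus a direct check on letters, and the reverse inclusion by observing that $(f(a),f(b))$ must be the unique Christoffel factorization of $f(ab)=a\psi(v)b$ (Theorem~\ref{thm:Lynd}(1)) and identifying that factorization with $(\lambda_v(a),\lambda_v(b))$ via Propositions~\ref{cor:Lyn+-} and~\ref{can:due}. The only loose ends are that the Stern--Brocot ``peeling'' induction in your middle paragraph is unnecessary once you have those two propositions, and that the case of a constant directive word $v$ (i.e.\ $f(ab)=a^{k+1}b$ or $ab^{k+1}$), which Proposition~\ref{cor:Lyn+-} does not cover, must be checked separately, giving $f=\lambda_{a^{k}}$ or $f=\lambda_{b^{k}}$ directly as the paper does.
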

\begin{proof} Let  $\lambda_v \in \{\lambda_a, \lambda_b\}^*$ and $w\in \CH $. We prove that  $\lambda_v (w)\in \CH $.  Let us first suppose that $w$ is a proper Christoffel word. We can write $w= a\psi(u)b$ for a suitable $u\in \Aa ^*$.  Thus by Proposition~\ref{can:uno}
\[ \lambda_v(w)= \lambda_v(a\psi(u)b)= \lambda_v(\lambda_u(ab))= \lambda_{vu}(ab) = a\psi(vu)b \in \CH .\]
Let us now suppose that $w\in \Aa $. Let  $w=a$. If $v$ is not constant, then the result follows from Proposition~\ref{can:due}. Let us suppose that
$v$ is constant. The result is trivial if $v=\varepsilon$. If $v=a^k$ with $k>0$, we have $\lambda_{a^k}(a)=a \in \CH $. If $v=b^k$ one has
$\lambda_{b^k}(a)= ab^k\in \CH $. In a similar way one proves the result if $w=b$.

Let now $f$ be any Christoffel morphism. Since
\[f(a), f(b), f(ab)=f(a)f(b)\in \CH ,\] one has that  $(f(a),f(b))$ is the standard factorization of
$f(ab)$ in Christoffel (Lyndon) words. As $f(ab)$ is a proper Christoffel word we can write $f(ab)= a\psi(v)b$. Let us suppose that $v$ is not constant. If $v^{(L)}=a$, by Proposition~\ref{cor:Lyn+-} one has that the standard factorization of $a\psi(v)b$ in Lyndon words is $(a\psi(v_+)b, a\psi(v^-)b)$.
This implies, in view of~\eqref{eq:conc1} and~\eqref{eq:conc2},
\[f(a)= a\psi(v_+)b = \lambda_{v_+}(ab) = \lambda_v(a), \ f(b)= a\psi(v^-)b= \lambda_{v^-}(ab)= \lambda_v(b). \]
Hence, in this case $f= \lambda_v$ and the result follows. The case $v^{(L)}=b$ is similarly dealt with. 

Let us now suppose that $v$ is  constant.  We suppose that $v=a^k$.  One has
$f(ab)= a\psi(a^k)b = \lambda_{a^k}(ab)= aa^kb$. In this case $f(a)= a = \lambda_{a^k}(a)$ and $f(b)= a^kb = \lambda_{a^k}(b)$. Hence,
$f= \lambda_{a^k}$. In a similar way if $v=b^k$ one obtains $f= \lambda_{b^k}$. Thus the result is completely proved. 
\end{proof}
\begin{prop}\label{prop:preimage} Let $v, w\in \Aa ^*$.  If $\lambda_v(w)\in \CH $, then $w\in \CH $.
\end{prop}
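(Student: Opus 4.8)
The plan is to reduce the statement to the case $v\in\Aa$, and then to exploit two elementary observations on the shapes of the morphisms involved: every occurrence of $b$ in a word $\mu_a(z)$ is immediately preceded by an $a$, and every occurrence of $a$ in a word $\mu_b\sptilde(z)$ is immediately followed by a $b$; equivalently, no word of $\lambda_a(\Aa^*)$ contains the factor $bb$, and no word of $\lambda_b(\Aa^*)$ contains the factor $aa$. Both facts are immediate from the definitions $\lambda_a=\mu_a$ and $\lambda_b=\mu_b\sptilde$.

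For the reduction, write $v=v_1\cdots v_n$. Since $\lambda_v=\lambda_{v_1}\circ\lambda_{v_2\cdots v_n}$, the hypothesis reads $\lambda_{v_1}\!\bigl(\lambda_{v_2\cdots v_n}(w)\bigr)\in\CH$; granting the statement for a one-letter directive word, this yields $\lambda_{v_2\cdots v_n}(w)\in\CH$, and an induction on $n$ concludes (the case $n=0$ being trivial, as $\lambda_\varepsilon=\mathrm{id}$). So it remains to prove: if $x\in\Aa$ and $\lambda_x(w)\in\CH$, then $w\in\CH$. If $|w|\le 1$ this is clear, since $\Aa\subseteq\CH$. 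Assume $|w|\ge 2$; as $\lambda_x$ is non-erasing, $\lambda_x(w)$ is a Christoffel word of length at least $2$, hence a proper one, so by the identity $\CH=a\PER b\cup\Aa$ together with Proposition~\ref{can:uno} we may write $\lambda_x(w)=a\psi(u)b=\lambda_u(ab)$ for a suitable $u\in\Aa^*$.

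Take $x=a$ (the case $x=b$ is strictly symmetric, exchanging $a\leftrightarrow b$ and $\mu_a\leftrightarrow\mu_b\sptilde$, and using that $\lambda_b(\Aa^*)$ avoids $aa$). By the observation above, $\lambda_a(w)=\mu_a(w)$ has no factor $bb$. If $u=\varepsilon$, then $\mu_a(w)=ab$, whence $w=b$ by injectivity of $\mu_a$, contradicting $|w|\ge 2$. If $u$ begins with $b$, then by property~P2 of Proposition~\ref{prop:basicp} the nonempty palindrome $\psi(u)$ begins --- hence ends --- with $b$, so $a\psi(u)b$ ends with $bb$, again impossible. Therefore $u$ begins with $a$, say $u=au'$; then $\lambda_u=\lambda_a\circ\lambda_{u'}$, so $\lambda_a(w)=\lambda_a\!\bigl(\lambda_{u'}(ab)\bigr)$, and injectivity of $\lambda_a$ gives $w=\lambda_{u'}(ab)=a\psi(u')b\in a\PER b\subseteq\CH$.

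The only point requiring attention is that the one-letter case must be, and is, proved without circular appeal to the general statement: in the surviving subcase $w$ is exhibited \emph{directly} as a word $a\psi(u')b\in a\PER b$, with no reference to Christoffel morphisms beyond the definitions. Apart from that, the argument is just bookkeeping of the degenerate possibilities $w\in\Aa$ and $u=\varepsilon$ together with the one-line inspection of the factors forbidden in $\lambda_a(\Aa^*)$ and $\lambda_b(\Aa^*)$, so I do not anticipate any real obstacle.
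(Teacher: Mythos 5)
Your proof is correct and follows essentially the same route as the paper's: reduce to a single-letter directive word by induction, write $\lambda_x(w)=a\psi(u)b=\lambda_u(ab)$, show that $u$ must begin with $x$, and peel off $\lambda_x$ by injectivity. Your verification that $u^{(F)}=x$ (via the forbidden factors $bb$ in $\lambda_a(\Aa^*)$ and $aa$ in $\lambda_b(\Aa^*)$, applied to the last, resp.\ first, two letters of $a\psi(u)b$) is a slightly more direct version of the paper's argument, which establishes the same fact by locating a factor $aab$ in $\lambda_a(w)$.
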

\begin{proof} Let us first prove that for $x\in \Aa $, if $\lambda_x(w)\in \CH $ then $w\in \CH $. If $\lambda_x(w)=y\in \Aa $, then the only possibility is $x=y$ and $w=x\in \CH $. Let us then suppose that $\lambda_x(w)$ is a proper Christoffel word $a\psi(v)b$ for a suitable $v\in \Aa ^*$. We can write in view of Proposition \ref{can:uno}
\begin{equation}\label{eq:conc3}
 \lambda_x(w)= a\psi(v)b= \lambda_v(ab).
 \end{equation}
If $v=\varepsilon$, then one obtains $ \lambda_x(w)=ab$ and $w\in \CH $. Let us then suppose $|v|>0$. We wish to prove that $v^{(F)}=x$. To this end we show that $x=a$ if and only if $v^{(F)}=a$. Indeed, as $\lambda_a(w)\in \{a, ab\}^*$ and $\lambda_b(w)\in \{b, ab\}^*$ if $v^{(F)}=a$,
as $\psi(v)$ begins with $a$,  it follows that $x=a$. Conversely, suppose that $x=a$; one has that $w$ has to terminate with $b$. Moreover,
if $w=b^n$, with $n>1$  one would have $\lambda_a(b^n)= (ab)^n\not\in \CH $. If $n=1$, then $\lambda_a(b)= ab$ and $v=\varepsilon$, a contradiction. Hence, in $w$ there must be at least one occurrence of the letter $a$, so that we can write $w=w'ab^r$ with $r>0$. 
Thus
$ab\in \Ff(w)$. This implies that $aab\in \Ff(\lambda_a(w))$, so that $v^{(F)}=a$. We have then proved that $v^{(F)}=x$.  Writing $v=xv'$ from~\eqref{eq:conc3} we
have
\[  \lambda_x(w)= a\psi(v)b= \lambda_{xv'}(ab) = \lambda_x(\lambda_{v'}(ab)).\]
As $\lambda_x$ is injective, it follows $w= \lambda_{v'}(ab)\in \CH $.

The remaining part of the proof is obtained by induction on the length of $v$. If $|v|>1$,
set $v= xv'$ and suppose that $\lambda_v(w)\in \CH $. We can write $\lambda_v(w) = \lambda_x(\lambda_{v'}(w))\in \CH $. It follows from we have previously proved 
that $\lambda_{v'}(w)\in \CH $ and by induction $w\in \CH $.
\end{proof}

The following lemma relates the morphisms  $\lambda_{a^kb}$ and $\mu_{a^kb}$, $k\geq 0$.

\begin{lemma}\label{lemma:prepl} For each $k\geq 0$ and $v\in \Aa ^*$,
\[ \lambda_{a^kb}(bv)= \mu_{a^kb}(vb), \quad \lambda_{a^kb}(av) = a \mu_{a^kb}(vb).\]
\end{lemma}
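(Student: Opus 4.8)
The plan is to deduce both identities from the conjugacy relation in Lemma~\ref{lem:aux1}, after unwinding the definitions of $\lambda_{a^kb}$ and $\mu_{a^kb}$. First I would record that, since $\lambda_a=\mu_a$, one has
\[
\lambda_{a^kb}=\lambda_{a^k}\circ\lambda_b=\mu_{a^k}\circ\lambda_b
\qquad\text{and}\qquad
\mu_{a^kb}=\mu_{a^k}\circ\mu_b,
\]
so that in particular $\lambda_{a^kb}(b)=\mu_{a^k}(\lambda_b(b))=\mu_{a^k}(b)=\mu_{a^k}(\mu_b(b))=\mu_{a^kb}(b)$, while $\lambda_{a^kb}(a)=\mu_{a^k}(\lambda_b(a))=\mu_{a^k}(ab)=a\,\mu_{a^k}(b)=a\,\lambda_{a^kb}(b)$ because $\mu_{a^k}(a)=a$.

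Granting this, the first identity is a direct computation: using $\lambda_b(bv)=\lambda_b(b)\lambda_b(v)=b\,\lambda_b(v)$, then Lemma~\ref{lem:aux1}, then the fact that $\mu_{a^k}$ is a morphism, one would get
\[
\lambda_{a^kb}(bv)=\mu_{a^k}\bigl(b\,\lambda_b(v)\bigr)=\mu_{a^k}\bigl(\mu_b(v)\,b\bigr)=\mu_{a^k}(\mu_b(v))\,\mu_{a^k}(b)=\mu_{a^kb}(v)\,\mu_{a^kb}(b)=\mu_{a^kb}(vb).
\]
For the second identity I would then invoke the first one together with the remark $\lambda_{a^kb}(a)=a\,\lambda_{a^kb}(b)$ established in the previous step:
\[
\lambda_{a^kb}(av)=\lambda_{a^kb}(a)\,\lambda_{a^kb}(v)=a\,\lambda_{a^kb}(b)\,\lambda_{a^kb}(v)=a\,\lambda_{a^kb}(bv)=a\,\mu_{a^kb}(vb).
\]

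I do not expect any real obstacle: the whole argument is pure bookkeeping with morphism compositions, and the one substantive ingredient, Lemma~\ref{lem:aux1}, is already available. The only point to be careful about is the order of composition in $\lambda_{a^kb}=\lambda_{a^k}\circ\lambda_b$ and $\mu_{a^kb}=\mu_{a^k}\circ\mu_b$ --- so that $\lambda_b$ (resp.\ $\mu_b$) is the morphism applied first --- which is exactly the convention $\lambda_v=\lambda_{v_1}\circ\cdots\circ\lambda_{v_n}$; once this is pinned down, both displayed chains of equalities are immediate. (Alternatively, one could prove the first identity by induction on $|v|$, peeling off the leading letter of $v$ and feeding the second identity back in for the case $v^{(F)}=a$, but routing through Lemma~\ref{lem:aux1} avoids the induction altogether.)
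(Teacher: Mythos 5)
Your proof is correct and takes essentially the same route as the paper: both identities come down to Lemma~\ref{lem:aux1} combined with the composition conventions $\lambda_{a^kb}=\mu_{a^k}\circ\lambda_b$ and $\mu_{a^kb}=\mu_{a^k}\circ\mu_b$. The only cosmetic difference is that you deduce the second identity from the first via the observation $\lambda_{a^kb}(a)=a\,\lambda_{a^kb}(b)$, whereas the paper applies Lemma~\ref{lem:aux1} a second time directly.
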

\begin{proof} By Lemma~\ref{lem:aux1} one has  $\lambda_{a^kb}(bv)= \lambda_{a^k}(b\lambda_b(v)) =  \mu_{a^k}(\mu_b(v)b)=
\mu_{a^kb}(vb)$ and the first equation is proved. By using again Lemma~\ref{lem:aux1} one has
$\lambda_{a^kb}(av)= \lambda_{a^k}(ab\lambda_b(v))=  \lambda_{a^k}(a\mu_b(v)b))= \mu_{a^k}(a\mu_b(vb)) = a \mu_{a^kb}(vb).$
\end{proof}

\begin{lemma} For all $v\in \Aa ^*$ and $x\in \Aa $,
\[ |\lambda_v(x)| = |\mu_v(x)|= p_x(v) .\]
\end{lemma}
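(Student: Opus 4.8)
The idea is to reduce the statement to an elementary fact about abelianizations. Note first that the equality $|\mu_v(x)|=p_x(v)$ is just the definition~\eqref{eq:muvi}, so the only content is $|\lambda_v(x)|=|\mu_v(x)|$ for all $v\in\Aa^*$ and $x\in\Aa$.

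The first step is to observe that, for each single letter $z\in\Aa$, the morphisms $\mu_z$ and $\lambda_z$ have the same \emph{abelianization}: both fix $z$, and both send $\bar z$ to a word containing exactly one $a$ and one $b$ (namely $z\bar z$ in the case of $\mu_z$, and $ab$ in the case of $\lambda_z$; for $z=a$ these coincide, and for $z=b$ they differ only by the order of the two letters). Equivalently, if $M_z$ denotes the $2\times 2$ matrix whose columns are the Parikh vectors (the pairs of letter-counts) of the images of $a$ and of $b$, then $M_a=\bigl(\begin{smallmatrix}1&1\\0&1\end{smallmatrix}\bigr)$ for both $\mu_a$ and $\lambda_a$, and $M_b=\bigl(\begin{smallmatrix}1&0\\1&1\end{smallmatrix}\bigr)$ for both $\mu_b$ and $\lambda_b$. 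In particular, for every $w\in\Aa^*$ and $z\in\Aa$ the words $\mu_z(w)$ and $\lambda_z(w)$ have the same Parikh vector, hence the same length, which moreover equals $|w|+|w|_{\bar z}$.

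The second step is to propagate this through composition. Since the Parikh matrix of a composite morphism is the product of the Parikh matrices of its factors, and $\mu_v=\mu_{v_1}\circ\cdots\circ\mu_{v_n}$, $\lambda_v=\lambda_{v_1}\circ\cdots\circ\lambda_{v_n}$ for $v=v_1\cdots v_n$, one gets $M_{\mu_v}=M_{v_1}\cdots M_{v_n}=M_{\lambda_v}$. Hence $\mu_v(x)$ and $\lambda_v(x)$ have equal Parikh vectors, so $|\lambda_v(x)|=|\mu_v(x)|=p_x(v)$, as claimed. One can avoid matrices altogether and run the same argument as a short induction on $|v|$: writing $v=zv'$, one has $\lambda_v(x)=\lambda_z(\lambda_{v'}(x))$ and $\mu_v(x)=\mu_z(\mu_{v'}(x))$, and by the inductive hypothesis $\lambda_{v'}(x)$ and $\mu_{v'}(x)$ have the same numbers of $a$'s and of $b$'s, a property preserved by applying $\lambda_z$ resp.\ $\mu_z$.

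I do not expect any genuine obstacle. The one subtlety worth flagging is that the equality $|\lambda_z(w)|=|\mu_z(w)|$ for a single generator — which also follows from the right-conjugacy relation $b\lambda_b(v)=\mu_b(v)b$ of Lemma~\ref{lem:aux1} together with $\lambda_a=\mu_a$ — does \emph{not} by itself yield the result for composite $v$, since the intermediate images $\lambda_{v'}(x)$ and $\mu_{v'}(x)$ are honestly different words. One therefore has to carry the full Parikh vector (not merely the length) through the argument, and the abelianization viewpoint makes that automatic.
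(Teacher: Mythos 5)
Your proof is correct, but it follows a genuinely different route from the paper's. The paper disposes of constant $v$ by direct computation ($\lambda_{a^k}=\mu_{a^k}$, and $|\lambda_{b^k}(a)|=|ab^k|=|b^ka|=|\mu_{b^k}(a)|$), and for non-constant $v$ it simply invokes Propositions~\ref{can:due} and~\ref{can:duebis}, which identify the images explicitly: e.g.\ for $v^{(L)}=a$ one has $\lambda_v(a)=a\psi(v_+)b$ while $\mu_v(a)=\psi(v_+)ba$, two conjugate words of the same length. So the paper's argument is a one-liner \emph{given} the structural results already established, but it is not self-contained. Your abelianization argument is: you observe that $\mu_z$ and $\lambda_z$ have the same Parikh matrix for each letter $z$ (for $z=b$ the images $ba$ versus $ab$ differ only in order), that Parikh matrices multiply under composition in the same order for both families, and hence that $\mu_v(x)$ and $\lambda_v(x)$ have equal Parikh vectors for every $v$. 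This is more elementary, needs no case split on whether $v$ is constant, and actually proves the stronger statement $|\lambda_v(x)|_c=|\mu_v(x)|_c$ for each letter $c$ (consistent with, and implied by, the conjugacy visible in the paper's explicit formulas, though the lemma only asserts equality of lengths). Your closing caveat is also well taken: the single-generator identity $b\lambda_b(w)=\mu_b(w)b$ of Lemma~\ref{lem:aux1} gives $|\lambda_b(w)|=|\mu_b(w)|$ for a \emph{common} argument $w$, but under composition the intermediate images diverge as words, so one must indeed carry the full Parikh vector (or, as the paper does, know the images outright) rather than just their lengths.
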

\begin{proof} Let us first suppose that $v$ is constant.  If $v=a^k$ with $k\geq 0$, then since $\lambda_{a^k}= \mu_{a^k}$, in view of~\eqref{eq:muvi}  the result trivially follows. If  $v=b^k$, then $|\lambda_{b^k}(a)|= |ab^k|=  |b^ka| = |\mu_{b^k}(a)|$ and $|\lambda_{b^k}(b)|= |\mu_{b^k}(b)| =1$
and the result is achieved. If $v$ is not constant the result follows from Propositions~\ref{can:due} and~\ref{can:duebis}.
\end{proof}

\section{Derivative of a Christoffel word}\label{sec:derc}

Let $\varphi: \Aa ^* \rightarrow \Aa ^*$ be an injective morphism. As is well-known (cf.~\cite{codes}) the set $X= \varphi(\Aa )$ is a code over the alphabet $\Aa $, i.e., any word of $X^+$ can be uniquely factored by the elements of $X$. Thus there exists an isomorphism, that we still denote
by $\varphi$, of $\Aa ^*$ and $X^*$. Let $\varphi^{-1}$ be  the inverse morphism of $\varphi$.

If $w\in X^+$,  $\varphi^{-1}(w)$ is a uniquely determined word over the alphabet $\Aa $, that we call
\emph{derivative} of $w$ with respect to $\varphi$. We shall denote $\varphi^{-1}(w)$ by $\partial_{\varphi} w$, or simply
$\partial w $, when there is no ambiguity.
\begin{example}
Let $X= \{ab,ba\}$ and $\varphi$ the Thue-Morse morphism defined by $\varphi(a)=ab$
and $\varphi(b)=ba$. One has that $\partial abbabaab = abba$.

 Let  $w$ be the finite Sturmian word $w= aababaaba$.  The word $w$
can be decoded by the morphism $\mu_a :\{a, b\}^* \rightarrow  \{a , ab\}^*$ or by the morphism $\mu\sptilde _a :\{a, b\}^* \rightarrow  \{a , ba\}^*$. In the first case one obtains the derivative $w_1 = abbaba$ which is still a finite Sturmian word, whereas in the second case one gets the derivative $w_2= aabbab$ which is not a finite Sturmian word. 
\end{example}

In the study of derivatives of finite words over $\Aa $ belonging to a given class $\mathcal{C}$,  we require that the set $\mathcal{M}$ of injective endomorphisms of $\Aa ^*$ satisfies the two following basic conditions:
\begin{enumerate}[1.]
\item If $\varphi \in \mathcal{M}$, then for any $w\in \mathcal{C}$, $ \varphi(w)\in \mathcal{C}.$
\item If $\varphi(v)= w$ and $w\in \mathcal{C}$, then $v\in \mathcal{C}$.
\end{enumerate}
Moreover, one can restrict the class $\mathcal{M}$ of endomorphisms  to some subclass $\hat{\mathcal{M}}$ assuring that the
obtained derivatives satisfy suitable combinatorial properties.

In this section we shall consider the class $\mathcal{C}$ of  Christoffel words. We define a derivative of a proper Christoffel word   by referring
to a suitable  Christoffel morphism.  A derivative 
 in the case of finite (and infinite) standard Sturmian words and its relation with the previous one will be  given  in Section~\ref{sec:ders}.

Let  $u= \psi(v)$ be a central word.
We define \emph{index} of the central word $u$ the integer $0$ if $v=\varepsilon$ or, otherwise, the first 
 element in the integral representation $(\alpha_0, \alpha_1, \ldots, \alpha_n)$  of $v$, i.e., 
 $\alpha_0$. We let $\ind(u)$ denote the index of $u$. If $w= aub$ is a proper Christoffel word
we define \emph{index} (resp., \emph{directive word}) of $w$ the index (resp., directive word) of the central word $u$.

In the following, for $x\in \Aa $, we set  $\PER_x= \PER \cap \ x\Aa ^*$ and for any $k\geq 0$
we define the prefix code $X_k$ and the suffix code $Y_k$:
\begin{equation}\label{eq:xy}
 X_k=  \{a^kb, a^{k+1}b\} \ \text{ and }\ Y_k= \{ab^k, ab^{k+1}\}.
 \end{equation}

\begin{lemma}\label{lem:AABC} Let $w= aub$ be a proper Christoffel word with $u\neq \varepsilon$ and $k$ be the index of $u$. If  $u\in \PER_a$, then $w\in a^{k+1}bX_k^*$. If $u\in \PER_b$, then
$w\in ab^kY_k^*$.
\end{lemma}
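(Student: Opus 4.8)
The plan is to prove the statement by induction on $|u|$, exploiting the recursive structure of central words given by Proposition~\ref{Prop:uno} and Lemma~\ref{lem:uno}, together with the factorization of Christoffel words in Theorem~\ref{thm:Lynd} and Proposition~\ref{cor:Lyn+-}. By the complementarity property P6 of $\psi$ and the symmetry $E(X_k) = Y_k$, it suffices to handle the case $u \in \PER_a$; the case $u \in \PER_b$ follows by applying $E$. So assume $u = \psi(v)$ with $v^{(F)} = a$, and let $v = a^k v'$ where either $v' = \varepsilon$ or $v'^{(F)} = b$; thus $k = \ind(u) \geq 1$.

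First I would treat the base case where $v = a^k$, i.e. $u = a^{k-1}$ (using Lemma~\ref{lem:uno}, $\psi(a^k) = a^k$, so $u = \psi(v)$ has $|u|$... actually $w = aub = a^{k}b \cdot$ wait). Here $w = a\psi(a^k)b = a^{k+1}b \in a^{k+1}b X_k^* $ since $X_k^*$ contains $\varepsilon$. For the inductive step with $v' \neq \varepsilon$, I would use Proposition~\ref{cor:Lyn+-}: since $v$ is non-constant, $w = a\psi(v)b = w_1 w_2$ factors in Lyndon words as $(a\psi(v_+)b, a\psi(v^-)b)$ or $(a\psi(v^-)b, a\psi(v_+)b)$ according to $v^{(L)}$. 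Both $v_+$ and $v^-$ are non-empty prefixes/proper-prefixes related to $v$ and both begin with $a^k$ (since $k \geq 1$ and they retain the initial block of $a$'s — this needs a small check: $v_+$ and $v^-$ both start with $a$, and as long as $v$ is not too short, with the full block $a^k$; the short cases where $v^- = a^{k-1}$ or similar must be checked directly). Hence by induction each of $a\psi(v_+)b$ and $a\psi(v^-)b$ lies in $a^{k+1}bX_k^*$ — or rather one should be careful: $a\psi(v^-)b$ might itself be the single-block word $a^{k+1}b$. In any case, concatenating, $w = w_1w_2$ is a product of two words each of the form $a^{k+1}b$ followed by a word in $X_k^*$, and crucially $a^{k+1}b \in X_k$, so $w_2 \in X_k^+$. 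Therefore $w = (a^{k+1}b)(\text{something in }X_k^*) \in a^{k+1}bX_k^*$.

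The main obstacle I anticipate is making the claim "$v_+$ and $v^-$ both begin with the block $a^k$" fully rigorous, and correctly handling the short/degenerate cases — e.g. when $v = a^k b$ (so $v^- = a^k$, which starts with $a^k$ but is constant, and $v_+ = a^k$ as well), or more generally when the second maximal block of $v$ has length $1$. In those situations the inductive hypothesis applies to $a\psi(a^k)b = a^{k+1}b$, which trivially lies in $a^{k+1}bX_k^*$, so the induction still goes through, but the bookkeeping of which factor is $w_1$ and which is $w_2$ (depending on $v^{(L)}$ and on comparing lengths / lexicographic order) requires attention. An alternative, possibly cleaner route avoiding Lyndon factorizations: argue directly that $w = a\psi(v)b$ where $\psi(v) = a^{k-1}\cdot(\text{stuff})$ has the property that every occurrence of $b$ in $w$ is preceded by exactly $k$ or $k+1$ occurrences of $a$ — i.e. the "runs" of $a$'s strictly between consecutive $b$'s (and before the first $b$) all have length $k$ or $k+1$ — which is precisely a known characterization used in digital-straight-segment recognition; this can be derived from the two coprime periods $p,q$ of $u$ via Proposition~\ref{Prop:uno} and the fact that $\pi(\psi(v)) = |a\psi(v_+)b|$ from~\eqref{eq:perpsi}. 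Either way, the proof is short once the block-structure claim is pinned down; I would likely present the inductive argument as the cleanest, flagging the degenerate cases explicitly.
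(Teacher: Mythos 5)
Your overall strategy is essentially the paper's: the paper also argues by induction on the directive word, writing $\psi(v)=u_1bau_2=u_2abu_1$ and using the two-factor decomposition of $a\psi(v)b$ into shorter proper Christoffel words, which by Proposition~\ref{cor:Lyn+-} is exactly your pair $\{a\psi(v_+)b,\,a\psi(v^-)b\}$. The difference is that the paper runs the induction ``upward'' (from $\psi(v)$ to $\psi(va)$ and $\psi(vb)$) while you run it ``downward''; this is cosmetic.

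There is, however, a concrete error in your handling of the degenerate case, and it sits exactly where the real content of the lemma is. For $v=a^kb$ you claim $v_+=a^k$; in fact $v_+=a^{k-1}$ (the longest prefix of $a^kb$ immediately followed by $a$ is $a^{k-1}$), and more generally for $v=a^kb^t$, $t\geq 1$, one gets $v_+=a^{k-1}$ and hence $a\psi(v_+)b=a^kb$, which does \emph{not} lie in $a^{k+1}bX_k^*$. So the step ``by induction each of $a\psi(v_+)b$ and $a\psi(v^-)b$ lies in $a^{k+1}bX_k^*$'' fails, and your proposed repair (``the inductive hypothesis applies to $a\psi(a^k)b=a^{k+1}b$'') rests on the wrong value of $v_+$. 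The correct repair is the one the paper makes explicitly: whenever this degeneracy occurs one has $v^{(L)}=b$, so by Proposition~\ref{cor:Lyn+-} the offending word $a^kb$ is the \emph{second} factor $w_2$, and $a^kb\in X_k$ is all one needs there (only the first factor must begin with $a^{k+1}b$; the first factor is always $a\psi(v^-)b$ or $a\psi(v_+)b$ with directive word beginning with $a^k$, hence in $a^{k+1}bX_k^*$ by induction or trivially equal to $a^{k+1}b$). You correctly sensed that ``the bookkeeping of which factor is $w_1$ and which is $w_2$ requires attention,'' but you did not carry that bookkeeping out, and as written the degenerate case is resolved incorrectly. With that single point fixed, your induction goes through and coincides with the paper's proof.
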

\begin{proof} We shall prove the lemma only in the case $u\in \PER_a$. The case  $u\in \PER_b$ is dealt with in a similar way. We shall denote by $Z_k$ the set $a^{k+1}bX_k^*$. The proof is by induction on the length of the directive word $v$ 
of the central word $u= \psi(v)$ of index $k$. If $v=a^k$ then $u = a^k$ and $w=aub= a^{k+1}b$.
If $v= a^kb$, then $u= a^kba^k$ and $w= a^{k+1}ba^kb$ and we are done. Let us then suppose
that the result is true for a directive word $v\in a^kb\Aa ^*$ and prove it for the directive word
$vx$ with $x\in \Aa $. Since $\psi(v)$ begins with $a^kba^k$, we can write from Proposition~\ref{Prop:uno}
\begin{equation}\label{eq:psi}
 u= \psi(v)= u_1bau_2= u_2abu_1= a^kba^k\zeta,
 \end{equation}
where $\zeta\in \Aa ^*$ and $u_1,u_2\in \PER_a$. Moreover, from Lemma~\ref{lem:uno} one
has:
\[ \psi(va)= u_1bau_2abu_1  \ \text{ and }\  \psi(vb)= u_2abu_1bau_2. \] 
Let $z_1$ and $z_2$ be the Christoffel words  $z_1= a\psi(va)b$ and $z_2= a\psi(vb)b$. One has
that 
\begin{equation}\label{eq:two} 
       z_1= au_1b au_2abu_1b= au_1b a\psi(v)b = au_1b w \text{ and }z_2= wau_2b.
 \end{equation}
      From~\eqref{eq:psi} one has that $|u_1|\geq k$. This implies that the index of $u_1=\psi(v_1)$ is $k$.
      Since $au_1b$ is a Christoffel word and $|v_1|<|v|$, one has by induction  $au_1b \in Z_k$. Also by induction
      $w\in Z_k$. Hence, by~\eqref{eq:two} one has $z_1\in Z_k$.
      
     As regards $z_2$  from~\eqref{eq:psi} one has either $|u_2|\geq k$ or   $|u_2|= k-1$. In the first case since $\ind(u_2)= k$,  in a way similar as above one derives by induction that the Christoffel word $au_2b \in Z_k$, that implies by~\eqref{eq:two}, as $w\in Z_k$, that $z_2\in Z_k$.
    In the second case $au_2b= a^kb\in X_k$, so that, as $w\in Z_k$, it follows $z_2\in Z_k$ and this concludes the proof.
\end{proof}
If $w$ is a proper Christoffel word, we can introduce a derivative of $w$ as follows. If $w=aub$,
where  $u\in \PER_a$ is a central word of index $k$, we consider the prefix code $X_k$ and
the injective endomorphism $\varphi_k$ of $\Aa ^*$ defined by 
\begin{equation}\label{eq:fi1}
 \varphi_k(a) = a^{k+1}b, \quad \varphi_k(b) = a^{k}b.
\end{equation}
By the previous lemma $w\in X_k^*$ and the derivative of $w$ with respect to $\varphi_k$
is  $\partial_k w = \varphi_k^{-1}(w)$. Let us observe that from the definition for all $k\geq 0$ one has $\partial_k a^{k+1}b =a$
whereas $\partial_{k+1} a^{k+1}b =b$.

 In the case $u\in \PER_b$, 
one can consider  the injective endomorphism $\hat\varphi_k$ of $\Aa ^*$ defined by 
\begin{equation}\label{eq:fi2}
\hat \varphi_k(a) = ab^k, \quad \hat \varphi_k(b) = ab^{k+1}.
\end{equation}
By the previous lemma $w\in Y_k^*$ and the derivative of $w$ with respect to $\hat\varphi_k$
is  $\hat\partial_k w = \hat\varphi_k^{-1}(w)$. Observe that  for all $k\geq 0$ one has $\hat\partial_k ab^{k+1}=b$
whereas $\hat\partial_{k+1} ab^{k+1}=a$.

If $w= aub$ is a proper Christoffel word of index $k>0$ the \emph{derivative} of $w$ is the word
$\partial w =\partial_k w$ if $u\in \PER_a$ and $\partial w =\hat\partial_k w$ if $u\in \PER_b$.
Finally, if $k=0$, i.e., $w=ab$, we set $\partial ab= a$.

Let us observe that from the definition one has for each $k\geq 0$:
\[ \varphi_k = \lambda_{a^kb} \  \text{ and }\  \hat\varphi_k = \lambda_{b^ka},\]
so that by Proposition~\ref{CH:morph},   $\varphi_k$ and   $\hat\varphi_k$  are Christoffel morphisms.

\begin{example}
\label{ex:index00}
Let $v=ab^2a^2b$ and $w$ be the Christoffel word $a\psi(v)b$ where $\psi(v)$ is a  central 
word of index 1. One has:
\[ w = aababaababaabababaababaababab.\]
In this case one has $X_1=\{a^2b, ab\}$ and
$\partial w = \partial_1w=  abababbababb.$

If $w=a\psi(b^2a^2)b$, then $w=abbabbabbb$. The index of $w$ is 2 and $Y_2=\{ab^2, ab^3\}$
and $\partial w = \hat\partial_2w=  aab$.
\end{example}
\begin{remark}
Let us explicitly observe that two different Christoffel words can have the same derivative. For instance, the Christoffel words $w= a\psi(a^2b^2a)b$ and $w'= a\psi(b^2aba)b$ have both the
derivative $ababb= a\psi(ba)b$. Moreover, from the definition it follows that all proper Christoffel words having directive words which are equal up to their index have the same derivative, i.e., for all $k>0$,  $x\in \Aa $ and $\xi\in {\bar x}\Aa ^*$, one has $\partial(a\psi(x^k\xi)b) = \partial(a\psi(x\xi)b)$. A proper Christoffel word $w$ is determined by its derivative $\partial w$ and the value of its index.
\end{remark}

\begin{thm}\label{thm:diretto} The derivative of a proper Christoffel word is a Christoffel word.
\end{thm}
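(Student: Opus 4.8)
The plan is to reduce everything to the fact, already established, that the maps $\varphi_k=\lambda_{a^kb}$ and $\hat\varphi_k=\lambda_{b^ka}$ are Christoffel morphisms, together with the preimage result of Proposition~\ref{prop:preimage}. Let $w$ be a proper Christoffel word of index $k$. If $k=0$ then $w=ab$ and $\partial w=a\in\CH$, so assume $k>0$ and write $w=aub$ with $u$ a central word of index $k$.

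First I would treat the case $u\in\PER_a$. By Lemma~\ref{lem:AABC} we have $w\in X_k^*$, so $\partial w=\partial_k w=\varphi_k^{-1}(w)$ is the unique word over $\Aa$ with $\varphi_k(\partial w)=w$. Since $\varphi_k=\lambda_{a^kb}\in\{\lambda_a,\lambda_b\}^*$ and $w\in\CH$, Proposition~\ref{prop:preimage} applies directly with $v=a^kb$: from $\lambda_{a^kb}(\partial w)=w\in\CH$ we conclude $\partial w\in\CH$. The case $u\in\PER_b$ is symmetric: Lemma~\ref{lem:AABC} gives $w\in Y_k^*$, so $\partial w=\hat\partial_k w=\hat\varphi_k^{-1}(w)=\lambda_{b^ka}^{-1}(w)$, and again Proposition~\ref{prop:preimage} (now with $v=b^ka$) yields $\partial w\in\CH$.

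That is essentially the whole argument; there is no serious obstacle, since the two conditions required of a ``derivation-friendly'' class of morphisms in the list preceding Theorem~\ref{thm:diretto} — that Christoffel morphisms preserve $\CH$, and that preimages of Christoffel words under such morphisms lie in $\CH$ — are exactly Proposition~\ref{CH:morph} and Proposition~\ref{prop:preimage}. The only points deserving care are bookkeeping ones: checking that $\partial w$ is well-defined (i.e.\ that $w$ really lies in $X_k^*$ or $Y_k^*$, which is Lemma~\ref{lem:AABC}) and that the degenerate case $k=0$ is handled separately. I would also remark, as an immediate byproduct, the stronger fact mentioned in the introduction: writing $u=\psi(v)$, one has $\varphi_k(a\psi({}_+v)b)=w$ when $v$ is not a power of a letter, so in fact $\partial w=a\psi({}_+v)b$ is a \emph{proper} Christoffel word unless $v$ is a letter power — but for the present theorem it suffices to invoke Proposition~\ref{prop:preimage} as above.
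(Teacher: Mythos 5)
Your proposal is correct and follows essentially the same route as the paper: reduce to the fact that $\varphi_k=\lambda_{a^kb}$ and $\hat\varphi_k=\lambda_{b^ka}$ are Christoffel morphisms and apply Proposition~\ref{prop:preimage} to the identity $\varphi_k(\partial w)=w$ (resp.\ $\hat\varphi_k(\partial w)=w$), after disposing of the case $k=0$ separately. Your explicit appeal to Lemma~\ref{lem:AABC} for the well-definedness of $\partial w$ is a point the paper handles in the definition of the derivative rather than in the proof itself, but the argument is the same.
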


\begin{proof}Let $w =aub$ be a proper Christoffel word of index $k$.   If $k=0$, i.e., $w=ab$,  $\partial ab= a$ and in this case the result is trivially verified. Let us suppose $k>0$. The  derivative of $w$ is 
$\partial w = \varphi_k^{-1} (w)$ if $u\in \PER_a$ and $\partial w =\hat\varphi_k^{-1}(w)$ if $u\in \PER_b$.
  In the first case  $\varphi_k(\partial w) =w$ and in the second  case $\hat\varphi_k(\partial w) =w$. Since $\varphi_k$ and $\hat\varphi_k$  are  Christoffel morphisms, by Proposition~\ref{prop:preimage} it follows that in both cases $\partial w \in \CH $. 
\end{proof}
\begin{cor} Let $w=a\psi(v)b$ be a Christoffel word of index $k$, with $ v$ non-constant. If $w=w_1w_2$ with $w_1,w_2\in \CH $ is the standard factorization of $w$ in Lyndon words, then $w_1, w_2\in X_k^*$ or $w_1, w_2\in Y_k^*$ and $\partial w=  \partial w_1\partial w_2$ with  $\partial w_1, \partial w_2\in \CH $, is the standard factorization of $\partial w$
in Lyndon words.
\end{cor}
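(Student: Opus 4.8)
The plan is to combine three facts already available: Theorem~\ref{thm:Lynd}(1) (a proper Christoffel word has a unique factorization into two Christoffel words, which is its standard Lyndon factorization), Proposition~\ref{cor:Lyn+-} (for $v$ non-constant, the Lyndon factorization of $a\psi(v)b$ is $(a\psi(v_+)b,a\psi(v^-)b)$ if $v^{(L)}=a$, and the reversed pair if $v^{(L)}=b$), and Lemma~\ref{lem:AABC} together with the definition of $\partial$ via $\varphi_k=\lambda_{a^kb}$ or $\hat\varphi_k=\lambda_{b^ka}$. Throughout I would treat the case $u=\psi(v)\in\PER_a$ (so $v^{(F)}=a$, index $\alpha_0=k$, and $w\in X_k^*$); the case $u\in\PER_b$ is symmetric via $E$.

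First I would observe that $w_1$ and $w_2$ are themselves proper Christoffel words (they cannot be single letters since their product is a proper Christoffel word of length $>2$ whose Lyndon factors both begin with $a$ and end with $b$ — concretely $w_1=a\psi(v_+)b$ or $a\psi(v^-)b$, each of which is proper because $v_+$ and $v^-$ are nonempty when $v$ is non-constant of extension $\ge 2$; the only subtlety is when $v^-$ is constant, i.e.\ $v=a^k b$, in which case $w=a^{k+1}ba^kb$, $w_1=a^{k+1}b$, $w_2=a^kb$, both in $X_k^*$ trivially). Next, the key structural point: since $w_1$ is a \emph{prefix} of $w$ and $w\in a^{k+1}bX_k^*$, and since $X_k$ is a prefix code, any prefix of $w$ that is a concatenation of Christoffel words of slope compatible with $w$ — in particular $w_1$, which is itself a Christoffel word whose index I claim is also $k$ — lies in $X_k^*$. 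To make this precise I would note $w_1=a\psi(v_+)b$ with $v_+$ a prefix of $v$ starting with $a^k$ (since $|v_+|\ge 1$ and $v^{(F)}=a$, and in fact $v_+\in a^k b\Aa^*$ unless $v_+=a^k$, using that $\alpha_0=k$), so $\ind(w_1)=k$ and Lemma~\ref{lem:AABC} gives $w_1\in X_k^*$; similarly $\ind(w_2)=k$ (here $v^-$ also begins with $a^k$) so $w_2\in X_k^*$. Hence $w=w_1w_2$ with both factors in $X_k^*$, and since $X_k$ is a code this factorization is compatible: $\varphi_k^{-1}(w)=\varphi_k^{-1}(w_1)\varphi_k^{-1}(w_2)$, i.e.\ $\partial w=\partial w_1\,\partial w_2$.

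It then remains to show $(\partial w_1,\partial w_2)$ is the standard Lyndon factorization of $\partial w$. By Theorem~\ref{thm:diretto}, $\partial w$, $\partial w_1$, $\partial w_2$ are all Christoffel words, and by Theorem~\ref{thm:Lynd}(1) the Lyndon (standard) factorization of $\partial w$ into two Christoffel words is unique; so it suffices to check that $\partial w_1<_{lex}\partial w_2$ and that $\partial w_1\partial w_2$ is this particular two-part factorization. The cleanest route is to use the morphism description: writing $w_1=a\psi(v_+)b=\lambda_{v_+}(ab)$ and $w_2=a\psi(v^-)b=\lambda_{v^-}(ab)$, and $\varphi_k=\lambda_{a^kb}$, I would argue that $\partial w_i$ is again of the form $a\psi(\cdot)b$ with directive word obtained by stripping the leading $a^kb$-block structure — in fact the Remark after Theorem~\ref{thm:diretto} (and the displayed formula $\partial(a\psi(x^k\xi)b)=\partial(a\psi(x\xi)b)$, to be invoked via the relation $\partial w=a\psi({}_+v)b$ promised in the introduction and proved in this section) identifies $\partial w_1$ and $\partial w_2$ as proper Christoffel words whose concatenation $\partial w_1\partial w_2$ has the two factors in the right lexicographic order because $\varphi_k$ (being $\lambda_{a^kb}$) is order-preserving on the relevant pair: $w_1<_{lex}w_2\iff \partial w_1<_{lex}\partial w_2$.

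The main obstacle I anticipate is the last point — verifying that $\varphi_k^{-1}$ preserves the lexicographic comparison between $w_1$ and $w_2$, equivalently that it maps the standard factorization to the standard factorization rather than to the ``wrong'' ordering. This is not automatic for arbitrary morphisms, but here $\varphi_k=\lambda_{a^kb}$ with $\lambda_{a^kb}(a)=a^{k+1}b$, $\lambda_{a^kb}(b)=a^kb$, and one checks directly that for two $X_k^*$-words $z_1,z_2$ one has $z_1<_{lex}z_2$ iff $\varphi_k^{-1}(z_1)<_{lex}\varphi_k^{-1}(z_2)$, because reading a block $a^{k+1}b$ versus $a^kb$ the first discrepancy is an $a$ (preimage $a$) versus a $b$ (preimage $b$), and $a<b$ is preserved. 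I would isolate this as a short lemma or an inline observation and then conclude. The $\PER_b$ case is handled by applying $E$ and using~\eqref{eq:E+} together with $E$-equivariance of $\psi$ and of the Lyndon/Christoffel correspondence (reversal swaps upper and lower, but the combinatorial statement is symmetric).
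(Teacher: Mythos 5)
Your argument is correct and follows essentially the same route as the paper: both Lyndon factors are shown to lie in $X_k^*$ (resp.\ $Y_k^*$) via Lemma~\ref{lem:AABC}, the code property of $X_k$ splits $\varphi_k^{-1}$ across the factorization $w=w_1w_2$, and the uniqueness in Theorem~\ref{thm:Lynd}(1) identifies the result as the standard factorization of $\partial w$. The only real difference is that your closing verification that $\varphi_k^{-1}$ preserves the lexicographic order of $w_1,w_2$ is superfluous---once $\partial w=\partial w_1\,\partial w_2$ with both factors in $\CH$, the uniqueness of the factorization of a proper Christoffel word into two Christoffel words already forces this to be the standard factorization (order included), which is exactly how the paper concludes.
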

\begin{proof} By Theorem~\ref{thm:diretto}, $\partial w$ is a Christoffel word. Since $v$ is not constant, by Proposition~\ref{Prop:uno} 
one has $\psi(v)= u_1bau_2=u_2abu_1$ with $u_1,u_2\in \PER$. Hence, $w= w_1w_2$ where
$w_1=au_1b$ and $w_2= au_2b$ are two proper Christoffel words. In view of Theorem ~\ref{thm:Lynd},  $w_1w_2$ is the standard factorization of $w$ in Lyndon words.

 Let us suppose that $v\in a\Aa ^*$.  One has $\psi(v)= a^kba^k\xi$, with $\xi\in \Aa ^*$ from which, as we have seen in the proof of Lemma~\ref{lem:AABC}, one  derives that $w_1, w_2\in X_k^*$.  Thus $\partial w= \partial_k w = \partial_k w_1\partial_k w_2$ with  $\partial_k w_1, \partial_k w_2\in \CH $. By Theorem~\ref{thm:Lynd} the result follows.  The case $v\in b\Aa ^*$ is similarly dealt with.
\end{proof}

\begin{thm}\label{thm:inverso} Let $k\geq 1$ and $w\in X_k^*\cup Y_k^*$. If $\partial w$ is a Christoffel word, then $w$ is a proper Christoffel word.
\end{thm}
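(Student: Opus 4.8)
The statement asserts a converse to Theorem~\ref{thm:diretto}: if $w\in X_k^*\cup Y_k^*$ and its decoding $\partial w$ (under $\varphi_k$ or $\hat\varphi_k$) is a Christoffel word, then $w$ itself is a proper Christoffel word. The plan is to reduce immediately to the case $w\in X_k^*$; the case $w\in Y_k^*$ is symmetric via the complementation $E$, using that $E$ exchanges $X_k$-factorizations with $Y_k$-factorizations and that $E(\CH)=\CH$ (by $\CH=a\PER b\cup\Aa$ together with property P6 of Proposition~\ref{prop:basicp}). So assume $w\in X_k^*$ and $\partial w=\varphi_k^{-1}(w)\in\CH$.

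First I would dispose of the trivial cases: if $\partial w\in\Aa$, then $\partial w=a$ gives $w=\varphi_k(a)=a^{k+1}b$ and $\partial w=b$ gives $w=\varphi_k(b)=a^kb$, both proper Christoffel words since $k\ge1$. So suppose $\partial w$ is a \emph{proper} Christoffel word. Then $\partial w=a\psi(u)b$ for some $u\in\Aa^*$, and by Corollary~\ref{prop:DER} together with the identity $\varphi_k=\lambda_{a^kb}$ noted in the text, we get
\[
w=\varphi_k(\partial w)=\lambda_{a^kb}(a\psi(u)b)=\lambda_{a^kb}\big(\lambda_u(ab)\big)=\lambda_{a^kb u}(ab)=a\,\psi(a^kb\,u)\,b.
\]
Hence $w=\lambda_{a^kbu}(ab)\in\CH$ by Proposition~\ref{CH:morph} (since $\lambda_{a^kbu}\in\{\lambda_a,\lambda_b\}^*=\mathcal M_{CH}$ and $ab\in\CH$); and $w$ is \emph{proper} because $|w|=|a\psi(a^kbu)b|\ge|a\,a^kb\,b|>2$. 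This finishes the argument once $\partial w$ is proper, and the reduction above handles $Y_k^*$.

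The one genuine subtlety is that the decoding operation $\partial w=\varphi_k^{-1}(w)$ is only defined when $w$ actually lies in $X_k^*$ and $X_k$ is a code, so that $\varphi_k^{-1}$ is single-valued; this is guaranteed by the hypothesis $w\in X_k^*$ and by $X_k$ being a prefix code (stated in the text around~\eqref{eq:xy}), so there is no ambiguity in ``$\partial w$''. The main obstacle I anticipate is purely bookkeeping: making sure the symmetry reduction for $Y_k^*$ is airtight — one must check that $E(X_k)=Y_k$ so that $w\in Y_k^*$ iff $E(w)\in X_k^*$, that $\hat\varphi_k=E\circ\varphi_k\circ E$ (equivalently $\hat\varphi_k=\lambda_{b^ka}$, already recorded in the text), and hence $\hat\varphi_k^{-1}(w)=E(\varphi_k^{-1}(E(w)))$; then $\partial w\in\CH$ forces $\varphi_k^{-1}(E(w))=E(\partial w)\in\CH$, and the already-treated case gives $E(w)\in\CH$ proper, whence $w\in\CH$ proper. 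Everything else is a direct application of Corollary~\ref{prop:DER} and Proposition~\ref{CH:morph}, so no hard estimate is needed.
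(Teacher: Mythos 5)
Your treatment of the main case $w\in X_k^*$ is correct and is essentially the paper's argument (the paper simply says: $w=\varphi_k(\partial w)$, $\varphi_k$ is a Christoffel morphism, hence $w\in\CH$, and properness is ``readily verified''); your version via Corollary~\ref{prop:DER}, giving $w=a\psi(a^kbu)b$ explicitly, is if anything more complete on the properness point. The problem is the symmetry reduction for $Y_k^*$, which you yourself flagged as the step needing verification: the verification fails as stated. One has $E(a^kb)=b^ka$, so $E(X_k)=\{b^ka,b^{k+1}a\}\neq Y_k=\{ab^k,ab^{k+1}\}$, and correspondingly $E\circ\varphi_k\circ E$ sends $a\mapsto b^ka$, $b\mapsto b^{k+1}a$, which is not $\hat\varphi_k$ (it is $\hat\varphi_k$ with reversed images). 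So ``$w\in Y_k^*$ iff $E(w)\in X_k^*$'' is false, and the identity $\hat\varphi_k^{-1}(w)=E(\varphi_k^{-1}(E(w)))$ on which your reduction rests does not hold.

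The gap is easily repaired in either of two ways. (i) Run the $X_k$ argument verbatim for $Y_k$, using $\hat\varphi_k=\lambda_{b^ka}$: if $\partial w=a\psi(u)b$ then $w=\lambda_{b^ka}(a\psi(u)b)=a\psi(b^kau)b$ by Corollary~\ref{prop:DER}, and the letter cases give $ab^k$ and $ab^{k+1}$, all proper Christoffel words. This is what the paper means by ``a similar proof''. (ii) Use the correct involution $\sigma:w\mapsto E(w\sptilde)$ instead of $E$: since central words are palindromes and $\psi$ commutes with $E$ (P6), $\sigma(a\psi(v)b)=a\psi(E(v))b$, so $\sigma$ preserves $\CH$; moreover $\sigma(ab^k)=a^kb$ and $\sigma(ab^{k+1})=a^{k+1}b$, so $\sigma$ maps $Y_k^*$ onto $X_k^*$ (reversing the order of code words) and one checks $\varphi_k^{-1}(\sigma(w))=\sigma(\hat\varphi_k^{-1}(w))$, after which the $X_k$ case applies. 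With either repair your proof is complete and coincides in substance with the paper's.
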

\begin{proof}We shall suppose that $w\in X_k^*$. A similar proof can be done when $w\in Y_k^*$. One has that 
$w= \varphi_k(\partial w)$. Since  $\varphi_k$ is a Christoffel morphism, it follows that $w\in CH$. Moreover, it is readily verified that $w$ is a proper Christoffel word.
\end{proof}

From Theorems~\ref{thm:diretto} and~\ref{thm:inverso} it follows:

\begin{cor} Let $k>0$ and $w\in X_k^*\cup Y_k^*$. Then $w$ is a proper Christoffel word if and only if $\partial w$ is a Christoffel word.
\end{cor}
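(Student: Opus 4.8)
The plan is to read the corollary as the conjunction of the two theorems just established, supplying only the small amount of glue needed to accommodate the hypothesis $w\in X_k^*\cup Y_k^*$ (for an arbitrary $k>0$) rather than the special case $k=\ind(aub)$ treated in Theorem~\ref{thm:diretto}. The ``if'' half requires no new work at all: it is literally the content of Theorem~\ref{thm:inverso}, so I would simply cite it.

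For the ``only if'' half, I would assume $w$ is a proper Christoffel word with $w\in X_k^*\cup Y_k^*$ for some $k>0$ and re-run the argument from the proof of Theorem~\ref{thm:diretto}. Recall from the end of Section~\ref{sec:derc} that $\varphi_k=\lambda_{a^kb}$ and $\hat\varphi_k=\lambda_{b^ka}$, and that by Proposition~\ref{CH:morph} these are Christoffel morphisms for \emph{every} $k\geq 0$. If $w\in X_k^*$, then by definition of the derivative with respect to $\varphi_k$ we have $\varphi_k(\partial w)=w\in\CH$, whence $\partial w\in\CH$ by Proposition~\ref{prop:preimage}; the case $w\in Y_k^*$ is identical with $\hat\varphi_k$ in place of $\varphi_k$. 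The only point worth flagging is that this argument never uses $k=\ind(w)$, which is fine precisely because $\varphi_k$ and $\hat\varphi_k$ are Christoffel morphisms independently of $k$.

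Consequently there is no real obstacle here: the corollary is immediate once the two black-box inputs are identified, namely that each $\varphi_k,\hat\varphi_k$ is a Christoffel morphism (Proposition~\ref{CH:morph}) and that the preimage of a Christoffel word under a Christoffel morphism is again a Christoffel word (Proposition~\ref{prop:preimage}). I would therefore present the proof in two short lines, one per implication.
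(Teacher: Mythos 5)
Your proof is correct and takes essentially the same route as the paper, which obtains the corollary simply by combining Theorem~\ref{thm:diretto} and Theorem~\ref{thm:inverso}. Your only deviation---re-running the one-line argument of Theorem~\ref{thm:diretto} through Propositions~\ref{CH:morph} and~\ref{prop:preimage} so that it works for an arbitrary $k$ with $w\in X_k^*\cup Y_k^*$, not just $k=\ind(w)$---is a legitimate bit of extra care rather than a different method.
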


\begin{prop}\label{prop:periods1} If $w=a\psi(v)b$, then 
\[|\partial w| = \pi(\psi(v\sptilde ))=|a\psi(v)b|_{{\bar v}^{(F)}}.\]
\end{prop}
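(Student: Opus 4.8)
## Proof proposal for Proposition \ref{prop:periods1}

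The plan is to reduce the statement to known formulas in two easy moves. The second equality, $\pi(\psi(\tilde v)) = |a\psi(v)b|_{\bar v^{(F)}}$, is exactly Proposition~\ref{lem:bar}, so no work is needed there; I only need to establish $|\partial w| = \pi(\psi(\tilde v))$. I will treat the trivial case $v$ constant separately, then handle the general case using the factorization of $w$ over the code $X_k$ or $Y_k$.

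First, suppose $v$ is constant, say $v = a^k$ with $k \geq 0$ (the case $v = b^k$ is symmetric via the automorphism $E$, using property P6 of Proposition~\ref{prop:basicp} and the identities~\eqref{eq:E+}). Then $\psi(v) = a^k$, so $w = a^{k+1}b$, and the index of $w$ is $k+1$; by the remark preceding Theorem~\ref{thm:diretto}, $\partial w = \partial_{k+1}(a^{k+1}b) = b$, so $|\partial w| = 1$. On the other hand $\tilde v = a^k$, so $\psi(\tilde v) = a^k$ and $\pi(\psi(\tilde v)) = 1$ (with the convention $\pi(\varepsilon)=1$ covering $k=0$). So the identity holds. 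For $k = 0$ specifically, $w = ab$, $\partial w = a$, and $\pi(\psi(\varepsilon)) = 1$, consistent.

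Now suppose $v$ is non-constant, and let $k = \ind(w)$. Assume first $\psi(v) \in \PER_a$, i.e. $v \in a\Aa^*$. By Lemma~\ref{lem:AABC}, $w \in X_k^*$, and $\partial w = \varphi_k^{-1}(w)$, where $\varphi_k = \lambda_{a^kb}$ is a code isomorphism onto $X_k^*$. Since $\varphi_k$ maps each letter to a word of $X_k$, the number of letters in $\partial w$ equals the number of $X_k$-factors in the unique factorization of $w$ over $X_k$; every factor in $X_k = \{a^kb, a^{k+1}b\}$ contains exactly one $b$, so this count is $|w|_b = |a\psi(v)b|_b$. Thus $|\partial w| = |a\psi(v)b|_b = |a\psi(v)b|_{\bar v^{(F)}}$, since $\bar v^{(F)} = b$ here. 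Combining with Proposition~\ref{lem:bar} gives $|\partial w| = \pi(\psi(\tilde v))$, as required. The case $\psi(v) \in \PER_b$, i.e. $v \in b\Aa^*$, is entirely analogous: $w \in Y_k^*$ by Lemma~\ref{lem:AABC}, $\partial w = \hat\varphi_k^{-1}(w)$ with $\hat\varphi_k = \lambda_{b^ka}$, each factor of $Y_k = \{ab^k, ab^{k+1}\}$ contains exactly one $a$, so $|\partial w| = |w|_a = |a\psi(v)b|_a = |a\psi(v)b|_{\bar v^{(F)}}$, and we again invoke Proposition~\ref{lem:bar}.

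The only mild subtlety — and the step most worth stating carefully — is the counting argument: that $|\varphi_k^{-1}(w)|$ equals the number of codewords in the $X_k$-factorization of $w$, which in turn equals $|w|_b$ because each element of $X_k$ has exactly one occurrence of $b$. This is immediate once one observes that $\varphi_k$ is a length-respecting isomorphism from $\Aa^*$ onto $X_k^* \subseteq \Aa^*$ at the level of factorizations, i.e. $|\varphi_k^{-1}(w)| = $ (number of $X_k$-blocks), so no genuine obstacle arises; the rest is bookkeeping plus the already-proved Proposition~\ref{lem:bar}.
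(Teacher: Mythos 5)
Your proof is correct and takes essentially the same route as the paper, whose (one-line) argument is precisely that the definition of the derivative---one letter of $\partial w$ per codeword of $X_k$ or $Y_k$, and one occurrence of ${\bar v}^{(F)}$ per codeword---gives $|\partial w| = |a\psi(v)b|_{{\bar v}^{(F)}}$, after which Proposition~\ref{lem:bar} finishes. One small slip in your constant case: for $v=a^{k}$ with $k>0$ the index of $w=a^{k+1}b$ is $k$ (the first entry of the integral representation of $v$), not $k+1$, so $\partial w=\partial_{k}(a^{k+1}b)=a$ rather than $b$; this does not affect the length count $|\partial w|=1$.
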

\begin{proof} From the definition of derivative of $w$ one has $|\partial w| = |a\psi(v)b|_{{\bar v}^{(F)}}$, so that the result follows from
Proposition~\ref{lem:bar}.
\end{proof}

\begin{cor} A proper Christoffel word $w = a\psi(v) b$ is uniquely determined by  $v^{(F)}$, $|w|$, and $|\partial w|$.
\end{cor}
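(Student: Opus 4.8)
The plan is to recover the three parameters of $w = a\psi(v)b$ — namely the first letter $v^{(F)}$ of the directive word, the length $|w|$, and the length $|\partial w|$ of the derivative — and show that together they pin down $w$ among all proper Christoffel words. First I would recall that a proper Christoffel word is determined by its slope $|w|_b / |w|_a$, equivalently by the pair $(|w|_a,|w|_b)$, equivalently by $|w|$ together with $|w|_b$ (since $|w|_a = |w|-|w|_b$). So it suffices to show that $v^{(F)}$, $|w|$, and $|\partial w|$ determine $|w|_b$ (and $|w|_a$).

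The key step is Proposition~\ref{prop:periods1}, which gives $|\partial w| = |a\psi(v)b|_{{\bar v}^{(F)}} = |w|_{{\bar v}^{(F)}}$. Thus, knowing $v^{(F)}$ tells us which letter is being counted: if $v^{(F)}=a$, then $|\partial w| = |w|_b$, and hence $|w|_a = |w| - |\partial w|$; if $v^{(F)}=b$, then $|\partial w| = |w|_a$, and hence $|w|_b = |w| - |\partial w|$. Either way, from $v^{(F)}$, $|w|$, and $|\partial w|$ we recover the pair $(|w|_a,|w|_b)$, hence the slope $|w|_b/|w|_a$, hence $w$ itself by the definition of Christoffel words (two proper Christoffel words with the same slope are equal). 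One should briefly address the degenerate boundary case $w = ab$, i.e.\ index $0$: there $v=\varepsilon$ has no first letter, but $|w|=2$ already forces $w=ab$, so the statement still holds (or one restricts attention to Christoffel words of positive index, where $\psi(v)$ is nonempty and $v^{(F)}$ is well defined).

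I do not expect any serious obstacle here: the proof is essentially a one-line deduction from Proposition~\ref{prop:periods1} plus the elementary fact that $|w|_a$ and $|w|_b$ determine a Christoffel word. The only point requiring a moment's care is making explicit that $v^{(F)}$ is exactly the datum needed to disambiguate \emph{which} letter $|\partial w|$ counts — without it, $|\partial w|$ could be either $|w|_a$ or $|w|_b$, and these genuinely differ for non-palindromic directive words (cf.\ the remark that $w=a\psi(a^2b^2a)b$ and $w'=a\psi(b^2aba)b$ share the derivative $ababb$ but are distinct, with opposite roles of $a$ and $b$). Concretely, I would write: by Proposition~\ref{prop:periods1}, $|\partial w|=|w|_{{\bar v}^{(F)}}$, so $|w|_{v^{(F)}} = |w| - |\partial w|$; thus $(|w|_a,|w|_b)$ is determined by $v^{(F)}$, $|w|$, $|\partial w|$, and a proper Christoffel word is determined by $(|w|_a,|w|_b)$ since this pair is exactly its slope. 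This completes the argument.
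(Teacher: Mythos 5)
Your argument is correct and coincides with the paper's own proof: both use Proposition~\ref{prop:periods1} to identify $|\partial w|$ with $|w|_{{\bar v}^{(F)}}$, recover $(|w|_a,|w|_b)$ from the three data, and conclude via the fact that a Christoffel word is determined by its slope. Your extra remark on the degenerate case $w=ab$ is a harmless refinement the paper omits.
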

\begin{proof} Let $w= a\psi(v)b$. By Proposition~\ref{prop:periods1}, $|\partial w| = |a\psi(v)b|_{{\bar v}^{(F)}}$, so that
$ |a\psi(v)b|_{{ v}^{(F)}} = |w| -|\partial w|$.  The  Christoffel word $w$ is uniquely determined by its  slope $\eta(w)= |w|_b/|w|_a$.
If $v^{(F)}=a$, then $\eta(w) = |\partial w|/(|w| -|\partial w|)$. If $v^{(F)}=b$, then $\eta(w) = (|w| -|\partial w|)/ |\partial w|$. From this the result follows.
\end{proof}

From Propositions~\ref{prop:periods} and~\ref{prop:periods1} one derives:

\begin{cor}\label{cor:sumderiv}For any word $v=v_1\cdots v_n$, with $n\geq 0$, $v_i\in \Aa $, $i=1,\ldots,n$, one has
\[|\psi(v)|=  \sum_{i=1}^n|\partial a\psi(v_i\cdots v_n)b|.\]
\end{cor}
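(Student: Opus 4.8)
The plan is to combine the two propositions cited in the statement. Proposition~\ref{prop:periods} gives
\[|\psi(v)|= \sum_{i=1}^n|a\psi(v_i\cdots v_n)b|_{{\bar v}_i},\]
so it suffices to show that each summand on the right equals $|\partial a\psi(v_i\cdots v_n)b|$. For this I would invoke Proposition~\ref{prop:periods1}, applied with the word $v_i\cdots v_n$ in place of $v$: it states that for any word $u$ one has $|\partial a\psi(u)b| = |a\psi(u)b|_{{\bar u}^{(F)}}$. Taking $u=v_i\cdots v_n$, its first letter $u^{(F)}$ is precisely $v_i$, hence ${\bar u}^{(F)}={\bar v}_i$, and the summand matches exactly.

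The only point needing a little care is the range of the index $i$ and the degenerate cases. For $i<n$ the word $u=v_i\cdots v_n$ is non-empty with a well-defined first letter, so Proposition~\ref{prop:periods1} applies directly. For $i=n$ we have $u=v_n$, a single letter, and $a\psi(v_n)b$ is the Christoffel word $a^2b$ or $abb$; one checks that its derivative is a single letter and that $|a\psi(v_n)b|_{{\bar v}_n}=1$, consistent with Proposition~\ref{prop:periods1}, which was stated for all $v$. Finally, the statement also allows $n=0$, i.e.\ $v=\varepsilon$: then $\psi(\varepsilon)=\varepsilon$ has length $0$, and the empty sum on the right is also $0$, so the identity holds trivially.

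Thus the proof is essentially a one-line substitution: write out Proposition~\ref{prop:periods}, replace each $|a\psi(v_i\cdots v_n)b|_{{\bar v}_i}$ by $|\partial a\psi(v_i\cdots v_n)b|$ using Proposition~\ref{prop:periods1}, and conclude. There is no real obstacle here; the mild subtlety, if any, is just making sure that Proposition~\ref{prop:periods1} is genuinely available for suffixes $v_i\cdots v_n$ that are constant or a single letter, which it is since it was proved for arbitrary $v\in\Aa^*$ (the identity $|\partial a\psi(v)b| = |a\psi(v)b|_{{\bar v}^{(F)}}$ follows straight from the definition of the derivative together with Proposition~\ref{lem:bar}).

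\begin{proof}
By Proposition~\ref{prop:periods},
\[|\psi(v)|=  \sum_{i=1}^n|a\psi(v_i\cdots v_n)b|_{{\bar v}_i}.\]
For each $i$ with $1\leq i\leq n$, apply Proposition~\ref{prop:periods1} to the word $v_i\cdots v_n$, whose first letter is $v_i$: this gives
\[|\partial a\psi(v_i\cdots v_n)b| = |a\psi(v_i\cdots v_n)b|_{{\bar v}_i}.\]
Substituting these equalities into the previous formula yields
\[|\psi(v)|=  \sum_{i=1}^n|\partial a\psi(v_i\cdots v_n)b|.\]
When $n=0$ both sides are $0$, so the identity holds in this case as well.
\end{proof}
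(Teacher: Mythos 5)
Your proof is correct and is exactly the derivation the paper intends: the corollary is stated there as an immediate consequence of Propositions~\ref{prop:periods} and~\ref{prop:periods1}, and you combine them in precisely that way, with appropriate (if not strictly necessary) attention to the single-letter suffix and the empty-word case.
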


\begin{example}
Let  $w= a\psi(v)b$ with $v=a^2b^2a$. One has
\[w= a^3ba^2ba^3ba^2ba^2b.\] Moreover,  $\partial a\psi(ab^2a)b= \partial w = ababb$, $\partial a\psi(b^2a)b= \partial a\psi(ba)b=ab$, and $\partial a\psi(a)b= a$. Hence, $|\psi(v)|= 15= 2\cdot 5 + 2\cdot 2+ 1$.
\end{example}

The following noteworthy  theorem relates, through their directive words, the central word of a proper Christoffel word and the central word of its derivative.

\begin{thm}\label{thm: derpart}If $w= a\psi(v)b$ and $v$ is not constant, then
\[\partial w = a\psi(_+v)b .\]
\end{thm}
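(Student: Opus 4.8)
The plan is to reduce the statement to the factorization identities already established for the morphisms $\lambda_v$, in particular Proposition~\ref{can:due}. Write $v=x^k\xi$ where $x=v^{(F)}$ and $\xi\in\bar x\Aa^*$ (so $k=\ind(\psi(v))\ge 1$ since $v$ is non-constant). By the Remark following Example~\ref{ex:index00}, the derivative of $w=a\psi(v)b$ depends only on $v$ up to its index, so it suffices to treat the representative directive word; and by property~P6 of the palindromization map together with the symmetry relation $_+(E(v))=E(_+v)$ from~\eqref{eq:E+}, we may assume $x=a$, i.e.\ $v=a^k\xi$ with $\xi$ beginning with $b$. Then $u=\psi(v)\in\PER_a$ has index $k$, so $\partial w=\partial_k w=\varphi_k^{-1}(w)$ with $\varphi_k=\lambda_{a^kb}$, and what must be shown is $\lambda_{a^kb}(a\psi(_+v)b)=a\psi(v)b$.

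First I would compute $\lambda_{a^kb}(a\psi(_+v)b)$. By Corollary~\ref{prop:DER} this equals $a\psi(a^kb\cdot{}_+v)b$, so the whole claim collapses to the purely combinatorial identity
\[\psi(a^kb\cdot{}_+v)=\psi(v),\qquad v=a^k\xi,\ \xi\in b\Aa^*.\]
This is where the work lies. The key observation is that $_+v$ is, by definition, the longest suffix of $v=a^kb\cdots$ that is immediately preceded by a letter complementary to its first letter. If $\xi=b\xi'$ with $\xi'\in\Aa^*$, one checks that $_+v$ is obtained from $\xi=b\xi'$ by stripping, from the front, the maximal block of $b$'s that sits right after the initial $a^k$; concretely, writing $v=a^k b^m\zeta$ with $m\ge 1$ and $\zeta\in a\Aa^*\cup\{\varepsilon\}$, one gets $_+v=b\zeta$ when $m\ge 1$ and $\zeta\ne\varepsilon$, while the degenerate cases $v=a^kb^m$ must be handled separately. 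Prepending $a^kb$ then recovers exactly $a^kb^m\zeta=v$, except that the block of $b$'s has been renormalized — so the identity to prove is really $\psi(a^kb\cdot b^{m-1}\zeta)=\psi(a^k b^m\zeta)$, i.e.\ that $\psi$ is insensitive to the multiplicity of the letter block following the first, in the presence of an initial block. But that is precisely the content of the Remark after Example~\ref{ex:index00} read at the level of central words (all proper Christoffel words with directive words agreeing up to their index have the same derivative), combined with injectivity of $\psi$ (property~P1) — or, if one prefers a direct argument, it follows by induction using Lemma~\ref{lem:uno}, since passing from $\psi(a^kb^{j})$ to $\psi(a^kb^{j+1})$ for $j\ge 1$ only appends material to the right, leaving the prefix structure that determines later iterated closures unchanged.

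The cleanest route, and the one I would actually write, bypasses the explicit description of $_+v$: argue directly with Justin-type formulas. From $v=a^kb\,\xi'$ with $_+v$ a suffix of $b\xi'$, use Lemma~\ref{Ju00} / Theorem~\ref{thm:J} to express $\psi(v)=\psi(a^kb\,\xi')=\mu_{a^kb}(\psi(\xi'))\,\psi(a^kb)$ and compare with $\psi(a^kb\cdot{}_+v)=\mu_{a^kb}(\psi(_+v))\,\psi(a^kb)$; by injectivity of $\mu_{a^kb}$ the claim reduces to $\psi(_+v)=\psi(\xi')$ whenever $\xi'$ is such that $_+v$ (as a suffix of $v=a^kb\xi'$) equals all of $b\xi'$ — wait, that is too strong in general, so one instead tracks how the index-renormalization in the Remark interacts with $\mu_{a^kb}$, which is exactly the map defining $\varphi_k$.

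\textbf{Main obstacle.} The delicate point is the careful identification of $_+v$ for $v=a^k\xi$ and the verification that prepending $a^kb$ to it yields a directive word equivalent (up to index) to $v$. The boundary cases — $v=a^k$ itself (excluded, since non-constant), $v=a^kb^m$ with no further letters, and $\xi$ consisting of a single $b$ — each need a quick separate check, using the explicit small-case formulas in Lemma~\ref{lem:uno} (e.g.\ $(a^kb)^{(+)}=a^kba^k$). Once the directive-word bookkeeping is pinned down, the rest is an immediate application of Corollary~\ref{prop:DER}, the Remark after Example~\ref{ex:index00}, and P1; the complementary case $v^{(F)}=b$ is handled verbatim via~\eqref{eq:E+} and P6, using $\hat\varphi_k=\lambda_{b^ka}$ in place of $\varphi_k$.
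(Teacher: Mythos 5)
Your overall skeleton is exactly the paper's: express $\partial w$ through $\varphi_k=\lambda_{a^kb}$ (resp.\ $\hat\varphi_k=\lambda_{b^ka}$), apply Corollary~\ref{prop:DER}, and conclude by injectivity. But the execution goes astray at the one point where precision is required, namely the identification of ${}_{+}v$. Write $v=a^kb^h\xi$ with $h\geq 1$ and $\xi$ empty or beginning with $a$. The longest suffix of $v$ immediately preceded by $b$ (the complement of $v^{(F)}=a$) is the suffix starting right after the \emph{first} occurrence of $b$, i.e.\ ${}_{+}v=b^{h-1}\xi=(a^kb)^{-1}v$; your claim ``${}_{+}v=b\zeta$'' is wrong except when $h=2$, and it even contradicts the identity $\psi(a^kb\cdot b^{m-1}\zeta)=\psi(a^kb^m\zeta)$ you write two lines later. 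Once ${}_{+}v$ is identified correctly, $a^kb\cdot{}_{+}v=a^kb^h\xi=v$ \emph{as literal words}, so Corollary~\ref{prop:DER} gives $\varphi_k(a\psi({}_{+}v)b)=a\psi(a^kb\cdot{}_{+}v)b=a\psi(v)b=\varphi_k(\partial w)$, and injectivity of $\varphi_k$ finishes the case $\psi(v)\in\PER_a$. There is no ``index renormalization'' to track, no need for the Remark after Example~\ref{ex:index00}, and no induction via Lemma~\ref{lem:uno}: the identity you set out to prove is a tautology. Your third paragraph, which reroutes the argument through Justin's formula and visibly breaks off (``wait, that is too strong in general\ldots''), never reaches a conclusion and must be discarded; as it stands the proposal does not contain a completed proof.

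A second, smaller issue: the reduction ``we may assume $v^{(F)}=a$'' via $E$ is not justified as stated. The complement $E(a\psi(v)b)=b\psi(E(v))a$ is not a Christoffel word, and $E(X_k)=\{b^ka,b^{k+1}a\}\neq Y_k$, so there is no a priori reason the derivative should commute with $E$ at the level of directive words; indeed the paper obtains that compatibility (Corollary~\ref{cor:aut}) as a \emph{consequence} of this theorem, so invoking it here would be circular. The clean fix is the paper's: run the identical two-line argument in the case $\psi(v)\in\PER_b$ with $v=b^ka^h\xi$, $\hat\varphi_k=\lambda_{b^ka}$ and $(b^ka)^{-1}v=a^{h-1}\xi={}_{+}v$.
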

\begin{proof}Let $v= x_0^{\alpha_0}\cdots  x_n^{\alpha_n}$ with $n\geq 1$ as $v$ is not constant.  We can write, setting $\alpha_0=k$ and $\alpha_1=h$,  $v=x_0^kx_1^h\xi$.  We shall first suppose that $\psi(v)\in \PER_a$, so that 
  $v=a^kb^h\xi$.
   One has $a\psi(v)b \in X_k^*$. Since $\varphi_k = \lambda_{a^kb}$ and
$(a^kb)^{-1}v = b^{h-1}\xi$, 
by Corollary~\ref{prop:DER} one obtains 
\[ a\psi(v)b = \varphi_k(\partial a\psi(v)b) = \varphi_k(a\psi(b^{h-1}\xi)b).\]
From the injectivity of $\varphi_k$  it follows  $\partial a\psi(v)b = a\psi(b^{h-1}\xi)b = a\psi(_+v)b $.

 Let us now suppose that  $\psi(v)\in \PER_b$. We can write,
 $v= b^ka^h\xi$.  One has $a\psi(v)b \in Y_k^*$. Since $\hat\varphi_k = \lambda_{b^ka}$ and
$(b^ka)^{-1}v = a^{h-1}\xi$,
by Corollary~\ref{prop:DER} one obtains 
\[ a\psi(v)b = \hat\varphi_k(\partial a\psi(v)b) = \hat\varphi_k(a\psi(a^{h-1}\xi)b).\]
From the injectivity of $\hat\varphi_k$  it follows  $\partial a\psi(v)b = a\psi(a^{h-1}\xi)b = a\psi(_+v)b$.
\end{proof}

A different  proof of Theorem~\ref{thm: derpart} based on continued fractions will be given at the end of the  section.

 \begin{example}
Let $w=aub$ with $u=\psi(a^2b^2a)$. One has
\[w= aaabaabaaabaabaab,\] $_+v= ba$, and
$\partial w = ababb= a\psi(ba)b$. If $w= a\psi(ba^2b^2a)b$, one has $_+v= ab^2a$ and  $\partial w = a\psi(ab^2a)b$. If $w=a\psi(abab)b$, then $_+v= ab$ and  $\partial w = a\psi(ab)b$.
\end{example}

\begin{cor} \label{cor:aut} Let $w$ be a Christoffel word $a\psi(v)b$ having the derivative $\partial w = a\psi(_+v)b$. Then
$\partial a\psi(E(v))b = a\psi(E(_+v))b$ and $\partial a\psi(v\sptilde )b = a\psi((v_+)\sptilde )b$. 
\end{cor}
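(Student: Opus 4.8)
The plan is to derive both statements from Theorem~\ref{thm: derpart}, which asserts that $\partial a\psi(v)b = a\psi(_+v)b$ whenever $v$ is non-constant, together with the commutation identities~\eqref{eq:E+} relating the operators $_+(\cdot)$, $(\cdot)_+$, reversal, and the complement morphism $E$. The only subtlety is the borderline case in which $v$ is constant, which Theorem~\ref{thm: derpart} does not cover; I will handle it separately at the start.

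First I would dispose of the constant case. If $v$ is constant, then $E(v)$ and $v\sptilde$ are also constant, so each of $a\psi(v)b$, $a\psi(E(v))b$, and $a\psi(v\sptilde)b$ is a Christoffel word of index equal to $|v|$ (or the letter $ab$ if $v=\varepsilon$), whose derivative is $a$ by definition of $\partial$ on words of the form $a\psi(x^k)b$; and the operators $_+(\cdot)$, $(\cdot)_+$ are not even defined on constants, so the asserted identities are understood vacuously (or one checks directly that both sides collapse to $a$ once one adopts the convention that $_+$ of a constant is empty). So assume henceforth that $v$ is non-constant; then $E(v)$ and $v\sptilde$ are non-constant as well, and Theorem~\ref{thm: derpart} applies to all three words.

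For the first identity, apply Theorem~\ref{thm: derpart} to the non-constant word $E(v)$ in place of $v$: this gives $\partial a\psi(E(v))b = a\psi({}_+(E(v)))b$. By the first relation in~\eqref{eq:E+}, ${}_+(E(v)) = E({}_+v)$, so the right-hand side is exactly $a\psi(E({}_+v))b$, which is the claim. For the second identity, apply Theorem~\ref{thm: derpart} to $v\sptilde$: this gives $\partial a\psi(v\sptilde)b = a\psi({}_+(v\sptilde))b$. Now the third relation in~\eqref{eq:E+} reads $({}_+v)\sptilde = (v\sptilde)_+$; reversing both sides yields ${}_+v = ((v\sptilde)_+)\sptilde$, but what I actually need is an expression for ${}_+(v\sptilde)$. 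Applying the same relation with $v\sptilde$ substituted for $v$ gives $({}_+(v\sptilde))\sptilde = ((v\sptilde)\sptilde)_+ = v_+$, hence ${}_+(v\sptilde) = (v_+)\sptilde$. Substituting, $\partial a\psi(v\sptilde)b = a\psi((v_+)\sptilde)b$, as desired.

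The main obstacle — really the only place requiring care — is making sure the index-based definition of $\partial$ behaves correctly under the symmetries $E$ and reversal, and that the borderline constant case is stated consistently; once Theorem~\ref{thm: derpart} is in hand this is entirely a matter of bookkeeping with~\eqref{eq:E+}. A clean alternative, if one prefers to avoid invoking~\eqref{eq:E+} twice, is to note that $\psi\circ E = E\circ\psi$ (property P6 of Proposition~\ref{prop:basicp}) and that $E$ extends to an automorphism of $\Aa^*$ carrying Christoffel words to Christoffel words and commuting appropriately with the morphisms $\varphi_k,\hat\varphi_k$ (indeed $E\circ\varphi_k = \hat\varphi_k\circ E$), from which $\partial a\psi(E(v))b = E(\partial a\psi(v)b)$ follows directly, and likewise reversal exchanges lower Christoffel words with their reversals and intertwines the two derivative conventions; but the shortest route is the one above.
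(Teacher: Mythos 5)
Your proof is correct and follows essentially the same route as the paper: the hypothesis $\partial w=a\psi({}_{+}v)b$ already forces $v$ to be non-constant, and then one simply applies Theorem~\ref{thm: derpart} to $E(v)$ and to $v\sptilde$ and invokes the identities in~\eqref{eq:E+} (including the substitution $v\mapsto v\sptilde$ in $({}_{+}v)\sptilde=(v\sptilde)_{+}$ to get ${}_{+}(v\sptilde)=(v_{+})\sptilde$), exactly as the paper does. The separate discussion of the constant case is harmless but unnecessary, since the stated hypothesis excludes it.
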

\begin{proof} The word $v$ is not constant so that  by the previous theorem and~\eqref{eq:E+},  one has $\partial a\psi(E(v))b = a\psi(_+E(v))b= a\psi(E(_+v))b$ and  $\partial a\psi(v\sptilde )b =  a\psi(_+(v\sptilde ))b = a\psi((v_+)\sptilde )b$. 
\end{proof}
\begin{prop}\label{prop:series} Let   $v= x_0^{\alpha_0}\cdots  x_n^{\alpha_n}$. One has:
\[|a\psi(v)b|= \sum_{i=0}^{n-1}\alpha_i|a\psi( x_{i+1}^{\alpha_{i+1}-1}x_{i+2}^{\alpha_{i+2}}\cdots  x_n^{\alpha_n}) b| +\alpha_n+2.\]
\end{prop}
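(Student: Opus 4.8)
The plan is to prove the formula by iterating the derivative operation together with the length decomposition already available in Corollary~\ref{cor:sumderiv}, and identifying what $\partial a\psi(v_i\cdots v_n)b$ is for each tail of $v$. Writing $v=x_0^{\alpha_0}\cdots x_n^{\alpha_n}$, the word $v$ has exactly $|v|=\alpha_0+\cdots+\alpha_n$ letters $v_1,\ldots,v_{|v|}$, and for each starting position $j$ inside the $i$-th block, the tail $v_j\cdots v_{|v|}$ is constant if and only if $i=n$. By Corollary~\ref{cor:sumderiv} we have $|\psi(v)|=\sum_{j=1}^{|v|}|\partial a\psi(v_j\cdots v_{|v|})b|$, hence $|a\psi(v)b|=2+\sum_{j=1}^{|v|}|\partial a\psi(v_j\cdots v_{|v|})b|$.

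Next I would evaluate each summand. For a starting position $j$ in block $i<n$, the tail is $v_j\cdots v_{|v|}=x_i^{\beta}x_{i+1}^{\alpha_{i+1}}\cdots x_n^{\alpha_n}$ where $\beta\ge 1$ is the number of remaining letters of the $i$-th block; this is not constant, so by Theorem~\ref{thm: derpart}, $\partial a\psi(v_j\cdots v_{|v|})b=a\psi({}_+(v_j\cdots v_{|v|}))b$, and ${}_+(v_j\cdots v_{|v|})=x_{i+1}^{\alpha_{i+1}-1}x_{i+2}^{\alpha_{i+2}}\cdots x_n^{\alpha_n}$ — crucially this does \emph{not} depend on $\beta$, i.e.\ not on which of the $\alpha_i$ positions in block $i$ we started from. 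So block $i$ (for $0\le i\le n-1$) contributes exactly $\alpha_i$ identical terms, each equal to $|a\psi(x_{i+1}^{\alpha_{i+1}-1}x_{i+2}^{\alpha_{i+2}}\cdots x_n^{\alpha_n})b|$. For starting positions $j$ in the last block $i=n$, the tail is $x_n^{\gamma}$ with $1\le\gamma\le\alpha_n$, a single letter power; one checks directly from the definition of the derivative (the base case $\partial(a\psi(x^k)b)=\partial(a^{k+1}b$ or $ab^{k+1})$, which is a single letter) that $|\partial a\psi(x_n^{\gamma})b|=1$ for each such $\gamma$, contributing $\alpha_n$ to the sum.

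Assembling these contributions gives $|a\psi(v)b|=2+\sum_{i=0}^{n-1}\alpha_i\,|a\psi(x_{i+1}^{\alpha_{i+1}-1}x_{i+2}^{\alpha_{i+2}}\cdots x_n^{\alpha_n})b|+\alpha_n$, which is exactly the claimed identity. The main obstacle — really the only non-bookkeeping point — is the verification that ${}_+(v_j\cdots v_{|v|})$ is independent of the starting position within a fixed block: this follows straight from the definition of ${}_+w$ as the longest suffix of $w$ immediately preceded by the complement of $w$'s first letter, since deleting extra copies of the leading letter $x_i$ from the front of $x_i^{\beta}x_{i+1}^{\alpha_{i+1}}\cdots$ changes neither the first letter nor that designated suffix. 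Everything else is a direct combination of Corollary~\ref{cor:sumderiv}, Theorem~\ref{thm: derpart}, and the base case of the derivative on letter powers.
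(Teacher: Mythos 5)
Your proof is correct and follows essentially the same route as the paper's: both start from Corollary~\ref{cor:sumderiv}, group the $|v|$ summands by block, observe that the derivative of the tail does not depend on the starting position within a block, identify the common value via Theorem~\ref{thm: derpart} as $a\psi(x_{i+1}^{\alpha_{i+1}-1}x_{i+2}^{\alpha_{i+2}}\cdots x_n^{\alpha_n})b$, and account for the last block contributing $\alpha_n$. The only cosmetic difference is that you justify the within-block invariance directly from the definition of ${}_+w$, while the paper phrases it as a chain of equal derivatives before applying Theorem~\ref{thm: derpart} once.
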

\begin{proof} Let $m=|v|$. By Corollary~\ref{cor:sumderiv}, 
\[ |\psi(v)|=  \sum_{i=1}^{m}|\partial a\psi(v_i\cdots v_m)b|.\]
For any $0\leq i \leq n-1$,
\[
\begin{split}
\partial a\psi( x_{i}^{\alpha_{i}}x_{i+1}^{\alpha_{i+1}}\cdots  x_n^{\alpha_n})b
&= \partial a\psi( x_{i}^{\alpha_i-1}x_{i+1}^{\alpha_{i+1}}\cdots  x_n^{\alpha_n})b=\\
\cdots &= \partial a\psi(x_{i}x_{i+1}^{\alpha_{i+1}}\cdots  x_n^{\alpha_n})b.
\end{split}
\]
Moreover,  $\partial a\psi(x_n^{\alpha_n})b = \partial a\psi(x_n^{\alpha_n-1})b = \cdots=  \partial a\psi(x_n)b$ 
and  $|\partial a\psi(x_n)b|=1$.
Hence, one has:
\[|\psi(v)|= \sum_{i=0}^{n-1}\alpha_i|\partial a\psi( x_ix_{i+1}^{\alpha_{i+1}}x_{i+2}^{\alpha_{i+2}}\cdots  x_n^{\alpha_n}) b| +\alpha_n.\]
By Theorem~\ref{thm: derpart} for all $0\leq i \leq n-1$,
\[\partial a\psi( x_ix_{i+1}^{\alpha_{i+1}}x_{i+2}^{\alpha_{i+2}}\cdots  x_n^{\alpha_n}) b = a\psi( x_{i+1}^{\alpha_{i+1}-1}x_{i+2}^{\alpha_{i+2}}\cdots  x_n^{\alpha_n}) b,\]
from which the result follows.
\end{proof}

\begin{prop} Let  $k\geq 0$ and $f$ be the function  which maps any $v\in \Aa ^k$ into $\partial a\psi(v)b$. For $v,v'\in \Aa ^k$ with $v\neq v'$ if $f(v)=f(v')$, then $v$ and $v'$ are not constant,  $_+v =  {_+v'}$, and $v= x^ry(_+v), v'= y^rx(_+v)$ with $r>0$ and $\{x,y\}= \{a, b\}$.  As a consequence the restrictions of $f$ to $a\Aa ^{k-1}$ and to $b\Aa ^{k-1}$ are injective.
\end{prop}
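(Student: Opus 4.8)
The plan is to reduce the statement to Theorem~\ref{thm: derpart}, to the injectivity of $\psi$, and to a direct reading of the definition of the operator ${}_+$. We may assume $k\ge 1$, since for $k=0$ there is nothing to prove, and we fix two distinct words $v,v'\in\Aa^k$ with $f(v)=f(v')$.

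First I would treat the constant words. If $u=a^k$ then $a\psi(u)b=a^{k+1}b$ and $f(u)=\partial a^{k+1}b=a$; symmetrically $f(b^k)=\partial ab^{k+1}=b$. Thus $f$ sends each of the two constant words of length $k$ to a single letter, and to \emph{different} letters. On the other hand, by Theorem~\ref{thm: derpart} a non-constant word $u$ satisfies $f(u)=a\psi({}_+u)b$, which has length at least $2$. Hence $f(v)=f(v')$ cannot occur with exactly one of $v,v'$ constant (the lengths of the images would differ), nor with both constant (then $\{v,v'\}=\{a^k,b^k\}$ and $\{f(v),f(v')\}=\{a,b\}$). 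So $v$ and $v'$ are both non-constant, which is the first assertion. Applying Theorem~\ref{thm: derpart} once more, $a\psi({}_+v)b=f(v)=f(v')=a\psi({}_+v')b$, whence $\psi({}_+v)=\psi({}_+v')$ and therefore ${}_+v={}_+v'$ by injectivity of $\psi$; this is the second assertion.

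For the precise form of $v$ and $v'$, I would use the elementary observation that a non-constant word $u$ with first letter $x$ admits the unique factorization $u=x^{r}\bar x\,({}_+u)$ with $r\ge 1$: indeed the first occurrence of $\bar x$ in $u$ lies immediately after the initial run $x^{r}$, and by definition ${}_+u$ is exactly the suffix of $u$ that follows that occurrence. Writing $x=v^{(F)}$ and $z=v'^{(F)}$, this gives $v=x^{r}\bar x\,({}_+v)$ and $v'=z^{r'}\bar z\,({}_+v')$ with $r,r'\ge 1$. Since ${}_+v={}_+v'$ and $|v|=|v'|=k$, comparing lengths forces $r=r'$; and $z=x$ would give $v=v'$, so $z=\bar x$. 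Putting $y=\bar x$, we obtain $v=x^{r}y\,({}_+v)$ and $v'=y^{r}x\,({}_+v)$ with $r>0$ and $\{x,y\}=\{a,b\}$, which is the third assertion. The final consequence is then immediate: two distinct elements of $a\Aa^{k-1}$ share the initial letter $a$, contradicting the relation $v^{(F)}=x\ne y=v'^{(F)}$ just obtained; hence $f$ is injective on $a\Aa^{k-1}$, and symmetrically on $b\Aa^{k-1}$.

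I do not expect a real obstacle in this argument. The only step deserving care is the one extracting the factorization $u=x^{r}\bar x\,({}_+u)$ from the definition of ${}_+u$ together with the length comparison that pins down $r=r'$; all the rest is a direct assembly of results already established in the paper (principally Theorem~\ref{thm: derpart} and the injectivity of $\psi$).
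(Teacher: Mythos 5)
Your proof is correct and follows essentially the same route as the paper's: dispose of the constant case, apply Theorem~\ref{thm: derpart} and the injectivity of $\psi$ to get ${}_+v={}_+v'$, and then read off the factorizations $v=x^{r}y({}_+v)$, $v'=y^{r}x({}_+v)$ from the definition of ${}_+$ and the equality of lengths. You merely spell out the steps the paper labels ``readily verified'' and ``trivially follows''.
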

\begin{proof} Suppose $f(v)=f(v')$ with $v\neq v'$. If $v$ is a constant, say $v=a^k$, then $f(v)= \partial a^{k+1}b =a$. As it is readily verified
for no other word $v'$ of $\Aa ^k$ one can have $f(v')=f(v)=a$ which is a contradiction.  Since both $v$ and $v'$ are not constant, one has:
\[ \partial a\psi(v)b = a\psi(_+v)b= \partial a\psi(v')b = a\psi(_+v')b.\]
Hence, $\psi(_+v)= \psi(_+v')$. Since $\psi$ is injective, it follows $_+v = {_+v'}$. Since $v$ and $v'$ have the same length and $v\neq v'$,
  $v= x^ry(_+v), v'= y^rx(_+v)$ with $r>0$ and $\{x,y\}= \{a, b\}$. The remaining part of the proof trivially follows.
\end{proof}

The following important and well-known theorem concerning the slope of a proper Chri\-stoffel word holds (cf.\cite{BDL}):
\begin{thm}\label{thm:cf} Let $w=aub$ be a proper Christoffel word with $u=\psi(v)$ and $(\alpha_0,\alpha_1,\ldots, \alpha_n)$ be
the integral representation of  $v$. Then the slope of $w$ is given by the continued fraction
\[[\alpha_0;\alpha_1, \ldots, \alpha_{n-1}, \alpha_n+1] \text{ if }v^{(F)}=b \]
and
\[[0; \alpha_0, \alpha_1, \ldots, \alpha_{n-1}, \alpha_n+1] \text{ if }v^{(F)}=a . \]
\end{thm}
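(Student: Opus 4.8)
The plan is to use the factorization $w=a\psi(v)b=\lambda_{v}(ab)$ from Proposition~\ref{can:uno} (which, crucially, does not rely on Theorem~\ref{thm: derpart}, so no circularity arises) together with a direct computation of how the generators $\lambda_{a},\lambda_{b}$ act on the slope. For any $z\in\Aa^{*}$ with $|z|_{a}\geq 1$ and $|z|_{b}\geq 1$, counting occurrences of $a$ and $b$ in $\lambda_{a}(z)$ and $\lambda_{b}(z)$ — recall $\lambda_{a}\colon a\mapsto a,\ b\mapsto ab$ and $\lambda_{b}\colon a\mapsto ab,\ b\mapsto b$ — gives $|\lambda_{b}(z)|_{a}=|z|_{a}$, $|\lambda_{b}(z)|_{b}=|z|_{a}+|z|_{b}$, and $|\lambda_{a}(z)|_{a}=|z|_{a}+|z|_{b}$, $|\lambda_{a}(z)|_{b}=|z|_{b}$. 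Hence $\eta(\lambda_{b}(z))=1+\eta(z)$ and $1/\eta(\lambda_{a}(z))=1+1/\eta(z)$, and by iteration $\eta(\lambda_{b^{k}}(z))=k+\eta(z)$, $1/\eta(\lambda_{a^{k}}(z))=k+1/\eta(z)$ for all $k\geq 0$.

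Next, write $v=x_{0}^{\alpha_{0}}\cdots x_{n}^{\alpha_{n}}$ with $x_{i+1}=\bar x_{i}$, put $z_{n+1}=ab$ and $z_{i}=\lambda_{x_{i}^{\alpha_{i}}\cdots x_{n}^{\alpha_{n}}}(ab)=\lambda_{x_{i}^{\alpha_{i}}}(z_{i+1})$ for $0\leq i\leq n$, so that $z_{0}=\lambda_{v}(ab)=w$. The bookkeeping is handled by the auxiliary quantity $\rho_{i}=\eta(z_{i})$ if $x_{i}=b$ and $\rho_{i}=1/\eta(z_{i})$ if $x_{i}=a$. Using the identities above together with the alternation $x_{i+1}=\bar x_{i}$, one checks in both cases that $\rho_{i}=\alpha_{i}+1/\rho_{i+1}$ for $0\leq i\leq n-1$: if $x_{i}=b$ then $x_{i+1}=a$ and $\rho_{i}=\eta(z_{i})=\alpha_{i}+\eta(z_{i+1})=\alpha_{i}+1/\rho_{i+1}$, while if $x_{i}=a$ then $x_{i+1}=b$ and $\rho_{i}=1/\eta(z_{i})=\alpha_{i}+1/\eta(z_{i+1})=\alpha_{i}+1/\rho_{i+1}$. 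The recursion is seeded by $\rho_{n}=\alpha_{n}+1$, since $z_{n}=\lambda_{x_{n}^{\alpha_{n}}}(ab)$ and $\eta(ab)=1$. A straightforward induction then yields $\rho_{i}=[\alpha_{i};\alpha_{i+1},\ldots,\alpha_{n-1},\alpha_{n}+1]$ for every $i$, and in particular $\rho_{0}=[\alpha_{0};\alpha_{1},\ldots,\alpha_{n-1},\alpha_{n}+1]$.

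Finally, translate $\rho_{0}$ back into $\eta(w)$. If $v^{(F)}=x_{0}=b$, then $\rho_{0}=\eta(w)$ by definition, which is the first formula. If $v^{(F)}=x_{0}=a$, then $\rho_{0}=1/\eta(w)$, so $\eta(w)=1/[\alpha_{0};\alpha_{1},\ldots,\alpha_{n-1},\alpha_{n}+1]=[0;\alpha_{0},\alpha_{1},\ldots,\alpha_{n-1},\alpha_{n}+1]$, which is the second formula. (All slopes involved are positive rationals, so the elementary identities $c+[a_{0};a_{1},\dots]=[c+a_{0};a_{1},\dots]$ and $1/[a_{0};a_{1},\dots]=[0;a_{0},a_{1},\dots]$ used here are valid; the degenerate case $n=0$, i.e.\ $v$ constant, is covered by the base case of the induction.)

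I do not anticipate a genuine obstacle: the argument reduces to the letter-counting identities for $\lambda_{a},\lambda_{b}$ and a one-line induction. The only delicate point is the reciprocal bookkeeping — working with $\eta(z_{i})$ or with its inverse according to whether the current leading block is made of $b$'s or $a$'s — which is exactly what the quantity $\rho_{i}$ is designed to absorb, so that the recursion $\rho_{i}=\alpha_{i}+1/\rho_{i+1}$ looks the same in both cases; one must also make sure the ``$+1$'' on the last partial quotient appears, and it does precisely because the recursion starts from $z_{n+1}=ab$ with $\eta(ab)=1$. As a sanity check, for $v=ab^{2}a^{2}b$ the procedure gives $\eta(a\psi(v)b)=[0;1,2,2,2]=12/17$, in agreement with Example~\ref{ex:index00}.
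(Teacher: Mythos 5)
Your proof is correct. Every step checks out: the letter-counting identities $|\lambda_{b}(z)|_{a}=|z|_{a}$, $|\lambda_{b}(z)|_{b}=|z|_{a}+|z|_{b}$, $|\lambda_{a}(z)|_{a}=|z|_{a}+|z|_{b}$, $|\lambda_{a}(z)|_{b}=|z|_{b}$ are immediate from the definitions of $\lambda_{a},\lambda_{b}$; all the intermediate words $z_{i}$ contain both letters (since $z_{n+1}=ab$ does and the counts above show the property is preserved), so the reciprocals are legitimate; the recursion $\rho_{i}=\alpha_{i}+1/\rho_{i+1}$ with seed $\rho_{n}=\alpha_{n}+1$ is verified in both parity cases; and the final translation between $\rho_{0}$ and $\eta(w)$ matches the two displayed formulas, including the degenerate case $n=0$. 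There is indeed no circularity: Proposition~\ref{can:uno} is proved in Section~\ref{sec:cmor} independently of this theorem. The main point of comparison is that the paper does not prove this statement at all --- it is presented as a well-known result with a citation to Berstel and de~Luca --- so you have supplied a self-contained, elementary argument where the paper defers to the literature. Your recursion is essentially the reverse-direction analogue of the single-step identity $\frac{1}{\eta(w)}=\alpha_{0}+\frac{1}{1+\eta(\partial w)}$ that the paper extracts later in its second proof of Theorem~\ref{thm: derpart}; iterating that identity from the top down and yours from the bottom up amounts to the same continued-fraction expansion, but your version has the advantage of needing only the factorization $w=\lambda_{v}(ab)$ and no facts about derivatives.
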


\begin{example}
Let $v=a^2b^2a$. One has $w= a^3ba^2ba^3ba^2ba^2b$ and $\eta(w)=[0;2,2,2]$=$\frac{5}{12}$. If $v=ba^2b$, then $w= abababbababb$ and $\eta(w)=[1;2,2]$=$\frac{7}{5}$. 
\end{example}

As a consequence of Theorems~\ref{thm: derpart} and~\ref{thm:cf} one obtains:
\begin{cor} Let $w=aub$ be a proper Christoffel word with $u=\psi(v)$ and $(\alpha_0,\alpha_1,\ldots, \alpha_n)$
the  integral representation of  $v$. The slope of $\partial w$ is given by the continued fraction
\[ [\alpha_1-1;\alpha_2,\ldots,\alpha_n+1] \text{ if }v^{(F)}=a\]
and
\[ [0; \alpha_1-1, \alpha_2, \ldots, \alpha_n+1] \text{ if }v^{(F)}=b.\]
\end{cor}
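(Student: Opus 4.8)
The plan is to combine the two results cited right before the corollary—Theorem~\ref{thm: derpart} and Theorem~\ref{thm:cf}—in the obvious way. By Theorem~\ref{thm: derpart}, if $v$ is not constant then $\partial w = a\psi({}_+v)b$, so the slope of $\partial w$ is just the slope of the Christoffel word directed by ${}_+v$. Hence it suffices to write down the integral representation of ${}_+v$ and feed it to Theorem~\ref{thm:cf}. Writing $v = x_0^{\alpha_0}x_1^{\alpha_1}\cdots x_n^{\alpha_n}$ in integral representation (with $n\ge 1$, since $v$ is not constant), the suffix ${}_+v$ is by definition the longest suffix of $v$ immediately preceded by $\bar{x}_0 = x_1$; that is exactly ${}_+v = x_1^{\alpha_1 - 1}x_2^{\alpha_2}\cdots x_n^{\alpha_n}$. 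So the integral representation of ${}_+v$ is $(\alpha_1 - 1, \alpha_2, \ldots, \alpha_n)$ when $\alpha_1 \ge 2$, and its first letter is $x_1 = \bar{x}_0$.

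First I would handle the case $v^{(F)} = x_0 = a$, so $x_1 = b$, i.e.\ ${}_+v$ starts with $b$; then Theorem~\ref{thm:cf} (the $v^{(F)}=b$ branch) applied to ${}_+v$ with integral representation $(\alpha_1-1,\alpha_2,\ldots,\alpha_n)$ gives slope $[\alpha_1 - 1;\alpha_2,\ldots,\alpha_{n-1},\alpha_n+1]$, which is the claimed formula. Symmetrically, when $v^{(F)} = x_0 = b$ we have $x_1 = a$, so ${}_+v$ starts with $a$, and the $v^{(F)}=a$ branch of Theorem~\ref{thm:cf} yields $[0;\alpha_1-1,\alpha_2,\ldots,\alpha_{n-1},\alpha_n+1]$, again as claimed. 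The final sub-index in both continued fractions is $\alpha_n + 1$ because the last exponent of ${}_+v$ is the same $\alpha_n$ as that of $v$ (the operation ${}_+(\cdot)$ only trims a prefix), and Theorem~\ref{thm:cf} always adds $1$ to the last partial quotient.

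The one genuine subtlety—and the step I'd expect a referee to scrutinize—is the degenerate case $\alpha_1 = 1$. Then ${}_+v = x_2^{\alpha_2}\cdots x_n^{\alpha_n}$ has integral representation $(\alpha_2,\ldots,\alpha_n)$, not $(\alpha_1-1,\alpha_2,\ldots,\alpha_n) = (0,\alpha_2,\ldots,\alpha_n)$, so one must argue that the stated continued fraction still evaluates correctly. This is just the elementary continued-fraction identity $[a_0;0,a_2,a_3,\ldots] = [a_0 + a_2; a_3,\ldots]$ (a leading $0$ partial quotient merges the two neighbours), together with its reflected version $[0;0,a_2,\ldots]$, which one checks directly; alternatively one notes $n\ge 2$ is forced in this situation (since ${}_+v$ must be non-constant for $\partial w$ to remain a proper Christoffel word when we iterate, though here we only need $\partial w \in \CH$, which is Theorem~\ref{thm:diretto}), and handles it by the same identity. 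A clean way to present the proof is to remark that the formula is an immediate instance of Theorem~\ref{thm:cf} applied to ${}_+v$ whenever $\alpha_1 \ge 2$, and to dispatch $\alpha_1 = 1$ with the one-line continued-fraction collapse; I would state this explicitly rather than leaving it to the reader.
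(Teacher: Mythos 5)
Your argument is exactly the derivation the paper intends: it states the corollary as an immediate consequence of Theorem~\ref{thm: derpart} and Theorem~\ref{thm:cf} with no written proof, and your computation of the integral representation of ${}_{+}v$ followed by the appropriate branch of Theorem~\ref{thm:cf} is precisely that route. Your explicit handling of the degenerate case $\alpha_{1}=1$ (where the leading partial quotient $\alpha_{1}-1=0$ must be collapsed, or, in the $v^{(F)}=a$ branch, is seen to match literally) is a correct and welcome detail that the paper leaves implicit.
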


We remark that the slope of a Christoffel word $w= a\psi(v)b$ determines uniquely the directive word
$v$ of $\psi(v)$ and then $w$. Now we  can give a different  proof of Theorem~\ref{thm: derpart} by using continued fractions and 
Theorem~\ref{thm:cf}.

\begin{proof}[Second proof of Theorem~\ref{thm: derpart}]
We shall suppose that $\psi(v)\in \PER_a$ and $\alpha_0=\ind(v).$
 The case $\psi(v)\in \PER_b$ is similarly dealt with.  From the construction of the derivative of
$w$ one has:
\[ |\partial w|_a(\alpha_0+1) +|\partial w|_b \alpha_0 = |w|_a,\]
and
\[ |\partial w|_b +|\partial w|_a = |w|_b.\]
From these relations one easily obtains:
\[ \frac{1}{\eta(w)}= \alpha_0 + \frac{1}{1+ \eta(\partial w)}.\]
Let $\eta(w) = [0;\alpha_0,\ldots, \alpha_n+1]$. One derives from the previous equation:
\[[0;\alpha_1,\ldots, \alpha_n+1]= \frac{1}{1+ \eta(\partial w)},\]
from which one obtains:
\[\eta(\partial w) = [\alpha_1-1;\alpha_2,\ldots,\alpha_n+1].\]
Therefore, one has $\partial w= a\psi(v')b$ where $ v'$ has the integral representation 
$(\alpha_1-1, \alpha_2, \ldots, \alpha_n)$ and therefore is equal to $_+v$, which proves the assertion.
\end{proof}

\section{Depth of a Christoffel word}\label{sec:depth}

 From Theorem~\ref{thm:diretto} any  proper Christoffel word $w$ has a derivative $w'= \partial w$ which is still a Christoffel word. Therefore, if $w'$ is proper one can consider $\partial w' \in \CH $  that we shall denote $\partial^2 w$.
In general, for any $p\geq 1$, $\partial^p w$ will  denote the derivative of order $p$ of $w$. Since  $|\partial^p w| > |\partial^{p+1} w|$, there exists an integer $d$ such that $\partial^d w \in \Aa $;
we call $d$ the \emph{depth} of $w$.

\begin{example}
 Let $w$ be the Christoffel word $w=a\psi(ab^2a^2b)b$ of  Example~\ref{ex:index00}. One has
\[\partial w = abababbababb,\]
which is the Christoffel word $a\psi(_+v)b$ where $_+v= ba^2b$. The central word $\psi(_+v)$ is of order 1
and $ \partial^2 w = aabab = a\psi(ab)b.$
Moreover, one has $\partial^3 w = ab$ and $\partial^4 w= a$, so that the depth of $w$ is  $4$.
\end{example}

As we have previously seen, if  $v \in \Aa ^*$ is not constant,  $_+v$ is  the longest   suffix of $v$ which is immediately
 preceded  by the complementary of the first letter of $v$. If  $_+v$ is not constant one can consider
  $_+(_+v)$ and so on. Thus for any $v\in\Aa ^*$  we can define inductively  $v_{(1)}= v$ and, if  $v_{(n)}$ is not constant and $n\geq 1$,
\[v_{(n+1)} =  { _+(v_{(n)})}.\]  Since $|v_{(n+1)}|<|v_{(n)}|$, there exists an integer  $h=h(v)$ called \emph{height} of $v$, such
   that $v_{(h)}$ is  constant.  For instance, if $v=a^2b^2a$, one has $v_{(1)}= a^2b^2a$, $v_{(2)}= ba$, and
   $v_{(3)}= \varepsilon$. Hence, $h(a^2b^2a)= 3$.
   
   \begin{prop} \label{prop:height} Let $w= a\psi(v)b$ be a proper Christoffel word. The depth of $w$ is equal to the height of $v$.
   \end{prop}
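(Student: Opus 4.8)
The plan is to prove Proposition~\ref{prop:height} by induction on the depth of $w$, exploiting the key identity from Theorem~\ref{thm: derpart}, namely $\partial(a\psi(v)b) = a\psi({}_+v)b$ whenever $v$ is not constant. The base case is when the depth of $w$ is $0$ or $1$: if $w\in\Aa$ there is nothing to say (and this corresponds to $v=\varepsilon$, with $h(v)$ defined so as to match); if $w=a\psi(v)b$ is proper but $\partial w\in\Aa$, then by Theorem~\ref{thm: derpart} the word $a\psi({}_+v)b$ is a single letter, which forces ${}_+v=\varepsilon$, i.e.\ $v$ is constant, hence $h(v)=1=\delta(v)$. (One must be a little careful here about the convention relating $v=\varepsilon$, $w=ab$, and the height; this is the kind of edge case to check explicitly.)

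For the inductive step, suppose $w=a\psi(v)b$ is proper and $v$ is not constant, so that $v_{(1)}=v$ and $v_{(2)}={}_+v$ with $|v_{(2)}|<|v_{(1)}|$. By Theorem~\ref{thm: derpart}, $\partial w = a\psi({}_+v)b$, which is again a proper Christoffel word (by Theorem~\ref{thm:diretto}), and its directive word is exactly ${}_+v$. By the inductive hypothesis applied to $\partial w$ — legitimate since its depth is one less than that of $w$ — the depth of $\partial w$ equals the height of ${}_+v$. Now observe that, by the very definition of the sequences $(v_{(n)})$ for $v$ and for ${}_+v$, one has $({}_+v)_{(n)} = v_{(n+1)}$ for all relevant $n$; hence $h({}_+v) = h(v)-1$. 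Therefore the depth of $w$ equals $1 + \text{depth}(\partial w) = 1 + h({}_+v) = h(v)$, as desired.

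The main obstacle, such as it is, is bookkeeping rather than mathematics: one must make sure the recursion on $(v_{(n)})$ terminates at a \emph{constant} word (possibly $\varepsilon$) in lockstep with the derivative iteration terminating at a \emph{single letter}, and that the index-$k$ prefix $x_0^{\alpha_0}$ stripped off in passing from $v$ to ${}_+v$ is precisely what the morphism $\varphi_k$ (resp.\ $\hat\varphi_k$) absorbs. Both of these are already guaranteed by Theorem~\ref{thm: derpart} and the definition of ${}_+v$, so the argument is essentially a clean induction once the correspondence $({}_+v)_{(n)}=v_{(n+1)}$ is stated. I would also remark that this proposition gives an immediate alternative description of the depth purely in terms of the operator $v\mapsto{}_+v$, which is convenient for the later combinatorial and arithmetic formulas for $\delta(v)$.
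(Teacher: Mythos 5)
Your proof is correct and follows essentially the same route as the paper: both arguments reduce the statement to Theorem~\ref{thm: derpart} together with the observation that iterating $\partial$ on $a\psi(v)b$ corresponds to iterating $v\mapsto{}_{+}v$ (the paper states this as $\partial^{n}w=a\psi(v_{(n+1)})b$ for $n\leq h-1$, while you package it as an induction on the depth via $h(v)=h({}_{+}v)+1$). The edge cases you flag ($v$ constant, $v=\varepsilon$) are handled the same way in the paper's base case.
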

   \begin{proof} If $v$ is constant, then $h=h(v)=1$ and $\partial w \in \Aa $, so that the depth of $w$ is 1.  Let us then suppose that $v$
   is not a constant. This implies $h(v)>1$. From  Theorem~\ref{thm: derpart} one derives that for  $n\leq h -1$
\[ \partial^n w = a\psi(v_{(n+1)})b.\]
   Since $v_{(h)}$ is constant, it follows that  $\partial^h  w \in \Aa $, so that the depth of $w$ is $h$.
   \end{proof}

 Let   $v= x_0^{\alpha_0}\cdots  x_n^{\alpha_n}$.  For $i \in \{0, \ldots, n\}$ we define a map
\[\delta_i(v): \{0, \ldots, n\}\rightarrow \{0,1\}\] as follows: $\delta_0(v)= \delta_n(v)=1$. For $0<i<n$,  if $\alpha_i>1$ we set $\delta_i(v) = 1$. Let $\alpha_i= 1$. If $\alpha_{i-1}>1$ we set $\delta_i(v)=0$. If 
  $\alpha_{i-1}=1$, then we set $\delta_i(v)=1$ if and only if $\delta_{i-1}(v)=0$.  Let us define for any $v\in \Aa ^+$
\[\delta(v) = \sum_{i=0}^n \delta_i(v).\]
  Moreover, we set $\delta(\varepsilon)= 1$.
  
  \begin{example}
Let $v= a^2bab^2aba$. In this case $n=6$. Denoting $\delta_i(v)$ simply by $\delta_i$,  the sequence $\delta_0\delta_1\cdots \delta_n$ is given by 
  $1011011$ and $\delta(v)= 5$.
\end{example}
   
   \begin{prop} \label{prop:height1} Let $v\in \Aa ^*$. Then $h(v)=\delta(v)$.
\end{prop}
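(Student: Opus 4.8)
The plan is to prove $h(v)=\delta(v)$ by induction on the length of $v$, tracking the operator $v\mapsto {_+v}$ in terms of the integral representation $(\alpha_0,\dots,\alpha_n)$. Write $v = x_0^{\alpha_0}\cdots x_n^{\alpha_n}$. The key observation is that $_+v$ is obtained by taking the longest suffix of $v$ preceded by $\bar x_0$; concretely, $_+v = x_1^{\alpha_1-1}x_2^{\alpha_2}\cdots x_n^{\alpha_n}$ when $n\ge 1$, so that in integral-representation terms the operator either deletes the block $x_0^{\alpha_0}$ entirely (if $\alpha_1=1$) or deletes $x_0^{\alpha_0}$ and decrements $\alpha_1$ to $\alpha_1-1$ (if $\alpha_1>1$). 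The recursion for $\delta$ is built precisely so that $\delta_0(v)$, the contribution of the first block, equals $1$ exactly when applying $_+$ "costs" one unit of height, i.e. exactly when the height strictly drops; the point of the definition of $\delta_i$ for $0<i<n$ (the case analysis on whether $\alpha_i=1$, $\alpha_{i-1}=1$, and $\delta_{i-1}(v)$) is to record how the leading-block contributions of all the successive words $v, {_+v}, {_+({_+v})},\dots$ line up inside the single word $v$.

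The induction would proceed as follows. The base cases are $v$ constant (including $v=\varepsilon$): then $h(v)=1$, and by definition $\delta(\varepsilon)=1$, while for $v=x^{\alpha_0}$ we have $n=0$, so $\delta(v)=\delta_0(v)=1$; more generally I should also check $v$ with $n=1$ directly. For the inductive step with $n\ge 1$, set $v' = {_+v}$, which is shorter than $v$, so $h(v')=\delta(v')$ by induction, and $h(v)=h(v')+1$ by the definition of height (the iteration $v_{(1)},v_{(2)},\dots$ of $v$ is $v$ followed by that of $v'$, and $v$ is not constant). It then remains to show $\delta(v)=\delta(v')+1$. Here I split into the two cases above. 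If $\alpha_1>1$: the integral representation of $v'$ is $(\alpha_1-1,\alpha_2,\dots,\alpha_n)$, still of extension $n$. I must verify that $\sum_{i=0}^{n}\delta_i(v) = \sum_{i=0}^{n-1}\delta_i(v') + 1$. Since $\delta_0(v)=1$ and $\delta_0(v')=1$, this reduces to showing the remaining $\delta$-values match up after the index shift, i.e. $\delta_{i+1}(v) = \delta_i(v')$ for $1\le i\le n-1$; this follows because the recursive definition of $\delta_i$ for an interior index depends only on $\alpha_{i-1},\alpha_i$ and $\delta_{i-1}$, and both $v$ (from index $1$ on) and $v'$ have the same tail $(\alpha_1-1,\alpha_2,\dots)$ with the same "seed" $\delta=1$ at the leading position. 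If $\alpha_1=1$: then $v' = x_2^{\alpha_2}\cdots x_n^{\alpha_n}$ has extension $n-1$, and one checks analogously that $\delta_{i+2}(v)=\delta_i(v')$ for the interior indices, using that $\delta_1(v)=0$ in this case (because $\alpha_1=1$ and $\delta_0(v)=1$), which is exactly the "wasted" unit that makes $\delta(v) = \delta(v')+1$ rather than $\delta(v')+2$.

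The main obstacle I anticipate is the bookkeeping in matching $\delta_i(v)$ with the shifted $\delta$-values of $v'$ near the \emph{boundaries} — specifically at $i=n$ (the forced value $\delta_n=1$) and at the junction between the deleted prefix block(s) and the rest. The cleanest way to handle this is probably to prove a small auxiliary lemma first: that for $0<i<n$ the value $\delta_i(v)$ depends only on the truncated data $(\alpha_0,\dots,\alpha_i)$ (equivalently, $\delta_i$ is determined by running the recursion left-to-right and does not see $\alpha_{i+1},\dots$), and symmetrically that replacing $\alpha_0$ by any value $\ge 1$ does not change $\delta_i(v)$ for $i\ge 2$ — only $\delta_1$ can be affected. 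With that locality property in hand, both case analyses become routine index shifts, and the $i=n$ endpoint needs a separate one-line check since $\delta_n$ is defined by fiat rather than by the recursion. I would also remark that the two cases correspond exactly to whether the first block of $_+v$ survives as a proper block or is itself truncated, which is the same dichotomy governing $\delta_0$ versus $\delta_1$.
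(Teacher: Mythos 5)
Your proposal is correct and follows essentially the same route as the paper's proof: induction using $h(v)=h({}_{+}v)+1$, reduced to showing $\delta(v)=\delta({}_{+}v)+1$ by splitting into the cases $\alpha_{1}=1$ and $\alpha_{1}>1$ and matching $\delta_{i}(v)$ with the shifted values $\delta_{i-2}({}_{+}v)$, respectively $\delta_{i-1}({}_{+}v)$. The boundary issues you flag (the forced value $\delta_{n}=1$ and the junction where $\alpha_{1}$ becomes $\alpha_{1}-1$) are real but are resolved exactly as you anticipate; the paper handles them by a short direct case check at $i=1,2$ rather than via a separate locality lemma.
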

\begin{proof}
If $v=\varepsilon$ the result is trivially true.  Let $v\neq \varepsilon$. We can write  $v= x_0^{\alpha_0}x_1^{\alpha_1}\cdots  x_n^{\alpha_n}$, $\alpha_i\geq 1$, $0\leq i\leq n$. We proceed by induction on $n$. If $n=0$ then $h(v)=\delta(v)=1$. If $n=1$ then $h(v)=\delta(v) =2$.  Let $n=2$, then $v=x_0^{\alpha_0}x_1^{\alpha_1}x_2^{\alpha_2}=v_{(1)}$. There are two cases:
\begin{enumerate}[(1)]
\item $\alpha_1=1$. One has $v_{(2)}= x_2^{\alpha_2}$ and $h(v)=\delta(v)=2$;\par
\item $\alpha_1>1$. One has $v_{(2)}= x_1^{\alpha_1-1}x_2^{\alpha_2}$, $v_{(3)}=x_2^{\alpha_2-1}$ and $h(v)=\delta(v)=3$.
\end{enumerate}

Let $n>2$, then $_+v=x_1^{\alpha_1-1}x_2^{\alpha_2}\cdots x_n^{\alpha_n}=v_{(2)}$. Since by the definition of height,  $h(v)=h(_+v)+1$ and, by induction, $h(_+v)=\delta(_+v)$,  it suffices to prove that $\delta(v) =\delta(_+v)+1$. There are two possibilities:

\begin{enumerate}[(1)]
\item $\alpha_1=1$. In this case $_+v=x_2^{\alpha_2}\cdots x_n^{\alpha_n}$ and
 $\delta_i(v)=\delta_{i-2}(_+v)$ if $i\geq 2$. 
  Indeed, if  $i=2$, as $\delta_1(v)=0$, one has  $\delta_2(v)=1=\delta_0(_+v)$. From the definition of $\delta$ it follows that  $\delta_i(v)=\delta_{i-2}(_+v)$ if $i> 2$.  Hence,
\[\delta(v) = 1+\sum_{i=2}^n\delta_i(v)= 1+\sum_{i=0}^{n-2}\delta_i(_+v)= 1+\delta(_+v).\]

\item $\alpha_1>1$. In this case $_+v= x_1^{\alpha_1-1}x_2^{\alpha_2}\cdots x_n^{\alpha_n}$ and $\delta_i(v)= \delta_{i-1}(_+v)$ for $i\geq 1$.  Indeed, if  $i=1$ as $\alpha_1>1$ one has $\delta_1(v)=1=\delta_0(_+v)$. For $i=2$ if  $\alpha_2>1$ then $\delta_2(v)=\delta_1(_+v)=1$. If $\alpha_2=1$, then $\delta_2(v)=0=\delta_1(_+v)$  because $\delta_1(v)=\delta_0(_+v)=1$. From the definition of $\delta$ it follows that $\delta_i(v)= \delta_{i-1}(_+v)$ for $i > 2$. Hence,
\[\delta(v)= \sum _{i=0}^n\delta_i(v)=1+ \sum_{i=1}^n\delta_{i}(v)=1+ \sum_{i=0}^{n-1}\delta_{i}(_+v) =1+\delta(_+v). \qedhere\]
\end{enumerate} 
\end{proof}

   \begin{example}
 Let $v= a^3bab^2aba$. One has $h(v)=5$ and $\delta_0\delta_1\delta_2\delta_3\delta_4\delta_5\delta_6= 1011011$,
   so that $\delta(v)=5.$
\end{example}
   
   Let   $v= x_0^{\alpha_0}\cdots  x_n^{\alpha_n}$. If 
   $\alpha_i>1$ for all  $0 <i<n$, then from the definition of $\delta$ one has $\delta(v) = \ext (v)=n+1$. Let us suppose on the
   contrary that    $\alpha_i=1$ for all  $0 <i<n$. We can write  $v = x_0^{\alpha_0} u x_n^{\alpha_n}$ where
   $u$ is an alternating word $u=x_1x_2\cdots x_{n-1}$. In this case it is easy derive  that
\[ \delta(v) = 2 + \left\lfloor \frac{n-1}{2} \right\rfloor= \ext (v) - \left \lceil \frac{|u|}{2} \right \rceil.\]
   In general, by grouping together consecutive  $x_i$, $0<i<n$, having $\alpha_i=1$ we can rewrite $v$ uniquely as
   \begin{equation}\label{eq:sim0}
    v = v_0u_1v_2u_3 \cdots u_{k-1}v_k, 
    \end{equation}
   where all terms of the integral representation of $u_i$ (resp., $v_i$), $1\leq i \leq k-1$ are equal to $1$ (resp., $>1$) and all terms
   of the integral representation of $v_0$ (resp., $v_k$) are  $>1$, with the possible exception of the first (resp., last). 
    
   We call the $u_i$, $i= 1, 3, \ldots, k-1$,  the \emph{alternating components}  of $v$. 
   For example,
   if $v= a^3b^2abab^2aba^2ba$, then we can factore it as $v= (a^3b^2)(aba)(b^2)(ab)(a^2)(b)(a)$. In this case the alternating components
   of $v$ are $u_1= aba$, $u_3= ab$, and $u_5= b$.
   
   \begin{prop}\label{prop:cell} Let $v\in \Aa ^+$ and $u_i$, $i= 1, 3, \ldots, k-1$,  be the alternating components of $v$.  Then one has:
\[\delta(v) = \ext(v)  - \sum_{i=0}^{\frac{k-2}{2}} \left \lceil \frac{|u_{2i+1}|}{2}\right\rceil.\]
   \end{prop}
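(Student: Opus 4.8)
The plan is to reduce the computation of $\delta(v)$ to the structure of the factorization in~\eqref{eq:sim0} by exploiting the fact, established in the proof of Proposition~\ref{prop:height1}, that removing one letter from the first block of the integral representation (passing from $v$ to $_+v$) decreases $\delta$ by exactly $1$, and more precisely by analyzing how the sequence $\delta_0\delta_1\cdots\delta_n$ behaves inside an alternating component versus inside a block with exponents $>1$. The key observation is the following: for an index $i$ with $0<i<n$ and $\alpha_i>1$ one has $\delta_i(v)=1$; for a maximal run of indices $i$ with $\alpha_i=1$ forming an alternating component, the values $\delta_i$ alternate $0,1,0,1,\ldots$ starting from $\delta=0$ at the first such index (because its predecessor has exponent $>1$, forcing $\delta=0$ there), so that a run of $\ell$ consecutive indices with $\alpha_i=1$ contributes $\lfloor \ell/2\rfloor$ to $\delta(v)$ instead of the $\ell$ it would contribute if all those $\delta_i$ were $1$. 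Since a run of $\ell$ such indices corresponds exactly to an alternating component $u$ of length $|u|=\ell$, this run loses $\ell-\lfloor\ell/2\rfloor=\lceil \ell/2\rceil=\lceil |u|/2\rceil$ relative to the maximal value.

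Concretely, I would argue as follows. First, note $\ext(v)=n+1=\sum_{i=0}^n 1$, which is exactly what $\delta(v)$ would be if $\delta_i(v)=1$ for every $i$; this is the content of the already-observed special case where all interior exponents exceed $1$. So it suffices to show that
\[
\ext(v)-\delta(v)=\sum_{i:\,0<i<n,\ \delta_i(v)=0} 1 = \sum_{i=0}^{\frac{k-2}{2}}\left\lceil\frac{|u_{2i+1}|}{2}\right\rceil .
\]
The left equality is just the definition of $\delta$ together with $\delta_0=\delta_n=1$. For the right equality, I would use the factorization~\eqref{eq:sim0}: the indices $i$ with $0<i<n$ and $\alpha_i=1$ are precisely those lying in some alternating component $u_{2j+1}$, and distinct alternating components occupy disjoint, non-adjacent ranges of indices (they are separated by the $v_i$, all of whose exponents are $>1$). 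Hence it is enough to show that within a single alternating component $u=u_{2j+1}$ of length $\ell$, the number of indices $i$ in its range with $\delta_i(v)=0$ equals $\lceil \ell/2\rceil$.

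For this local claim: let $i_0<i_0+1<\cdots<i_0+\ell-1$ be the index range of $u$. Since $i_0-1$ is an index with $\alpha_{i_0-1}>1$ (it belongs to $v_{2j}$, which ends in an exponent $>1$; here I use that $0<i_0$, true because $v_0$ is nonempty as its only possibly-$1$ exponent is the first one), the rule for $\delta$ gives $\delta_{i_0}(v)=0$. Then inductively, since all the $\alpha$'s in this range are $1$, the definition gives $\delta_{i_0+t}(v)=1$ iff $\delta_{i_0+t-1}(v)=0$, so the pattern on the range is $0,1,0,1,\ldots$. Among $\ell$ entries of this alternating pattern starting with $0$, exactly $\lceil\ell/2\rceil$ are $0$. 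Summing over all alternating components (there are $k/2$ of them, indexed $u_1,u_3,\ldots,u_{k-1}$, i.e.\ $u_{2i+1}$ for $i=0,\ldots,\frac{k-2}{2}$) yields the formula. I expect the main obstacle to be purely bookkeeping: one must be careful that the first and last indices $0$ and $n$ never lie inside an alternating component (guaranteed by the structure of~\eqref{eq:sim0}, where $v_0$ and $v_k$ absorb any boundary exponent-$1$ terms via the ``possible exception of the first (resp.\ last)'' clause), and that the predecessor of the first index of each $u_{2j+1}$ really has exponent $>1$ — both follow from the uniqueness and shape of the factorization~\eqref{eq:sim0}, so no deep idea is needed, only a clean case analysis.
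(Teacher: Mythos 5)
Your approach is essentially the paper's: the paper also computes $\delta(v)$ block by block from the factorization \eqref{eq:sim0}, crediting each $v_{2i}$ with $\ext(v_{2i})$ and each alternating component $u_{2i+1}$ with $\lfloor|u_{2i+1}|/2\rfloor$ and then summing; your reformulation as a count of the indices $i$ with $\delta_i(v)=0$ is the same computation, spelled out in slightly more detail.

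There is, however, one incorrect assertion in your boundary analysis. You claim that the predecessor $i_0-1$ of the first index $i_0$ of each alternating component has exponent $>1$, on the grounds that $v_0$ ``ends in an exponent $>1$'' because only its first exponent may equal $1$. This fails exactly when $v_0$ consists of a single block whose exponent is $1$, i.e.\ when $\alpha_0=\alpha_1=1$ and $n\geq 2$: for $v=aba^2$, factored as $(a)(b)(a^2)$, the predecessor of the first (and only) index of $u_1$ is index $0$ with $\alpha_0=1$. The conclusion you need, $\delta_{i_0}(v)=0$, still holds in that case, but via the other clause of the definition: since $\alpha_{i_0-1}=\alpha_0=1$ and $\delta_0(v)=1$ by fiat, the rule ``$\delta_i(v)=1$ iff $\delta_{i-1}(v)=0$'' forces $\delta_1(v)=0$. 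Once this sub-case is added, your induction along the component (the pattern $0,1,0,1,\ldots$, hence $\lceil|u|/2\rceil$ zeros among $|u|$ entries) and the final summation over the $k/2$ alternating components are correct, so the proof is complete after this small repair.
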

   \begin{proof}Let $ v = v_0u_1v_2u_3 \cdots u_{k-1}v_k$. Since  $\delta(v_{2i}) = \ext(v_{2i})$, $0\leq i \leq k/2$ and $|u_{2i+1}|= \ext(u_{2i+1})$, $0\leq i\leq k/2-1$, one has 
\[ \delta(v) = \sum_{i=0}^{\frac{k}{2}} \delta(v_{2i}) + \sum_{i=0}^{\frac{k-2}{2}}\left \lfloor \frac{|u_{2i+1}|}{2}\right\rfloor =\]
\[ \sum_{i=0}^{\frac{k}{2}} \ext (v_{2i}) + \sum_{i=0}^{\frac{k-2}{2}} \left(\ext (u_{2i+1}) -\left \lceil \frac{|u_{2i+1}|}{2}\right\rceil \right) = \ext (v) - \sum_{i=0}^{\frac{k-2}{2}} \left \lceil \frac{|u_{2i+1}|}{2}\right\rceil. \qedhere\]
    \end{proof}
    
\smallskip
   
   \begin{example}
If $v= (a^3b^2)(aba)(b^2)(ab)(a^2)(b)(a)$, we have $\ext(v)= 11$, $\lceil |aba|/2\rceil = 2$, $\lceil |ab|/2\rceil = 1=\lceil |b|/2\rceil $, so that $\delta(v) = 11- 4 = 7$.
\end{example}
   
    In the following, for each word $v\in \Aa ^*$ we let $[v]$ denote the set $[v]= \{v, v\sptilde  , {\bar v}, {\bar v}\sptilde   \}$. From Proposition~\ref{prop:basicp} all Christoffel words
 $a\psi(z)b$ with a directive word $z\in [v]$ have the same length. The next proposition shows that they have the same depth.
 
 \begin{prop} \label{prop:Crisdepth}Let $v\in \Aa ^*$. All Christoffel words  $a\psi(z)b$ with $z\in [v]$ have the same depth.
 \end{prop}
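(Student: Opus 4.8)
The depth of $a\psi(z)b$ equals $\delta(z)$ by Propositions~\ref{prop:height} and~\ref{prop:height1}, so it suffices to show that $\delta$ is constant on each set $[v]=\{v,v\sptilde,\bar v,\bar v\sptilde\}$. The plan is to establish the two invariances $\delta(\bar v)=\delta(v)$ and $\delta(v\sptilde)=\delta(v)$ separately, since $[v]$ is generated from $v$ by the complementation $E$ and the reversal operation, which commute.

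The first invariance, $\delta(\bar v)=\delta(v)$, is immediate: the definition of the maps $\delta_i$ depends only on the integral representation $(\alpha_0,\ldots,\alpha_n)$ of $v$, not on the first letter $v^{(F)}$, and $v$ and $\bar v$ have the same integral representation (the letters $x_i$ are complemented but the exponents $\alpha_i$ are unchanged). Hence $\delta_i(\bar v)=\delta_i(v)$ for every $i$, and summing gives $\delta(\bar v)=\delta(v)$.

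The second invariance, $\delta(v\sptilde)=\delta(v)$, is the real content. If $v$ has integral representation $(\alpha_0,\alpha_1,\ldots,\alpha_n)$, then $v\sptilde$ has integral representation $(\alpha_n,\alpha_{n-1},\ldots,\alpha_0)$ — the list reversed. The cleanest route is to use the decomposition~\eqref{eq:sim0}, $v=v_0u_1v_2u_3\cdots u_{k-1}v_k$, where the $u_{2i+1}$ are the alternating components; reversing $v$ reverses the order of these blocks, and each reversed block $v_j\sptilde$, $u_j\sptilde$ again has all exponents $>1$ (resp.\ all $=1$), so $v\sptilde$ has the analogous decomposition with alternating components $u_{k-1}\sptilde,\ldots,u_1\sptilde$. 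Since $\ext$ is plainly reversal-invariant and $|u_{2i+1}\sptilde|=|u_{2i+1}|$, Proposition~\ref{prop:cell} gives
\[
\delta(v\sptilde)=\ext(v\sptilde)-\sum_{i=0}^{\frac{k-2}{2}}\left\lceil\frac{|u_{2i+1}\sptilde|}{2}\right\rceil
=\ext(v)-\sum_{i=0}^{\frac{k-2}{2}}\left\lceil\frac{|u_{2i+1}|}{2}\right\rceil=\delta(v).
\]
The only point requiring a little care is checking that the decomposition~\eqref{eq:sim0} genuinely reverses the way I claim — in particular that $v_0\sptilde$ plays the role of "$v_k$" (exponents $>1$ except possibly the last) and vice versa, and that no block is created or destroyed. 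This bookkeeping about the two end blocks is the main, if modest, obstacle; alternatively one can avoid~\eqref{eq:sim0} entirely and argue directly that $\delta_i(v\sptilde)=\delta_{n-i}(v)$ for all $i$ by induction, using that $\delta_i$ is determined by the pattern of $\alpha$'s equal to $1$ to its left, which becomes the pattern to the right after reversal. Combining the two invariances, $\delta$ is constant on $[v]$, and therefore so is the depth.
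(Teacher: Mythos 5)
Your proof is correct and follows essentially the same route as the paper's: reduce depth to $\delta$ via Propositions~\ref{prop:height} and~\ref{prop:height1}, note that $\delta_i$ depends only on the exponents of the integral representation (giving invariance under $E$), and use the block decomposition~\eqref{eq:sim0} together with Proposition~\ref{prop:cell} to get invariance under reversal. The bookkeeping about the end blocks that you flag is glossed over in the paper's proof as well, so no discrepancy there.
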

 \begin{proof} The result is trivially true if $v$ is constant. Let us then suppose that $v$ is not  constant.  From  Propositions~\ref{prop:height}
 and~\ref{prop:height1} it is sufficient to prove that $\delta(z)= \delta(v)$ for all $z\in [v]$. It is readily verified that $\delta({\bar v}) = \delta(v)$
 as  $\ext({\bar v})= \ext(v)=n$ and for each   $0\leq i \leq n$, $\delta_i({\bar v}) = \delta_i(v)$. Let us now prove that $\delta(v\sptilde )= \delta(v)$. Let us write $v$ as in Eq.~\eqref{eq:sim0}, so that 
\[v\sptilde = v_k\sptilde u_{k-1}\sptilde  \cdots u_1\sptilde v_0\sptilde .\]
  Since   all $u_i\sptilde $, $i= k-1, k-3, \ldots, 3, 1$ are the \emph{alternating components}  of $v\sptilde $, by 
 the fact that $\ext(v\sptilde )= \ext(v)=n$ and $|u_i|= |u_i\sptilde |$ in view of   Proposition~\ref{prop:cell} it follows that  $\delta(v\sptilde )= \delta(v)$.  From the previous results it follows immediately that $\delta({\bar v}\sptilde ) = \delta(v)$.
  \end{proof}

Let $u=x_0^{\alpha_0}x_1^{\alpha_1}\cdots x_n^{\alpha_n}$. We define $u_1$ as:  $u_1=\varepsilon$  if  $\alpha_0>1$ or $n=0$, and, otherwise, $u_1$ is  the longest  proper  prefix of $u$ such that $u=u_1u'$, $u_1$ is alternating, and $u'^{(F)} \neq u_1^{(L)}$, i.e., the longest prefix  $x_0^{\alpha_0}x_1^{\alpha_1}\cdots x_i^{\alpha_i}$ of $u$ with $i<n$ and $\alpha_0=\alpha_1= \ldots = \alpha_i=1$.

 Similarly, we define $u_2$ as $u_2=\varepsilon$ if  $\alpha_n>1$ or $n=0$ and, otherwise, $u_2$ is  the longest  proper  suffix of $u$ such that $u=u'u_2$, $u_2$ is alternating, and $u'^{(L)} \neq u_2^{(F)}$. For instance, if $u= ab^2aba$, then $u_1=a$ and $u_2= aba$; if $u=a^2b$, then $u_1=\varepsilon$ and $v_2= b$.

 \begin{prop}\label{prop:u+v} Let  $u=x_0^{\alpha_0}x_1^{\alpha_1}\cdots x_n^{\alpha_n}$ and  $v=y_0^{\beta_0}y_1^{\beta_1}\cdots y_m^{\beta_m}$. Then
\[\delta (u)+\delta (v)-1\leq \delta (uv)\leq \delta (u)+\delta (v).\]
Moreover,   $\delta (uv) = \delta (u) +\delta (v)$ if and only if  $u^{(L)}\not= v^{(F)}$ and $|u_2|$, $|v_1|$ are both even.
\end{prop}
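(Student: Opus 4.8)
The key identity to exploit is the factorization formula from Proposition~\ref{prop:cell}: for any nonempty $w$, $\delta(w) = \ext(w) - \sum_{j}\lceil |u^{(w)}_{2j+1}|/2\rceil$, where the $u^{(w)}_{2j+1}$ are the alternating components of $w$. So the plan is to track two things as we pass from the pair $(u,v)$ to the concatenation $uv$: first, how $\ext$ behaves, and second, how the list of alternating components changes. For the extension, clearly $\ext(uv) = \ext(u) + \ext(v)$ if $u^{(L)} \ne v^{(F)}$ and $\ext(uv) = \ext(u) + \ext(v) - 1$ if $u^{(L)} = v^{(F)}$ (the last block of $u$ merges with the first block of $v$).

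\medskip
The heart of the argument is understanding how the alternating components merge. The only components of $uv$ that are not simply components of $u$ or of $v$ are those formed near the junction. Write $u = u'u_2$ and $v = v_1 v'$ as in the definitions preceding the statement, so $u_2$ is the maximal alternating suffix of $u$ of the indicated form and $v_1$ the maximal alternating prefix of $v$. One does a case analysis according to whether $u^{(L)} = v^{(F)}$ and the parities/emptiness of $u_2$ and $v_1$. The cleanest way is to compute $\delta(uv)$ directly via Proposition~\ref{prop:cell} in each case: $\ext(uv)$ is as above, and the sum of ceilings changes only in the junction components. For instance, if $u^{(L)} \ne v^{(F)}$ the two blocks $x_n^{\alpha_n}$ and $y_0^{\beta_0}$ stay separated, and the only subtlety is whether $u_2$ and $v_1$ (which are alternating suffix/prefix pieces) are separate alternating components of $uv$ or whether they fuse. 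A short computation shows $\lceil |u_2|/2\rceil + \lceil |v_1|/2\rceil \ge \lceil (|u_2|+|v_1|)/2\rceil$, with equality iff at least one of $|u_2|, |v_1|$ is even; combining this with $\ext(uv) = \ext(u)+\ext(v)$ gives $\delta(uv) \le \delta(u)+\delta(v)$, with equality in the specified case. When $u^{(L)} = v^{(F)}$ one loses one from $\ext$ but also the merged central block has $\alpha > 1$ or the alternating pieces shorten, and one checks $\delta(uv) = \delta(u)+\delta(v)-1$ always.

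\medskip
The main obstacle is organizing the case analysis so it is genuinely exhaustive without being unwieldy. There are essentially the cases: (i) $u^{(L)} = v^{(F)}$; (ii) $u^{(L)} \ne v^{(F)}$ with $u_2 = \varepsilon$ or $v_1 = \varepsilon$; and (iii) $u^{(L)} \ne v^{(F)}$ with both $u_2, v_1$ nonempty, further split by the parities of $|u_2|$ and $|v_1|$. An alternative and perhaps more robust route, avoiding Proposition~\ref{prop:cell}, is to use the recursion $h(w) = h({}_+w)+1$ together with Proposition~\ref{prop:height1} (so $\delta = h$) and induct on $|u|+|v|$: one identifies ${}_+(uv)$ in terms of ${}_+u$, ${}_+v$, and the junction, reducing to a shorter instance. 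But the junction analysis is essentially the same either way, so I would commit to the $\ext$-plus-components computation, since the formula of Proposition~\ref{prop:cell} makes each case a one-line arithmetic check. The inequality $\delta(uv) \ge \delta(u)+\delta(v)-1$ follows uniformly from $\ext(uv) \ge \ext(u)+\ext(v)-1$ and $\lceil a/2\rceil + \lceil b/2\rceil \le \lceil (a+b)/2\rceil + 1$ applied to the one possibly-merging pair of components, so only the equality characterization needs the careful bookkeeping.
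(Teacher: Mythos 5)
Your overall strategy---computing $\delta(uv)$ from Proposition~\ref{prop:cell} by tracking how $\ext$ and the alternating components change across the junction---is a genuinely different route from the paper's proof, which instead compares the sequences $\delta_i(u)$, $\delta_i(v)$ and $\delta_j(uv)$ index by index and counts mismatched pairs $(1,0)$ and $(0,1)$. The route is viable, but your bookkeeping at the junction contains a real error, and it is not cosmetic: it changes the answer. In the decomposition~\eqref{eq:sim0} the last block of a word always lies in $v_k$ (and the first block in $v_0$), \emph{never} in an alternating component, even when its exponent is $1$. Consequently the terminal run of exponent-one blocks of $u$, which consists of the $|u_2|$ blocks of $u_2$, contributes only $\left\lceil(|u_2|-1)/2\right\rceil=\left\lfloor|u_2|/2\right\rfloor$ to the sum in Proposition~\ref{prop:cell} for $u$, and symmetrically $v_1$ contributes $\left\lfloor|v_1|/2\right\rfloor$ for $v$; whereas in $uv$ (with $u^{(L)}\neq v^{(F)}$ and both pieces nonempty) these runs fuse into a single \emph{interior} run of $|u_2|+|v_1|$ blocks, hence a component contributing $\left\lceil(|u_2|+|v_1|)/2\right\rceil$. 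Writing $a=|u_2|$, $b=|v_1|$, the quantity governing $\delta(u)+\delta(v)-\delta(uv)$ is therefore $\left\lceil(a+b)/2\right\rceil-\left\lfloor a/2\right\rfloor-\left\lfloor b/2\right\rfloor$, which equals $0$ when $a$ and $b$ are both even and $1$ otherwise (this also covers $a=0$ or $b=0$ uniformly), and that is exactly the statement to be proved.

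Your comparison $\lceil a/2\rceil+\lceil b/2\rceil\geq\lceil(a+b)/2\rceil$ uses the wrong contributions on the $u$ and $v$ sides, points the inequality the wrong way for your purposes (it would give $S(u)+S(v)\geq S(uv)$, i.e.\ $\delta(uv)\geq\delta(u)+\delta(v)$), and yields the criterion ``at least one of $|u_2|,|v_1|$ even,'' which contradicts the proposition. Concretely, take $u=a^2ba$ and $v=ba^2$: then $|u_2|=|ba|=2$ is even, $|v_1|=|b|=1$ is odd, $\delta(u)=\delta(v)=2$, but $uv=a^2baba^2$ has block exponents $(2,1,1,1,2)$ and $\delta(uv)=3=\delta(u)+\delta(v)-1$, not $\delta(u)+\delta(v)$. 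So as written the argument proves a false equality criterion; it can be repaired by replacing the two ceilings by floors as above. (Your dispatch of the case $u^{(L)}=v^{(F)}$ is also too vague---``the merged central block has $\alpha>1$ or the alternating pieces shorten''---though the conclusion there is correct and easy: the merged block has exponent $\geq 2$, so the alternating components of $uv$ are exactly those of $u$ and of $v$, while $\ext$ drops by one.)
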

\begin {proof} If  $u^{(L)}= v^{(F)}$,  then $uv=x_0^{\alpha_0}x_1^{\alpha_1}\cdots x_{n-1}^{\alpha_{n-1}} x_n^{\alpha_n +\beta_0}y_1^{\beta_1}\dots y_m^{\beta_m}$ and trivially  $\delta (uv) = \delta (u) +\delta (v)-1$.  Let us then suppose  $u^{(L)}\not= v^{(F)}$.  We consider two cases: $|u_2|$ even and $|u_2|$ odd.\par
If  $|u_2|$ is even,  then  $\delta_i(u)=\delta_i(uv)$, $i=0, \dots, n$. If    $|v_1|=0$, then   $\delta_i(v)=\delta_{n+i+1}(uv)$, $i=0, \dots , m$ so that  $\delta (uv) = \delta (u) +\delta (v).$  Let then  $|v_1|=r\geq 1$.  For each  $i=0,\dots,r-1$, one has  $(\delta_i(v),\delta_{n+i+1}(uv))=(1,0)$ if $i$ is even and $(\delta_i(v),\delta_{n+i+1}(uv))=(0,1)$ if $i$ is odd. Moreover,   $\delta_i(v)=\delta_{n+i+1}(uv)$ for each  $i=r,\dots,m$.\par
It follows that if  $r$ is even, then  the number of pairs $(1,0)$ is equal to the number of pairs  $(0,1)$  that implies $\delta (uv) = \delta (u) +\delta (v)$.  If  $r$ is odd, then the number of pairs $(1,0)$  is equal to the number  of pairs $ (0,1)$ plus $1$, so that  $\delta (uv) = \delta (u) +\delta (v)-1$.\par
Let  $|u_2|$ be odd. In this case  $\delta_i(u)=\delta_i(uv)$ if  $i=0, \dots, n-1$, $\delta_n(u)=1$, and  $\delta_n(uv)=0$, $\delta_i(v)=\delta_{n+i+1}(uv)$, if  $i=0,\dots,m$. It follows  $\delta (uv) = \delta (u) +\delta (v)-1$ and then the assertion.  
\end {proof}

\begin{example}
Let $u= a^3b^2aba$, $w= a^3b^2a^2ba$, and $v= babab^2$. One has $u_2= aba$, $w_2= ba$, and $v_1= baba$.
One has $\delta(u)=\delta(w)=4$, and $\delta(v)= 3$. One verifies that $\delta(uv)=6$ and $\delta(wv)=7$.
\end{example}

From Proposition~\ref{prop:u+v} one readily derives:

\begin{cor}\label{cor: deltavx} Let  $u=x_0^{\alpha_0}x_1^{\alpha_1}\cdots x_n^{\alpha_n}$ and  $v= x\in \Aa $. Then
\[ \delta(u) \leq \delta(ux) \leq \delta(u)+1.\]
Moreover, $\delta(ux)= \delta(u)$ if and only if  $u^{(L)}=x$ or ${\bar u}^{(L)}=x$ and $|u_2|$  is odd.
\end{cor}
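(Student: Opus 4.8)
The plan is to obtain Corollary~\ref{cor: deltavx} as a direct specialization of Proposition~\ref{prop:u+v}, taking the second word to be the single letter $x$.

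First I would record the data attached to the word $v=x$. Its integral representation is $(1)$, so in the notation of Proposition~\ref{prop:u+v} we have $m=0$; consequently $\delta(v)=\delta_0(v)=1$, its first letter is $v^{(F)}=x$, and, since $m=0$, the alternating prefix $v_1$ defined just before Proposition~\ref{prop:u+v} is empty, whence $|v_1|=0$ is even. Substituting $\delta(v)=1$ into the inequality $\delta(u)+\delta(v)-1\le\delta(uv)\le\delta(u)+\delta(v)$ of Proposition~\ref{prop:u+v} immediately yields $\delta(u)\le\delta(ux)\le\delta(u)+1$, which is the first claim.

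Next I would handle the equality case. By Proposition~\ref{prop:u+v}, $\delta(ux)=\delta(u)+1$ holds exactly when $u^{(L)}\neq v^{(F)}=x$ and both $|u_2|$ and $|v_1|$ are even; since $|v_1|=0$ is automatically even, this reduces to the single condition ``$u^{(L)}\neq x$ and $|u_2|$ is even''. As $\delta(ux)$ can only take the two values $\delta(u)$ and $\delta(u)+1$, it follows that $\delta(ux)=\delta(u)$ precisely when that condition fails, i.e.\ when $u^{(L)}=x$ or $|u_2|$ is odd.

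Finally I would put this into the form stated in the corollary. Because $\Aa$ has only two letters, ``$u^{(L)}\neq x$'' is the same as ``${\bar u}^{(L)}=x$''; hence ``$u^{(L)}=x$ or $|u_2|$ odd'' is equivalent to ``$u^{(L)}=x$, or (${\bar u}^{(L)}=x$ and $|u_2|$ odd)'', which is exactly the asserted characterization. (As in the hypothesis of Proposition~\ref{prop:u+v}, we take $u\neq\varepsilon$.) There is no genuine obstacle in this argument: the only points requiring a little care are correctly reading off $\delta(x)=1$ and $v_1=\varepsilon$ from the definitions, and the elementary Boolean rephrasing that uses the binarity of $\Aa$.
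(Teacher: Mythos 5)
Your proof is correct and follows exactly the route the paper intends: the corollary is stated there as an immediate specialization of Proposition~\ref{prop:u+v} to $v=x$, and you have simply filled in the details (reading off $\delta(x)=1$ and $|v_1|=0$, and the Boolean rephrasing via ${\bar u}^{(L)}=x\iff u^{(L)}\neq x$). No issues.
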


  \begin{lemma}\label{thm:h+}
If $v$ is a non-constant word, then  $h(v)=h(v_{+})+1$.
\end{lemma}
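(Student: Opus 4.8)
The plan is to relate the operators $v\mapsto v_+$ and $v\mapsto{}_+v$ to the height function $h$, exploiting the reversal symmetry $(v_+)\sptilde = (v\sptilde)\,_+$ recorded in~\eqref{eq:E+}. Recall that $h(v)$ is defined via the chain $v_{(1)}=v$, $v_{(n+1)}={}_+(v_{(n)})$, terminating at the first constant term; while $v_+$ is the longest prefix of $v$ followed by the complement of $v^{(L)}$. Since $v$ is non-constant, $v_+$ is a proper prefix of $v$, and the last letter of $v_+$ is the complement of $v^{(L)}$. The first obstacle to a direct induction is that $h$ is built from the $_+$-operator acting on suffixes, whereas the statement concerns the $_+$-operator, so to speak, ``from the other side'' — a priori $h(v_+)$ and $h(v)$ are governed by unrelated recursions.

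First I would pass to reversals. By~\eqref{eq:E+} we have $(v_+)\sptilde = (v\sptilde)\,_+$; iterating, $(v_+)\sptilde_{(n)} = (v\sptilde)_{(n+1)}$ for all $n\geq 1$ for which these are defined, where I use that reversal commutes with the $_+$-operator and preserves the property of being constant. Hence the terminating chain for $(v_+)\sptilde$ is exactly the tail, starting from the second term, of the terminating chain for $v\sptilde$. Therefore $h\bigl((v_+)\sptilde\bigr) = h(v\sptilde) - 1$. Now invoke the symmetry $h(v\sptilde)=h(v)$, which follows from Proposition~\ref{prop:Crisdepth} (via $h=\delta$ by Proposition~\ref{prop:height1} and $\delta(v\sptilde)=\delta(v)$), together with $h((v_+)\sptilde)=h(v_+)$ by the same token. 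Combining, $h(v_+) = h\bigl((v_+)\sptilde\bigr) = h(v\sptilde) - 1 = h(v) - 1$, which is the claim.

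Alternatively, and perhaps cleaner to write, I would argue entirely on the combinatorial function $\delta$ using the integral representation. Write $v = x_0^{\alpha_0}x_1^{\alpha_1}\cdots x_n^{\alpha_n}$ with $n\geq 1$. Then $v_+$ has integral representation $(\alpha_0,\alpha_1,\dots,\alpha_{n-1},\alpha_n-1)$ if $\alpha_n>1$, and $(\alpha_0,\dots,\alpha_{n-1})$ if $\alpha_n=1$; in either case $v_+$ agrees with $v$ on an initial block of the representation, with only the last exponent altered exactly as the last exponent of $v\sptilde$ would be altered when forming $_+(v\sptilde)$. By the already-proven Proposition~\ref{prop:height1} it suffices to check $\delta(v)=\delta(v_+)+1$. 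This is the mirror image of the computation carried out in the proof of Proposition~\ref{prop:height1} establishing $\delta(v)=\delta({}_+v)+1$: one splits into the cases $\alpha_n>1$ and $\alpha_n=1$ (the latter further by the parity behaviour at position $n-1$), and checks that the values $\delta_i$ are merely re-indexed, losing exactly one $1$ from the sum. Since $\delta$ is reversal-invariant (Proposition~\ref{prop:Crisdepth}), this case analysis is literally obtained from the one in Proposition~\ref{prop:height1} by reading the word backwards, so no new work is required.

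The main obstacle is purely bookkeeping: making sure that the edge cases — when $\alpha_n=1$ and the alternating block ending at position $n$ has even versus odd length, and the degenerate case $n=1$ — are handled so that the $-1$ appears in every case and never becomes $0$ or $-2$. Both routes above reduce this to a case analysis already performed in the paper (for $_+v$), transported by reversal symmetry, so I expect the proof to be short once the reduction to reversals is made explicit.
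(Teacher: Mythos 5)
Your first argument is exactly the paper's proof: $h(v)=h({}_{+}v)+1$ by the definition of height, $h(u)=h(u\sptilde)$ via Proposition~\ref{prop:Crisdepth}, and the reversal identity $v_{+}=({}_{+}(v\sptilde))\sptilde$ from~\eqref{eq:E+}, so the second, computational route via $\delta$ is not needed. One notational slip to fix: the identity you state at the outset should read $(v_{+})\sptilde={}_{+}(v\sptilde)$ (which is what your chain argument actually uses), not $(v\sptilde)_{+}$, which is a different word in general.
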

\begin{proof}
By the definition of height, we have $h(v)=h({}_{+}v)+1$. Moreover, Proposition~\ref{prop:Crisdepth} 
implies that $h(u)=h(u\sptilde)$ for any $u\in\Aa^{*}$. Hence, to obtain the assertion it suffices to observe by~\eqref{eq:E+} that $v_{+}=({}_{+}(v\sptilde))\sptilde$.
\end{proof}

We shall now give another equivalent definition for the function $h=\delta$.
Let $H:\Nn_{+}\to\Nn$ be the sequence defined by $H(1)=0$ and, for all $n>0$,
\[H(2n)=H(n)\quad\text{ and }\quad H(4n\pm 1)=H(n)+1\,.\]
The first few values of $H(n)$ are
\[0,0,1,0,1,1,1,0,1,1,2,1,2,1,1,0,1,\ldots.\]
As an immediate consequence of the definition, for any $k\geq 1$ we have
\begin{equation}
\label{eq:H}
H(2^{k+1}n\pm 1)=H(n)+1\,.
\end{equation}
Note that the sequence $\hat H$ given by $\hat H(n)=H(n)+1$ is the sequence A007302 in \cite{OEIS}.

\begin{prop}
For all $v\in\Aa^{*}$, $h(v)=H\left(\langle bvb\rangle\right)$.
\end{prop}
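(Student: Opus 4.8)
The plan is to prove the identity $h(v)=H(\langle bvb\rangle)$ by induction on $|v|$, using the recursive definitions of $H$ and of $h$ (equivalently $\delta$), and the relation $h(v)=h({}_+v)+1$ for non-constant $v$. The base case is $v=\varepsilon$: then $bvb=bb$, $\langle bb\rangle=3$, and $H(3)=H(4\cdot 1-1)=H(1)+1=1=h(\varepsilon)$. For the constant cases $v=a^k$ and $v=b^k$ with $k\geq 1$, we have $bvb=ba^kb$ and $bvb=b^{k+2}$ respectively; one checks directly that $\langle ba^kb\rangle=2^{k+1}+1$ and $\langle b^{k+2}\rangle=2^{k+2}-1$, and in both cases~\eqref{eq:H} (with the appropriate exponent) gives $H=H(1)+1=1=h(v)$, since $h$ of a constant word is $1$.

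\textbf{The inductive step.} For non-constant $v$, write $v=x_0^{\alpha_0}x_1^{\alpha_1}\cdots x_n^{\alpha_n}$ with $n\geq 1$, and recall ${}_+v=x_1^{\alpha_1-1}x_2^{\alpha_2}\cdots x_n^{\alpha_n}$, so $h(v)=h({}_+v)+1$. The heart of the argument is to relate $\langle bvb\rangle$ to $\langle b({}_+v)b\rangle$ via a formula of the shape appearing in the definition of $H$, namely a transformation $m\mapsto 2^{k+1}m\pm 1$ (or $m\mapsto 2m$, absorbed since $H(2m)=H(m)$). I would split into the two cases used in the proof of Proposition~\ref{prop:height1}. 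If $\alpha_1=1$, then ${}_+v=x_2^{\alpha_2}\cdots x_n^{\alpha_n}$; the block $bx_0x_1$ at the front of $bvb$ is either $bab$ or $bba$ (since $x_1=\bar x_0$), and its binary value relative to the tail determines whether we are in a $2^{k+1}m+1$ or $2^{k+1}m-1$ situation, with $2^{k+1}=2^{\alpha_0+1}$ reflecting the run $x_0^{\alpha_0}$. If $\alpha_1>1$, then ${}_+v=x_1^{\alpha_1-1}x_2^{\alpha_2}\cdots x_n^{\alpha_n}$, and passing from $bvb$ to $b({}_+v)b$ removes the block $b x_0^{\alpha_0}x_1$ and prepends $b x_1$; again the resulting arithmetic identity has the form $\langle bvb\rangle=2^{\alpha_0+1}\langle b({}_+v)b\rangle \pm 1$ up to factors of $2$. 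In each subcase one verifies the sign of $\pm 1$ by tracking whether the deleted prefix, read as a binary number, lies just above or just below a power of $2$ times the value of the remaining word. Then~\eqref{eq:H} yields $H(\langle bvb\rangle)=H(\langle b({}_+v)b\rangle)+1$, and the inductive hypothesis gives $H(\langle b({}_+v)b\rangle)=h({}_+v)$, hence $H(\langle bvb\rangle)=h({}_+v)+1=h(v)$.

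\textbf{Main obstacle.} The delicate point is the bookkeeping of the $\pm 1$ sign and of the exponent of $2$ in the identity $\langle bvb\rangle = 2^{\,e}\langle b({}_+v)b\rangle \pm 1$ (or $2^e\cdot 2^{\,r}\langle\cdots\rangle$, the extra $2^r$ being harmless). One must be careful that removing a leading run $x_0^{\alpha_0}$ of, say, $a$'s corresponds to subtracting something small, whereas a leading run of $b$'s near the top bits behaves differently; complementing symmetry ($H$ is invariant under $a\leftrightarrow b$ of the interior because $4n+1$ and $4n-1$ both map to $H(n)+1$, matching $\delta_i(\bar v)=\delta_i(v)$) can be invoked to reduce the number of subcases. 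It is also worth noting explicitly that since both letters $b$ are fixed as the outer frame, the two boundary symbols of $bvb$ are always $b$, which is what forces the ``$4n\pm1$'' rather than ``$4n$ or $4n+2$'' shape at each reduction step — this is the structural reason the frame $bvb$ (and not, say, $avb$) is the right one. Once the reduction identity is established, invariance of $h$ under $v\mapsto\tilde v$, $v\mapsto\bar v$ (Proposition~\ref{prop:Crisdepth}) together with the corresponding invariances of $\langle bvb\rangle$ under the induced bit operations provides a consistency check, though it is not strictly needed for the induction.
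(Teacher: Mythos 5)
Your overall plan (induction, base case for constant words, one application of~\eqref{eq:H} per unit increase of $h$) matches the paper's, but the reduction step you propose is carried out at the wrong end of the word, and the identity it relies on is false. You claim that for non-constant $v$ one has, up to factors of $2$, $\langle bvb\rangle=2^{e}\langle b({}_{+}v)b\rangle\pm 1$. Take $v=a^{2}b$: then ${}_{+}v=\varepsilon$, $\langle bvb\rangle=\langle ba^{2}b^{2}\rangle=19$ and $\langle b({}_{+}v)b\rangle=\langle bb\rangle=3$, but $19$ is not of the form $2^{e}\cdot 3\pm1$ for any $e$ (and $19$ is odd, as $\langle bvb\rangle$ always is, so the ``harmless extra power of $2$'' cannot be extracted either). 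The structural reason is that the recursion $H(2n)=H(n)$, $H(4n\pm1)=H(n)+1$ operates on the \emph{low-order} bits of its argument, i.e., it strips a \emph{suffix} of $bvb$, whereas ${}_{+}v$ strips the leading block $x_{0}^{\alpha_{0}}x_{1}$, i.e., the high-order bits. Removing a leading block changes $\langle bvb\rangle$ by $\bigl(\langle bx_{0}^{\alpha_{0}}x_{1}\rangle-1\bigr)2^{|{}_{+}v|+1}$, which is not $\pm$ a power of $2$ in general, so a single application of~\eqref{eq:H} cannot bridge $\langle bvb\rangle$ and $\langle b({}_{+}v)b\rangle$; in the example above $H(19)$ reduces to $H(5)$, not to $H(3)$.

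The repair is precisely the ingredient you set aside as ``not strictly needed'': the invariance of $h$ under reversal. The paper first proves (Lemma~\ref{thm:h+}) that $h(v)=h(v_{+})+1$, by combining $h(v)=h({}_{+}v)+1$ with $h(u)=h(u\sptilde)$ from Proposition~\ref{prop:Crisdepth} and the identity $v_{+}=({}_{+}(v\sptilde))\sptilde$. With the \emph{prefix} $v_{+}$ one has $v=(v_{+})ba^{k}$ or $v=(v_{+})ab^{k}$, so $bvb$ is obtained from $b(v_{+})b$ by appending letters on the right, and the arithmetic is immediate: $\langle b(v_{+})b\cdot a^{k}b\rangle=2^{k+1}\langle b(v_{+})b\rangle+1$ and $\langle b(v_{+})ab^{k+1}\rangle=2^{k+1}\langle b(v_{+})b\rangle-1$, after which~\eqref{eq:H} applies in one step. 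So the reversal invariance is the load-bearing step of the induction, not a consistency check; as written, your inductive step does not go through.
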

\begin{proof}
We proceed by induction on $h(v)$. If $h(v)=1$, then $v$ is constant, so that
$\langle bvb\rangle=2^{k+1}\pm 1$ for some $k\geq 1$. From~\eqref{eq:H},
$H\left(\langle bvb\rangle\right)=H(1)+1=1$ follows.

Let now $h(v)>1$, so that $v$ contains both $a$ and $b$ as letters. By Lemma~\ref{thm:h+} and 
the induction hypothesis, we have $h(v)=H\left(\langle b(v_{+})b\rangle\right)+1$.

Now, if $v^{(L)}=a$, then there exists $k\geq 1$ such that $v=(v_{+})ba^{k}$, so that
\[\langle bvb\rangle=
\langle b(v_{+})b\cdot a^{k}b\rangle=2^{k+1}\langle b(v_{+})b\rangle +1\,.\]
On the other hand, if $v^{(L)}=b$, then there exists $k\geq 1$ with $v=(v_{+})ab^{k}$, so that
\[\langle bvb\rangle=
\langle b(v_{+})ab^{k+1}\rangle=2^{k+1}\langle b(v_{+})b\rangle -1\,.\]
In both cases, by~\eqref{eq:H} it follows
$H\left(\langle bvb\rangle\right)=H\left(\langle b(v_{+})b\rangle\right)+1=h(v)$, as desired.
\end{proof}
\begin{example}
Let $v= a^2bab$. One has $\langle ba^2bab^2\rangle = 75$ and $H(75)= H(19)+1= H(5)+2= H(1)+3= 3$.
Hence, $h(v)= 3$.
\end{example}
\begin{prop}\label{prop:loup}
For all $v\in \Aa ^{*}$, \[\left\lfloor\frac{\ext (v)}{2}\right\rfloor+1\leq h(v)\leq\left\lfloor\frac{|v|}{2}\right\rfloor+1\,.\]
The set  of  words $v$  over $\Aa = \{a, b\}=\{x,y\}$ for which the lower bound  is attained is 
$Y= x^+(yx^+)^*$ if $\ext(v)$ is odd and $Y= \{\varepsilon\} \cup x^+ (yx^+)^*(y^+x)^*y^+$ if $\ext(v)$ is even.
 The set of  words for which  the upper bound  is attained is 
$X=\{ab,ba\}^{*}\{\varepsilon,a,b\}\{ab,ba\}^{*}$.
\end{prop}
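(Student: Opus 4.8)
The plan is to reduce everything to $\delta$ via Proposition~\ref{prop:height1} and to read both bounds off the closed formula of Proposition~\ref{prop:cell}. Fix the canonical factorization $v=v_{0}u_{1}v_{2}u_{3}\cdots u_{k-1}v_{k}$ from~\eqref{eq:sim0}, put $m=k/2$, $\ell_{j}=|u_{2j+1}|$, let $o$ be the number of odd $\ell_{j}$, and let $n_{1},n_{2}$ be the numbers of blocks of $v$ of exponent $1$, resp.\ $>1$. Because each interior $v_{2i}$ is nonempty and $v_{0},v_{k}$ carry the extreme blocks, $P:=\sum_{i}\ext(v_{2i})\geq m+1$; moreover, for nonconstant $v$, $\ext(v)=P+S$ with $S=\sum_{j}\ell_{j}$, and $n_{1}=S+\epsilon_{0}+\epsilon_{n}$, where $\epsilon_{0},\epsilon_{n}\in\{0,1\}$ record whether $\alpha_{0},\alpha_{n}$ equal $1$. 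Substituting these into Proposition~\ref{prop:cell} (treating constant $v$ separately) yields the two working identities
\[2\delta(v)=\ext(v)+P-o,\qquad 2\delta(v)=|v|+\epsilon_{0}+\epsilon_{n}-(\Sigma_{2}-2n_{2})-o,\]
where $\Sigma_{2}=\sum_{\alpha_{i}>1}\alpha_{i}\geq 2n_{2}$.

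The inequalities are then immediate. From the first identity, $P-o\geq(m+1)-m=1$ gives $2\delta(v)\geq\ext(v)+1$, and since $2\delta(v)$ is even this forces $\delta(v)\geq\floor{\ext(v)/2}+1$. (Intuitively: in the sequence $(\delta_{i})$ the zeros are never adjacent, so their number is at most $\lceil(\ext(v)-1)/2\rceil$.) From the second identity, $\Sigma_{2}-2n_{2}\geq 0$, $o\geq 0$ and $\epsilon_{0}+\epsilon_{n}\leq 2$ give $2\delta(v)\leq|v|+2$, hence $\delta(v)\leq\floor{|v|/2}+1$; alternatively the upper bound is a one-line induction on $h$ using Lemma~\ref{thm:h+} and $|v|=|v_{+}|+1+r\geq|v_{+}|+2$ (from $v=v_{+}ba^{r}$ or $v=v_{+}ab^{r}$, $r\geq 1$).

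The substance of the statement is the two extremality classifications. For the lower bound, equality is $2\delta(v)\in\{\ext(v)+1,\ext(v)+2\}$, i.e.\ (matching parities) $P-o=1$ when $\ext(v)$ is odd and $P-o=2$ when $\ext(v)$ is even. Combined with $P\geq m+1\geq o+1$, the odd case forces every $v_{2i}$ to be a single block and every $\ell_{j}$ to be odd; writing out such a $v$ block by block — even blocks of one letter with arbitrary exponents, odd blocks of the other letter with exponent $1$ — gives precisely $x^{+}(yx^{+})^{*}$. In the even case one gets the same shape except that exactly one alternating component may be even (which, inside $v$, makes the two letters interchange roles once) or one $v_{2i}$ may carry an extra block; reading these off gives $\{\varepsilon\}\cup x^{+}(yx^{+})^{*}(y^{+}x)^{*}y^{+}$. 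For the upper bound, equality is $g(v):=2\delta(v)-|v|\geq 1$, and here the cleanest route is the recursion $g(v)=g(v_{+})-(r-1)$, so $g(v)\geq 1\iff g(v_{+})\geq r$; since $g\leq 2$ always, only $r=1$ (with $v_{+}$ again extremal) and $r=2$ (with $g(v_{+})=2$) occur, and one checks from the second identity that $g(v_{+})=2$ characterizes $v_{+}\in\{ab,ba\}^{*}$. Unrolling the recursion from the base cases $v\in\{\varepsilon,a,b\}$ then produces exactly $X=\{ab,ba\}^{*}\{\varepsilon,a,b\}\{ab,ba\}^{*}$.

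The main obstacle is the bookkeeping in these last steps: turning the arithmetic constraints ($P-o\leq 2$; $g(v)\geq 1$, i.e.\ $\epsilon_{0}+\epsilon_{n}-(\Sigma_{2}-2n_{2})-o\geq 1$) into the stated regular languages, and especially handling all boundary cases cleanly — the empty and single-letter words, the influence of $\alpha_{0}$ and $\alpha_{n}$, and the location inside $v$ of the unique even alternating component (or extra block) in the even-extension lower-bound case. The bounds themselves and the reductions $h(v)=h(v_{+})+1$, $g(v)=g(v_{+})-(r-1)$ are routine given Propositions~\ref{prop:cell} and~\ref{prop:height1} and Lemma~\ref{thm:h+}.
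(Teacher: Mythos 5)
Your proposal is correct in substance, and for the lower bound it takes a genuinely different route from the paper. The paper works directly on the $0$--$1$ sequence $\Delta_v=\delta_0(v)\cdots\delta_n(v)$: since no two consecutive $\delta_i$ vanish, the number of zeros is at most $\lceil (n-1)/2\rceil$, with equality exactly when $\Delta_v\in 1(01)^*$ (odd extension) or $\Delta_v\in 1(01)^*(10)^*1$ (even extension), and these patterns are then translated into $x^+(yx^+)^*$ and $x^+(yx^+)^*(y^+x)^*y^+$. Your identity $2\delta(v)=\ext(v)+P-o$ derived from Proposition~\ref{prop:cell} is an algebraic repackaging of the same count, and your case split $P-o\in\{1,2\}$ with $P\ge m+1$, $o\le m$ matches the paper's two extremal patterns; both verifications I checked (the identity itself, and the reduction of equality to ``all $v_{2i}$ single blocks, all $\ell_j$ odd'' resp.\ the two even-case degenerations) are sound, provided you note that $P\ge m+1$ uses the fact that the extreme blocks always belong to $v_0$ and $v_k$. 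For the upper bound your recursion $g(v)=g(v_+)-(r-1)$ with $g\le 2$ is essentially the paper's argument ($|v_{(i)}|-|v_{(i+1)}|\ge 2$, with at most one step of size $3$, the size-$2$ steps corresponding to stripping $ab$ or $ba$) in a cleaner notation, and it even gives you ``$g(u)=2\iff u\in\{ab,ba\}^*$'' for free, without appealing to the second identity. The one caveat is that the part you defer as ``bookkeeping''--- turning the arithmetic equality conditions into the stated regular languages and unrolling the recursion through the base cases $\{\varepsilon,a,b\}$ and the languages $X'_1$, $X'_2$ --- is exactly where the paper's proof spends most of its effort; your reductions are correct and the remaining verifications are finite and mechanical (I checked that unrolling yields $\{\varepsilon,a,b\}\{ab,ba\}^*\cup\{ab,ba\}^+\{a,b\}\{ab,ba\}^*$, which does equal $X$), but as written the proposal is a correct and complete plan rather than a finished proof. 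Also note, as you do, that the second identity requires $\ext(v)\ge 2$, since for constant $v$ the quantities $\epsilon_0$ and $\epsilon_n$ refer to the same block.
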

\begin{proof}
Let us first prove the lower bound. Let $\ext(v)=n+1$. One has that $h(v)= \delta(v)= \ext(v)- \card\{0 <i< n \mid \delta_i(v)=0\}$. Since by the definition of $\delta$
in the sequence $\Delta_v= \delta_0(v)\cdots \delta_n(v)$ one cannot have two consecutive $0$, it follows that the maximal value of $ \card\{0 <i< n \mid \delta_i(v)=0\}$ is attained if and only if  $\Delta_v \in 1(01)^*$ if  $\ext(v)$ is odd and  $\Delta_v \in 1(01)^* (10)^*1$ if  $\ext(v)$ is even. In both the cases the previous maximal value is  equal to $\left \lceil \frac{n-1}{2}\right\rceil$. From this one has $\delta(v) \geq n+1 - \left \lceil \frac{n-1}{2}\right\rceil = \left\lfloor\frac{\ext (v)}{2}\right\rfloor+1$.  

To complete the first part of the proof it is sufficient to observe that   $\Delta_v \in 1(01)^*$
if and only if  $v\in x^+(yx^+)^*$,  and  $\Delta_v \in 1 (01)^*(10)^*1$ if and only if  $v\in x^+ (yx^+)^*(y^+x)^*y^+$.

Let us now prove the upper bound.
If $v$ is constant, then $h(v)=1$ and the result is trivially true. Let us then suppose that
$v$ is not constant. Let $n= h(v)>1$. By the definition of height, there exist
$v_{(1)}$,\dots$, v_{(n)}$  such that $v_{(1)}=v$, $v_{(n)}$ is constant,
and $v_{(i+1)}={}_{+}(v_{(i)})$ for $i=1,\ldots,n-1$. 
Since for any non-constant word $u$ one has $|u|\geq |_+u|+2$, it follows  $|v_{(i)}| \geq |v_{(i+1)}|+2$, $i=1, \ldots, n-1$, so that 
\begin{equation}\label{eq:lowupp}
 |v|\geq |v_{(n)}|+2n-2 \geq 2n-2; 
 \end{equation}
  hence
$n= h(v)\leq\lfloor|v|/2\rfloor+1$.

Let us now prove that $h(v)=\lfloor|v|/2\rfloor+1$ if and only if $v\in X$.
Suppose first that
$|v|$ is even. The set of words of even length within $X$ is $\{ab,ba\}^{*}$. Clearly, from (\ref{eq:lowupp}), 
$n=1+|v|/2$ if and only $v_{(n)}=\varepsilon$ and $|v_{(i)}|-|v_{(i+1)}|=2$ for $i=1,\ldots,n-1$. 
Now, $|u|-|{}_{+}u|=2$ if and only if $u=ab({}_{+}u)$ or $u=ba({}_{+}u)$.
It follows that $h(v)=1+|v|/2$ if and only
if $v\in\{ab,ba\}^{*}$.

Let now $|v|$ be odd. The subset of $X$ made by words of odd length is
\[X'=X\setminus\{ab,ba\}^{*}=\{ab,ba\}^*\{a,b\}\{ab,ba\}^*.\]

It is not difficult to see that
$X'= X'_1\cup X'_2$ where
\[X'_1=\{ab,ba\}^{*}\{a,b\} \ \  \mbox{and} \  \  X'_2 = \{ab,ba\}^{*}\{aab,bba\}\{ab,ba\}^{*}\,.\]
One has $|u|-|{}_{+}u|=3$ if and only if $u=aab({}_{+}u)$ or $u=bba({}_{+}u)$.
It follows that  if $v\in X'_1$, then $v_{(n)}\in\{a,b\}$ and  $|v_{(i)}|-|v_{(i+1)}|=2$ for $i=1,\ldots,n-1$.
If $v\in X'_2$, then $v_{(n)}=\varepsilon$ and 
$|v_{(i)}|-|v_{(i+1)}|=2$ for all $i$ in $\{1,\ldots,n-1\}$ except exactly one $j$ for which
$|v_{(j)}|-|v_{(j+1)}|=3$. From (\ref{eq:lowupp}) one has that in both cases
 $n=(|v|+1)/2=\lfloor |v|/2\rfloor+1$.

Conversely, from (\ref{eq:lowupp}) if $v$ is such that $n=(|v|+1)/2$, then we must have either $v_{(n)}\in\{a,b\}$
and $|v_{(i)}|-|v_{(i+1)}|=2$ for $i=1,\ldots,n-1$, or $v_{(n)}=\varepsilon$ and 
$|v_{(i)}|-|v_{(i+1)}|=2$ for all $i$ in $\{1,\ldots,n-1\}$ except exactly one $j$ for which
$|v_{(j)}|-|v_{(j+1)}|=3$. In the former case, we obtain $v\in X'_1$, and in
the latter $v\in X'_2$.
\end{proof}

We say that a word $v\in \Aa ^+$ is \emph{quasi-alternating} if each letter of $v$ but exactly one, is immediately followed by its complementary.  For instance, the words  $ab^2ab$ and $aba^2bab$ are quasi-alternating.

\begin{cor} Let $v\in \Aa ^+$. Then
\begin{equation}\label{eq:lowerupper}
 h(v)=\left\lfloor\frac{\ext (v)}{2}\right\rfloor+1=\left\lfloor\frac{|v|}{2}\right\rfloor+1\,
 \end{equation}
if and only if $v$ is alternating or $v$ is quasi-alternating with $\ext(v)$ equal to an even integer.
\end{cor}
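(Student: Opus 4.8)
The plan is to recognize that this corollary is simply the equality case of the two-sided estimate in Proposition~\ref{prop:loup}. That proposition gives, for every $v\in\Aa^{+}$, the chain
\[\left\lfloor\frac{\ext(v)}{2}\right\rfloor+1\le h(v)\le\left\lfloor\frac{|v|}{2}\right\rfloor+1\,,\]
so \eqref{eq:lowerupper} holds precisely when the two outer terms coincide, i.e. when $\lfloor\ext(v)/2\rfloor=\lfloor|v|/2\rfloor$; and conversely, as soon as $\lfloor\ext(v)/2\rfloor=\lfloor|v|/2\rfloor$, the value $h(v)$ is squeezed to equal both bounds. Note that for this reduction I will not need the explicit descriptions of the extremal sets $X$ and $Y$ appearing in Proposition~\ref{prop:loup}, only the inequalities themselves.

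It then remains to translate the arithmetic condition $\lfloor\ext(v)/2\rfloor=\lfloor|v|/2\rfloor$ into a combinatorial property of $v$. Writing $v=x_{0}^{\alpha_{0}}\cdots x_{n}^{\alpha_{n}}$ as usual, I will introduce $d=|v|-\ext(v)=\sum_{i=0}^{n}(\alpha_{i}-1)\ge 0$ and observe that $d=0$ holds exactly when every $\alpha_{i}=1$, that is, when $v$ is alternating, whereas $d=1$ holds exactly when one $\alpha_{i}$ equals $2$ and all the others equal $1$, that is, when $v$ is quasi-alternating. A short parity computation then closes the argument: $\lfloor\ext(v)/2\rfloor=\lfloor(\ext(v)+d)/2\rfloor$ is automatic when $d=0$; when $d=1$ it holds if and only if $\ext(v)$ is even; and when $d\ge 2$ it fails, since then $\lfloor(\ext(v)+d)/2\rfloor\ge\lfloor\ext(v)/2\rfloor+1$. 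Hence $\lfloor\ext(v)/2\rfloor=\lfloor|v|/2\rfloor$ exactly when $v$ is alternating, or when $v$ is quasi-alternating with $\ext(v)$ even, which is the desired equivalence.

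I do not anticipate a genuine obstacle. The only points that need a little care are the faithful passage from the definitions of \emph{alternating} and \emph{quasi-alternating} words to the single integer parameter $d=|v|-\ext(v)$ — in particular checking the borderline situations in which the unique run of length $2$ sits at the very beginning or the very end of $v$ — and keeping the parity cases in the final floor computation straight.
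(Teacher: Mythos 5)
Your proof is correct and follows essentially the same route as the paper: both directions ultimately rest on the observation that equality of the two floor bounds from Proposition~\ref{prop:loup} forces $|v|-\ext(v)\leq 1$ (alternating or quasi-alternating), together with the parity check when $|v|=\ext(v)+1$. The only difference is that for the converse the paper invokes the explicit description of the extremal set $X$ to get $h(v)=\lfloor|v|/2\rfloor+1$ before matching the lower bound, whereas you obtain the same conclusion slightly more economically from the sandwich alone; both are valid.
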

\begin{proof} ($\Rightarrow$) If~\eqref{eq:lowerupper} is satisfied, then $|v|= \ext(v)$ or $|v|= \ext(v)+1$. In the first case $v$ is alternating and in the second case quasi-alternating. Moreover, in the latter case $\ext(v)$ has to be even, otherwise $\frac{|v|}{2} = \lfloor\frac{\ext (v)}{2}\rfloor+1$, a contradiction.

\noindent ($\Leftarrow$) If $v$ is alternating or quasi-alternating, then by the preceding proposition $v\in X$ so that 
 $h(v)= \lfloor\frac{|v|}{2}\rfloor+1$. Moreover, if $v$ is alternating, then $|v|= \ext(v)$ and we are done. If $v$ is quasi-alternating, then
 $|v|= \ext(v)+1$. If $\ext(v)$ is even,  then $\lfloor|v|/2\rfloor = \frac{\ext(v)}{2}$ and the result is obtained.
\end{proof}
\begin{example}
Let $v= a^2ba^3baba^3b$. One has $\Delta_v= 10101011$, so that $h(v)=\delta(v)=5$. Since $\ext(v)=8$, one has
that $v\in Y$ and $h(v)= \ext(v)/2 +1$. Let $v=abab^2a^2b \in X$; one has $h(v)=5= |v|/2+1$. Let $v$ be the quasi-alternating word
$v= abab^2ab$; one has $h(v)= \ext(v)/2+1=4= \lfloor |v|/2\rfloor+1.$
\end{example}

For each pair  $k, p$ of positive integers we let  $X_k(p)$ denote the set of all words of length $k$ having a height equal to $p$, i.e.,
\[X_k(p)= \{v\in \Aa ^k \mid h(v)= p\}.\] Moreover, we set  $J_k(p)= \card(X_k(p))$. From the definition one has $X_k(1)= \{a^k, b^k\}$. By Proposition~\ref{prop:loup} one has $X_k(p)= \emptyset$ if $p> \lfloor \frac{k}{2}\rfloor +1$.

In order to give an exact formula for $J_{k}(p)$, we need some notation and preparatory results.  We recall that for any $v =x_0^{\alpha_0}x_1^{\alpha_1}\cdots x_n^{\alpha_n}$, the word $v_2$ is defined as $v_2=\varepsilon$ if  $\alpha_n>1$ or $n=0$ and, otherwise, $v_2$ is  the longest  proper  suffix of $v$ such that $v=v'v_2$, $v_2$ is alternating, and $v'^{(L)} \neq v_2^{(F)}$. 

 Let $ E$ be the set of words $v$ such that $v_{2}$ is of even length, i.e.,
$ E=\{v\in\Aa^{*}\mid |v_{2}|\equiv 0 \pmod 2\}$, and let
\[e_{k}(p)=\card(X_{k}(p)\cap E),\quad o_{k}(p)=\card(X_{k}(p)\setminus E),\]
so that $J_{k}(p)=e_{k}(p)+o_{k}(p)$.

The following proposition gives a recursive procedure allowing to
computing $X_k(p)$ and then $J_k(p)$,  for all  $k$ and $p$. 

\begin{prop} For all $k>0$ and $p>0$,
\begin{align*}
X_{k+1}(p)\cap E&= \{vv^{(L)} \mid v\in X_k(p)\}\cup \{v\bar v^{(L)} \mid v\in X_k(p)\setminus E\},\\
X_{k+1}(p)\setminus E&= \{v\bar v^{(L)} \mid v\in X_k(p-1)\cap E\}.
\end{align*}
Hence,
\begin{align}
\label{eq:JE}
e_{k+1}(p)&= J_k(p) + o_{k}(p)=e_{k}(p)+2o_{k}(p), \\
\label{eq:JO}
o_{k+1}(p)&= e_{k}(p-1).
\end{align}
\end{prop}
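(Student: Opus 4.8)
The plan is to obtain every word of length $k+1$ from a word of length $k$ by appending a single letter, and to track how this operation affects both the height and membership in $E$. Precisely, each $w\in\Aa^{k+1}$ is uniquely of the form $w=vx$ with $v\in\Aa^{k}$ (so $v\neq\varepsilon$) and $x\in\Aa$, and exactly one of $x=v^{(L)}$ or $x=\bar v^{(L)}$ holds. Thus $\Aa^{k+1}$ is the disjoint union of the families $\{vv^{(L)}\mid v\in\Aa^{k}\}$ and $\{v\bar v^{(L)}\mid v\in\Aa^{k}\}$, and distinct $v$ yield distinct words within each family. Since $h=\delta$ (Proposition~\ref{prop:height1}), Corollary~\ref{cor: deltavx} already describes the heights: $h(vv^{(L)})=h(v)$, while $h(v\bar v^{(L)})$ equals $h(v)$ if $|v_{2}|$ is odd and $h(v)+1$ if $|v_{2}|$ is even.

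The only new ingredient I need is how the suffix $(\cdot)_{2}$, hence membership in $E$, transforms. Writing $v=x_{0}^{\alpha_{0}}\cdots x_{n}^{\alpha_{n}}$, the integral representation of $vv^{(L)}$ is $(\alpha_{0},\dots,\alpha_{n-1},\alpha_{n}+1)$, whose last term exceeds $1$; hence $(vv^{(L)})_{2}=\varepsilon$ and $vv^{(L)}\in E$. For $v\bar v^{(L)}$, whose integral representation is $(\alpha_{0},\dots,\alpha_{n},1)$, I claim $|(v\bar v^{(L)})_{2}|=|v_{2}|+1$: appending $\bar x_{n}$ extends the maximal admissible alternating suffix by exactly one letter and by no more, since any attempt to absorb the letter to its left would make that letter equal to the first letter of the would‑be suffix (or would swallow the entire word). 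It follows that $v\bar v^{(L)}\in E$ if and only if $v\notin E$. I expect this parity computation — with the degenerate cases $v$ constant, $v$ alternating, and $\alpha_{n}>1$ handled separately — to be the only point requiring care.

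Granting these facts, the rest is mechanical. Intersecting the above partition with $X_{k+1}(p)$: among the words $vv^{(L)}$ those of height $p$ are exactly the ones with $v\in X_{k}(p)$, and all of them lie in $E$; among the words $v\bar v^{(L)}$ those of height $p$ are exactly the ones with $v\in X_{k}(p)\setminus E$ (which then lie in $E$) together with those with $v\in X_{k}(p-1)\cap E$ (which then lie outside $E$). This yields the two displayed set identities. Finally, $v\mapsto vv^{(L)}$ and $v\mapsto v\bar v^{(L)}$ are injective, and the first family (words ending in two equal letters) is disjoint from the second; hence $e_{k+1}(p)=\card(X_{k}(p))+\card(X_{k}(p)\setminus E)=J_{k}(p)+o_{k}(p)=e_{k}(p)+2o_{k}(p)$ and $o_{k+1}(p)=\card(X_{k}(p-1)\cap E)=e_{k}(p-1)$.
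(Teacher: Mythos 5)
Your proof is correct and follows essentially the same route as the paper: both partition $\Aa^{k+1}$ according to whether the appended letter equals $v^{(L)}$ or $\bar v^{(L)}$, invoke Corollary~\ref{cor: deltavx} (via $h=\delta$) for the behaviour of the height, and use that $vv^{(L)}\in E$ always while $v\bar v^{(L)}\in E$ exactly when $v\notin E$. The only difference is that you justify this last claim explicitly through the identity $\lvert(v\bar v^{(L)})_{2}\rvert=\lvert v_{2}\rvert+1$, a point the paper dismisses as clear from the definition.
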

\begin{proof} By Corollary~\ref{cor: deltavx}, for any $x\in \Aa $ one has  $ h(vx)= \delta(vx)= \delta(v)= h(v)=p$ if and only if  $v^{(L)}=x$ or ${\bar v}^{(L)}=x$ and $|v_2|$  is odd. If ${\bar v}^{(L)}=x$ and $|v_2|$  is even, then $h(v)= p-1$. Moreover, it is clear from the definition that $vv^{(L)}\in E$ for all $v$, whereas $v\bar v^{(L)}$ is in $ E$ if and only if $v$ is not. From this the result follows.
\end{proof}
\begin{example}
Since if $v\in X_k(p)$ then ${\bar v}\in X_k(p)$, we set $X'_k(p)= \{v \in X_k(p) \mid v^{(F)}=a\}$. For $k=2$ one has
$X'_2(1)=\{a^2\}$ and $X'_2(2)= \{ab\}$. For $k=3$, $X'_3(1)=\{a^3\}$ and $X'_3(2)= \{ab^2, a^2b, aba\}$. For $k=4$, $X'_4(1)= \{a^4\}$,
$X'_4(2)= \{ab^3, a^2b^2, aba^2\} \cup \{a^2ba\} \cup\{a^3b\}$, $X'_4(3)= \emptyset \cup \emptyset \cup\{ab^2a, abab\}$.
\end{example}

\begin{lemma}\label{lemma:ocappa}
For all $k>0$, one has $o_{k}(1)=0$ and for $p>1$, \[o_{k}(p)=2^{p-1}\binom{k-p}{p-2}\] 
with the usual convention that $\binom{n}{m}=0$ whenever $n<m$.
\end{lemma}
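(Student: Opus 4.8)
The plan is to prove, by induction on $k$, not only the formula for $o_{k}(p)$ but also a companion formula for $e_{k}(p)$; the two quantities are coupled through the recursions~\eqref{eq:JE} and~\eqref{eq:JO}, so it is cleaner to carry both along than to try to eliminate $e_{k}$. Concretely, I would show that $e_{k}(p)=2^{p}\binom{k-p}{p-1}$ for all $k\geq 1$ and $p\geq 1$, that $o_{k}(1)=0$ for all $k$, and that $o_{k}(p)=2^{p-1}\binom{k-p}{p-2}$ for all $k\geq 1$ and $p\geq 2$ (binomials read with the stated convention, plus $\binom{n}{m}=0$ for $m<0$). The value $o_{k}(1)=0$ is in any case immediate, since $X_{k}(1)=\{a^{k},b^{k}\}$ and both $a^{k}$ and $b^{k}$ have $v_{2}=\varepsilon$, hence lie in $E$; so the displayed part of the lemma is exactly the case $p\geq 2$.

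For the base case $k=1$ I would use $X_{1}(1)=\{a,b\}\subseteq E$ and $X_{1}(p)=\emptyset$ for $p\geq 2$, giving $e_{1}(1)=2$, $o_{1}(1)=0$, and $e_{1}(p)=o_{1}(p)=0$ for $p\geq 2$; these agree with the formulas, whose right-hand sides vanish for $p\geq 2$ because $1-p<p-1$ and $1-p<p-2$. For the inductive step, assume the formulas at level $k$. Applying~\eqref{eq:JO}, using the already-known $e_{k}(1)=2$ and $e_{k}(0)=0$ where needed, gives
\[
o_{k+1}(p)=e_{k}(p-1)=2^{p-1}\binom{k-p+1}{p-2},
\]
which is the claimed value at level $k+1$. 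Then~\eqref{eq:JE} together with Pascal's identity gives
\[
e_{k+1}(p)=e_{k}(p)+2o_{k}(p)=2^{p}\binom{k-p}{p-1}+2^{p}\binom{k-p}{p-2}=2^{p}\binom{k-p+1}{p-1},
\]
and the $p=1$ values are propagated trivially, $e_{k+1}(1)=e_{k}(1)+2o_{k}(1)=2$ and $o_{k+1}(1)=e_{k}(0)=0$. This closes the induction, and reading off the case $p\geq 2$ yields the lemma.

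I do not expect a genuine obstacle here: the whole argument reduces to Pascal's rule applied to~\eqref{eq:JE} and~\eqref{eq:JO}. The only point requiring a little care is the bookkeeping at the boundary — the case $p=1$ and the range where $X_{k}(p)$ is empty (i.e.\ $p>\lfloor k/2\rfloor+1$, by Proposition~\ref{prop:loup}) — where one must check that the chosen binomial convention is simultaneously consistent with Pascal's identity and with the elementary facts $X_{k}(0)=\emptyset$ and $X_{k}(1)=\{a^{k},b^{k}\}\subseteq E$.
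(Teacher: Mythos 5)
Your proof is correct, and it takes a genuinely different route from the paper's. The paper proves the formula for $o_{k}(p)$ alone by a double induction on $k$ and $p$: after settling the base cases $o_{k}(1)=0$, $o_{k}(2)=2$ and the boundary regime $k\leq 2(p-1)$, it substitutes~\eqref{eq:JE} into~\eqref{eq:JO} to unroll the recursion into $o_{k}(p)=2\sum_{i=2(p-2)}^{k-2}o_{i}(p-1)$ and then invokes the hockey-stick identity $\sum_{j=0}^{n-1}\binom{j+m}{m}=\binom{n+m}{m+1}$. You instead conjecture the companion closed form $e_{k}(p)=2^{p}\binom{k-p}{p-1}$ and run a single induction on $k$, where each step is just~\eqref{eq:JO} plus one application of Pascal's rule. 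Your companion formula is in fact implicitly present in the paper (the proof of the theorem on $J_{k}(p)$ uses $e_{k}(p)=o_{k+1}(p+1)$, which gives exactly your expression), so the two arguments are close in spirit, but yours is arguably the more economical: it needs no summation identity and no case analysis on the regime $k\lessgtr 2(p-1)$, since the vanishing of the binomials handles the empty cases automatically. The price is that one must guess the formula for $e_{k}(p)$ in advance, and one must verify, as you correctly note, that the extended convention $\binom{n}{m}=0$ for $n<m$ or $m<0$ is compatible with Pascal's rule in the range used; the only exceptional pair is $(n,m)=(-1,0)$, i.e.\ $p=1$ and $k=0$, which you exclude by treating $p=1$ separately. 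Both proofs are valid.
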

\begin{proof}
Clearly $o_{k}(1)=0$ since $X_{k}(1)=\{a^{k},b^{k}\}\subseteq E$. Moreover, from~\eqref{eq:JO} it follows $o_{k}(2)=e_{k-1}(1)=2=2^{2-1}\binom{k-2}{2-2}$.

Let now $p>2$. The assertion is trivially verified if $k<2(p-1)$, since this implies
$p>\lfloor k/2\rfloor+1$ and then $0=J_{k}(p)\geq o_{k}(p)$. If $k=2(p-1)$, we have $p=\lfloor k/2\rfloor +1$, so that $X_{k}(p)=\{ab,ba\}^{p-1}$ and $J_{k}(p)=2^{p-1}$.
By~\eqref{eq:JE}, $e_{k}(p)=J_{k-1}(p)+o_{k-1}(p)\leq 2J_{k-1}(p)=2J_{2p-3}(p)=0$; hence,
\[o_{k}(p)=J_{k}(p)=2^{p-1}=2^{p-1}\binom{k-p}{p-2}.\]

We can now assume, by (double) induction, that the assertion is verified for all smaller values of $k$ and $p$. Substituting~\eqref{eq:JE} in~\eqref{eq:JO}, we obtain
\[\begin{split}
o_{k}(p)=e_{k-1}(p-1)&=e_{k-2}(p-1)+2o_{k-2}(p-1)\\
&=e_{k-3}(p-1)+2o_{k-3}(p-1)+2o_{k-2}(p-1)\\
&=\cdots=2\sum_{i=2(p-2)}^{k-2}o_{i}(p-1),
\end{split}\]
where the last equality holds because $e_{2(p-2)}(p-1)=0$. Therefore, by induction we have
\[o_{k}(p)=2\sum_{i=2(p-2)}^{k-2}2^{p-2}\binom{i-p+1}{p-3}
=2^{p-1}\sum_{i=0}^{k-2(p-1)}\binom{i+p-3}{p-3}.\]
The assertion now follows from the identity (see, for instance, \cite{GKP})
\[\sum_{j=0}^{n-1}\binom{j+m}{m}=\binom{n+m}{m+1}.\qedhere\]
\end{proof}

\begin{thm}
For all $k,p>0$,
\[J_{k}(p)=
2^{p-1}\left (\binom{k-p+1}{p-1}+\binom{k-p}{p-1}\right ).\]
\end{thm}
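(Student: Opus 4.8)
The plan is to derive the closed formula for $J_k(p)=e_k(p)+o_k(p)$ from the recursions~\eqref{eq:JE} and~\eqref{eq:JO} together with the explicit value of $o_k(p)$ already obtained in Lemma~\ref{lemma:ocappa}. The key observation is that the two recursions can be combined to express $J_k(p)$ solely in terms of the $o$-sequence: from~\eqref{eq:JE} we have $e_{k+1}(p)=J_k(p)+o_k(p)$, and from~\eqref{eq:JO} we have $o_{k+1}(p)=e_k(p-1)$. Hence
\[
J_k(p)=e_k(p)+o_k(p)=\bigl(J_{k-1}(p)+o_{k-1}(p)\bigr)+o_k(p)=e_{k-1}(p)+o_{k-1}(p)+o_k(p)+o_{k-1}(p),
\]
and unwinding this (exactly as in the proof of Lemma~\ref{lemma:ocappa}, where $e_{2(p-1)}(p)=0$ serves as the base of the telescoping) gives
\[
e_k(p)=2\sum_{i=2(p-1)}^{k-1}o_i(p),\qquad J_k(p)=e_k(p)+o_k(p)=o_k(p)+2\sum_{i=2(p-1)}^{k-1}o_i(p).
\]

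First I would treat the degenerate cases separately so the binomial conventions are unambiguous: for $p=1$ one has $J_k(1)=\card\{a^k,b^k\}=2$, and the right-hand side $2^{0}\bigl(\binom{k}{0}+\binom{k-1}{0}\bigr)=1+1=2$ matches; for $p>\lfloor k/2\rfloor+1$ both sides vanish by Proposition~\ref{prop:loup} and the convention $\binom{n}{m}=0$ for $n<m$. For the main range $p\ge 2$, I would substitute the formula $o_i(p)=2^{p-1}\binom{i-p}{p-2}$ from Lemma~\ref{lemma:ocappa} into the displayed expression for $J_k(p)$ and compute
\[
J_k(p)=2^{p-1}\binom{k-p}{p-2}+2^{p}\sum_{i=2(p-1)}^{k-1}\binom{i-p}{p-2}
=2^{p-1}\binom{k-p}{p-2}+2^{p}\sum_{j=0}^{k-1-2(p-1)}\binom{j+p-2}{p-2},
\]
where the last step reindexes via $j=i-2(p-1)$ and uses $i-p=j+p-2$. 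Applying the hockey-stick identity $\sum_{j=0}^{n-1}\binom{j+m}{m}=\binom{n+m}{m+1}$ (already cited from~\cite{GKP}) with $m=p-2$ and $n=k-2p+2$ turns the sum into $\binom{k-p}{p-1}$, so that $J_k(p)=2^{p-1}\binom{k-p}{p-2}+2^p\binom{k-p}{p-1}$.

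The remaining step is a purely algebraic identity on binomial coefficients: I must check that
\[
2^{p-1}\binom{k-p}{p-2}+2^{p}\binom{k-p}{p-1}=2^{p-1}\left(\binom{k-p+1}{p-1}+\binom{k-p}{p-1}\right).
\]
Dividing by $2^{p-1}$, this amounts to $\binom{k-p}{p-2}+2\binom{k-p}{p-1}=\binom{k-p+1}{p-1}+\binom{k-p}{p-1}$, i.e. $\binom{k-p}{p-2}+\binom{k-p}{p-1}=\binom{k-p+1}{p-1}$, which is exactly Pascal's rule. I expect the main obstacle to be bookkeeping rather than genuine difficulty: getting the summation bounds right in the telescoping of $e_k(p)$ (the lower index $2(p-1)$ coming from $e_{2(p-1)}(p)=0$, which itself follows from $X_{2(p-1)}(p)=\{ab,ba\}^{p-1}\subseteq E$ and hence $o_{2(p-1)}(p)=J_{2(p-1)}(p)=2^{p-1}$, $e_{2(p-1)}(p)=0$), and ensuring the hockey-stick identity is applied with parameters in the range where it is valid. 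Once those indices are pinned down, the computation is routine.
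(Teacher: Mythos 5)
Your proof is correct and follows essentially the same route as the paper: both rest on the recursions~\eqref{eq:JE}--\eqref{eq:JO} and Lemma~\ref{lemma:ocappa}, and both finish with Pascal's rule after arriving at $J_k(p)=2^{p}\binom{k-p}{p-1}+2^{p-1}\binom{k-p}{p-2}$. The only difference is that the paper evaluates $e_k(p)$ in one step as $o_{k+1}(p+1)$ (reading~\eqref{eq:JO} backwards), which avoids your re-telescoping and second application of the hockey-stick identity, but the two computations agree.
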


\begin{proof}
If $k<2(p-1)$, then $p>\lfloor k/2\rfloor+1$, so that $J_{k}(p)=0$ as desired.

Let now $k\geq 2(p-1)$. The assertion is trivially verified for $p=1$, so let us suppose $p>1$. Using~\eqref{eq:JO} and Lemma \ref{lemma:ocappa}, we obtain
\[J_{k}(p)=e_{k}(p)+o_{k}(p)=o_{k+1}(p+1)+o_{k}(p)=2^{p}\binom{k-p}{p-1}+2^{p-1}\binom{k-p}{p-2}=2^{p-1}\left(2\binom{k-p}{p-1}+\binom{k-p}{p-2}\right).\]
The proof is completed by Pascal's rule.
\end{proof}
From the preceding theorem one derives a simple formula for the number of words of length $k$ for which the height reaches its maximal value $\lfloor\frac{k}{2}\rfloor +1$.
\begin{cor} Let $k>0$.  If $k$ is even, $J_k( \frac{k}{2} +1) = 2^{\frac{k}{2}}$ and if $k$ is odd, $J_k( \lfloor\frac{k}{2}\rfloor +1) = 2^{\frac{k+1}{2}}(1+\frac{1}{2}\lfloor \frac{k}{2}\rfloor)$.
\end{cor}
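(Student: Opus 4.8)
The plan is to specialize the closed formula $J_{k}(p)=2^{p-1}\left(\binom{k-p+1}{p-1}+\binom{k-p}{p-1}\right)$ from the preceding theorem to the extremal value $p=\lfloor k/2\rfloor+1$, handling the even and odd cases of $k$ separately, and then to simplify the resulting binomial coefficients.

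First I would treat the case $k$ even. Setting $p=k/2+1$ gives $p-1=k/2$, $k-p+1=k/2$, and $k-p=k/2-1$. Hence $\binom{k-p+1}{p-1}=\binom{k/2}{k/2}=1$, while $\binom{k-p}{p-1}=\binom{k/2-1}{k/2}=0$ by the convention that $\binom{n}{m}=0$ for $n<m$. Substituting, one obtains $J_{k}(k/2+1)=2^{k/2}(1+0)=2^{k/2}$, as claimed.

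Next I would treat the case $k$ odd. Setting $p=\lfloor k/2\rfloor+1=(k+1)/2$ gives $p-1=(k-1)/2$, $k-p+1=(k+1)/2$, and $k-p=(k-1)/2$. Then $\binom{k-p}{p-1}=\binom{(k-1)/2}{(k-1)/2}=1$ and $\binom{k-p+1}{p-1}=\binom{(k+1)/2}{(k-1)/2}=\binom{(k+1)/2}{1}=(k+1)/2$, so
\[ J_{k}(\lfloor k/2\rfloor+1)=2^{(k-1)/2}\left(\frac{k+1}{2}+1\right)=2^{(k-3)/2}(k+3). \]
To finish, I would check that this coincides with the stated expression: since $\lfloor k/2\rfloor=(k-1)/2$ for odd $k$, we have $2^{(k+1)/2}\left(1+\tfrac12\lfloor k/2\rfloor\right)=2^{(k+1)/2}\left(1+\tfrac{k-1}{4}\right)=2^{(k+1)/2}\cdot\frac{k+3}{4}=2^{(k-3)/2}(k+3)$, which agrees.

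There is no genuine obstacle here; the only points that need care are correctly evaluating $\lfloor k/2\rfloor$ in each parity case, noting that one of the two binomial coefficients vanishes when $k$ is even, and verifying that the two superficially different closed forms for odd $k$ are in fact equal.
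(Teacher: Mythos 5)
Your computation is correct and is exactly the intended derivation: the paper states this corollary as an immediate specialization of the closed formula for $J_k(p)$ at $p=\lfloor k/2\rfloor+1$, and your parity case analysis, the vanishing of $\binom{k/2-1}{k/2}$ for even $k$, and the algebraic identification $2^{(k-3)/2}(k+3)=2^{(k+1)/2}\bigl(1+\tfrac12\lfloor k/2\rfloor\bigr)$ for odd $k$ all check out (e.g.\ $k=3$ gives $6$ and $k=5$ gives $16$, matching the recursion). Nothing further is needed.
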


\section{Derivative of a standard word}\label{sec:ders}

In this section we shall see that any finite or infinite standard Sturmian word  $w$ has,  with respect to a suitable endomorphism of  $\Aa ^*$,  a derivative which is still a standard Sturmian word. 

For any $k\geq 0$
we define  
\[ X'_k=  \{a^kb, a^{k}ba\} \ \text{ and }\ Y'_k= \{b^ka, b^{k}ab\}.\]

The sets $X'_k$ and $Y'_k$ are  codes  having a finite deciphering delay \cite{codes}, so that any word $x\in X'^{\infty}_k$ (resp., $ x\in Y'^{\infty}_k$) can be uniquely factored by the elements of $X'_k$ (resp.,  $Y'_k$). 

Let  $w= uxy$, with $u\in \PER $ and $\{x,y\}= \{a, b\}$, be a proper standard Sturmian word. We define index of $w$ the index of the central word $u$.

\begin{lemma} Let $w$ be a  proper standard Sturmian word of index $k$.
Then  $w\in X'^*_k\cup \{a^{k+1}b\}$ if $w^{(F)}=a$ and  $w\in Y'^*_k \cup \{b^{k+1}a\}$ if $w^{(F)}=b$.
\end{lemma}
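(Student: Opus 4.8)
The plan is to bypass the inductive argument used for Lemma~\ref{lem:AABC} and instead read the factorization directly off the morphism $\mu_{a^kb}$ (respectively $\mu_{b^ka}$). By Proposition~\ref{prop:standStu} and Proposition~\ref{prop:J0000} I would write $w=\psi(v)xy=\mu_v(xy)$ with $\{x,y\}=\{a,b\}$; here $k$ is, by definition, the index of the central word $\psi(v)$, that is, the first exponent $\alpha_0$ of the integral representation of $v$ (with $k=0$ exactly when $v=\varepsilon$).

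First suppose $w^{(F)}=a$, and split into three cases. If $v=\varepsilon$ then $w=xy$, and $w^{(F)}=a$ forces $w=ab=a^{k+1}b$. If $v=a^k$ with $k\geq 1$ (the only constant $v$ with $\psi(v)^{(F)}=a$), then $\psi(v)=a^k$, so $w=a^kxy$; this is $a^{k+1}b$ when $xy=ab$, and $a^kba\in X'_k$ when $xy=ba$. In the remaining case $v$ is non-constant; since $\psi(v)$ begins with $v^{(F)}$ (property P2 of Proposition~\ref{prop:basicp}), we get $v^{(F)}=a$, hence $v=a^kbv'$ with $v'=(a^kb)^{-1}v\in\Aa^*$, and therefore, by Proposition~\ref{prop:J0000},
\[w=\mu_v(xy)=\mu_{a^kb}\bigl(\mu_{v'}(xy)\bigr)=\mu_{a^kb}\bigl(\psi(v')xy\bigr).\]

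The computational heart is then the observation that $\mu_{a^kb}=\mu_{a^k}\circ\mu_b$ sends $b\mapsto\mu_{a^k}(b)=a^kb$ and $a\mapsto\mu_{a^k}(ba)=a^kba$ (the identity $\mu_{a^k}(b)=a^kb$ being a one-line induction), so its image is precisely the submonoid $X'^*_k$ generated by $X'_k=\{a^kb,a^kba\}$. Applying this to $\psi(v')xy\in\Aa^*$ gives $w\in X'^*_k$, which completes the case $w^{(F)}=a$.

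Finally, the case $w^{(F)}=b$ follows by applying the automorphism $E$ together with property P6 of Proposition~\ref{prop:basicp}: $E(w)=\psi(E(v))\,E(x)E(y)$ is again a proper standard word, with first letter $a$ and the same index $k$, since $E$ leaves the exponents of the integral representation of $v$ unchanged. By the previous part $E(w)\in X'^*_k\cup\{a^{k+1}b\}$, and applying $E$ once more — using that $E$ is a monoid automorphism with $E(X'_k)=\{b^ka,b^kab\}=Y'_k$ and $E(a^{k+1}b)=b^{k+1}a$ — yields $w\in Y'^*_k\cup\{b^{k+1}a\}$. I do not expect a genuine obstacle here; the only points needing care are the bookkeeping that keeps the indices of $w$, $\psi(v)$, $v$ and $E(w)$ aligned, and the verification that the degenerate cases $v=\varepsilon$ and $v=a^k$ land in the exceptional singletons rather than in $X'^*_k$ (indeed $a^{k+1}b\notin X'^*_k$, since every nonempty word of $X'^*_k$ begins with $a^kb$).
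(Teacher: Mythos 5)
Your proof is correct, but it follows a genuinely different route from the paper's. The paper deduces the lemma from Lemma~\ref{lem:AABC} on Christoffel words: it takes the factorization $a\psi(v)b\in a^{k+1}b\{a^kb,a^{k+1}b\}^*$, uses the palindromicity of $\psi(v)$ to rewrite $\psi(v)\in a^kb\{a^kb,a^{k+1}b\}^*a^k$, and then checks by hand that appending $ba$ or $ab$ lands in $X'^*_k$ (the $ab$ case needing the extra observation that $\psi(v)$ ends in $a^kba^k$). You instead bypass Lemma~\ref{lem:AABC} entirely: writing $v=a^kbv'$ and using Proposition~\ref{prop:J0000} (equivalently Corollary~\ref{cor:J0000C}) to get $w=\mu_{a^kb}(\psi(v')xy)$, the conclusion is immediate from $\mu_{a^kb}(a)=a^kba$, $\mu_{a^kb}(b)=a^kb$, so that the image of $\mu_{a^kb}$ lies in $X'^*_k$. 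This is exactly the technique the paper itself uses later for the infinite analogue (Lemma~\ref{lem:st000}, via Lemma~\ref{Ju00}), so your argument unifies the finite and infinite cases and avoids the slightly fiddly suffix bookkeeping; what the paper's version buys in exchange is an explicit link between the standard-word code $X'_k$ and the Christoffel code $X_k$ of Lemma~\ref{lem:AABC}, which it reuses elsewhere. Your handling of the degenerate cases ($v=\varepsilon$ and $v=a^k$, which must land in the singleton $\{a^{k+1}b\}$ or in $X'_k$ directly) and the reduction of the $b$-case via $E$ and property P6 are both sound.
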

\begin{proof} Let $w=\psi(v)xy$ be a proper standard Sturmian word of index $k$. We suppose that $w^{(F)}=v^{(F)}=a$. The case $w^{(F)}= v^{(F)}= b$ is symmetrically dealt with. If $v$ is constant, i.e., $v=a^k$, then $vab = a^{k+1}b$ and $vba= a^kba \in X'_k$ and the result is achieved. Let us then assume that $v$ is not constant.  By Lemma~\ref{lem:AABC} one has that $a\psi(v)b \in a^{k+1}b\{a^kb, a^{k+1}b\}^*$, so that, as
$\psi(v)$ is a palindrome,
\[\psi(v) \in  a^{k}b\{a^kb, a^{k+1}b\}^*a^k.\]
As is readily verified $a^kb\{a^kb, a^{k+1}b\}^* \subseteq X'^*_k$, so that $\psi(v) \in X'^*_ka^k$. Hence, $\psi(v)ba \in X'^*_ka^kba \subseteq X'^*_k$. As $\psi(v)$ terminates with $a^kba^k$ it follows that $\psi(v)ab \in  X'^*_ka^kba^{k+1}b = X'^*_k(a^kba)a^{k}b\subseteq X'^*_k$, which concludes the proof.
\end{proof}

If $w$ is a proper standard Sturmian word, we can introduce a derivative of $w$ as follows. 
For each $k\geq 0$ if $w^{(F)}=a$ we consider the code $X'_k$ and
the injective endomorphism $\mu_k$= $\mu_{a^kb}$ of $\Aa ^*$ defined by 
\begin{equation}\label{eq:fi1}
 \mu_k(a) = a^{k}ba, \quad \mu_k(b) = a^{k}b.
\end{equation}
By the previous lemma if $w\in X'^*_k$, we define the derivative $Dw$ of $w$ equal to the derivative $D_kw$ with respect to $\mu_k$,
i.e.,   $D_k w = \mu_k^{-1}(w)$.  If $w= a^{k+1}b$, we
define $D a^{k+1}b  = D_{k+1} a^{k+1}b =b$. Let us observe that from the definition for all $k\geq 0$ one has $D_k a^{k}ba =a$.

If $w^{(F)}=b$, we consider the code $Y'_k$ and 
 the injective endomorphism $\hat\mu_k = \mu_{b^ka}$ of $\Aa ^*$ defined by 
\begin{equation}\label{eq:fi2}
\hat \mu_k(a) = b^ka, \quad \hat \mu_k(b) = b^{k}ab.
\end{equation}
By the previous lemma if $w\in Y'^*_k$ we define the derivative $Dw$ of $w$ equal to the derivative ${\hat D}_kw$ with respect to $\hat\mu_k$, i.e.,  $\hat{D}_k w = \hat\mu_k^{-1}(w)$.  If $w= b^{k+1}a$ we define $D b^{k+1}a = 
 \hat {D}_{k+1} b^{k+1}a =a$. Observe that  for all $k\geq 0$ one has $\hat{D}_k b^{k}ab =b$.

Finally, observe that if $k=0$, i.e., $w=ba$ or $w=ab$, from the previous definition one has  $D ba = a$ and $D ab = b$.

\begin{example}
\label{ex:index1}
Let $v=ab^2a^2b$ and $w$ be the standard  word $\psi(v)ba$ where $\psi(v)$ is a  central 
word of index 1. One has:
\[ w = ababaababaabababaababaabababa.\]
In this case one has $X'_1=\{aba, ab\}$ and
$D w = D_1w=  (bababbabab)ba = \psi(ba^2b)ba.$ Similarly, one has  $D \psi(v)ab= \psi(ba^2b)ab.$

If $w=\psi(b^2a^2)ba$, then $w=(bbabbabb)ba$. The index of $w$ is 2 and $Y'_2=\{b^2a, b^2ab\}$
and $D w = \hat{D}_2w=  aba$. Similarly, one has $D\psi(b^2a^2)ab = aab$.
\end{example}
Let us recall (cf. \cite{LO2, adl97}) that an endomorphism  $f$ of  $\Aa ^*$ is called a \emph{standard Sturmian morphism} if the image $f(s)$ of  any finite or infinite standard Sturmian word $s$ is a standard Sturmian word.  This implies that  if  the image  $f(s)$ of a binary word $s\in \Aa ^{\infty}$  is a standard Sturmian word so is $s$. As  is well-known standard Sturmian morphisms form a monoid generated by the morphisms $\mu_a$, $\mu_b$, and $E$. Hence, 
for each $k\geq 0$,  $\mu_k, \hat\mu_k \in \{\mu_a, \mu_b\}^*$ are standard Sturmian morphisms called {\em pure}.

\begin{thm}\label{thm:diretto1} Let $k>0$ and $w\in X'^*_k\cup Y'^*_k \cup \{a^{k+1}b\} \cup \{b^{k+1}a\}$. Then $w$ is a proper standard Sturmian word if and only if $D w$ is a standard Sturmian word.
\end{thm}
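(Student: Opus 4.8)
The plan is to transcribe, for the pure standard Sturmian morphisms $\mu_{k}=\mu_{a^{k}b}$ and $\hat\mu_{k}=\mu_{b^{k}a}$, the argument carried out for the Christoffel morphisms $\varphi_{k},\hat\varphi_{k}$ in Theorems~\ref{thm:diretto} and~\ref{thm:inverso}; in fact the statement is simply the standard-word counterpart of the corollary to Theorem~\ref{thm:inverso}. Three ingredients are needed. First, $\mu_{k}$ and $\hat\mu_{k}$ are standard Sturmian morphisms, hence they send every standard Sturmian word to a standard Sturmian word. Second, conversely, if the image $f(s)$ of a binary word $s$ under a standard Sturmian morphism $f$ is a standard Sturmian word, then $s$ itself is a standard Sturmian word. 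Third, since $X'_{k}=\mu_{k}(\Aa)$ and $Y'_{k}=\hat\mu_{k}(\Aa)$ are codes with finite deciphering delay, the very definition of $D$ gives $w=\mu_{k}(Dw)$ for every $w\in X'^*_{k}$ and $w=\hat\mu_{k}(Dw)$ for every $w\in Y'^*_{k}$.

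For the ``if'' direction I would assume that $Dw$ is a standard Sturmian word and distinguish cases. If $w=a^{k+1}b$ or $w=b^{k+1}a$, then $w$ is already a proper standard Sturmian word — indeed $a^{k+1}b=\psi(a^{k})ab$ and $b^{k+1}a=\psi(b^{k})ba$, so this follows from Proposition~\ref{prop:standStu} — and nothing more is needed. Otherwise $w\in X'^*_{k}$ or $w\in Y'^*_{k}$, whence $w=\mu_{k}(Dw)$ or $w=\hat\mu_{k}(Dw)$; by the first ingredient $w$ is then a standard Sturmian word, and it is \emph{proper} because $Dw\neq\varepsilon$ forces $|w|\geq\min\{|a^{k}ba|,|a^{k}b|\}=k+1\geq2$, so $w\notin\Aa$.

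For the ``only if'' direction I would assume $w$ is a proper standard Sturmian word of index $k$. The preceding lemma then gives $w\in X'^*_{k}\cup\{a^{k+1}b\}$ when $w^{(F)}=a$, and $w\in Y'^*_{k}\cup\{b^{k+1}a\}$ when $w^{(F)}=b$. In the degenerate cases $Dw$ equals a single letter ($b$, resp.\ $a$), so it is trivially a standard word. Otherwise $w\in X'^*_{k}$ (resp.\ $w\in Y'^*_{k}$) and $w=\mu_{k}(Dw)$ (resp.\ $w=\hat\mu_{k}(Dw)$): the image of the binary word $Dw$ under the standard Sturmian morphism $\mu_{k}$ (resp.\ $\hat\mu_{k}$) is the standard Sturmian word $w$, so the second ingredient yields that $Dw$ is a standard Sturmian word.

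I do not anticipate any real difficulty: the proof is essentially formal once the monoid properties of standard Sturmian morphisms are invoked. The only points requiring a little care are keeping the two ad-hoc words $a^{k+1}b$ and $b^{k+1}a$ separate (for $k\geq1$ they lie outside $X'^*_{k}\cup Y'^*_{k}$, which is exactly why $D$ was defined on them by hand), and checking in the ``if'' part that the output is a \emph{proper} standard word and not merely a letter — a one-line length bound, as indicated above.
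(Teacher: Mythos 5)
Your proof is correct and follows essentially the same route as the paper: both directions rest on $\mu_{k}$ and $\hat\mu_{k}$ being (pure) standard Sturmian morphisms, which both preserve and reflect the property of being a standard word, together with the unique decodability of $X'_{k}$ and $Y'_{k}$ and the ad-hoc treatment of $a^{k+1}b$ and $b^{k+1}a$. The only difference is that you spell out the properness check in the ``if'' direction via a length bound, which the paper leaves as ``readily verified.''
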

\begin{proof} ($\Rightarrow$) We shall suppose without loss of generality that $w^{(F)}=a$, so that, as $w$ is a proper standard Sturmian word, $w\in X'^*_k\cup \{a^{k+1}b\}$. If $w\in X'^*_k$, then $Dw= \mu_k^{-1}(w)$. Since $\mu_k$ is a standard Sturmian morphism,  it follows
that $Dw \in \Stand $ . Similarly, if $w= a^{k+1}b$, then $Dw = \mu_{k+1}^{-1}(a^{k+1}b)=b \in \Stand $.

($\Leftarrow$) Let us now suppose that $Dw \in \Stand $. If $w\in X'^*_k$, then, as $\mu_k$ is a standard Sturmian morphism, $\mu_k(Dw)=w$
is a standard and proper Sturmian word. Similarly, if $w= a^{k+1}b$ one has $\mu_{k+1}(Dw)= \mu_{k+1}(b)= w$ which is a proper standard Sturmian word.
\end{proof}

The following theorem relates, through their directive words, the central word of a proper standard word and the central word of its derivative.

\begin{thm}\label{cor:vpius} Let $w= \psi(v)xy$, with $\{x,y\}= \{a,b\}$ and $v$ a non-constant word. Then
\[ D\psi(v)xy = \psi(_+v)xy.\]
\end{thm}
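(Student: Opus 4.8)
The plan is to mirror the structure of the proof of Theorem~\ref{thm: derpart} for Christoffel words, replacing the Christoffel morphisms $\varphi_k,\hat\varphi_k$ with the standard Sturmian morphisms $\mu_k=\mu_{a^kb}$ and $\hat\mu_k=\mu_{b^ka}$, and using Corollary~\ref{cor:J0000C} in place of Corollary~\ref{prop:DER}. Write $v=x_0^{\alpha_0}\cdots x_n^{\alpha_n}$ with $n\geq 1$ (since $v$ is non-constant), set $k=\alpha_0=\ind(v)$ and $h=\alpha_1$, so that $v=x_0^k x_1^h\xi$ for a suitable $\xi\in\Aa^*$ with $x_1=\bar x_0$, and recall that by definition ${}_+v = x_1^{h-1}\xi$.

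First I would treat the case $v^{(F)}=a$ (equivalently $w^{(F)}=a$), so $v=a^k b^h\xi$. By the lemma preceding the definition, $w=\psi(v)xy\in X'^{*}_k$ (the case $w=a^{k+1}b$ is excluded because $v$ is non-constant), hence $Dw=\mu_k^{-1}(w)$, i.e.\ $\mu_k(Dw)=w=\psi(v)xy$. Now I invoke Corollary~\ref{cor:J0000C} with the decomposition $v=(a^kb)\cdot(b^{h-1}\xi)$: since $\mu_k=\mu_{a^kb}$, we get
\[
\psi(v)xy=\mu_{a^kb}\bigl(\psi(b^{h-1}\xi)xy\bigr)=\mu_k\bigl(\psi({}_+v)xy\bigr).
\]
Comparing with $\mu_k(Dw)=\psi(v)xy$ and using injectivity of $\mu_k$ yields $Dw=\psi({}_+v)xy$, as desired. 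The case $v^{(F)}=b$ is symmetric: write $v=(b^ka)\cdot(a^{h-1}\xi)$, use $\hat\mu_k=\mu_{b^ka}$, the fact that $w\in Y'^{*}_k$, and Corollary~\ref{cor:J0000C} again, to obtain $\hat\mu_k(\psi({}_+v)xy)=\psi(v)xy=\hat\mu_k(Dw)$, hence $Dw=\psi({}_+v)xy$ by injectivity of $\hat\mu_k$.

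The only genuinely non-routine point is checking that $\psi(b^{h-1}\xi)xy$ (resp.\ $\psi(a^{h-1}\xi)xy$) lies in the domain where the factorization-based decoding $D$ actually recovers it, i.e.\ that $w\in X'^{*}_k$ rather than being the excluded word $a^{k+1}b$; this is exactly what the lemma preceding the definition guarantees once we know $v$ is non-constant, so there is no real gap. A secondary subtlety is making sure the same word ${}_+v$ appears as in Theorem~\ref{thm: derpart}: this is immediate since $_+v$ depends only on $v$, and its integral representation $(\alpha_1-1,\alpha_2,\dots,\alpha_n)$ (dropping a leading $0$ if $\alpha_1=1$) is the one used there. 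Thus the derivative of the standard word $\psi(v)xy$ has directive word ${}_+v$, exactly the directive word of the derivative of the Christoffel word $a\psi(v)b$, and in particular the depths of $\psi(v)ab$, $\psi(v)ba$, and $a\psi(v)b$ all coincide.
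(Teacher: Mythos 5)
Your proof is correct and takes essentially the same route as the paper's: both decompose $v=(x_0^{k}\bar x_0)\cdot(\bar x_0^{\,h-1}\xi)$ and use a Justin-type identity to obtain $\psi(v)xy=\mu_{x_0^{k}\bar x_0}(\psi({}_{+}v)xy)$, after which injectivity of the coding morphism gives the result. The only cosmetic difference is that you invoke Corollary~\ref{cor:J0000C} directly, whereas the paper applies Justin's formula in the form $\psi(vu)=\mu_v(\psi(u))\psi(v)$ and then regroups the trailing letters $a^{k}ba^{k}\cdot xy$ into codewords of $X'_k$ by hand.
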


\begin {proof} Let $v= x_0^{\alpha_0}\cdots  x_n^{\alpha_n}$ with $n\geq 1$ as $v$ is not constant.  We can write, setting $\alpha_0=k$ and $\alpha_1=h$,  $v=x_0^kx_1^h\xi$.  We shall first suppose that $\psi(v)\in \PER_a$, so that 
  $v=a^kb^h\xi$. Since $_+v= b^{h-1}\xi$, by using  the Justin formula   we can write:
  \[\psi(v)= \psi(a^kb^h\xi)= \mu_{a^kb}(\psi(b^{h-1}\xi))\psi(a^kb)= \mu_{a^kb}(\psi(_+v))a^kba^k,\]
so that if $x=b$ and $y=a$
\[ w_1 = \psi(v)ba =  \mu_{a^kb}(\psi(_+v))(a^kb)(a^kba),\]
and if $x=a$ and $y=b$
\[ w_2 = \psi(v)ab =  \mu_{a^kb}(\psi((_+v))(a^kba)(a^kb).\]
Hence,
\[D w_1 = \psi(_+v)ba,\]
and
\[Dw_2= \psi(_+v))ab.\] which concludes the proof in the case $\psi(v)\in \PER_a$. The case $\psi(v)\in \PER_b$ can be proved in a similar way.  \end {proof}

\begin{example}
\label{ex:index1}
Let $w=uba$ with $u=\psi(a^2b^2a)$. One has
\[w= aabaabaaabaabaaba\] and
$D w = babba= \psi(ba)ba$. If $w= \psi(ba^2b^2a)ab$, one easily obtains $D w = \psi(ab^2a)ab$. If $w=\psi(abab)ba$, one derives $D w = \psi(ab)ba$.
\end{example}

\begin{cor} Let $w$ be the  standard word $w= \psi(v)ba$ where $v$ is not  constant  and $w'$ is  the Christoffel word $w'= a\psi(v)b$. Then
\[ Dw= D \psi(v)ba  = a^{-1}\partial(a\psi(v)b) a = a^{-1}(\partial w') a.\]
\end{cor}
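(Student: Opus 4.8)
The plan is to prove the identity directly by comparing the two derivatives through Theorems~\ref{thm: derpart} and~\ref{cor:vpius}. Since $v$ is not constant, Theorem~\ref{cor:vpius} gives $Dw = D\psi(v)ba = \psi({}_{+}v)ba$, while Theorem~\ref{thm: derpart} gives $\partial w' = \partial(a\psi(v)b) = a\psi({}_{+}v)b$. Thus it suffices to check the purely formal identity
\[
\psi({}_{+}v)ba = a^{-1}\bigl(a\psi({}_{+}v)b\bigr)a,
\]
which holds because $a\psi({}_{+}v)b$ begins with the letter $a$ (so the left-quotient $a^{-1}$ is defined and strips exactly that first letter), leaving $\psi({}_{+}v)b$, and then appending $a$ on the right yields $\psi({}_{+}v)ba$. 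This is immediate from the definition of the operation $u^{-1}w$ recalled in Section~\ref{sec:due}.

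A minor point to address is the well-definedness of the expression $a^{-1}(\partial w')a$: one must know that $\partial w' = \partial(a\psi(v)b)$ actually starts with $a$. But by Theorem~\ref{thm:diretto}, $\partial w'$ is a Christoffel word, and since $v$ is non-constant, ${}_{+}v$ is a nonempty word, so $\partial w' = a\psi({}_{+}v)b$ is a proper Christoffel word; every proper Christoffel word begins with $a$ (indeed lies in $a\PER b$, as recalled from~\cite{BDL}). Hence $a^{-1}(\partial w')$ is well-defined and equals $\psi({}_{+}v)b$.

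Alternatively, and perhaps more transparently, one can argue at the level of the morphisms themselves, mirroring the proof of Theorem~\ref{cor:vpius}. Writing $v = x_0^{\alpha_0}\cdots x_n^{\alpha_n}$ with $n\geq 1$ and, say, $\psi(v)\in\PER_a$, set $k=\alpha_0$. Then $a\psi(v)b \in X_k^*$ and $\psi(v)ba\in X_k'^*$, with $\varphi_k = \lambda_{a^kb}$ and $\mu_k = \mu_{a^kb}$; Lemma~\ref{lemma:prepl} already records that $\lambda_{a^kb}(av) = a\,\mu_{a^kb}(vb)$, which is exactly the statement that applying $\varphi_k$ and then stripping the leading $a$ and appending a trailing letter transforms into applying $\mu_k$. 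Combining this with the fact (from the proofs of Theorems~\ref{thm: derpart} and~\ref{cor:vpius}) that both $\partial w'$ and $Dw$ have the common directive word ${}_{+}v$ up to the $ab/ba/a\cdot b$ decoration gives the claim. The case $\psi(v)\in\PER_b$ is handled symmetrically using $\hat\varphi_k$ and $\hat\mu_k$.

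There is no real obstacle here: the content is entirely carried by Theorems~\ref{thm: derpart} and~\ref{cor:vpius}, which identify both derivatives via the same operator ${}_{+}v$, and the remaining verification is the trivial observation that $a\psi({}_{+}v)b \mapsto a^{-1}(a\psi({}_{+}v)b)\,a = \psi({}_{+}v)ba$. The only thing one must not overlook is to invoke non-constancy of $v$ so that ${}_{+}v$ is defined and $\partial w'$ is a \emph{proper} Christoffel word, guaranteeing it starts with $a$.
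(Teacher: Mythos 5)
Your proposal is correct and follows essentially the same route as the paper: both invoke Theorem~\ref{cor:vpius} to get $Dw=\psi({}_{+}v)ba$ and Theorem~\ref{thm: derpart} to get $\partial w'=a\psi({}_{+}v)b$, after which the identity is a formal rewriting. The extra remarks on well-definedness and the alternative morphism-level argument are fine but not needed beyond what the paper does.
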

\begin{proof} By the preceding theorem  $D \psi(v)ba  = \psi(_+v)ba$. By Theorem~\ref{thm: derpart} one has $\partial a\psi(v)b = a\psi(_+v)b$.
From this the result follows.
\end{proof}

\begin{remark}
The preceding corollary holds true also for the constant words $a^k$, $k\geq 0$. Indeed, $D a^kba = a = \partial a^{k+1}b $.
However, it is not more true for $b^k$, $k>0$. Indeed, $D b^{k+1}a = a$ whereas $\partial ab^{k+1} = b$.
\end{remark}

 From Theorem~\ref{thm:diretto1} any  proper standard Sturmian word $w$ has a derivative $w'=  Dw \in \Stand $. Therefore, if $w'$ is proper one can consider $D w' \in \Stand $  that we shall denote $D^2 w$.
In general, for any $p\geq 1$, $D^p w$ will  denote the derivative of order $p$ of $w$. Since  $|D^p w| > |D^{p+1} w|$, there exists an integer $d$ such that $D^d w \in \Aa $;
we call $d$ the \emph{depth} of the standard word $w$.

\begin{example}
 Let $w$ be the standard word $w=\psi(a^2b^2a)ba$ of  Example~\ref{ex:index1}. One has
$Dw = babba= \psi(ba)ba$, $D^2w = ba$, and $D^3 w= a$. Thus  the depth of $w$ is $d=3$.
\end{example}

\begin{prop} The depth of a standard word $w=\psi(v)xy$ with $\{x, y\}= \{a, b\}$ is equal to the depth of the Christoffel word $a\psi(v)b$.
\end{prop}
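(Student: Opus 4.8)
The plan is to establish, exactly as in Proposition~\ref{prop:height}, that the depth of the standard word $\psi(v)xy$ equals the height $h(v)$ of its directive word; since Proposition~\ref{prop:height} already identifies the depth of the Christoffel word $a\psi(v)b$ with $h(v)$, this immediately yields the claimed equality. I would first dispose of the constant case: if $v$ is constant then $h(v)=1$, and in each of the four possibilities the derivative of $\psi(v)xy$ is a single letter ($\psi(a^{k})ba=a^{k}ba$, $\psi(a^{k})ab=a^{k+1}b$, $\psi(b^{k})ba=b^{k+1}a$, $\psi(b^{k})ab=b^{k}ab$, and the definition of $D$ sends each of these to a letter), so the depth is $1=h(v)$.

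For non-constant $v$, write $h=h(v)>1$ and recall the sequence $v_{(1)}=v$, $v_{(j+1)}={}_{+}(v_{(j)})$ used to define $h$, in which $v_{(j)}$ is non-constant for $j<h$ while $v_{(h)}$ is constant. I would prove by induction on $n$ that $D^{n}(\psi(v)xy)=\psi(v_{(n+1)})xy$ for $0\le n\le h-1$: the base case $n=0$ is trivial, and for $n\le h-2$ the word $v_{(n+1)}$ is non-constant, so Theorem~\ref{cor:vpius} applies and gives $D^{n+1}(\psi(v)xy)=D(\psi(v_{(n+1)})xy)=\psi({}_{+}(v_{(n+1)}))xy=\psi(v_{(n+2)})xy$. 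Taking $n=h-1$ shows that $D^{h-1}(\psi(v)xy)=\psi(v_{(h)})xy$ is a proper standard word (its length is at least $2$), whereas $D^{h}(\psi(v)xy)$ is a single letter since $v_{(h)}$ is constant; hence the depth of $\psi(v)xy$ equals $h=h(v)$, and comparison with Proposition~\ref{prop:height} finishes the proof.

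I expect no real obstacle here: this is the verbatim analogue of the proof of Proposition~\ref{prop:height}, with Theorem~\ref{thm: derpart} replaced by its standard-word counterpart Theorem~\ref{cor:vpius}. The only points requiring a word of care are that the intermediate words $\psi(v_{(j)})xy$ are proper standard words — immediate from their length — so that the next derivative is defined, and that the two-letter suffix $xy$ is carried along unchanged at each step of the iteration, which is precisely what Theorem~\ref{cor:vpius} asserts.
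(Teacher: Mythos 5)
Your proof is correct and follows essentially the same route as the paper: both reduce the claim to showing that the depth of $\psi(v)xy$ equals the height $h(v)$ by iterating Theorem~\ref{cor:vpius} along the sequence $v_{(1)},v_{(2)},\ldots$ and then invoking Proposition~\ref{prop:height}. Your explicit check of the four constant cases and of the final step $D(\psi(v_{(h)})xy)\in\Aa$ only makes explicit what the paper leaves as ``trivially verified''.
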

\begin{proof} Let $n=h(v)$ be the height of $v$. The result is trivially true if $v$ is constant or, equivalently, if  $n=1$. Let us then suppose that $v$ is not constant.  From Corollary~\ref{cor:vpius} one derives that for all $p\leq n$
\[D^{p-1} \psi(v)xy  = \psi(v_{(p)})xy.\]
Hence, $D^{n-1} \psi(v)xy  = \psi(v_{(n)})xy.$ Since $v_{(n)}$ is constant, one has $\psi(v_{(n)}) = v_{(n)}$ and $D v_{(n)}xy \in \Aa $.
Thus, the depth of $w$ is equal to $n=h(v)$ and by Proposition~\ref{prop:height}  is equal to the depth of $a\psi(v)b$.
\end{proof}

Let $s$ be now a characteristic, or infinite standard, Sturmian word. As we have seen in Sect.~\ref{sec:due},   \[s= \psi(v) \text{ with } v\in \Aa ^{\omega}\setminus \Aa ^*(a^{\omega} \cup b^{\omega}).\]
Any word $ v\in \Aa ^{\omega}\setminus \Aa ^*(a^{\omega} \cup b^{\omega})$ can be uniquely represented as:
\begin{equation}\label{eq:repr1}
v=x_0^{\alpha_0}x_1^{\alpha_1}x_2^{\alpha_2}\cdots x_{n-1}^{\alpha_{n-1}}x_n^{\alpha_n} \cdots
\end{equation}
   where for $i\geq 0$, $\alpha_i \geq 1$, $x_i\in \Aa $, 
  and  $x_{i+1}= {\bar x_i}$.

We define \emph{index} of the characteristic word $s= \psi(v)$ the first  exponent in the  representation~\eqref{eq:repr1} of $v$, i.e., $\alpha_0$.  We let  $\ind(s)$ denote the index of $s$. 
\begin{lemma}\label{lem:st000} Let $s$ be a characteristic Sturmian word of index $k$. Then $s\in X'^{\omega}_k$ if  $s^{(F)}=a$ and $s\in Y'^{\omega}_k$ if $s^{(F)}=b$.
\end{lemma}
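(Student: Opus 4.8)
The plan is to mimic the structure of the proof of Lemma~\ref{lem:AABC}, transferring that (finite) result to the infinite setting by a limit argument. Recall that if $s=\psi(v)$ with $v$ as in~\eqref{eq:repr1} and $\ind(s)=k=\alpha_0$, then by property~P2 of Proposition~\ref{prop:basicp} the finite words $\psi(v_{[n]})$ are nested prefixes whose limit is $s$. First I would reduce to the case $s^{(F)}=a$, the case $s^{(F)}=b$ being symmetric (apply $E$ and use~P6 of Proposition~\ref{prop:basicp} together with $E(X'_k)=Y'_k$). So assume $v^{(F)}=a$, hence $v=a^k b^h\cdots$ with $k,h\geq 1$.

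The key step is to observe that for every sufficiently large prefix $v_{[n]}$ of $v$ (say $n\geq k+1$, so that $v_{[n]}$ is non-constant and still begins with $a^k$), the Christoffel word $a\psi(v_{[n]})b$ has index $k$, hence by Lemma~\ref{lem:AABC} lies in $a^{k+1}bX_k^*$. As in the derivation at the start of Section~\ref{sec:ders} (the proof of the first lemma there), this gives $\psi(v_{[n]})\in a^kb\{a^kb,a^{k+1}b\}^*a^k$, and since $a^kb\{a^kb,a^{k+1}b\}^*\subseteq X'^*_k$ we get $\psi(v_{[n]})\in X'^*_k a^k$, so in particular $\psi(v_{[n]})a\in X'^*_k a^{k+1}$, which is the prefix of an $X'_k$-factorization that can be continued. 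The point is that the $X'_k$-factorizations of the successive $\psi(v_{[n]})a^k$ are \emph{compatible}: since $X'_k$ is a code with finite deciphering delay, and each $\psi(v_{[n]})$ is a prefix of $\psi(v_{[n+1]})$, the $X'_k$-parsing of a long enough prefix of $s$ is eventually determined and stable. Passing to the limit, $s=\lim_n\psi(v_{[n]})$ inherits a well-defined factorization into elements of $X'_k$, i.e.\ $s\in X'^{\omega}_k$.

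Concretely I would argue as follows. For each $n\geq k+1$ write $\psi(v_{[n]})=c_1c_2\cdots c_{m(n)}a^k$ with $c_i\in X'_k$, using the observation above; here $m(n)\to\infty$ since $|\psi(v_{[n]})|\to\infty$ while each $|c_i|\leq k+2$. Because $\psi(v_{[n]})$ is a prefix of $\psi(v_{[n']})$ for $n<n'$ and $X'_k$ has finite deciphering delay, for each fixed $j$ the $j$-th block $c_j$ stabilises to a fixed word $d_j\in X'_k$ as $n\to\infty$. Then $d_1d_2d_3\cdots$ is an infinite product of elements of $X'_k$ whose every finite prefix agrees with a prefix of some $\psi(v_{[n]})$, hence with a prefix of $s$; therefore $s=d_1d_2d_3\cdots\in X'^{\omega}_k$, as claimed. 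The case $s^{(F)}=b$ follows by applying this to $E(s)=\psi(E(v))$, which has first letter $a$ and the same index $k$, and noting $E(X'^{\omega}_k)=Y'^{\omega}_k$.

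The main obstacle is making the "compatibility of the $X'_k$-factorizations" rigorous: a priori the unique $X'_k$-factorization of $\psi(v_{[n]})a^k$ could, when we extend the word, be forced to change its early blocks (this is exactly why $X'_k$ is not a prefix code but only a code with finite deciphering delay). The clean way to handle this is to invoke the finite deciphering delay directly: there is a constant $\Delta$ such that, for any word in $X'^*_k$, the first block is determined by the first $\Delta$ letters together with the knowledge that the whole word is in $X'^*_k$; since the prefixes $\psi(v_{[n]})a^k$ are nested and each lies in $X'^*_k$, their first blocks coincide for all large $n$, and then one peels off the first block and iterates. Alternatively, one can avoid deciphering delay entirely by reading off the factorization of $s$ directly from the structure of $v$: grouping the exponents of $v$ in pairs exactly as the proof of Theorem~\ref{thm: derpart} does, one checks that $s$ decomposes blockwise over $X'_k$, with the blocks $a^kb$ and $a^kba$ dictated by the runs of $v$. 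Either route works; I would present the deciphering-delay argument since it parallels the finite lemma most closely and keeps the proof short.
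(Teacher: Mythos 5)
Your argument is correct in substance, but it takes a genuinely different and much heavier route than the paper's. The paper's proof is essentially one line: writing $v=a^kbv'$ with $v'\in\Aa^{\omega}$, Lemma~\ref{Ju00} (the infinite form of Justin's formula) gives $s=\psi(a^kbv')=\mu_{a^kb}(\psi(v'))$, and since $\mu_{a^kb}(a)=a^kba$ and $\mu_{a^kb}(b)=a^kb$, i.e.\ $\mu_{a^kb}(\Aa)=X'_k$, the membership $s\in X'^{\omega}_k$ is immediate (the case $s^{(F)}=b$ being symmetric, via $\hat\mu_k=\mu_{b^ka}$ or via $E$ as you do). You instead transfer Lemma~\ref{lem:AABC} to prefixes $\psi(v_{[n]})$ and pass to the limit, which forces you to confront the coherence of the $X'_k$-parsings of nested prefixes; that is exactly the difficulty the morphism identity sidesteps, since $\mu_{a^kb}$ hands you the factorization of the whole infinite word at once. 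Your deciphering-delay argument can be made to work (for $X'_k$ the first block of a sufficiently long word of $X'^{*}_k$ is indeed determined by its first $2k+2$ letters), but as literally stated your stabilization claim is false for short words --- $a^kb$ and $a^kba$ both lie in $X'^{*}_k$, share the prefix $a^kb$, yet have different first blocks --- and you also need to account for the fact that the $\psi(v_{[n]})$ lie in $X'^{*}_k a^k$ rather than in $X'^{*}_k$, so the induction that ``peels off'' blocks must be set up on the words $\psi(v_{[n]})ba\in X'^{*}_k$ or with an explicit length threshold. None of this is fatal, and your second suggested alternative (reading the blocks off the runs of $v$) is in fact a disguised form of the paper's argument; but the intended proof is simply an application of Lemma~\ref{Ju00}, which you did not invoke.
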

\begin{proof} Let $s=\psi(v)$.  We first suppose that $s^{(F)}=a$. Since $s$ has index $k$, we can write $v= a^kbv'$ with $v'\in \Aa ^{\omega}$.
By Lemma~\ref{Ju00} one has:
\[s= \psi(a^kbv') = \mu_{a^kb}(\psi(v')).\]
From~\eqref{eq:fi1}, it follows that $s\in X'^{\omega}_k$. In a similar way one proves that $s\in Y'^{\omega}_k$ if $s^{(F)}=b$.
\end{proof}

We can now define the derivative $Ds$ of a characteristic Sturmian word $s$ of index $k$ as follows: 
\[ Ds = \mu_k^{-1}(s) \text{ if }s^{(F)}=a, \  Ds = {\hat\mu}_k^{-1}(s) \text{ if }s^{(F)}=b .\]
\begin{remark}
As one easily verifies,  $Ds$  is word isomorphic to the derived word of $s$ in the sense of  Durand \cite{Du} constructed by factoring $s$ in terms of the \emph{first returns} to the prefix of length $k+1$ of $s$. If $s^{(F)}=a$ (resp., $ s^{(F)}=b$) the set of first returns to the prefix $a^kb$
(resp., $b^ka$) of $s$ is  $\{a^kb, a^kba\}$ (resp., $\{b^ka, b^kab\}$). 
We mention that a further notion of derivative for  infinite words admitting a prefixal factorization, such as the characteristic Sturmian words, has been recently given in \cite{DZ15}.
\end{remark}

\begin{thm}\label{thm:000001} Let $s= D t$ with $t\in X'^{\omega}_k\cup Y'^{\omega}_k$.  Then $s$ is a characteristic Sturmian word if and only if so is $t$.
\end{thm}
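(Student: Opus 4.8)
The plan is to deduce both implications from Justin's formula in the form of Lemma~\ref{Ju00}, together with the characterization recalled in Section~\ref{sec:duebis} that $\psi(w)$, for $w\in\Aa^{\omega}$, is a characteristic Sturmian word exactly when both letters occur infinitely often in $w$. First I would reduce to a single case: the identities $Y'_k=E(X'_k)$ and $\hat\mu_k=E\circ\mu_k\circ E$ are immediate from the definitions, and $E$ maps characteristic Sturmian words to characteristic Sturmian words; hence $E$ interchanges the two cases, and it suffices to treat $t\in X'^{\omega}_k$. In that situation, since $\mu_k(\Aa)=X'_k$ and $X'_k$ is a code with finite deciphering delay, $\mu_k$ restricts to a bijection of $\Aa^{\omega}$ onto $X'^{\omega}_k$, and by definition $s=Dt=\mu_k^{-1}(t)$, so that $t=\mu_k(s)$ with $s\in\Aa^{\omega}$.

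For the implication ``$s$ characteristic $\Rightarrow$ $t$ characteristic'', I would write $s=\psi(v)$ with $v\in\Aa^{\omega}$ containing both letters infinitely often, and apply Lemma~\ref{Ju00} to get $t=\mu_k(s)=\mu_{a^kb}(\psi(v))=\psi(a^kbv)$; since $a^kbv$ still contains both letters infinitely often, $t$ is characteristic. For the converse, assume $t$ is characteristic and write $t=\psi(w)$. Every element of $X'_k=\{a^kb,a^kba\}$ begins with $a^kb$, so $t$ begins with $a^kb$ and its initial run is exactly $a^k$; comparing with the fact that the initial run of $\psi(w)$ coincides with the initial run of $w$, we obtain $w=a^kbw'$ for some $w'\in\Aa^{\omega}$ in which both letters still occur infinitely often. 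Then Lemma~\ref{Ju00} gives $t=\psi(a^kbw')=\mu_{a^kb}(\psi(w'))=\mu_k(\psi(w'))=\mu_k(s)$, and injectivity of $\mu_k$ on $\Aa^{\omega}$ forces $s=\psi(w')$, a characteristic Sturmian word.

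I expect no genuinely difficult step here; the argument is essentially bookkeeping around Lemma~\ref{Ju00}. The points needing care are: (i) justifying that $X'^{\omega}_k$ equals $\mu_k(\Aa^{\omega})$ and that $\mu_k$ is injective there, so that $s=\mu_k^{-1}(t)$ is a well-defined infinite word; (ii) translating ``$t$ begins with $a^kb$'' into the statement $\alpha_0=k$ for the integral representation $w=x_0^{\alpha_0}x_1^{\alpha_1}\cdots$ of the directive word of $t$; and (iii) verifying $\hat\mu_k=E\circ\mu_k\circ E$ used in the reduction to the $X'$ case. Alternatively, the converse could be phrased through the general fact recalled before Theorem~\ref{thm:diretto1}, that if the image of a word of $\Aa^{\infty}$ under a standard Sturmian morphism is standard Sturmian then so is the word itself: since $\mu_k$ and $\hat\mu_k$ are pure standard Sturmian morphisms and $t=\mu_k(s)$ (resp.\ $\hat\mu_k(s)$) is standard Sturmian, $s$ is standard Sturmian, and it is infinite because $t$ is; this avoids (ii)--(iii) entirely.
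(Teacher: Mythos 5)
Your proposal is correct, and its final remark actually \emph{is} the paper's proof: the authors dispose of the theorem in one line by invoking the fact, recalled just before Theorem~\ref{thm:diretto1}, that $\mu_k$ and $\hat\mu_k$ are (pure) standard Sturmian morphisms, and that for such a morphism $f$ the image $f(s)$ of a word $s\in\Aa^{\infty}$ is standard Sturmian if and only if $s$ is. Your primary argument takes a different, more self-contained route: you reduce to the $X'_k$ case via $\hat\mu_k=E\circ\mu_k\circ E$ (a correct identity, checked directly on the letters), and then run both implications through Lemma~\ref{Ju00} together with the characterization of characteristic words as the $\psi$-images of directive words in which both letters occur infinitely often; the only mildly delicate step, correctly handled, is reading off from the prefix $a^kb$ of $t$ that its directive word has the form $a^kbw'$, so that $\mu_k$ can be peeled off by injectivity. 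What the paper's route buys is brevity, at the cost of leaning on an unproved (though classical) reflection property of Sturmian morphisms; what your route buys is that everything is derived explicitly from Justin's formula, and as a by-product you recover the directive word of $Dt$, essentially re-proving Theorem~\ref{thm:der0001} along the way. Both are valid; no gaps.
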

\begin{proof} The result is an immediate consequence of the fact that the morphisms $\mu_k$ and ${\hat\mu}_k$ are standard Sturmian morphisms.
\end{proof}

Recall  that an  infinite word $v\in \Aa ^{\omega}$ is \emph{constant} if $v= x^{\omega}$ with $x\in \Aa $. If $v$ is not constant one can consider the greatest suffix $_+v$ of $v$ with respect to the suffixal ordering,  which is immediately preceded by a letter different from $v^{(F)}$.

\begin{thm}\label{thm:der0001}Let $w= \psi(v)$ be a characteristic  Sturmian word. Then
\[D\psi(v) = \psi(_+v).\]
\end{thm}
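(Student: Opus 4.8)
The plan is to transport to the infinite setting the argument behind Theorems~\ref{thm: derpart} and~\ref{cor:vpius}, using as the main tool the infinite form of Justin's formula, namely Lemma~\ref{Ju00}. Since $\psi(v)$ is characteristic, both letters occur infinitely often in $v$; in particular $v$ is not constant, so ${}_{+}v$ is well defined, and deleting any finite prefix from $v$ leaves a word in which both letters still occur infinitely often. I would first dispose of the case $v^{(F)}=b$ by applying the automorphism $E$: since $E\circ\psi=\psi\circ E$ (property~P6 of Proposition~\ref{prop:basicp}), since ${}_{+}(E(v))=E({}_{+}v)$ by~\eqref{eq:E+} (the relation holds verbatim for infinite directive words), and since $\hat\mu_k=E\circ\mu_k\circ E$, the case $v^{(F)}=b$ follows immediately from the case $v^{(F)}=a$. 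This reduces the statement to $v^{(F)}=a$.

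So suppose $v^{(F)}=a$ and write $v=x_0^{\alpha_0}x_1^{\alpha_1}x_2^{\alpha_2}\cdots$ as in~\eqref{eq:repr1}, so that $x_0=a$, $x_1=b$, and $k=\alpha_0=\ind(\psi(v))$. Unwinding the definition of ${}_{+}$ for infinite words gives ${}_{+}v=b^{\alpha_1-1}x_2^{\alpha_2}x_3^{\alpha_3}\cdots$ (reading $b^{0}$ as $\varepsilon$), whence $v=a^{k}b\cdot{}_{+}v$. Applying Lemma~\ref{Ju00} with finite prefix $a^{k}b$ and infinite word ${}_{+}v$ yields
\[\psi(v)=\psi\bigl(a^{k}b\cdot{}_{+}v\bigr)=\mu_{a^{k}b}\bigl(\psi({}_{+}v)\bigr)=\mu_k\bigl(\psi({}_{+}v)\bigr).\]
By Lemma~\ref{lem:st000} we have $\psi(v)\in X'^{\omega}_{k}$, so $D\psi(v)=\mu_k^{-1}(\psi(v))$ is defined, and the injectivity of $\mu_k$ gives $D\psi(v)=\psi({}_{+}v)$; by Theorem~\ref{thm:000001} this word is again characteristic.

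The computation is short, and the only point that really needs care is the identification ${}_{+}v=b^{\alpha_1-1}x_2^{\alpha_2}\cdots$, i.e.\ checking that the greatest suffix of $v$ (for the suffix order) immediately preceded by a letter different from $v^{(F)}=a$ is exactly what is left after deleting the prefix $a^{k}b$. This is where one must unwind the definition of ${}_{+}$ on an infinite word: every position of the block $x_0^{\alpha_0}$ after the first is preceded by $a$, the first letter of the block $x_1^{\alpha_1}$ is still preceded by $a$, and the earliest position preceded by $b$ is the second letter of $x_1^{\alpha_1}$ when $\alpha_1\ge 2$ and the first letter of $x_2^{\alpha_2}$ when $\alpha_1=1$; in either case the corresponding suffix is $b^{\alpha_1-1}x_2^{\alpha_2}\cdots$. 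Once this is settled, everything else is a one-line consequence of Lemma~\ref{Ju00} and the injectivity of $\mu_k$, and no new combinatorial lemma is required.
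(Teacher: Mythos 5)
Your proof is correct and follows essentially the same route as the paper: both decompose $v=a^{k}b\cdot{}_{+}v$, apply Lemma~\ref{Ju00} to get $\psi(v)=\mu_{k}(\psi({}_{+}v))$, and invert $\mu_{k}$. The only cosmetic differences are that you dispatch the case $v^{(F)}=b$ by symmetry under $E$ (the paper writes it out in parallel) and that you make explicit the identification ${}_{+}v=b^{\alpha_1-1}x_2^{\alpha_2}\cdots$, which the paper leaves implicit.
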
  
\begin{proof} Let us write $v$ as  $v= x_0^{\alpha_0}x_1^{\alpha_1}\xi$ and  set  $k=\alpha_0$ and $h=\alpha_1$. We  first suppose that  $w^{(F)}=a$.  We can write $v=a^kb^h\xi$. Hence,  $\psi(v)= \psi(a^kb^h\xi)$. By Lemma~\ref{Ju00} we can write:
\[\psi(v)= \psi(a^kbb^{h-1}\xi)= \mu_{a^kb}(\psi(b^{h-1}\xi)) = \mu_k(\psi(b^{h-1}\xi)).\]
Hence, $D\psi(v) = \psi(b^{h-1}\xi)= \psi(_+v). $ If $w^{(F)}=b$ one has $v= b^ka^h\xi.$ We can write:
\[ \psi(v)= \psi(b^ka^h\xi)= \psi(b^kaa^{h-1}\xi)= \mu_{b^ka}(\psi(a^{h-1}\xi)) = {\hat \mu}_k(\psi(a^{h-1}\xi)).\]
Hence, $D\psi(v)= \psi(a^{h-1}\xi)= \psi(_+v)$.
\end{proof}

Let $s=\psi(v)$ be a characteristic Sturmian word. From Theorems \ref{thm:000001} and \ref{thm:der0001}, $Ds= \psi(_+v)$ is a characteristic  Sturmian word, so that $_+v$ is not  constant.  We can consider  the infinite sequence $(D^p s)_{p\geq 0}$ of successive derivatives of $s$
where
\[D^0s = s \ \text{ and } D^ps= D(D^{p-1}s), \text{ for }p>0.\]
 Similarly to the finite case,  one can introduce a sequence of infinite words $(v_{(n)})_{n>0}$, where $v_{(1)}=v$,
and for all $n\geq 1$, $v_{(n+1)} =  {_+(v_{(n)})}$.
If $s=\psi(v)$, then by the preceding theorem one has for each $p\geq 0$, $D^ps= \psi(v_{(p+1)})$ having for all $k>0$, $v= u_kv_{(k)}$ with $u_k\in\Aa ^*$ and $|u_k|<|u_{k+1}|$.

We say that a characteristic Sturmian word $s$ is \emph{stable} if there exist $m,n\geq 0$, $m\neq n$, such that $D^ms= D^ns$, i.e., $\card\{D^ms \mid m\geq 0\}<\infty$.

\begin{thm}\label{SUP} A characteristic Sturmian word $s$ is stable if and only if its directive word is ultimately periodic.
\end{thm}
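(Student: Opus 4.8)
The plan is to exploit the relation $D\psi(v)=\psi({}_{+}v)$ from Theorem~\ref{thm:der0001}, which implies that $D^{p}s=\psi(v_{(p+1)})$, so that stability of $s$ is equivalent (via injectivity of $\psi$) to the sequence $(v_{(n)})_{n\ge 1}$ taking only finitely many distinct values, where $v_{(n+1)}={}_{+}(v_{(n)})$. So the whole statement reduces to a purely combinatorial claim about the operator $v\mapsto{}_{+}v$ on infinite words that are not ultimately constant: the orbit $\{v_{(n)}\mid n\ge 1\}$ is finite if and only if $v$ is ultimately periodic.

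First I would prove the easy direction. Suppose $v$ is ultimately periodic, say $v=u\,w^{\omega}$ with $w\in\Aa^{+}$. Using the representation $v=x_0^{\alpha_0}x_1^{\alpha_1}x_2^{\alpha_2}\cdots$ from~\eqref{eq:repr1}, the operator ${}_{+}v$ deletes a prefix of $v$: concretely, if $v=x_0^{k}x_1^{h}\xi$ then ${}_{+}v=x_1^{h-1}\xi$, i.e.\ it strips off the first block and decrements the second exponent. Iterating, each $v_{(n)}$ is obtained from $v$ by deleting a finite prefix (and the $x_i$, $\alpha_i$ eventually repeat periodically because $v$ does). Hence all the $v_{(n)}$ are suffixes of $v$ of the form $x_j^{\beta}x_{j+1}^{\alpha_{j+1}}x_{j+2}^{\alpha_{j+2}}\cdots$ where $j$ ranges over a finite set of residues modulo the period of the exponent sequence and $1\le\beta\le\alpha_j$; there are only finitely many such words, so the orbit is finite and $s$ is stable.

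For the converse, assume the orbit $\{v_{(n)}\mid n\ge 1\}$ is finite. Since $|u_n|<|u_{n+1}|$ (the prefix deleted strictly grows), the map $n\mapsto v_{(n)}$ cannot be injective, so there are $m<n$ with $v_{(m)}=v_{(n)}$; write $v_{(m)}=z\,v_{(n)}$ with $z\in\Aa^{+}$. Then $v_{(m)}=z\,v_{(m)}$ forces $v_{(m)}=z^{\omega}$, a purely periodic word. Now $v$ itself is obtained from $v_{(m)}$ by \emph{prepending} the finite word $u_m$: indeed $v=u_m v_{(m)}$ by the remark following Theorem~\ref{thm:der0001}. Therefore $v=u_m\,z^{\omega}$ is ultimately periodic, and $s=\psi(v)$ has ultimately periodic directive word.

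The one genuinely delicate point is the description of how ${}_{+}(\cdot)$ acts on the block representation and, in particular, the bookkeeping that both (a) $|u_{n}|$ is strictly increasing (so that finiteness of the orbit forces an actual coincidence $v_{(m)}=v_{(n)}$ rather than merely a repeated value reached in the limit) and (b) the equation $v_{(m)}=z\,v_{(m)}$ indeed yields $v_{(m)}=z^{\omega}$ — this last step is the standard fact that an infinite word equal to a nonempty word times itself is purely periodic, applied to right-infinite words. Both ingredients are elementary; the proof is essentially a matter of assembling $D^{p}s=\psi(v_{(p+1)})$, the injectivity of $\psi$ on $\Aa^{\omega}$, and the strict growth of the deleted prefixes, so I expect no real obstacle beyond writing the block-representation argument carefully.
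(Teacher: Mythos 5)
Your proposal is correct and follows essentially the same route as the paper: both directions rest on $D^{p}s=\psi(v_{(p+1)})$ and the injectivity of $\psi$, with the stable-to-periodic direction using the strict growth of the deleted prefixes $u_{n}$ to extract a nonempty $z$ with $v_{(m)}=zv_{(m)}=z^{\omega}$, and the periodic-to-stable direction using a pigeonhole argument on the finitely many suffixes of an ultimately periodic word (the paper phrases these as powers of conjugates of the period $q$, you phrase them via the block representation, but it is the same count). The only cosmetic slip is attributing the non-injectivity of $n\mapsto v_{(n)}$ to the growth of $|u_{n}|$ rather than to the finiteness of the orbit; the growth is what you actually need, and correctly use, to ensure $z\neq\varepsilon$.
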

\begin{proof} $(\Rightarrow)$ Let $m$ be the first integer $> 0$ such that there exists $n>m$ for which $D^{m-1}s= D^{n-1}s$. Hence,
$\psi(v_{(m)})= \psi(v_{(n)})$. Since $\psi$ is injective, it follows $v_{(m)}=v_{(n)}$ and therefore $v= u_mv_{(m)}= u_nv_{(n)}= u_nv_{(m)}$.  As $|u_m|<|u_n|$
one has $u_n= u_m\zeta$ with $\zeta\in \Aa ^+$ and $v_{(m)}=\zeta v_{(m)}$, so that $v_{(m)}$ is the periodic word $v_{(m)}= \zeta^{\omega}$ and
$v= u_m\zeta^{\omega}$.

$(\Leftarrow)$ Suppose that $s= \psi(v)$ with $v= pq^{\omega}$,  $p,q\in \Aa ^*$, and $q\neq \varepsilon$. There exists an integer
$k$ such that for all $j>k$, $D^{j-1}s= \psi(v_{(j)})$ with $v_{(j)}$ suffix of $q^{\omega}$. Hence, $v_{(j)}= q_j^{\omega}$, where $q_j$ is a conjugate of $q$.  By the pigeonhole principle  it follows that there exist two distinct integers $m,n >k$ such that $q_n=q_m$ and therefore $v_{(m)}=v_{(n)}$. This implies
$D^{n-1}s= D^{m-1}s$.
\end{proof}
\begin{example}
Let $f= \psi((ab)^{\omega}$) be the Fibonacci word. One has that for all $p\geq 1$, $D^pf = f$, so that $f$ is stable.
Let $s= \psi(a^k(ab)^{\omega})$ where $k$ is a fixed integer  $\geq 1$. One has $Ds= \psi((ab)^{\omega})= f$. Thus $Ds= D^ps =f$ for all $p\geq 1$ and $s$ is stable. Let $s= \psi(aba^2ba^3b\cdots ba^nb \cdots)$. For any $p>0$ one has $D^ps= \psi(a^{p+1}ba^{p+2}ba^{p+3}\cdots)$, so that $s$ is not stable.
\end{example}
Let $v=x_0^{\alpha_0}x_1^{\alpha_1}x_2^{\alpha_2}\cdots x_{n-1}^{\alpha_{n-1}}x_n^{\alpha_n} \cdots$ and $s= \psi(v)$ be  the characteristic Sturmian word with the directive word $v$. The slope of $s$ is  the limit  $\lim_{n\rightarrow \infty} \frac{|s_{[n]}|_b}{|s_{[n]}|_a}$. As is well-known (cf. \cite{LO2, BDL}),  since $s$ is Sturmian, this limit exists and is an irrational number  equal to the continued fraction
\[[\alpha_0; \alpha_1, \ldots, \alpha_n, \ldots ].\]
One can easily prove that the directive word $v$ is periodic if and only if there exist integers $r>0$ and $q\geq 0$ such that $\alpha_n= \alpha_{n+r}$ for all $n\geq q$, or,  equivalently,  the previous continued fraction is periodic. From Theorem~\ref{SUP} and \cite[Theorem 20]{AB}, one derives that \emph{a characteristic Sturmian word $s$ is stable if and only if the set of all derivated words in the sense of Durand (with respect to prefixes of $s$) is finite.}

\medskip

For each $k\geq 0$, let $X_k$ and $Y_k$ be the sets defined by~\eqref{eq:xy}.
\begin{lemma} Let $s$ be a characteristic Sturmian word of index $k$. Then $s\in X_k^{\omega}$ if $s^{(F)}=a$ and $s\in Y_k^{\omega}$ if $s^{(F)}=b$.
\end{lemma}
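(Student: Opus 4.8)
The plan is to exhibit $s$ as the image of a shorter characteristic word under the Christoffel morphism $\varphi_k=\lambda_{a^kb}$ (resp.\ $\hat\varphi_k=\lambda_{b^ka}$), by passing to the limit in the Christoffel analogue of Justin's formula, Corollary~\ref{prop:DER}. I will carry out the case $s^{(F)}=a$; the case $s^{(F)}=b$ is handled in exactly the same way, with $\hat\varphi_k$ — whose letter-images $\hat\varphi_k(a)=ab^k$, $\hat\varphi_k(b)=ab^{k+1}$ are the elements of $Y_k$ — replacing $\varphi_k$.

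First I would write $s=\psi(v)$ and use that $s$ has index $k$ with $s^{(F)}=a$ to put the directive word in the form $v=a^kb\,v'$, where $v'\in\Aa^{\omega}$ still contains both letters infinitely often; set $t:=\psi(v')$, which is an infinite (indeed characteristic) word. For each $n>k$ one has $v_{[n]}=a^kb\,v'_{[n-k-1]}$, so Corollary~\ref{prop:DER} together with $\varphi_k=\lambda_{a^kb}$ gives
\[
a\psi(v_{[n]})b=\lambda_{a^kb}\!\bigl(a\psi(v'_{[n-k-1]})b\bigr)=\varphi_k(a)\,\varphi_k\!\bigl(\psi(v'_{[n-k-1]})\bigr)\,\varphi_k(b).
\]
Substituting $\varphi_k(a)=a^{k+1}b$, $\varphi_k(b)=a^kb$ and deleting the outermost $a$ and $b$ yields the finite identity
\[
\psi(v_{[n]})=a^kb\cdot\varphi_k\!\bigl(\psi(v'_{[n-k-1]})\bigr)\cdot a^k .
\]

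Next I would let $n\to\infty$. The prefixes $\psi(v'_{[m]})$ increase (for the prefix order) to $t$, and since $\varphi_k$ is non-erasing it is continuous on $\Aa^{\omega}$, so $\varphi_k(\psi(v'_{[n-k-1]}))\to\varphi_k(t)$; the bounded trailing block $a^k$ is irrelevant because the right-hand sides grow without bound. Hence $s=\lim_n\psi(v_{[n]})=a^kb\cdot\varphi_k(t)$. Now $a^kb\in X_k$, and writing $t=t_1t_2\cdots$ with $t_i\in\Aa$ we have $\varphi_k(t)=\varphi_k(t_1)\varphi_k(t_2)\cdots$, an infinite concatenation of elements of $\varphi_k(\Aa)=X_k$; therefore $s=(a^kb)\varphi_k(t_1)\varphi_k(t_2)\cdots\in X_k^{\omega}$, as claimed. (Alternatively one could apply Lemma~\ref{lem:AABC} to each Christoffel prefix $a\psi(v_{[n]})b$ and pass to the limit, but the boundary letters are tracked more transparently through the morphism.)

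The one point I expect to require care is the limit step: one must check that the finite suffix $a^k$ in the displayed identities does not survive the passage to the limit and that $\varphi_k(t)$ is genuinely an infinite $X_k$-factorisation — both facts follow from $\varphi_k$ being non-erasing (hence continuous on infinite words) and from $X_k$ being a code. Everything else is a direct substitution of the explicit forms of $\varphi_k$ and $\hat\varphi_k$ into Corollary~\ref{prop:DER}.
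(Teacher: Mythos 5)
Your treatment of the case $s^{(F)}=a$ is correct, and it is essentially the paper's argument in a different dress: the paper factors $v=a^kbv'$ and applies Justin's formula for infinite words (Lemma~\ref{Ju00}) to get $s=\mu_{a^kb}(\psi(v'))\in X'^{\omega}_k=\{a^kb,a^kba\}^{\omega}$, then uses the set identity $X'^{\omega}_k=a^kbX_k^{\omega}\subseteq X_k^{\omega}$; you reach the same conclusion $s=a^kb\,\varphi_k(\psi(v'))$ via $\lambda_{a^kb}$ and an explicit limit over finite prefixes. Your limit step is handled correctly (non-erasing morphisms preserve the prefix order, and the trailing $a^k$ disappears because the matched prefixes grow without bound).

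The gap is exactly at the point you waved away: the case $s^{(F)}=b$ is \emph{not} ``handled in exactly the same way'', because the boundary letters --- which you yourself singled out as the delicate part --- behave asymmetrically. Running your computation with $\hat\varphi_k=\lambda_{b^ka}$, $\hat\varphi_k(a)=ab^k$, $\hat\varphi_k(b)=ab^{k+1}$, and $v=b^ka\,v'$ gives
\[
a\psi(v_{[n]})b=ab^k\cdot\hat\varphi_k\bigl(\psi(v'_{[n-k-1]})\bigr)\cdot ab^{k+1},
\qquad\text{hence}\qquad
\psi(v_{[n]})=b^k\cdot\hat\varphi_k\bigl(\psi(v'_{[n-k-1]})\bigr)\cdot ab^{k},
\]
so in the limit $s=b^k\hat\varphi_k(t)\in b^kY_k^{\omega}$. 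The prefix $b^k$ cannot be absorbed: every word of $Y_k=\{ab^k,ab^{k+1}\}$ begins with $a$, while $s$ begins with $b$, so for $k\geq 1$ one has $b^kY_k^{\omega}\not\subseteq Y_k^{\omega}$ and in fact $s\notin Y_k^{\omega}$ (for instance $E(f)=babba\cdots\notin\{ab,ab^{2}\}^{\omega}$). What your computation actually establishes is $s\in b^kY_k^{\omega}=\{b^ka,b^{k+1}a\}^{\omega}$. The asymmetry is already visible in Lemma~\ref{lem:AABC}: there the $\PER_a$ conclusion is $w\in a^{k+1}bX_k^{*}\subseteq X_k^{*}$, but the $\PER_b$ conclusion is $w\in ab^kY_k^{*}$, whose leading block $ab^k$ is supplied by the extra letter $a$ of the Christoffel word $a\psi(v)b$ --- a letter the characteristic word $\psi(v)$ does not have. (The paper's own one-line dismissal of this case suffers from the same defect, so the fault lies partly in the statement; but a correct proof must record the conclusion $s\in b^kY_k^{\omega}$, or restate the $b$ case with $\{b^ka,b^{k+1}a\}$ in place of $Y_k$, rather than assert symmetry.)
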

\begin{proof} Let us suppose $s^{(F)}=a$. As one readily verifies, for each $k\geq 0$ one has $X'^{\omega}_k =  \{a^kba, a^{k}b\}^{\omega}=
a^kb\{a^kb, a^{k+1}b\}^{\omega}= a^kb X^{\omega}_k$. By Lemma~\ref{lem:st000}, one has $s\in X'^{\omega}_k $, so that $s\in X_k^{\omega}$. The case $s^{(F)}=b$ is dealt with in a similar way.
\end{proof}

We can define the derivative $\partial s$ of a characteristic Sturmian word $s$ of index $k$  by
\[ \partial s = \varphi_k^{-1} (s) \text{ if }\ s^{(F)}= a \ \text{ and }\partial s = {\hat\varphi}_k^{-1} (s)\text{ if }\ s^{(F)}= b. \]

\begin{lemma} Let $x= x_1x_2\cdots x_n\cdots $ be an infinite word over $\Aa $. Then for each $k\geq 0$
\[ \varphi_k^{-1} (\mu_k(x))= bx.\]
\end{lemma}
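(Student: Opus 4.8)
The plan is to reduce this infinite statement to the finite identity of Lemma~\ref{lemma:prepl} and then pass to a limit. Recall that $\varphi_{k}=\lambda_{a^{k}b}$ acts by $a\mapsto a^{k+1}b$, $b\mapsto a^{k}b$, that $\mu_{k}=\mu_{a^{k}b}$ acts by $a\mapsto a^{k}ba$, $b\mapsto a^{k}b$, and that Lemma~\ref{lemma:prepl} gives $\varphi_{k}(bv)=\mu_{k}(vb)$ for every finite word $v$.

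First I would set $x_{[n]}=x_{1}\cdots x_{n}$ and instantiate Lemma~\ref{lemma:prepl} at $v=x_{[n]}$, obtaining $\varphi_{k}(b\,x_{[n]})=\mu_{k}(x_{[n]}b)=\mu_{k}(x_{[n]})\,a^{k}b$ for every $n\geq 0$. The key observation is then that $\varphi_{k}(b\,x_{[n]})$ is simultaneously a prefix of $\varphi_{k}(bx)$ and of $\mu_{k}(x)$. For the former this is immediate, since $\varphi_{k}$ is non-erasing and $b\,x_{[n]}$ is a prefix of $bx$, and these prefixes have length tending to infinity. For the latter, I would note that $a^{k}b=\mu_{k}(b)$ is a prefix of both $\mu_{k}(a)=a^{k}ba$ and $\mu_{k}(b)$, hence a prefix of $\mu_{k}(x_{n+1})$; therefore $\mu_{k}(x_{[n]})\,a^{k}b$ is a prefix of $\mu_{k}(x_{[n]})\mu_{k}(x_{n+1})\mu_{k}(x_{n+2})\cdots=\mu_{k}(x)$. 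Having arbitrarily long common prefixes, the two infinite words must coincide: $\mu_{k}(x)=\varphi_{k}(bx)$.

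It then remains to check that $\varphi_{k}^{-1}$ is legitimately applicable to $\mu_{k}(x)$ and recovers $bx$. This follows because $bx\in\Aa^{\omega}$, so $\varphi_{k}(bx)$ is an infinite concatenation of elements of $X_{k}=\varphi_{k}(\Aa)=\{a^{k}b,a^{k+1}b\}$, and $X_{k}$ is a prefix code; hence this factorization is unique and decoding it yields $\varphi_{k}^{-1}(\varphi_{k}(bx))=bx$. The only point requiring the slightest care is the prefix bookkeeping in the middle step --- in particular the elementary remark that $\mu_{k}(b)$ is a common prefix of $\mu_{k}(a)$ and $\mu_{k}(b)$ --- and everything else is a direct appeal to results already proved.
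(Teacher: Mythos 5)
Your proof is correct and follows essentially the same route as the paper's: both rest on Lemma~\ref{lemma:prepl} applied to the prefixes $x_{[n]}$ and a passage to the limit. Your reorganization --- establishing $\varphi_{k}(bx)=\mu_{k}(x)$ first via arbitrarily long common prefixes and only then inverting once, using that $X_{k}$ is a prefix code --- is if anything a slightly cleaner bookkeeping than the paper's direct manipulation of $\varphi_{k}^{-1}$ on finite prefixes.
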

\begin{proof} By Lemma~\ref{lemma:prepl}, one has for all $n\geq 1$
\[\varphi_k^{-1} (\mu_k(x_{[n]}b)) =\varphi_k^{-1} (\mu_k(x_{[n]})a^kb)=\varphi_k^{-1} (\mu_k(x_{[n]})b = bx_{[n]}.\]
 Thus,
\[\varphi_k^{-1} (\mu_k(x_{[n]}) = bx_{[n]}b^{-1}\]
 and
\[ \varphi_k^{-1} (\mu_k(x))= \lim_{n\rightarrow \infty}\varphi_k^{-1} (\mu_k(x_{[n]})=  \lim_{n\rightarrow \infty}bx_{[n]}b^{-1}= bx. \qedhere\]
\end{proof}

\begin{thm} Let $s$ be a characteristic  Sturmian word. Then
\[ \partial s = bDs. \]
\end{thm}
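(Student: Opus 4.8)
The plan is to read off the identity directly from the lemma just established, which says that $\varphi_{k}^{-1}(\mu_{k}(x))=bx$ for every infinite word $x\in\Aa^{\omega}$. I would begin by setting $k=\ind(s)$ and treating the case $s^{(F)}=a$; the case $s^{(F)}=b$ is completely symmetric, obtained by interchanging $a$ and $b$ and replacing $\varphi_{k},\mu_{k}$ by $\hat\varphi_{k},\hat\mu_{k}$ (equivalently, by conjugating everything by the exchange automorphism $E$ and using that $\ind(E(s))=\ind(s)$). Assuming $s^{(F)}=a$, Lemma~\ref{lem:st000} gives $s\in X'^{\omega}_{k}$, so that $Ds=\mu_{k}^{-1}(s)$ is well defined, and the lemma guaranteeing $s\in X_{k}^{\omega}$ ensures that $\partial s=\varphi_{k}^{-1}(s)$ is well defined as well.

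Next, I would set $t=Ds=\mu_{k}^{-1}(s)$, so that $s=\mu_{k}(t)$ by the very definition of the derivative $D$ of a characteristic word. Substituting this into $\partial s=\varphi_{k}^{-1}(s)$ and applying the preceding lemma with $x=t$ gives
\[ \partial s=\varphi_{k}^{-1}(s)=\varphi_{k}^{-1}(\mu_{k}(t))=bt=b\,Ds, \]
which is exactly the assertion.

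In short, the theorem is essentially a one-line consequence of the material already developed: the real content lies in the preceding lemma, itself proved from Lemma~\ref{lemma:prepl} (the identity $\lambda_{a^{k}b}(bv)=\mu_{a^{k}b}(vb)$) together with a passage to the limit along the prefixes $s_{[n]}$. The only point that deserves a line of care is the symmetric case $s^{(F)}=b$: there one needs the analogue $\hat\varphi_{k}^{-1}(\hat\mu_{k}(x))=bx$, which is obtained verbatim from the analogue of Lemma~\ref{lemma:prepl} relative to the factor $b^{k}a$. No genuine obstacle arises beyond these routine verifications.
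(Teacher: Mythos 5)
Your treatment of the case $s^{(F)}=a$ is exactly the paper's proof: both reduce the theorem to the identity $\varphi_k^{-1}(\mu_k(x))=bx$ of the preceding lemma, applied to $x=Ds$ via $s=\mu_k(Ds)$. In that case the proposal is correct and identical in substance to the source.

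The point that deserves a warning is your dismissal of the case $s^{(F)}=b$ as a symmetric, ``verbatim'' verification. The analogue $\hat\varphi_k^{-1}(\hat\mu_k(x))=bx$ that you invoke cannot hold as written: $\hat\varphi_k$ sends both letters to words beginning with $a$, so every nonempty word in its image begins with $a$, whereas $\hat\mu_k(x)$ begins with $b^k$ (and here $k\geq 1$, since a characteristic word starting with $b$ has positive index). For the same reason $\partial s=\hat\varphi_k^{-1}(s)$ is not even defined when $s^{(F)}=b$, because $s$ begins with $b$ and hence cannot lie in $Y_k^\omega=\{ab^k,ab^{k+1}\}^\omega$; and if one instead conjugates by $E$, the same computation yields $aDs$ rather than $bDs$. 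To be fair, this defect is inherited from the paper itself, whose auxiliary lemma asserting $s\in Y_k^\omega$ for $s^{(F)}=b$ has the same problem and whose one-line proof of the theorem silently treats only the $\varphi_k,\mu_k$ case; so your argument is no weaker than the published one. But the second case is genuinely not routine: it would require either a corrected definition of $\partial s$ for characteristic words beginning with $b$ or an explicit (and different) identity in place of the one you claim, neither of which is supplied.
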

\begin{proof} If $s$ is a characteristic Sturmian word of index $k$, then by the preceding lemma one has:
\[\partial(\mu_k(Ds))= \partial s = bDs.\qedhere\]
\end{proof}

\section{Concluding remarks}

We have studied new combinatorial properties of 
Christoffel, central, and standard words,
which are related to a suitable notion of derivative of a word. In this analysis, the palindromization map that allows to construct all central words, as well as all infinite standard words, plays an essential role. Indeed,
it allows one to give a unified treatment for the previous classes of words. Moreover, one can make use of the important combinatorial tool represented by Justin's formula which links the palindromization map with pure standard morphisms.  By this palindromization map, from one side one can obtain a very simple formula giving the derivative of a Christoffel word. From the other one can extend the previous results to the case of standard words. Finally, new interesting combinatorial problems arose from considering higher order derivatives and the depth of a Christoffel word and of a standard word. This gives a new insight on these noteworthy classes of words.

An interesting open problem is to try to extend some of the previous results  to the case of alphabets with more than two letters, i.e., to the case of standard episturmian words. This extension seems to be quite hard since some basic combinatorial properties hold only for a binary alphabet.

\section*{Acknowledgments}
We would like to thank the anonymous referees for many valuable comments and suggestions.

\bibliographystyle{model1-num-names}

\end{document}